\documentclass[a4paper,onecolumn,11pt,accepted=2024-08-20]{quantumarticle}
\pdfoutput=1

\usepackage[utf8]{inputenc}
\usepackage{amsmath,amsthm}
\usepackage{stmaryrd}
\usepackage{tikz}
\usepackage{tikzit}
\usepackage{mathdots}
\usepackage{yhmath}
\usepackage{cancel}
\usepackage{color}
\usepackage{siunitx}
\usepackage{array}
\usepackage{multirow}
\usepackage{amssymb}
\usepackage{gensymb}
\usepackage{tabularx}
\usepackage{extarrows}
\usepackage{booktabs}
\usepackage{makecell}
\usepackage{xfrac}
\usepackage{comment}
\usepackage{graphbox}
\usepackage[numbers]{natbib}
\usepackage{hyperref}
\usepackage{todonotes}
\usepackage{braket}
\usepackage{mathtools}
\usepackage{skak}
\usepackage{enumitem}
\setlist{topsep=0pt,itemsep=0.2pt,partopsep=2pt, parsep=2pt}
\usetikzlibrary{fadings}
\usetikzlibrary{patterns}
\usetikzlibrary{shadows.blur}
\usetikzlibrary{shapes}

\DeclareRobustCommand{\harp}[1]{\mathpalette\harpoonvec{#1}}
\newcommand{\harpvecsign}{\scriptscriptstyle\rightharpoonup}
\newcommand{\harpoonvec}[2]{%
  \ifx\displaystyle#1\doalign{$\harpvecsign$}{#1#2}\fi
  \ifx\textstyle#1\doalign{$\harpvecsign$}{#1#2}\fi
  \ifx\scriptstyle#1\doalign{\scalebox{.6}[.9]{$\harpvecsign$}}{#1#2}\fi
  \ifx\scriptscriptstyle#1\doalign{\scalebox{.5}[.8]{$\harpvecsign$}}{#1#2}\fi
}
\newcommand{\doalign}[2]{%
 {\vbox{\offinterlineskip\ialign{\hfil##\hfil\cr#1\cr$#2$\cr}}}%
}
\newcommand{\corresponds}{\hat{=}}
\newcommand{\smwhtcircle}{\circ}
\newcommand{\smblkcircle}{\bullet}
\newcommand{\hyphenbullet}{\raisebox{-0.5pt}{\castlinghyphen}}
\newcommand{\harrowextender}{\chesssee}

\tikzstyle{doubled}=[line width=1.5pt] %

\tikzstyle{dot}=[inner sep=0mm,minimum width=2mm,minimum height=2mm,draw,shape=circle]  
\tikzstyle{ddot}=[inner sep=0mm, doubled, minimum width=2.5mm,minimum height=2.5mm,draw,shape=circle]

\tikzstyle{pdot}=[inner sep=0mm, doubled, minimum width=2.5mm,minimum height=2.5mm,shape=circle]
\tikzstyle{phase dimensions}=[minimum size=6mm,font=\footnotesize,inner sep=0.2mm,outer sep=-2mm]

\tikzstyle{phase dot}=[pdot,phase dimensions]
\tikzstyle{wphase dot}=[dot, phase dimensions]

\tikzstyle{hadamard}=[fill=white,draw,inner sep=0.6mm,font=\footnotesize,minimum height=6mm,minimum width=8mm]

\tikzstyle{anti} = [fill=white,draw,inner sep=0.6mm,font=\footnotesize,minimum height=3mm,minimum width=3mm]

\tikzstyle{triang}=[regular polygon,regular polygon sides=3,draw,scale=0.75,inner sep=-0.75pt,minimum width=9mm,fill=white,regular polygon rotate=180]
\tikzstyle{triang_lesssep}=[regular polygon,regular polygon sides=3,draw,scale=0.75,inner sep=-4pt,minimum width=9mm,fill=white,regular polygon rotate=180, text depth=4mm]
\tikzstyle{triangdag}=[regular polygon,regular polygon sides=3,draw,scale=0.75,inner sep=-0.5pt,minimum width=9mm,fill=white]

\makeatletter
\newcommand{\boxshape}[3]{%
\pgfdeclareshape{#1}{
\inheritsavedanchors[from=rectangle] %
\inheritanchorborder[from=rectangle]
\inheritanchor[from=rectangle]{center}
\inheritanchor[from=rectangle]{north}
\inheritanchor[from=rectangle]{south}
\inheritanchor[from=rectangle]{west}
\inheritanchor[from=rectangle]{east}
\backgroundpath{%
\southwest \pgf@xa=\pgf@x \pgf@ya=\pgf@y
\northeast \pgf@xb=\pgf@x \pgf@yb=\pgf@y

\@tempdima=#2
\@tempdimb=#3

\pgfpathmoveto{\pgfpoint{\pgf@xa - 5pt + \@tempdima}{\pgf@ya}}
\pgfpathlineto{\pgfpoint{\pgf@xa - 5pt - \@tempdima}{\pgf@yb}}
\pgfpathlineto{\pgfpoint{\pgf@xb + 5pt + \@tempdimb}{\pgf@yb}}
\pgfpathlineto{\pgfpoint{\pgf@xb + 5pt - \@tempdimb}{\pgf@ya}}
\pgfpathlineto{\pgfpoint{\pgf@xa - 5pt + \@tempdima}{\pgf@ya}}
\pgfpathclose
}
}}

\boxshape{NEbox}{0pt}{5pt}
\boxshape{SEbox}{0pt}{-5pt}
\boxshape{NWbox}{5pt}{0pt}
\boxshape{SWbox}{-5pt}{0pt}
\boxshape{EBox}{-3pt}{3pt}
\boxshape{WBox}{3pt}{-3pt}
\makeatother

\tikzstyle{map}=[draw,shape=NEbox,inner sep=2pt,minimum height=6mm,fill=white]
\tikzstyle{mapdag}=[draw,shape=SEbox,inner sep=2pt,minimum height=6mm,fill=white]
\tikzstyle{maptrans}=[draw,shape=SWbox,inner sep=2pt,minimum height=6mm,fill=white]
\tikzstyle{mapconj}=[draw,shape=NWbox,inner sep=2pt,minimum height=6mm,fill=white]

\tikzstyle{dmap}=[draw,doubled,shape=NEbox,inner sep=2pt,minimum height=6mm,fill=white]
\tikzstyle{dmapdag}=[draw,doubled,shape=SEbox,inner sep=2pt,minimum height=6mm,fill=white]
\tikzstyle{dmaptrans}=[draw,doubled,shape=SWbox,inner sep=2pt,minimum height=6mm,fill=white]
\tikzstyle{dmapconj}=[draw,doubled,shape=NWbox,inner sep=2pt,minimum height=6mm,fill=white]

\makeatletter
\pgfdeclareshape{cornerpoint}{
\inheritsavedanchors[from=rectangle] %
\inheritanchorborder[from=rectangle]
\inheritanchor[from=rectangle]{center}
\inheritanchor[from=rectangle]{north}
\inheritanchor[from=rectangle]{south}
\inheritanchor[from=rectangle]{west}
\inheritanchor[from=rectangle]{east}
\backgroundpath{%
\southwest \pgf@xa=\pgf@x \pgf@ya=\pgf@y
\northeast \pgf@xb=\pgf@x \pgf@yb=\pgf@y

\pgfmathsetmacro{\pgf@shorten@left}{\pgfkeysvalueof{/tikz/shorten left}}
\pgfmathsetmacro{\pgf@shorten@right}{\pgfkeysvalueof{/tikz/shorten right}}

\pgfpathmoveto{\pgfpoint{0.5 * (\pgf@xa + \pgf@xb)}{\pgf@ya - 5pt}}
\pgfpathlineto{\pgfpoint{\pgf@xa - 8pt + \pgf@shorten@left}{\pgf@yb - 1.5 * \pgf@shorten@left}}
\pgfpathlineto{\pgfpoint{\pgf@xa - 8pt + \pgf@shorten@left}{\pgf@yb}}
\pgfpathlineto{\pgfpoint{\pgf@xb + 8pt - \pgf@shorten@right}{\pgf@yb}}
\pgfpathlineto{\pgfpoint{\pgf@xb + 8pt - \pgf@shorten@right}{\pgf@yb - 1.5 * \pgf@shorten@right}}
\pgfpathclose
}
}

\pgfdeclareshape{cornercopoint}{
\inheritsavedanchors[from=rectangle] %
\inheritanchorborder[from=rectangle]
\inheritanchor[from=rectangle]{center}
\inheritanchor[from=rectangle]{north}
\inheritanchor[from=rectangle]{south}
\inheritanchor[from=rectangle]{west}
\inheritanchor[from=rectangle]{east}
\backgroundpath{%
\southwest \pgf@xa=\pgf@x \pgf@ya=\pgf@y
\northeast \pgf@xb=\pgf@x \pgf@yb=\pgf@y

\pgfmathsetmacro{\pgf@shorten@left}{\pgfkeysvalueof{/tikz/shorten left}}
\pgfmathsetmacro{\pgf@shorten@right}{\pgfkeysvalueof{/tikz/shorten right}}

\pgfpathmoveto{\pgfpoint{0.5 * (\pgf@xa + \pgf@xb)}{\pgf@yb + 5pt}}
\pgfpathlineto{\pgfpoint{\pgf@xa - 8pt + \pgf@shorten@left}{\pgf@ya + 1.5 * \pgf@shorten@left}}
\pgfpathlineto{\pgfpoint{\pgf@xa - 8pt + \pgf@shorten@left}{\pgf@ya}}
\pgfpathlineto{\pgfpoint{\pgf@xb + 8pt - \pgf@shorten@right}{\pgf@ya}}
\pgfpathlineto{\pgfpoint{\pgf@xb + 8pt - \pgf@shorten@right}{\pgf@ya + 1.5 * \pgf@shorten@right}}
\pgfpathclose
}
}

\makeatother

\pgfkeyssetvalue{/tikz/shorten left}{0pt}
\pgfkeyssetvalue{/tikz/shorten right}{0pt}

\tikzstyle{kpoint common}=[draw,fill=white,inner sep=1pt,minimum height=4mm]
\tikzstyle{kpoint}=[shape=cornerpoint,shorten left=5pt,kpoint common]
\tikzstyle{kpoint adjoint}=[shape=cornercopoint,shorten left=5pt,kpoint common]
\tikzstyle{kpoint conjugate}=[shape=cornerpoint,shorten right=5pt,kpoint common]
\tikzstyle{kpoint transpose}=[shape=cornercopoint,shorten right=5pt,kpoint common]

\tikzstyle{kpointdag}=[kpoint adjoint]
\tikzstyle{kpointadj}=[kpoint adjoint]
\tikzstyle{kpointconj}=[kpoint conjugate]
\tikzstyle{kpointtrans}=[kpoint transpose]

\tikzstyle{big kpoint}=[kpoint, minimum width=1.0 cm, minimum height=2mm, inner sep=4pt, text depth=1.5mm]

 \tikzstyle{upground}=[circuit ee IEC,thick,ground,rotate=90,scale=1.5]
 \tikzstyle{downground}=[circuit ee IEC,thick,ground,rotate=-90,scale=1.5]

% TiKZ style file generated by TikZiT. You may edit this file manually,
% but some things (e.g. comments) may be overwritten. To be readable in
% TikZiT, the only non-comment lines must be of the form:
% \tikzstyle{NAME}=[PROPERTY LIST]

% Node styles
\tikzstyle{discarding}=[fill=white, draw=black, shape=circle, style=upground]
\tikzstyle{smalldiscarding}=[fill=white, draw=black, style=upground, scale=0.5]
\tikzstyle{backdiscard}=[fill=white, draw=black, shape=circle, style=downground, scale=0.5]
\tikzstyle{smallbackdiscard}=[fill=white, draw=black, shape=circle, style=downground, scale=0.5]
\tikzstyle{state}=[fill=white, draw=black, style=triang, tikzit shape=rectangle]
\tikzstyle{kstate}=[fill=white, draw=black, style=kpoint, tikzit shape=rectangle]
\tikzstyle{kstateconj}=[fill=white, draw=black, style=kpoint conjugate, tikzit shape=rectangle]
\tikzstyle{kstateBIG}=[fill=white, draw=black, style=big kpoint, tikzit shape=rectangle]
\tikzstyle{effect}=[fill=white, draw=black, style=triangdag]
\tikzstyle{keffect}=[fill=white, draw=black, style=kpoint adjoint]
\tikzstyle{keffectconj}=[fill=white, draw=black, style=kpoint transpose]
\tikzstyle{morphdag}=[style=mapdag]
\tikzstyle{morph}=[style=hadamard]
\tikzstyle{WIDEmorph}=[style=hadamard, minimum width=14mm]
\tikzstyle{morphtrans}=[style=maptrans]
\tikzstyle{morphconj}=[style=mapconj]
\tikzstyle{CPMmorph}=[style=dmap]
\tikzstyle{CPMmorphconj}=[style=dmapconj]
\tikzstyle{CPMmorphdag}=[style=dmapdag]
\tikzstyle{CPMmorphtrans}=[style=dmaptrans]
\tikzstyle{CPMstate}=[fill=white, draw=black, style=triang, doubled]
\tikzstyle{CPMstateBIG}=[fill=white, draw=black, style={triang_lesssep}, doubled]
\tikzstyle{CPMkstate}=[fill=white, draw=black, style=kpoint, tikzit shape=rectangle, doubled]
\tikzstyle{CPMkstateconj}=[fill=white, draw=black, style=kpoint conjugate, tikzit shape=rectangle, doubled]
\tikzstyle{CPMkstateBIG}=[fill=white, draw=black, style=big kpoint, tikzit shape=rectangle, doubled]
\tikzstyle{CPMkeffect}=[fill=white, draw=black, style=kpoint adjoint, doubled]
\tikzstyle{CPMkeffectconj}=[fill=white, draw=black, style=kpoint transpose, doubled]
\tikzstyle{UHfB}=[fill=white, draw=black, style=triangdag, doubled, inner sep=-2pt]
\tikzstyle{leak}=[style=tinypoint, regular polygon rotate=-90]
\tikzstyle{leakfill}=[style=tinypoint, regular polygon rotate=-90, fill=black]
\tikzstyle{Z}=[style=dot, fill=green]
\tikzstyle{X}=[style=dot, fill=red]
\tikzstyle{black_dot}=[style=dot, fill=black]
\tikzstyle{white_dot}=[style=dot, fill=white]
\tikzstyle{qblack_dot}=[style=ddot, fill=black]
\tikzstyle{qwhite_dot}=[style=ddot, fill=white]
\tikzstyle{whitephase}=[style=wphase dot, fill=white]
\tikzstyle{qredphase}=[style=phase dot, fill=red]
\tikzstyle{qgreenphase}=[style=phase dot, fill=green]
\tikzstyle{had}=[style=hadamard, doubled]
\tikzstyle{box}=[style=hadamard]
\tikzstyle{classhad}=[style=hadamard]
\tikzstyle{antipode}=[style=anti]

% Edge styles
\tikzstyle{dottededge}=[-, dotted]
\tikzstyle{double edge}=[-, style=doubled, draw=black, tikzit draw={rgb,255: red,191; green,0; blue,64}]
\tikzstyle{new edge style 0}=[<-]

\usetikzlibrary{calc, positioning, shapes.geometric}
\usetikzlibrary{
	arrows,
	shapes,
	decorations,
	intersections,
	backgrounds,
	positioning,
	circuits.ee.IEC
	}

\DeclareRobustCommand{\tensormuchi}{\!
\mathbin{\begin{tikzpicture}[scale=0.45, every node/.style={scale=0.6}]
	\begin{pgfonlayer}{nodelayer}
		\node [style=none] (0) at (0, 0.25) {};
		\node [style=none] (3) at (0, -0.25) {};
		\node [style=none] (4) at (0, 0) {$\mu \chi$};
	\end{pgfonlayer}
	\begin{pgfonlayer}{edgelayer}
		\draw [bend left=90, looseness=2.50] (0.center) to (3.center);
		\draw [bend right=90, looseness=2.50] (0.center) to (3.center);
	\end{pgfonlayer}
\end{tikzpicture}}  \!}

\DeclareRobustCommand{\tensormubarchi}{\!
\mathbin{\begin{tikzpicture}[scale=0.45, every node/.style={scale=0.6}]
	\begin{pgfonlayer}{nodelayer}
		\node [style=none] (0) at (0, 0.25) {};
		\node [style=none] (3) at (0, -0.25) {};
		\node [style=none] (4) at (0, 0) {$\overline{\mu} \chi$};
	\end{pgfonlayer}
	\begin{pgfonlayer}{edgelayer}
		\draw [bend left=90, looseness=2.50] (0.center) to (3.center);
		\draw [bend right=90, looseness=2.50] (0.center) to (3.center);
	\end{pgfonlayer}
\end{tikzpicture}}  \!}

\DeclareRobustCommand{\tensorchi}{ \!
\mathbin{ \begin{tikzpicture}[scale=0.45, every node/.style={scale=0.6}]
	\begin{pgfonlayer}{nodelayer}
		\node [style=none] (0) at (0, 0.25) {};
		\node [style=none] (3) at (0, -0.25) {};
		\node [style=none] (4) at (0, 0) {$\chi$};
	\end{pgfonlayer}
	\begin{pgfonlayer}{edgelayer}
		\draw [bend left=90, looseness=1.75] (0.center) to (3.center);
		\draw [bend right=90, looseness=1.75] (0.center) to (3.center);
	\end{pgfonlayer}
\end{tikzpicture}} \! }

\DeclareRobustCommand{\tensorzeta}{\! 
\mathbin{ \begin{tikzpicture}[scale=0.45, every node/.style={scale=0.6}]
	\begin{pgfonlayer}{nodelayer}
		\node [style=none] (0) at (0, 0.25) {};
		\node [style=none] (3) at (0, -0.25) {};
		\node [style=none] (4) at (0, 0) {$\zeta$};
	\end{pgfonlayer}
	\begin{pgfonlayer}{edgelayer}
		\draw [bend left=90, looseness=1.75] (0.center) to (3.center);
		\draw [bend right=90, looseness=1.75] (0.center) to (3.center);
	\end{pgfonlayer}
\end{tikzpicture}} \! }

\DeclareRobustCommand{\tensormu}{ \!
\mathbin{ \begin{tikzpicture}[scale=0.45, every node/.style={scale=0.6}]
	\begin{pgfonlayer}{nodelayer}
		\node [style=none] (0) at (0, 0.25) {};
		\node [style=none] (3) at (0, -0.25) {};
		\node [style=none] (4) at (0, 0) {$\mu$};
	\end{pgfonlayer}
	\begin{pgfonlayer}{edgelayer}
		\draw [bend left=90, looseness=1.75] (0.center) to (3.center);
		\draw [bend right=90, looseness=1.75] (0.center) to (3.center);
	\end{pgfonlayer}
\end{tikzpicture}} \! }

\DeclareRobustCommand{\tensorxi}{\! 
\mathbin{ \begin{tikzpicture}[scale=0.45, every node/.style={scale=0.6}]
	\begin{pgfonlayer}{nodelayer}
		\node [style=none] (0) at (0, 0.25) {};
		\node [style=none] (3) at (0, -0.25) {};
		\node [style=none] (4) at (0, 0) {$\xi$};
	\end{pgfonlayer}
	\begin{pgfonlayer}{edgelayer}
		\draw [bend left=90, looseness=1.75] (0.center) to (3.center);
		\draw [bend right=90, looseness=1.75] (0.center) to (3.center);
	\end{pgfonlayer}
\end{tikzpicture}} \! }

\newtheorem{thm}{Theorem}
\newtheorem{definition}{Definition}
\newtheorem{proposition}{Proposition}
\newtheorem{lema}{Lemma}
\newtheorem{example}[thm]{Example}

\begin{document}

\title{Quantum networks theory}
\author{Pablo Arrighi}
\affiliation{Université Paris-Saclay, Inria, CNRS, LMF, 91190 Gif-sur-Yvette, France}
\author{Amélia Durbec}
\affiliation{CNRS, Centrale Lille, JUNIA, Univ. Lille, Univ. Valenciennes, IEMN, 59046 Lille Cedex, France}
\author{Matt Wilson}
\affiliation{PPLV Group, Department of Computer Science, University College London }
\affiliation{Quantum Group, Department of Computer Science, University of Oxford, UK}
\affiliation{HKU-Oxford Joint Laboratory for Quantum Information and Computation}

\begin{abstract} The formalism of quantum theory over discrete systems is extended in two significant ways. First, quantum evolutions are generalized to act over entire network configurations, so that nodes may find themselves in a quantum superposition of being connected or not, and be allowed to merge, split and reconnect coherently in a superposition. Second, tensors and traceouts are generalized, so that systems can be partitioned according to almost arbitrary logical predicates in a robust manner. The hereby presented mathematical framework is anchored on solid grounds through numerous lemmas. Indeed, one might have feared that the familiar interrelations between the notions of unitarity, complete positivity, trace-preservation, non-signalling causality, locality and localizability that are standard in quantum theory be jeopardized as the neighbourhood and partitioning between systems become both quantum, dynamical, and logical. Such interrelations in fact carry through, albeit two new notions become instrumental: consistency and comprehension.\medskip\medskip\\
{\em Keywords.} Fully quantum networks, Quantum causal graph dynamics, Partial trace, Indefinite causal orders, Spin networks, Causal Dynamical Triangulations, Quantum Gravity.
\end{abstract}

\keywords{Quantum causal graph dynamics, Partial trace, Indefinite causal orders, Spin networks, Causal Dynamical Triangulations, Quantum Gravity.}

\maketitle

\section{Introduction}

{\em Why study fully quantum networks?} Composite systems such as computer processes \cite{PapazianRemila}, neurons \cite{FarrellyQNN}, biochemical agents \cite{MurrayDicksonVol2}, particle systems \cite{MeyerLove}, market agents \cite{KozmaBarrat}, social networks users, are often modelled as dynamical networks. Dynamical network models are suitable whenever the connectivity and population of the objects modelled are subject to change, for instance agents within social networks have the capabilities to spawn, disappear, and form or (more sadly) sever connections. Another class of composite systems which has attracted considerable attention lately are those whose constituents are quantum in nature, as they can be leveraged to perform computational and information processing tasks with efficiencies beyond those of their best known classical counterparts. In this paper we will argue that there are in fact many motivations for developing mathematical framework suitable for reasoning about composite systems in which \textit{all} of the fundamental features of dynamical networks are quantum in nature---including connectivity and population.

One such motivation is the evocative idea of a `quantum internet' \cite{QuantumNetworksKimble,QuantumNetworksCirac,QuantumNetworksBianconi}. The development of a fully quantum internet echoes a fundamental question in Computer Science: \textit{What exactly is a computer}? To the best of our knowledge, the key resources granted to us by nature for the sake of efficient computing are spatial parallelism between systems and quantum parallelism within the systems. %
This observation motivates the questions: {\em Are spatial parallelism and quantum parallelism independent computational resources, or could the former be subject to the latter? 
What then, is a quantum network of quantum computers?}

Indeed, a second motivation lies in the study of models of distributed quantum computing. So far only classical dynamical networks of quantum automata \cite{ArrighiUCAUSAL,GayQPLreview} have been addressed. Yet it is known that superpositions of connectivities within distributed quantum computers \cite{ArrighiQCGD} could used to implement protocols in which the orderings of events \cite{ValironQSwitch} and the trajectories of particles \cite{ChiribellaSecondShannon} are quantum in their specification. The importance of modelling the implementation \cite{QswitchExp1,QswitchExp2,exp-N-switch} of such protocols is well-argued by noting that in several tasks, they are more efficient than their standard quantum counterparts \cite{OreshkovQuantumOrder,FacchiniPermutation,CostaComplexity, ChiribellaEnhanced, ChiribellaSecondShannon, Abbott_2020}. Despite these established advantages of quantising the networking of quantum systems, very little is known about the formalisation and the dynamics of such networkings \cite{BruknerDynamics}.

As it turns out, motivations for considering quantum networks appear not only in the development of quantum technologies, but also in the foundations of physics. Whilst a theory which satisfactorily quantises gravity remains elusive, the intersection of the fundamental principles of general relativity and quantum theory suggests that such a theory will have basic features in common with fully quantum networks. Indeed the geometry of spacetime is dependent on mass distribution in general relativity, and mass distribution may be superposed in quantum theory. As a result a striking feature shared by most attempts to quantise gravity is the possibility of quantum superpositions of spacetime geometries. Whilst the effects of superposing spacetime geometries have long been thought to be inaccessible experimentally, recent realistic experimental proposals have been infanted through research at the crossover of quantum information and quantum gravity \cite{BoseSpinEntanglementWitness2017,BoseMASSIVE2018,MarlettoWhencangravity2018,MarshmanLocalityEntanglementTableTop2019} 
which promise the near-term confirmation or invalidation of this feature \cite{Christodouloupossibilitylaboratoryevidence2019,
Christodouloupossibilityexperimentaldetection2018b}.\\ 
In most theories of quantum gravity, geometries are indeed represented by networks dual to simplicial complexes \cite{RovelliLQG,LollCDT}, and so the networking structure of systems is both a prominent conceptual feature and a feature whose quantum nature deserves considerable attention in its own right. The current network representations of geometry in theories of quantum gravity do however come with a few sources of discomfort. The first, a seemingly technical issue, is that the precise definition of such geometries is often left informal, whereas choices regarding whether to name nodes and how, or whether to order their neighbours or not, can have dramatic consequences in terms of disallowing superluminal signalling, allowing for expansion, allowing for fermionic propagation \cite{ArrighiNamesInQG,ArrighiCreation,ArrighiTetrahedra}. 
The second is an uncomfortable physical consequence of the path integral formulation of dynamics, namely that such a formulation may very well jeopardize the unitary of quantum theory. A third uncomfortable feature of path integral formulation is that fundamental physical principles such as locality and causality are only seen as emergent in these theories, recovered by means of heroic, and not entirely rigorous classical limits.

{\em First strand of contributions.}
Building on the above motivations, this paper is to provide a rigorous mathematical framework for reasoning about fully quantum networks and their dynamics. The framework is sufficiently flexible to allow for arbitrary quantum superpositions of entire networks, including superpositons of connectivity and population. All of these network features evolve unitarily. For instance edges can be dropped or created, nodes can split or merge in a coherent manner. Yet, the framework offers rigorous definitions of local, nearest-neighbour interactions, as well as global, non-signalling causal unitary dynamics---in a strict, non-emergent sense. Thus, the first application of this work may be the provision of a rigorous mathematical framework for those theories of quantum gravity in which geometries are represented by networks, e.g. allowing them to seek for strictly causal, unitary dynamics. The provision of models for quantum complex systems and distributed quantum computing is another application domain.\\
To reconcile the flexibility of fully quantum networks dynamics with rigorous notions of causality and unitarity was a long road. For instance, a first potential ambush was explained in \cite{ArrighiNamesInQG}, i.e. the causality of a quantum network dynamics only makes sense if its nodes are named. This is because of the role that names play in specifying the relative alignment of the network configurations as they are placed in a quantum superposition. Still, despite the fact that nodes need \textit{a} name, the \textit{actual name choice} should have no effect on the evolution of a network beyond the aforementioned role: this independence is formalised by the notion of renaming-invariance. A second potential ambush was explained in \cite{ArrighiCreation}, i.e. we must ensure that names are no obstacle to unitary node creation/destruction. Indeed, suppose that a node $u$ splits into $u$ and $v$. How can this evolution be a unitary $U$? Won't $U^\dagger$ just erase name $v$? If $v$ gets renamed into $v'$ before acting with $U^\dagger$, do we still get the node $u$? These questions are solved by means of a name algebra discovered in the context of reversible causal graph dynamics. In short, a node $u$ can be split into its left part $u.l$ and its right part $u.r$. Such a left-right pair can in turn re-merge to form $u.l \vee u.r=u$, see Fig. \ref{fig:namealgebra}.

\begin{figure}\centering
\includegraphics[width=\textwidth]{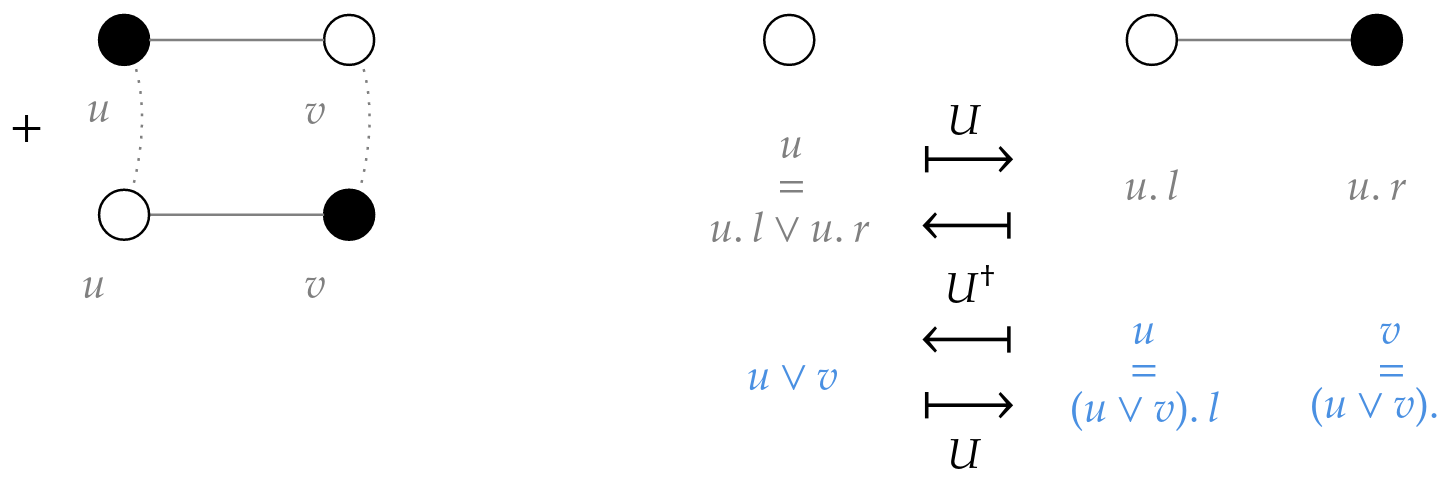}
\caption{\label{fig:namealgebra} {\em Necessity of the name algebra.} Left: naming vertices is necessary in order to track alignment across quantum superpositions. Right/grey: A quantum evolution may split $u$ into $u.l$ and $u.r$. As the inverse evolution merges them back we need $( u.l\lor u.r) =u$. Right/blue: The inverse quantum evolution may also merge vertices $u$ and $v$ into $( u\lor v)$. As the forward evolution splits them back we need $( u\lor v) .l=u$ and $( u\lor v) .r=v$.}
\end{figure}

{\em Second strand of contributions.} Whilst developing a model of fully quantum networks, we realised that by taking entire network configurations as states, the usual notion of tensor product fails us on very basic unsettling questions, such as: \textit{When two nodes are connected, with one on the left of a tensor product, and the other on the right, where does the edge between them live?} \textit{When two nodes $u$ and $v$ are in a quantum superposition of being connected or not, and the neighbours of node $u$ are placed left of the tensor product, which side of the tensor product is node $v$?}

In order to address these questions we were forced to generalize the usual notion of tensor product, by means of an approach which is both decompositional and operational.\\
The approach is decompositional in the sense that, given an entire network $\ket{G}$ we discriminate which nodes go left $\ket{G_{{\chi}}}$ and which nodes go right $\ket{G_{\overline{\chi}}}$ of the tensor product $\tensorchi$, based on an almost arbitrary logical predicate $\chi$, so that we have $\ket{G}=\ket{G_\chi}\tensorchi\ket{G_{\overline{\chi}}}$.
To answer the first question of whereabouts an edge between nodes $u\in V(G_\chi)$ and $v\in V(G_{\overline{\chi}})$, we extend the name algebra with a symbol ``$\hyphenbullet$" used to encode edges within node names, e.g. node $u=\hyphenbullet x$ is connected to node $v= x \vee y$. That way, the information about the cross-connectivity between $G_\chi$ and $G_{\overline{\chi}}$ remains held within their node names, see Fig. \ref{fig:edgesacrosstensor}.\\
\begin{figure}\centering
\[\includegraphics[scale = 0.2]{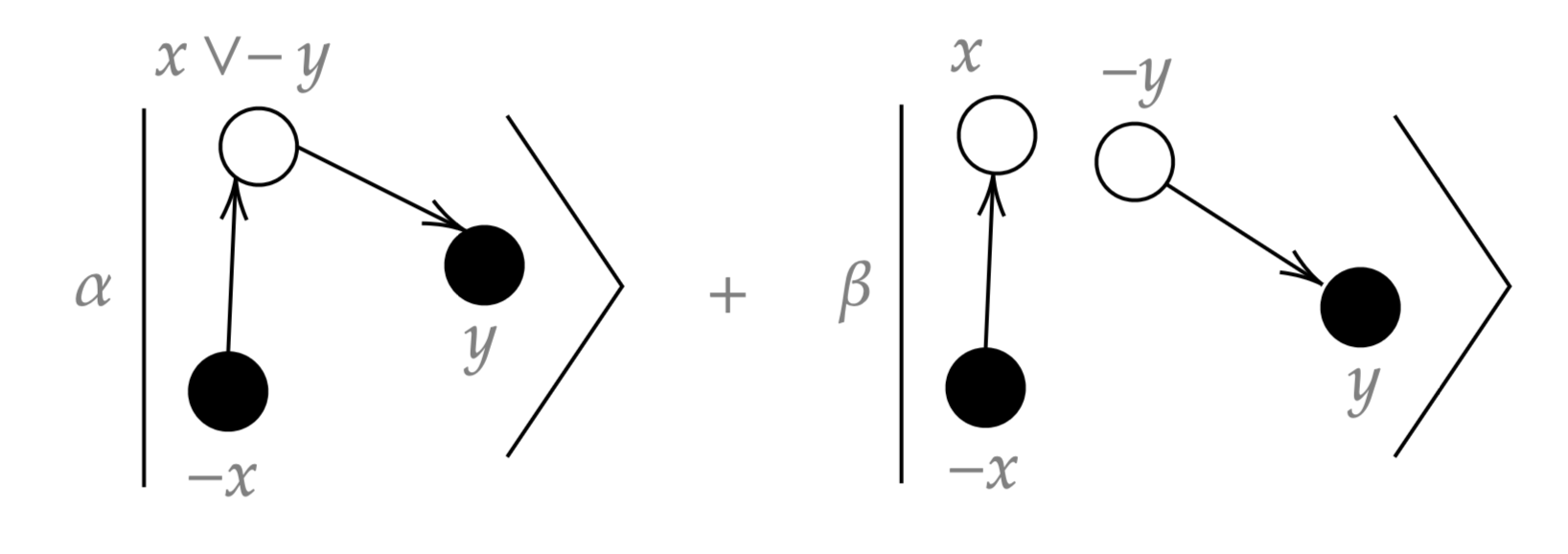}.\]
\[\includegraphics[scale = 0.2]{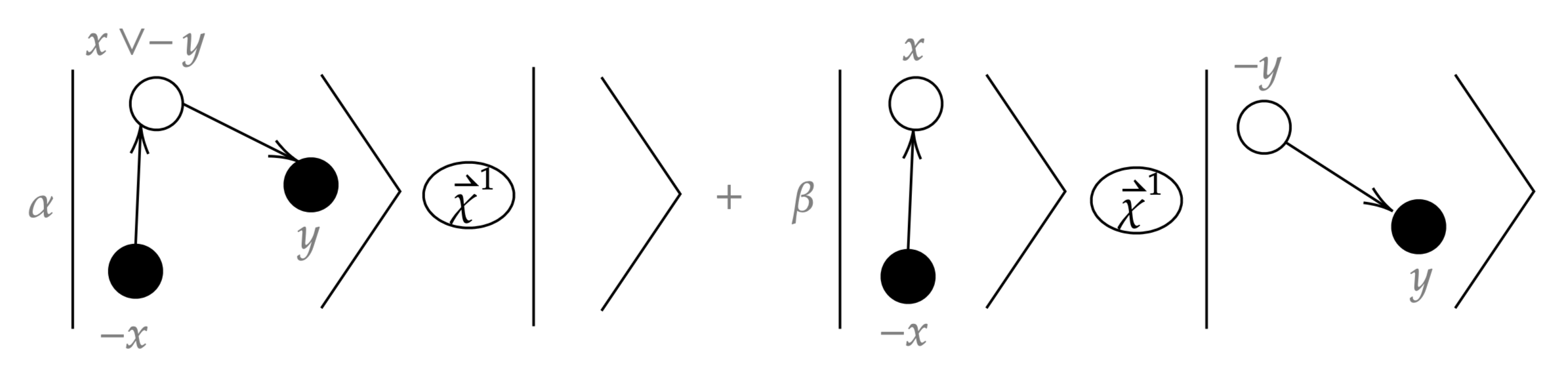}. \]
\caption{\label{fig:edgesacrosstensor} {\em Tensor decomposition operation on a superposition of geometries.} Top:  Connectivity is encoded within the names of the nodes, e.g. $\hyphenbullet x$ being connected to node $x \vee \hyphenbullet y$ in the first branch of the superposition. In this first branch node $y$ lies within radius $1$ of node $\hyphenbullet x$, in the other they are disconnected. Bottom: As this $\harp{\chi }^1:=\harp{\zeta }_{\hyphenbullet x}^{2}$ selects the oriented radius two neighbours of $\hyphenbullet x$, node $y$ finds itself in a superposition of falling left or right of the tensor.}
\end{figure}
The approach is operational in the sense of linearity, e.g. consider the following expression
$$\ket{G}+\ket{H}=\ket{G_\chi}\tensorchi\ket{G_{\overline{\chi}}}+\ket{H_\chi}\tensorchi\ket{H_{\overline{\chi}}},$$
then $\chi$ may well choose to place a note $w$ left of the tensor product in the $G$ branch of the superposition, but right of the tensor product in the $H$ branch of the superposition, see Fig. \ref{fig:edgesacrosstensor}. This answers the second question on the whereabouts of superposed neighbourhoods.

One question leads to another: \textit{Are these generalized tensor products always defined? For instance, what about the tensor product of a given network, with itself?} \textit{On what ground are we allowed to discriminate which nodes go left of the tensor, and which nodes go right? For instance, can we base this on their proximity to other nodes, their states, or combinations of these?}
\textit{In what sense are these generalized tensor products a generalization of the usual ones? Do they also generalize the direct sum?}
So that the decomposition be unambiguous, we ask that whenever $L$ and $R$ are not `consistent', i.e. not of the form $L=G_\chi$ and $R=G_{\overline{\chi}}$, then $\ket{L}\tensorchi \ket{R}=0$. We define a set of sufficiently well-behaved logical predicates $\chi$ for our purposes, which we refer to as `restrictions'.  Restrictions then lead to a natural notion of partial trace $(\ket{G}\bra{H})_{|\chi}=\ket{G_\chi}\bra{H_\chi}\braket{H_{\overline{\chi}}|G_{\overline{\chi}}}$ which is itself a completely positive and trace-preserving operation. Notice that the partial trace is denoted with a vertical bar (e.g. $\rho_{|\chi})$, to distinguish it from the restriction (e.g. $G_\chi$).
{ The Hilbert space ${\cal H}$ and restrictions $\chi$ are left deliberately general in this paper, so that they can be used to reason superpositions of geometries, as well as other scenarios, including the more familiar context of many qudits $u$, $v$, \ldots say. For instance, by considering the restriction $\zeta_u$ which keeps just the node $u$ of any graph 
\[\zeta_u:\quad\includegraphics[align=c,scale = 0.16]{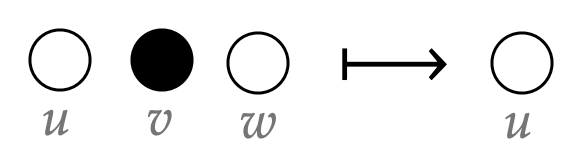}, \] 
we recover the standard tensor product of the first qudit with the next. Similarly (encoding bits for aesthetic purposes as $\{white, black\}$ as opposed to $\{0,1\}$) by considering the restriction $\chi_{w = white}$ which keeps all nodes when the name $w$ is present and in state $\ket{white}$ 
\[\chi_{w = white}:\quad\includegraphics[align=c,scale = 0.16]{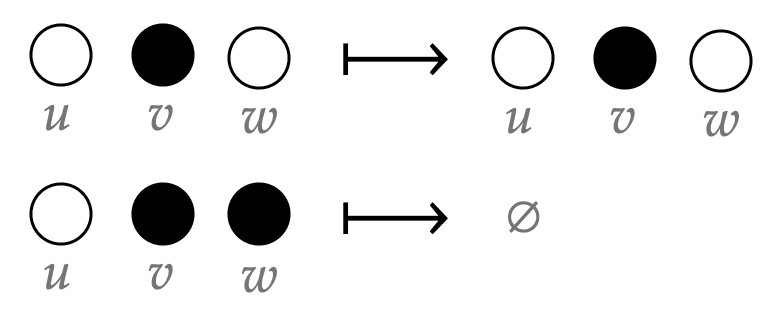},\] 
we recover the direct sum between the subspace of graphs in which $w$ is white and the rest. More elaborate decompositions as tensor products and direct sums can then be modelled by restrictions such as $\zeta_{u:white}$, which keeps only node $u$ of a network, and only if its state is $\ket{white}$ 
\[\zeta_{u:white}:\quad\includegraphics[align=c,scale = 0.16]{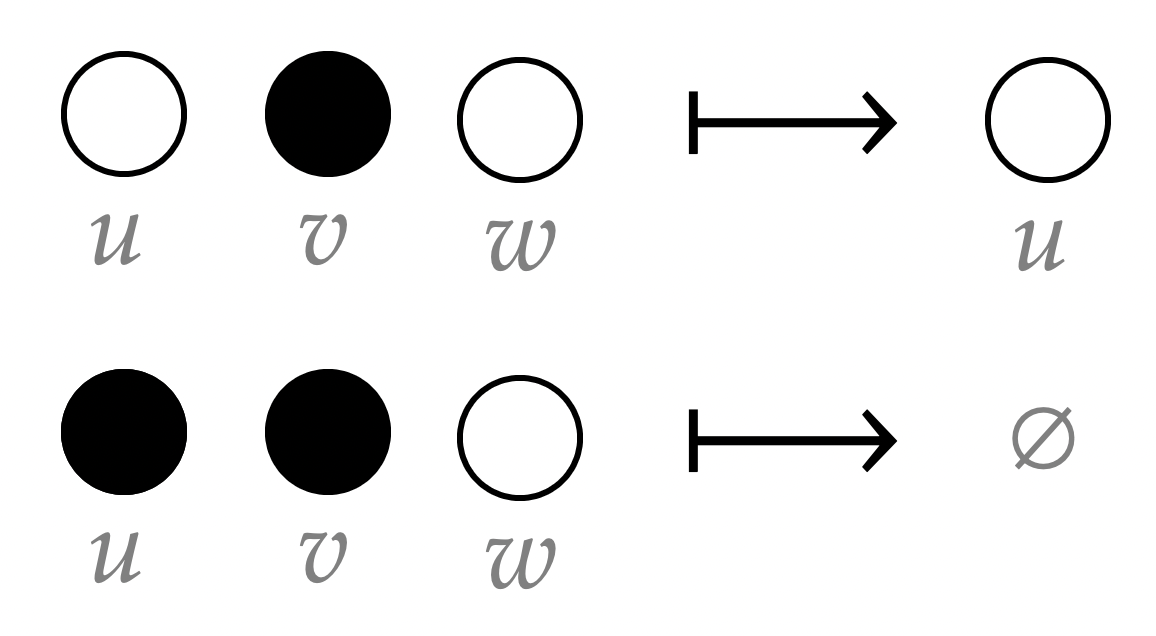},\]
as well as $\mu$ which keeps all nodes marked as white from any graph 
\[\mu:\quad\includegraphics[align=c,scale = 0.16]{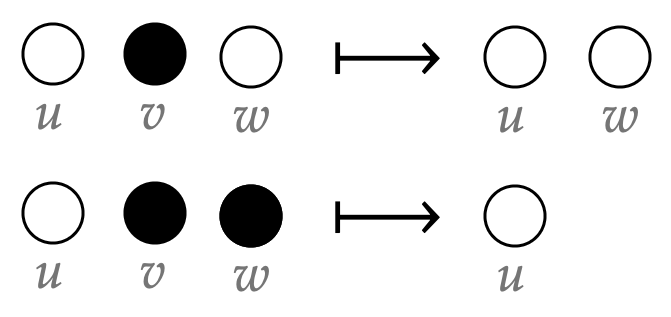}.\] 
These of course, are presented in addition to restrictions which more explicitly reference graph geometry, and so pull us away from simple concrete representations in terms of tensor products and direct sums, for instance the restriction $\zeta_x^{2}$ which takes the non-oriented radius $2$ neighbourhood around $x$ \[\includegraphics[scale = 0.2]{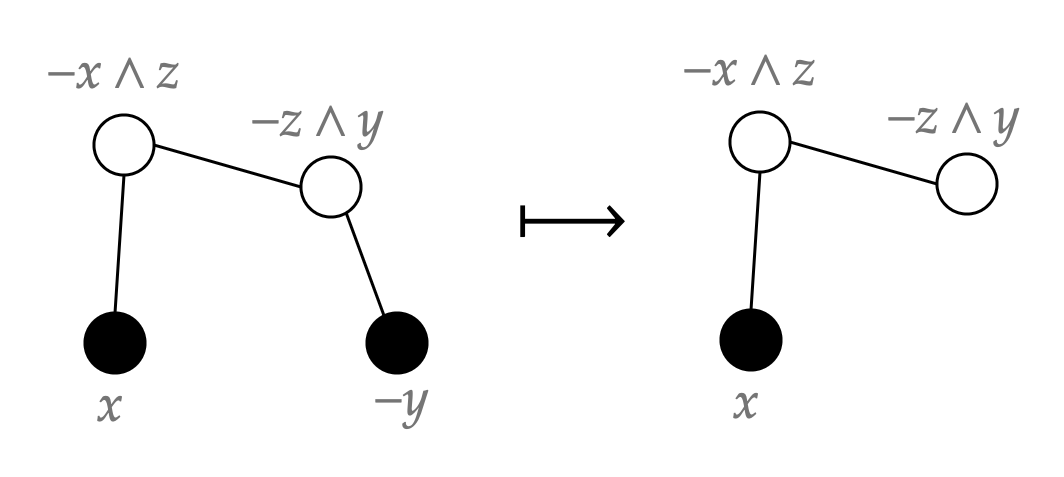}.\] The point here, is that we can reason about any of these ways of carving out subsystems, simply because in each case those ways are representable as restrictions. Thus, generalised tensor products may constitute a powerful analytical tool in quantum information, with many potential applications e.g. defining generalised, robust forms of entanglement.
 \color{black}}

{\em Relationship between the two strands.}
It is the notion of a restriction $\chi$ on a network which is then used to define the notions of locality and causality, which in the presence of unitarity will turn out to interact in a physically intuitive way. We saw above (Fig. \ref{fig:edgesacrosstensor}) that superpositions of geometries can lead to superpositions of populations left and right of the tensor product $\tensorchi$. Yet, and even in these situations, the connective $\tensorchi$ is sufficiently well-behaved to serve as the stepping stone for a robust notions of locality and causality.
An operator is considered to be local on a restricted part $\chi$ of a network if it alters only that which is within $\chi$, ignoring the remainder. Here we define $\chi$-locality of operators as the satisfaction of $\bra{H}A\ket{G}=\bra{H_\chi}A\ket{G_\chi}\braket{H_{\overline{\chi}}|G_{\overline{\chi}}}$ and then prove the equivalence of this definition with both gate-locality in the operational picture, and dual locality in the Heisenberg picture. An operator considered to be causal between two restrictions $\chi,\zeta$, may on the other hand alter the entire network, but its effects on region $\zeta$ must be fully determined by causes in region $\chi$. Here we define $\chi\zeta$-causality as the satisfaction of $(U\rho U^\dagger)_{|\zeta}=(U\rho_{|\chi} U^\dagger)_{|\zeta}$ and prove equivalence of this definition with dual causality in the algebraic, Heisenberg picture. Whilst all the usual interrelations between these notions carry through, the path to them is full ambushes: consistency checking requires great care, and the usual notion of subrestriction requires an extra condition (`comprehension') before it behaves as expected---which fortunately vanishes in the name-preserving superselected space.

{\em Plan.} Sec. \ref{sec:graphs} describes the name algebra used for the naming of nodes, defines network configurations and their induced state space, as well as defining the notions of renaming-invariance and name-preservation. Sec. \ref{sec:locality} defines locality in its various forms, proves their equivalences, and proves that every unitary operator can be extended into a local unitary operator. Sec. \ref{def:causality} defines causality in its two forms, proves their equivalences, and shows that causal unitary operators can be implemented by a product of local ones. Sec. \ref{sec:conclusion} summarizes the contributions of this paper and outlines potential avenues for future work on the formalism presented. Sec. \ref{sec:conclusion} furthermore enumerates several perspective applications of the formalism, beyond those that motivated work originally.\\
Introducing a new formalism for quantum theory is a slippery exercise. We have had reestablish basic facts first through numerous lemmas found in Appendix \ref{sec:lemmas}, including compositionality laws akin to the axioms of categorical approaches to quantum theory. The consequences of imposing renaming-invariance on the dynamics of networks are explored in \ref{sec:renaminginvariance}. The formal sense in which these restrictions really recover notions of direct sum and tensor product is given in Appendix \ref{sec:reconstruction}, since some care has to be taken to move away from the Fock-space viewpoint of this paper.

\section{An algebra for naming nodes of quantum networks}\label{sec:graphs}

The problem of defining superpositions of graphs immediately leads to the following conundrum. Consider a pair of systems, white and black, superposed with again a pair of systems, black and white. One must decide whether the mathematics assigned to this sentence should be either of $\ket{\smwhtcircle \harrowextender \smblkcircle } +\ket{\smblkcircle \harrowextender \smwhtcircle }$ or $2\ket{\smwhtcircle \harrowextender \smblkcircle }$ (where no distinguishment is made between black-white and white-black). The only way to disambiguate this situation is by naming those vertices. The alternative choice, to neglect this alignment information by claiming that it does not matter since the graphs are isomorphic, leads to the physically unreasonable consequence of permitting super-luminal signalling \cite{ArrighiNamesInQG}.

Still, vertex names can be cumbersome. In the classical regime, and in a variety of different early formalisms, it was shown that their presence leads to vertex-preservation, i.e. the forbidding of vertex creation/destruction \cite{ArrighiRCGD} . This was a somewhat uncomfortable situation, because the informally defined model of Hassalcher and Meyer \cite{MeyerLGA} did seem to feature reversibility, vertex creation/destruction, and non-signalling causality. Again in the classical regime, the issue was finally solved by introducing a \textit{name algebra} \cite{ArrighiCreation}. We now bring the notion of a name algebra over to the quantum regime, simplified. First, let us remind the reader of why we cannot do without such an algebra. 

Indeed, say as in Fig. \ref{fig:namealgebra} that some quantum evolution splits a vertex $u$ into two. We need to name the two infants in a way that avoids name conflicts with the vertices of the rest of the graph. But if the evolution is locally-causal, we are unable to just `pick a fresh name out of the blue', because we do not know which names are available. Thus, we have to construct new names locally. A natural choice is to use the names $u.l$ and $u.r$ (for left and right respectively). Similarly, say that some other evolution merges two vertices $u,v$ into one. A natural choice is to call the resultant vertex $u\lor v$, where the symbol $\lor$ is intended to represent a merger of names.

This is, in fact, what the inverse evolution will do to vertices $u.l$ and $u.r$ that were just split: merge them back into a single vertex $u.l\lor u.r$. But, then, in order to get back where we came from, we need that the equality $u.l\lor u.r=u$ holds. Moreover, if the evolution is unitary, as is prescribed by quantum mechanics, then this inverse evolution does exists, therefore we are compelled to accept that vertex names obey this algebraic rule.

Reciprocally, say that some evolution merges two vertices $u,v$ into one and calls them $u\lor v$. Now say that some other evolution splits them back, calling them $(u\lor v).l$ and $(u\lor v).r$. This is, in fact, what the inverse evolution will do to the vertex $u\lor v$, split it back into $(u\lor v).l$ and $(u\lor v).r$. But then, in order to get back where we came from, we need the equalities $(u\lor v).l=u$ and $(u\lor v).r=v$.

{\em A quick note on notations. Throughout the paper, the symbol $:=$ means `is defined by'. Unquantified letters are implicitly introduced by a ``for all" across the range that corresponds to the typographic convention used.} 

We now formally introduce the algebra that we will use the name nodes:
\begin{definition}[Name algebra]\label{def:namealgebra}

Let $ \mathbb{K}$ be a countable set. Let $ \hyphenbullet \mathbb{K} :=\{\hyphenbullet c\ |\ c\in \mathbb{K}\}$. 
The name algebra $ \mathcal{N}[\mathbb{K}]$ has terms given by the grammar 
\begin{equation*}
u,v\ ::=\ c\ |\ u.t\ |\ u\lor v\quad\text{with} \quad c\in \mathbb{K} ,\ t\in \{l,r\}^{*}
\end{equation*}
and is endowed with the following equality theory over terms (with $ \varepsilon $ the empty word):
\begin{equation*}
( u\lor v) .l=u\qquad ( u\lor v) .r=v\qquad u.\varepsilon =u\qquad u.l\lor u.r=u
\end{equation*}
We define $ \mathcal{V:=N}[\mathbb{K} \cup \hyphenbullet \mathbb{K}]$. We write $ V\ \corresponds V'$ if and only $ \mathcal{N}[ V] =\mathcal{N}[ V']$.
\end{definition}

{\em From now on we take $\mathbb{K} =\mathbb{N} \setminus \{0\}$. We use letters $x,y$ to denote elements of $\mathbb{K} \cup \hyphenbullet \mathbb{K}=\mathbb{Z}\setminus \{0\}$.}

The $\pm $ sign will be used to capture the two tips of the edges of the graphs. (NB: To deal with $d-$dimensional simplicial complexes instead, we might for instance encode them as  $\pm -$bipartite graphs, or take $x\in \llbracket 0,d\rrbracket \times \mathbb{K}$.)

\subsection{Defining graphs}

Next, we take a `system' to mean both a `state' and a `name', whereas a `graph' is a set of systems having disjoint names, see Fig. \ref{fig:graphs}. Our formal definition of a graph is given by the following:

\begin{definition}[Graphs]\label{def:graphs}
Let $ \Sigma $ be the set of internal states. We define the set $S := \Sigma \times \mathcal{V}$, whose elements will be referred to as \textit{systems}, with 
\begin{itemize}
\item $ \sigma \in \Sigma $ the internal state of the system
\item $ v\in \mathcal{V}$ the name which supports the system
\end{itemize}
A graph $ G$ is a finite set of systems such that
\begin{align}
\sigma .v,\ \sigma '.v'\in S \textrm{ and } v.t=v'.t' \textrm{ implies } \sigma =\sigma' \textrm{ and } v=v' \textrm{ and } t=t'\label{eq:wellnamedness}
\end{align}
We define its support $ V( G) :=\{v\ |\ \exists \sigma \in \Sigma \ s.t. \ \sigma .v\in G\}$. We denote by $\mathcal{G}:=\{ G\ |\ G\subseteq \Sigma\times {\cal V} \textrm{ and } G \textrm{ is a graph}\}$ the set containing every possible graph. \\ 
We denote by $ \mathcal{H}$ the Hilbert space whose canonical basis is labelled by the elements of $ \mathcal{G}$. 
We denote by $\mathcal{B}_1(\mathcal{H})$ the trace class operators on $\mathcal{H}$.
Consider $\mathcal{C}\subseteq\mathcal{G}$ some constrained configurations. We denote by $ \mathcal{H}^\mathcal{C}$ the Hilbert space whose canonical basis is labelled by the elements of $ \mathcal{C}$. 
\end{definition}
Notice that a graph $G$ could be seen as a partial function from ${\cal V}$ to $\Sigma$. Defining it as a relation, i.e. a set of pairs, will be convenient later in order to define set theoretical operations upon them. 

We are aware that the above definition is not quite the traditional one. Here, edges are derived information from the systems. 
\begin{figure}\centering
\includegraphics[width=0.8\textwidth]{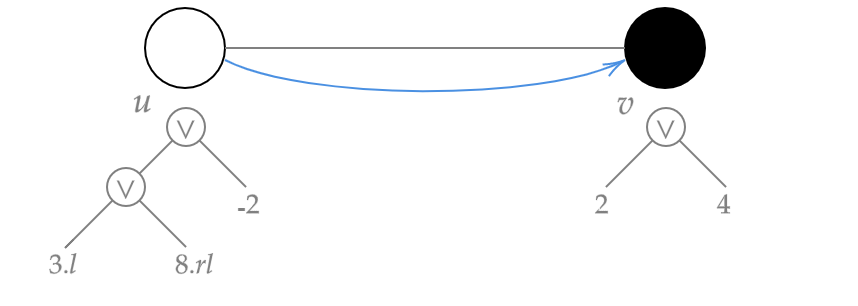}
\caption{\label{fig:graphs} {\em Graphs.} Left: A system with state 'white' and name $u=(( 3.l\ \lor 8.rl) \lor \hyphenbullet 2)$. Right: A system with state 'black' and name $v=( 2\lor 4)$. Middle/grey: Here we decided to interprete $u.r=\hyphenbullet 2$ and $v.l=2$ as the presence of an unoriented edge $\{u,v\}$. Middle/blue: We could have chosen to interprete it as an oriented edge $( u,v)$ instead. Middle: In both cases, geometry is derived from relative information that is already present within systems, and which is invariant under renamings.}
\end{figure}

\begin{definition}[Induced edges]\label{def:inducededges}
The following defines the induced undirected edges: 
\begin{equation*}
E( G) :=\left\{\{v,v'\} \ |\ v.t=\hyphenbullet x.s\quad \text{and} \quad v'.t'=x.s\quad \text{and} \quad \sigma .v,\sigma '.v'\in G\quad\text{with}\quad\sigma .v\neq \sigma '.v'\right\}
\end{equation*}
The following defines the induced directed edges: 
\begin{equation*}
 \harp{E}( G) :=\left\{( v,v') \ |\ v.t=\hyphenbullet x.s\quad \text{and} \quad v'.t'=x.s\quad \text{and} \quad \sigma .v,\sigma '.v'\in V(G)\quad \text{with}\quad \sigma .v\neq \sigma '.v'\right\}.
\end{equation*}
\end{definition}

Notice that in both these conventions, geometrical information is encoded by means of names.

Most often we want those names to indeed describe the geometry, and nothing else. In other words the geometry and the dynamics that governs its evolution need be renaming-invariant. 

\begin{definition}[Renaming and renaming-invariance]\label{def:renaminginvariance}

A renaming is an isomorphism $ R:\mathcal{N}[\mathbb{K}]\rightarrow \mathcal{N}[\mathbb{K}]$, i.e. a bijection such that
\begin{equation*}
R( u.t) =R( u) .t\qquad R( u\lor v) =R( u) \lor R( v)
\end{equation*}
It is fully specified by its action on domain $ \mathbb{K}$.\\
It is extended to $ \mathcal{V}$ by letting $ R( \hyphenbullet x) :=\hyphenbullet R( x)$, with $ \hyphenbullet ( u.t) :=\hyphenbullet u.t$ and $ \hyphenbullet ( u\lor v) :=\hyphenbullet u\lor \hyphenbullet v$.\\
It is extended to $ \mathcal{S}$ by letting $ R( \sigma .v) =\sigma .R( v)$. It is extended to $ \mathcal{G}$ by acting pointwise. It is extended to $ \mathcal{H}$ by linearity.

Let $ A$ be an operator over graphs, possibly parameterized by $ v\in \mathcal{V}$. It said to be renaming-invariant if and only if $ RA_{v} =A_{R( v)} R$.
\end{definition}

Renaming-invariance and its consequences are worked out in appendix \ref{sec:renaminginvariance}. For instance it leads to $\pm $-name-preservation. Yet, several results of this paper require full name-preservation:

\begin{definition}[Name-preservation]\label{def:np}
Let $A$ be an operator over graphs. It is said to be name-preserving (n.-p. for short) if and only if $V(G)\not{\corresponds} V( H)$ implies $\bra{H} A\ket{G} =0$.
\end{definition}

Again, notice that name-preservation does not prevent node-creation, for instance node $2$ is allowed to split into $2.l$ and $2.r$. Nor does it prevent edge-creation, for instance both $2.l$ and $2.r$ will then be connected to node $\hyphenbullet 2$, say. 

Throughout the paper we track what becomes of the renaming-invariance and name-preservation properties, making it clear whenever they are used as necessary premises of the established result.

Interestingly traceouts preserve\ldots name-preservation.

\section{Generalized tensors and traces over quantum networks}\label{sec:tensors}

We will now generalise the partial trace first, and the tensor product, by means of a decompositional and operational approach.

In the traditional, compositional approach, qubit $u$ has Hilbert space ${\cal H}^u$, qubit $v$ has Hilbert space ${\cal H}^v$, and the pair of qubits $uv$ has Hilbert space ${\cal H}^{uv}$. The tensor product takes $\ket{\sigma}\in {\cal H}^u$ and $\ket{\sigma'}\in {\cal H}^v$ to yield $\ket{\sigma,\sigma'}\in {\cal H}^{uv}$, as such it is not an internal binary operator of some Hilbert space, and in fact implicitly carries a strong `typing information' ${\cal H}^u\times {\cal H}^v\rightarrow {\cal H}^{uv}$.\\ 
In contrast, the starting point of our approach is a Hilbert space that is quite large: it can accommodate both a graph and its subgraphs; a single qubit $u$ or $v$, and the pair of them $uv$ or even none of these. I.e. ${\cal H}={\cal H}^u\oplus {\cal H}^v \oplus {\cal H}^{uv} \oplus \ldots$. E.g. the superposition $\alpha\ket{0.u}+\beta\ket{0.v}$ happily denotes a qubit with a superposition of names, whereas $\alpha\ket{\varnothing}+\beta\ket{0.u}+\gamma\ket{1.u}\otimes\ket{1.v}$ denotes a superposition of having zero, one or two qubits. This is reminiscent of a Fock space of particles, but rather, it is similar to a Fock space of qubits as can be found in the quantum programming language literature whenever the language includes a \texttt{new qubit} creation operation. As a consequence, our generalised tensor product takes $\ket{\sigma.u}\in {\cal H}$ and $\ket{\sigma'.v}\in {\cal H}$ to yield $\ket{\sigma.u,\sigma'.v}\in {\cal H}$, as such it is an internal binary operator of some Hilbert space and has no typing information. Then, the tensor product becomes a way of decomposing a state in ${\cal H}$, e.g. 
$\ket{0.u}\tensorchi_{\!u}\ket{0.v,0.w}=\ket{0.u,0.v,0.w}$, but there could be many others, e.g. $\ket{0.v}\tensorchi_{\!v}\ket{0.u,0.w}=\ket{0.u,0.v,0.w}$. The restriction $\chi_u$, $\chi_v$, \ldots is what specifies what needs to go left of the tensor, thereby replacing its typing information. The formalism is designed so that any graph $\ket{G}$ can be decomposed according to any restriction $\chi$ according to $\ket{G}=\ket{G_\chi}\tensorchi\ket{G_{\overline{\chi}}}$, for instance if the restriction takes all, the rest can always be the empty graph, e.g. 
$\ket{0.u}=\ket{0.u}\tensorchi_{\!u}\ket{\varnothing}$. It is not the case, however, that any two pairs of graphs can be composed\footnote{The choice of sending these expression to the null vector, rather than letting the tensor product be partially defined, will be necessary in order to be able to characterize $\chi$-local operators as being of the form $A\tensorchi I$ later.}, e.g. $\ket{0.u}\tensorchi\ket{0.u}=0$. Still, the generalised tensor product does generalize the traditional tensor product, as made formal in App. \ref{sec:reconstruction}. Interestingly it even generalizes the direct sum (for suitable choices of restrictions), combinations of direct sums and tensor products, and beyond.  

We begin by introducing these restrictions $\chi$ and generalised partial traces, which are similarly internalised act over ${\cal B}_1({\cal H})$, e.g. dropping system $v$ in $\ket{0.u,0.v}\bra{0.u,0.v}$ yields $\ket{0.u}\bra{0.u}$. App. \ref{sec:reconstruction} also discusses the nature of those partial traces induced by restrictions suited to direct sums. Interestingly, and thanks to the empty graph playing the role of a placeholder, these can be interpreted as a linear trace-preserving alternative to the Born rule.

\begin{definition}[Restrictions, partial trace, comprehension]\label{def:traceouts}
Consider a function
\begin{align*}
\chi &: \mathcal{G}\rightarrow \mathcal{G}\\
\chi &: G\mapsto G_{\chi } \subseteq G
\end{align*}
It is called a restriction if and only if $G_{\chi\chi}=G_\chi$, where we introduced the notation $ G_{\chi \zeta } :=( G_{\chi })_{\zeta }$ and thus $ \chi \zeta :=\zeta \circ \chi $. We use the notation $ G_{\overline{\chi }} :=G\setminus G_{\chi }$ even though the function $ \overline{\chi }: G\mapsto G_{\overline{\chi }}$ is not necessarily a restriction..\\
A restriction is extensible if and only if $\ G_{\chi} \subseteq H\subseteq G \Rightarrow H_{\chi} = G_{\chi}$. Given $ \zeta $ an extensible restriction, $ \zeta ^{r}$ is the extensible restriction such that $ G_{\zeta ^{r}}$ is induced by $ V( G_{\zeta })$ and its $ r$-neighbours according to $ E( G)$.
Similarly for  $  \harp{\zeta }^{r}$ and $  \harp{E}(G)$, where the $ r$-neighours are those pointing toward $ V( G_{\zeta })$.\\
Given $\chi, \zeta $ two extensible restrictions, $ \chi \cup \zeta $ is the extensible restriction such that $ G_{\chi \cup \zeta } :=G_{\chi } \cup G_{\zeta }$.\\
A restriction is pointwise if and only if $ G_{\chi } =\bigcup _{\sigma .v\ \in \ G}\{\sigma .v\}_{\chi }$.\\
They induce a partial trace:
$ (\ket{G}\bra{H})_{|\chi } := \ket{G_{\chi }} \bra{H_{\chi }} \braket{ H_{\overline{\chi }} | G_{\overline{\chi }}}$.\\
Let $ \rho $ denote a trace-class operator.\\
$ \rho _{|\chi }$ is defined from the above by linear extension.\\
$ \rho _{|\emptyset }$ denotes the usual, full trace $ \text{Tr}( \rho )$.\\
We use the notation $ [ \chi ,\zeta ] =0$ to mean that $ \chi \zeta =\zeta \chi $.\\
We say that $\zeta$ is comprehended within $\chi$ and write $ \zeta \sqsubseteq \chi $ if and only if $ G_{\chi \zeta } =G_{\zeta }$ and 
\begin{equation}
\braket{H_{\overline{\zeta }}|G_{\overline{\zeta }}} =\braket{H_{\chi \overline{\zeta }}|G_{\chi \overline{\zeta }}}\braket{H_{\overline{\chi }} | G_{\overline{\chi }} }\label{eq:comprehension}
\end{equation}
as illustrated in Fig. \ref{fig:comprehension}.
\end{definition}
{\em Soundness.} We need to show that if $\chi, \zeta$ are two extensible restrictions, then $\zeta^r$ and $\chi\cup\zeta$ are extensible restrictions. See Lemmas \ref{lem:restrictions}, \ref{lem:specialrestrictions}, \ref{lem:combiningrestrictions}. 
\hfill
$\qed $\\

Before going on lets be explicit about some basic examples
\begin{example}[Namewise restriction]\label{ex:1}

The most familiar way of picking out a subsystem in quantum information theory, is to pick a factor of some standard tensor product.  The closest analogy to this here is to use a namewise restriction such as $\zeta_u$, where $G_{\zeta_u} = \bigcup_{\sigma.v \in G:v=u}\sigma.v$. In this setting we choose to consider the constrained Hilbert space $\mathcal{H}^{\mathcal{C}}$ of graphs spanned by the set $\mathcal{C}$ of all graphs such that if $v \in V(G)$ and $v \wedge u$ then $v = u$.  Consider the unitary isomorphism $E: \mathcal{H}^{\mathcal{C}} \cong (\mathcal{H}^{u + \varnothing}) \otimes \mathcal{H}^{\mathcal{C}-u}$ where $\mathcal{H}^{u + \varnothing}$ is the Hilbert space generated by states of the form $\ket{\sigma . u}$ or $\ket{\varnothing}$ with $u$ fixed and $\sigma$ allowed to vary within $\Sigma$.
The set $\mathcal{C}-u$ is taken to be the set of all graphs in $\mathcal{C}$ which do not have a node $u$. 
Then, the generalised trace for this restriction induces a map $(\bullet)_{| \zeta_u}: \mathcal{B}_1(\mathcal{H}) \rightarrow \mathcal{B}_1(\mathcal{H}^{u + \varnothing})$ 
which acts explicitly as \[ \rho_{| \zeta_u} \cong Tr_{\overline{u + \varnothing}} ( \rho ) ,  \] where we use $\cong$ to mean equality up-to the unitary isomorphism $E$. In other words, for the most natural kind of restriction, the standard partial trace is recovered, indeed, 
\begin{align*}
(\ket{\sigma .u \cup G} \bra{\sigma ' . u \cup G'})_{| \zeta_u} & = \ket{\sigma .u}\bra{\sigma ' . u} \braket{G|G'} \\
& =   \ket{ \sigma .u}\bra{\sigma ' . u} \delta_{GG'} \\
& =  Tr_{\overline{u + \varnothing}}[(\ket{\sigma . u} \otimes \ket{G})( \bra{\sigma ' . u} \otimes \bra{G'} )  ] \\
& = Tr_{\overline{u + \varnothing}}[E(\ket{\sigma . u \cup G} \bra{\sigma ' . u \cup G ' })E^{\dagger}]
\end{align*}
The same syntactic proof method works in the special case in which one or more of the graphs is $\ket{\varnothing}$ simply by making the notational substitution $\varnothing \rightarrow \varnothing . u$
\end{example}

Since we work with name algebras to identify when pieces of some name $u$ such as $u.l$ or $u.r$ are present as parts of other names of nodes within the graph, it is often more natural to consider $\zeta_{\hat{u}}$. Here $G_{\zeta_{\hat{u}}} = \bigcup_{\sigma.v \in G:v \wedge u}\sigma.v$. Where we have used the symbol $v \wedge u$ to encode the statement ''$\exists t,t'$ such that $v.t = u . t^{'}$". In simple terms this restriction picks out any node which contains some part of $u$, allowing us to define a notion of subsystem which keeps track of the pieces of $u$ and they split, merge, and distribute across the graph over time.

\begin{example}[Statewise restriction]\label{ex:2}
Instead of picking out names, we could choose instead to pick out particular internal states, this is analogous to picking out part of a direct sum decomposition of Hilbert spaces. Formally we can define for any subset $\{ \varnothing \} \subseteq \mathcal{C} \subseteq \mathcal{G}$ the corresponding $\chi_{\mathcal{C}}$ by $G_{\chi_{\mathcal{C}}} = G$ if $G \in \mathcal{C}$ and $G_{\chi_{\mathcal{C}}} = \varnothing$ otherwise (note that this restriction is not in general extensible). Consider this time the isomorphism $E: \mathcal{H} \cong \mathcal{H}^{\mathcal{C}} \oplus \mathcal{H}^{\mathcal{G} -  \mathcal{C}}$ we can construct the map $(\bullet)_{|\chi_{\mathcal{C}}}: \mathcal{B}_1(\mathcal{H}) \rightarrow \mathcal{B}_1(\mathcal{H}^{\mathcal{C}})$ by \[ \rho_{| \chi_{\mathcal{C}}} \cong \pi_{\mathcal{C}} \rho \pi_{\mathcal{C}} + \ket{\varnothing} \bra{\varnothing} Tr[\pi_{\mathcal{G} - \mathcal{C}} \rho \pi_{\mathcal{G} - \mathcal{C}}].  \] Where $\pi_{\mathcal{C}
}$ is simply the orthogonal projection from $\mathcal{H}$ to $\mathcal{H}^{\mathcal{C}}$. Indeed, see that
\begin{align*}
& ( (\alpha \ket{G} + \beta \ket{H} ) (\alpha' \bra{G'} + \beta ' \bra{H'})  )_{| \chi_{\mathcal{C}}} \\ 
& =     \alpha \alpha' \ket{G} \bra{G'} + \beta \beta' \ket{\varnothing} \bra{\varnothing} \delta_{HH'}     \\
& = (    \pi_{\mathcal{C}} (\bullet) \pi_{\mathcal{C}} + \ket{\varnothing} \bra{\varnothing} Tr[\pi_{\mathcal{G} - \mathcal{C}} (\bullet) \pi_{\mathcal{G} - \mathcal{C}}]      )(  (\alpha \ket{G} \oplus \beta \ket{H} ) (\alpha' \bra{G'} \oplus \beta ' \bra{H'})    ) \\
& = (    \pi_{\mathcal{C}} (\bullet) \pi_{\mathcal{C}} + \ket{\varnothing} \bra{\varnothing} Tr[\pi_{\mathcal{G} - \mathcal{C}} (\bullet) \pi_{\mathcal{G} - \mathcal{C}}]      )( E (\alpha \ket{G} + \beta \ket{H} ) (\alpha' \bra{G'} + \beta ' \bra{H'}) E^{\dagger}   ) 
\end{align*}
This could be interpreted as the viewpoint upon state $\rho$, of a limited observer, who sees nothing beyond the states $\mathcal{C}$. I.e., in so far as $\rho$ lies beyond $\mathcal{C}$, the observer sees nothing, where nothing is represented by the empty graph---as opposed to the null vector $0$. Interestingly this interpretation of the generalised partial trace offers an linear trace-preserving alternative to Born rule for quantum measurements (e.g. the projective update rule for some observable $\pi_{\mathcal{C}}$).
\end{example}
\begin{example}[A pointwise restriction]\label{ex:3}
Some pointwise restrictions model mixtures of direct sums and tensor products. For instance $\zeta_{u:S}$ can be defined which keeps only node $u$ and only when $u$ is in a state in a subset $S \subseteq \Sigma$. Here $G_{\zeta_{u:S}} = \bigcup_{\sigma.v \in G:v=u \textrm{ and }\sigma \in S}\sigma.v$. We again choose to consider the constrained Hilbert space $\mathcal{H}^{\mathcal{C}}$ of graphs spanned by the set $\mathcal{C}$ of all graphs such that if $v \in V(G)$ and $v \wedge u$ then $v = u$. This time thinking in terms of the unitary isomrophism $E: \mathcal{H}^{\mathcal{C}} \cong ((\mathcal{H}^{u + \varnothing})^S \oplus (\mathcal{H}^{u + \varnothing})^{\Sigma - S}) \otimes \mathcal{H}^{\mathcal{C} - u}$ we can then write \[ \rho_{| \zeta_{u:S}} \cong \pi_{S} Tr_{\overline{u + \varnothing}}[\rho] \pi_{S} + \ket{\varnothing} \bra{\varnothing} Tr[\pi_{\Sigma - S} Tr_{\overline{u + \varnothing}}[\rho] \pi_{\Sigma-S}].   \] The proof of this identity consists in combining the techniques of the previous two examples. To unpack this in intuitive terms, the $\zeta_{u:S}$-partial trace first applies a standard partial trace to reduce to $u$ and then the viewpoint of a limited observer who sees nothing beyond $S$. 
\end{example} 
\begin{example}[Marked restriction]\label{ex:4}
Instead of picking out names, we could choose instead to just pick out particular internal states. Formally we can define for any subset $S \subseteq \Sigma$ the corresponding $\mu_{S}$ by $G_{\mu_{S}} = \bigcup_{\sigma.v \in G: \sigma \in S}\sigma.v$. In this case we can rewrite $\mathcal{H} \cong \bigoplus_{\mathcal{Q} \subseteq \mathcal{V}} \mathcal{H}^{(\neg \mathcal{Q}: S) + \varnothing } \otimes \mathcal{H}^{\mathcal{Q}:\Sigma \setminus S}$ where for each $\mathcal{Q} \subseteq \mathcal{V}$ then $\mathcal{H}^{\mathcal{Q}:\Sigma \setminus S}$ is the space of graphs with names exactly the elements of $\mathcal{Q}$ and with states in $ \Sigma \setminus S$, and $\mathcal{H}^{(\neg \mathcal{Q}: S) + \varnothing }$ is the space of graphs such that for every node in the graph, its name does not intersect $\mathcal{Q}$ and its state lies in $S$. Notice the empty $\varnothing$ belongs to that space. 
\[  \rho_{|\mu_{S}} \cong \Sigma_{\mathcal{Q}} Tr_{\overline{(\neg \mathcal{Q}: S) + \varnothing } }[ \pi_{\mathcal{Q}} \rho \pi_{\mathcal{Q}} ],   \] where we define $\pi_{Q}$ to be the projection of $\mathcal{H}$ into $\mathcal{H}^{(\neg \mathcal{Q}: S) + \varnothing } \otimes \mathcal{H}^{\mathcal{Q}:\Sigma \setminus S}$. Indeed, see that for any pair of name sets $\mathcal{Q},\mathcal{Q}'$ then (using $G_{(\mathcal{Q})}$ as notation for the graph $G$ appearing in direct sum branch $\mathcal{Q}$):
\begin{align*}
\ket{G_{(\mathcal{Q})} \cup H_{(\mathcal{Q})} } \bra{G^{'}_{(\mathcal{Q}')} \cup H^{'}_{(\mathcal{Q}')} }_{|\mu_{S}} & = \ket{G_{(\mathcal{Q})}} \bra{G^{'}_{(\mathcal{Q}')}} \delta_{HH'} \delta_{\mathcal{Q}\mathcal{Q}'} \\
& = \Sigma_{\mathcal{P}\subseteq \mathcal{V}} Tr_{\overline{(\neg \mathcal{P}: S) + \varnothing } }[ \pi_{\mathcal{P} } ((\ket{G_{(\mathcal{Q})} } \otimes \ket{ H_{(\mathcal{Q})} })( \bra{G^{'}_{(\mathcal{Q}')} } \otimes \bra{H^{'}_{(\mathcal{Q}')}})) \pi_{\mathcal{P}} ]   \\
& =  \Sigma_{\mathcal{P} \subseteq \mathcal{V}} Tr_{\overline{(\neg \mathcal{P}: S) + \varnothing } }[ \pi_{\mathcal{P}} E (\ket{G_{(\mathcal{Q})} \cup H_{(\mathcal{Q})} }  \bra{G^{'}_{(\mathcal{Q}')} \cup H^{'}_{(\mathcal{Q}')}}) E^{\dagger} \pi_{\mathcal{P}} ]. 
\end{align*}
Therefore, to implement $(\bullet)_{|\mu_{S}}$ in terms of standard partial traces and projectors, one needs to first project on the subspace where the bad states carrying systems are well-identified, before tracing out those systems. 
\end{example}
\begin{example}[Disks]
A key motivation of this paper is to be able to talk about neighbourhoods of nodes, even when the neighbourhoods of those nodes are in superposition, we have referred to such neighbourhoods as disks. To recap, given $ \zeta $ an extensible restriction, $ \zeta ^{r}$ is the extensible restriction such that $ G_{\zeta ^{r}}$ is induced by $ V( G_{\zeta })$ and its radius $r$ non-oriented neighbours according to $ E(G)$. Similarly, $\harp{\zeta}^{r}$ selects radius $r$ oriented neighbourhoods. It is unclear how to express the corresponding generalised trace (in the presence of superpositions of geometries as we have here), as a composition of standard partial traces and projections---nonetheless we can reason directly in terms of $\rho_{| \zeta^r}$. 
\end{example}
Note that since all (except the statewise restriction) of the above restrictions are extensible, they can be combined under unions (and taking neighbourhoods) to produce more elaborate ones.
\color{black}

\begin{figure}\centering
\includegraphics[width=0.8\textwidth]{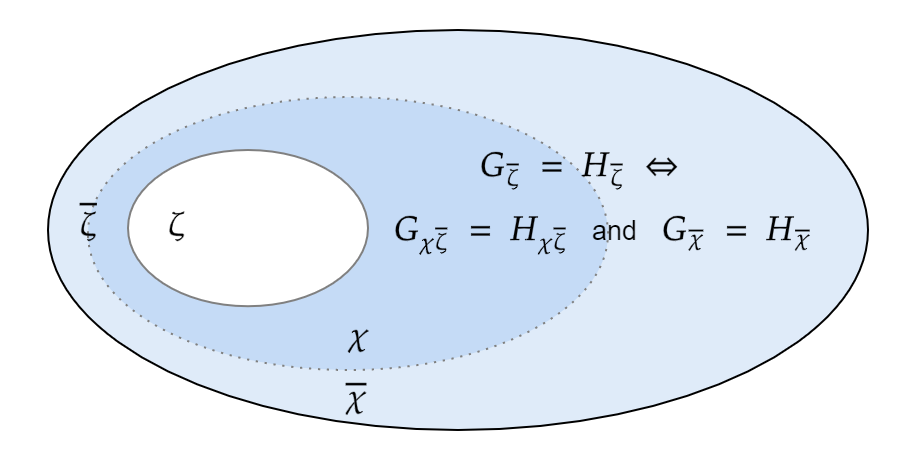}
\caption{\label{fig:comprehension} {\em Comprehension} of restrictions $\zeta \ \sqsubseteq \ \chi $ demands condition Eq. \eqref{eq:comprehension}, which states that for any $G$, $H$, equality outside the small restriction $\zeta $ (i.e. whether $G_{\overline{\zeta }} =H_{\overline{\zeta }}$) may be decomposed as both equality outside $\zeta $ but inside $\chi $ (i.e. whether $G_{\chi \overline{\zeta }} =H_{\chi \overline{\zeta }}$), and equality outside $\chi $ (i.e. whether $G_{\overline{\chi }} =H_{\overline{\chi }}$). Condition Eq. \eqref{eq:comprehension} may fail if a difference lying within $\zeta $ influences the way $\chi $ partitions the outside of $\zeta $. The condition holds in most relevant cases as shown by Prop. \ref{prop:npcomprehension}. It is needed to establish Lem. \ref{lem:tracetrace}.}
\end{figure}

Notice that restrictions are in general oblique projections, aka non-hermitian projections. They must not be confused with the partial traces $( .)_{|\chi }$ that they induce, as illustrated in Fig. \ref{fig:traceouts}.
\begin{figure}\centering
\includegraphics[width=0.9\textwidth]{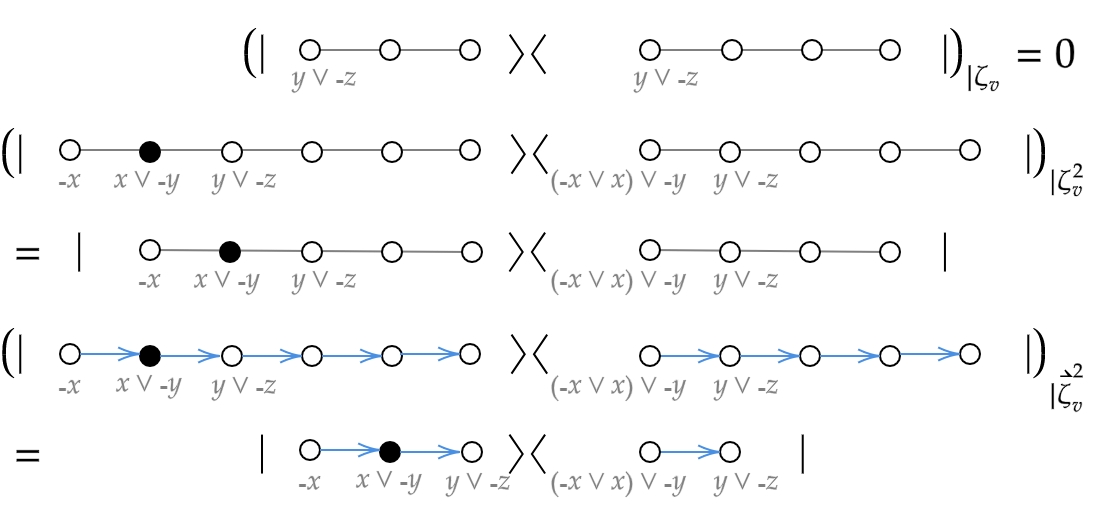}
\caption{\label{fig:traceouts}{\em Generalized partial trace.} Across figures $v:=y\lor \hyphenbullet z$, restriction $\zeta _{v}$ retains vertex $v$. Top: the ket and bra do not coincide on the complement of the neighbourhood, this goes to zero. Middle: the ket and bra coincide beyond first neighours, this goes to the restriction of the ket and bra on first neighbours. Below: with oriented edges, the neighbours are those which can signal to $v$. Overall: the question of what to do with edges of the frontier zone does not arise, as edges are derived information.}
\end{figure}

Now, the tensor product corresponding to a restriction $\chi$ works by weaving a restricted graph $\ket{G_{\chi }}$ and its complement $\ket{G_{\overline{\chi }}}$ back together, as illustrated in Fig. \ref{fig:tensors} and formalized as follows.

\begin{definition}[Tensor, consistency]\label{def:tensors}

Every restriction $\chi$ induces a tensor:

$ \ket{L} \tensorchi \ket{R} :=\begin{cases}
\ket{G} & \text{when } L=G_{\chi } ,R=G_{\overline{\chi }} \text{ for some }G\in{\mathcal{G}}  \\
0 & \text{otherwise}
\end{cases}$\\
When working over constrained configurations, $G$ needs belong to $\mathcal{H}^\mathcal{C}$ in the above, if not we return the null vector.\\
$ \ket{\psi } \tensorchi \ket{\psi '}$ is defined from the above, by bilinear extension.\\
$ \ket{G}\bra{H} \tensorchi \ket{G'}\bra{H'} :=\left(\ket{G} \tensorchi \ket{G'}\right)\left(\bra{H} \tensorchi \bra{H'}\right)$.\\
For any two operators $ A,B$, we define $ A\tensorchi B$ from the above by bilinear extension.\\
$ \ket{\psi } ,\ket{\psi '}$ are $ \chi $-consistent if and only if $ \braket{G| \psi }\braket{G'|\psi '} \neq 0$ implies $ \ket{G} \tensorchi \ket{G'} \neq 0$.\\ 
$ \rho ,\sigma $ are $ \chi $-consistent if and only if $ \rho _{GH} \sigma _{G'H'} \neq 0$ implies $ \ket{G} \tensorchi \ket{G'} \neq 0\neq \ket{H} \tensorchi \ket{H'}$, where $ \rho _{GH} :=\bra{G} \rho \ket{H}$.\\
$ A$ is $ \chi $-consistent-preserving if and only if $ \bra{H} A\ket{G_{\chi }} \neq 0$ entails $ \ket{H} \tensorchi \ket{G_{\overline{\chi }}} \neq 0$, and $ \bra{H} A^{\dagger }\ket{G_{\chi }} \neq 0$ entails $ \ket{H} \tensorchi \ket{G_{\overline{\chi }}} \neq 0$. 
\end{definition}
Notice that the definition is unambiguous, as $G:=H\cup H'$.

\begin{figure}\centering
\includegraphics[width=\textwidth]{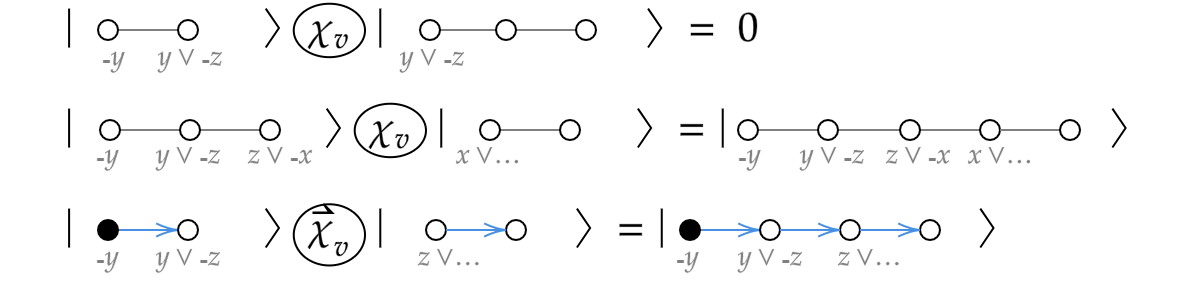}
\caption{\label{fig:tensors} {\em Generalized tensor product.} Here, $\chi_{v} :=\zeta _{v}^{1}$ and $\harp{\chi }_{v} := \harp{\zeta }_{v}^{1}$ Top: the two graphs do not correspond to a disk around $v$ and its complement, this goes to zero. Middle: the two graphs correspond to a disk and its complement. Moreover, an edge between them can be derived from names. Below: Same with oriented edges.}
\end{figure}

To get an intuition for the strictness of this notion of tensor product, consider three disjoint non-empty graphs $G,\ M,\ H$ such that $G\cup M\cup H$ is defined. With the above definition, $\ket{G\cup M} \tensorchi \ket{M\cup H} =0$, whatever the $\chi$. This may seem unnecessarily strict; a more permissive alternative would have been to let $\ket{G\cup M} \tensorchi \ket{M\cup H} =\ket{G\cup M\cup H}$. This, however, would entail $A\tensorchi I=I$ with $I$ the identity operator, which we will find we do not want (cf. Prop. \ref{prop:gatelocality}). 

Note that for our simpler examples of restrictions then $\tensorchi$ can often be directly computed.
\begin{itemize}
    \item In the case of Example \ref{ex:1}, i.e. $\zeta_u$ one can concretely say that \[A \tensorzeta_{\!u} I \cong A_{u + \varnothing} \otimes I_{\mathcal{G}-u}, \] where we have again decomposed $\mathcal{H}^\mathcal{C}$ as $\mathcal{H}^\mathcal{C} \cong (\mathcal{H}^{u + \varnothing}) \otimes \mathcal{H}^{\mathcal{C}-u}$ where $A_{u + \varnothing} \cong \pi_{u + \varnothing} A \pi_{u + \varnothing}$ with $\pi_{u + \varnothing}$ the projection onto the space $\mathcal{H}^{u + \varnothing}$. Indeed,  
    \begin{align*}
     \bra{\sigma . u \cup G} A \tensorzeta_{\!u} I \ket{\sigma ' . u \cup G'} & =  \bra{\sigma . u } A \ket{\sigma ' . u} \braket{G|G'}   \\
    & =  \bra{\sigma . u } A_{u + \varnothing} \ket{\sigma ' . u} \delta_{GG'} \\
    & =( \bra{\sigma . u } \otimes \bra{G} )(A_{u + \varnothing}  \otimes I ) (\ket{\sigma ' . u} \otimes \ket{G'}) \\
    & = ( \bra{\sigma . u \cup G} )E(A_{u + \varnothing}  \otimes I )E^{\dagger} (\ket{\sigma ' . u \cup G'})  .
    \end{align*} An identical result holds for the restriction which picks out all names which intersect with $u$ (rather than just $u$).
        \item In the case of Example \ref{ex:2}, i.e. $\chi_{\mathcal{C}}$ and using the isomorphism $E : \mathcal{H} \cong \mathcal{H}^{\mathcal{C}} \oplus \mathcal{H}^{\mathcal{G}-\mathcal{C}}$ we can write \[ A \tensorchi I \cong A_{\mathcal{C}} \oplus A_{\varnothing} I_{\mathcal{G}-\mathcal{C}}.  \] 
where $A_\varnothing$ stands $\bra{\varnothing}A\ket{\varnothing}$. Notice that $A \tensorchi_{\!\mathcal{C}} I$ does not exactly decompose as $A \oplus I$. This has a natural interpretation: as our limited agent sees nothing beyond $\mathcal{C}$, it must treat it just like it treats the empty graph. Indeed:
        \begin{align*}
     &   (\alpha \bra{G} + \beta \bra{H}) A \tensorchi_{\!\mathcal{C}} I (\alpha ' \ket{G'} + \beta' \ket{H'}) \\
     & = \alpha \alpha' \bra{G} A \ket{G'} + 0 + 0 + \beta \beta' \bra{\varnothing}A \ket{\varnothing} \delta_{HH'} \\
        & = (\alpha \bra{G} + \beta \bra{H}) (A \oplus \bra{\varnothing}A \ket{\varnothing} I) (\alpha ' \ket{G'} + \beta' \ket{H'}) \\
        & = (\alpha \bra{G} + \beta \bra{H})E( A \oplus A_{\varnothing} I) E^{\dagger} (\alpha ' \ket{G'} + \beta' \ket{H'})
        \end{align*}
    \item In the case of Example \ref{ex:3}, i.e. $\chi_{u:S}$ then we can use similar observations to the above primitive cases to see that \[ A \tensorchi_{\!u:S} I  = (A_{S+\varnothing} \oplus A_{\varnothing} I_{\Sigma - S}) \otimes I_{\mathcal{C} - u} ,  \] where we have used the decomposition $\mathcal{H}^{\mathcal{C}} \cong (\mathcal{H}^{S + \varnothing} \oplus \mathcal{H}^{\Sigma - S}) \otimes \mathcal{H}^{\mathcal{C} -u}$, and, we define $A_{S + \varnothing} = \pi_{S + \varnothing}A  \pi_{S + \varnothing}$. 
    \item In the case of Example \ref{ex:4}, $\mu_{S}$ we have that \[  A \tensormu_{\!S} I = \bigoplus_{\mathcal{Q}\subseteq\mathcal{V}} \pi_{ \neg \mathcal{Q} :S}^{\dagger} A \pi_{ \neg \mathcal{Q}: S}  \otimes  I_{\mathcal{Q}: \neg S},  \] where $\pi_{\neg \mathcal{Q}:S}$ is the composition $\pi_{S} \pi_{\mathcal{Q}}$ with $\pi_{\mathcal{Q}}$ defined as before and $\pi_{S}$ the  projection into states which have names non-intersecting with $\mathcal{Q}$ and states in $S$ (along as always with the empty graph). Indeed, let us show this directly:
    \begin{align*}
    & \bra{G_{(\mathcal{P})} \cup H_{(\mathcal{P})}} A \tensormu_{\!S} I \ket{G^{'}_{(\mathcal{P}')} \cup H^{'}_{(\mathcal{P}')}} \\ 
    & = \bra{G_{(\mathcal{P})}} A \ket{G^{'}_{(\mathcal{P}')}} \delta_{HH'} \delta_{\mathcal{P}\mathcal{P}'} \\
    & =  \bra{G_{(\mathcal{P})}} \otimes \bra{H_{(\mathcal{P})}}( \bigoplus_{\mathcal{Q}\subseteq\mathcal{V}} \pi_{ \neg \mathcal{Q} :S}^{\dagger} A \pi_{ \neg \mathcal{Q}: S}  \otimes  I_{\mathcal{Q}: \neg S}) \ket{G^{'}_{(\mathcal{P}')}} \otimes \ket{H^{'}_{(\mathcal{P}')}} \\
    & = \bra{G_{(\mathcal{P})} \cup H_{(\mathcal{P})}} E ( \bigoplus_{\mathcal{Q}\subseteq\mathcal{V}} \pi_{ \neg \mathcal{Q} :S}^{\dagger} A \pi_{ \neg \mathcal{Q}: S}  \otimes  I_{\mathcal{Q}: \neg S}) E^{\dagger} \ket{G^{'}_{(\mathcal{P}')} \cup H^{'}_{(\mathcal{P}')}}.
    \end{align*}
\end{itemize}
\color{black}

These generalized partial traces and tensors are powerful tools, but working with them sometimes feels like a step into the unknown. Our old intuitions about traceouts and tensors guide us, but sometimes they mislead us. We have as a result had to check the conditions of applications of several basic facts about the way these tensor products and tracing operators interact with one another, leading to the Toolbox of table \ref{tab:toolbox}.\\{\em Again a quick note on notations. Throughout the paper, greek symbols $\rho ,\sigma $ represent elements of the space $\mathcal{B}_1 (\mathcal{H})$ of trace class operators (e.g. states), capital letters $A,B$ represent bounded operators (e.g. transformations and observables.}

\begin{table}
\caption{\label{tab:toolbox} {\em Mathematical toolbox}.}  
\renewcommand{\arraystretch}{2}        
\begin{tabular}{p{0.22\textwidth}p{0.78\textwidth}}
\hline 
  Lem. \ref{lem:tensorbracket} & $ \braket{H|G} =\braket{H_{\chi }|G_{\chi }}\braket{H_{\overline{\chi }}|G_{\overline{\chi }}}$  \\
\hline 
  Lem. \ref{lem:restrictions}. & $ \chi \chi =\chi $ \qquad $ \chi \overline{\chi } =\emptyset $  \\
\hline 
 ~ & $ \left( \rho \ket{G}\bra{H}\right)_{|\emptyset } =\bra{H} \rho \ket{G}$ \qquad $ ( \rho A)_{|\emptyset } =( A\rho )_{|\emptyset }$ \qquad $ ( \alpha \rho )_{|\chi } =\alpha ( \rho_{|\chi })$  \\
\hline 
~ & If $ \ket{G} \tensorchi \ket{G'} \neq 0$\\  & then $ \ket{G} \tensorchi \ket{G'} \ =\ \ket{G\cup G'}$ \quad and \quad $ \left(\ket{G} \tensorchi \ket{G'}\right)_{\chi } =\ket{G}$  \\
\hline 
~ & $ \ \rho _{|\chi } \ =\ \sum _{G,H\in \mathcal{G} ,\ G_{\overline{\chi }} =H_{\overline{\chi }}} \rho _{G H} \ \ket{G_{\chi }}\bra{H_{\chi }} \ $  \\
\hline 
Lem. \ref{lem:tracetrace} & If $ \zeta \sqsubseteq \chi $, $ \begin{aligned}
( \rho _{|\chi })_{|\zeta } & =\rho _{|\zeta }
\end{aligned}$ and \quad $A$ $ \zeta $-local is $ \chi $-local. \\
\hline 
 Lem. \ref{lem:tensor} & $ \ A\tensorchi B\ =\ \sum _{G,H\in \mathcal{G}} A_{G_{\chi } H_{\chi }} B_{G_{\overline{\chi }} H_{\overline{\chi }}} \ \ket{G}\bra{H} \ $ \qquad $ A\tensorchi I =A\tensorchi I_{\overline{\chi }}$  \\
\hline 

 Lem. \ref{lem:tensortensor} & If $ [ \chi ,\zeta ] =[\overline{\chi } ,\zeta ] =[ \chi ,\overline{\zeta }] =[\overline{\chi } ,\overline{\zeta }] =0$,\\ & then $ ( A \tensorzeta   B) \tensorchi ( C \tensorzeta   D) =( A\tensorchi C)  \tensorzeta   ( B\tensorchi D) \ $  \\
\hline 
 Lem. \ref{lem:tensortrace1} & If $ \rho ,\sigma $ $ \chi $-consistent, \ $ ( \rho \tensorchi \sigma )_{|\chi } =\rho \ \sigma _{|\emptyset } \ $\\ & If $ \zeta \sqsubseteq \chi $, $ \rho ,\sigma $ $ \chi $-consistent, \ $ ( \rho \tensorchi \sigma )_{|\zeta } =\rho _{|\zeta } \ \sigma _{|\emptyset } \ $  \\
\hline 
 Lem. \ref{lem:tensortrace2} & If $ [ \chi ,\zeta ] =[\overline{\chi } ,\zeta ] =[ \chi ,\overline{\zeta }] =[\overline{\chi } ,\overline{\zeta }] =0$, \ and $ \rho ,\sigma $ $ \chi $-consistent,\\ & $ ( \rho \tensorchi \sigma )_{|\zeta } =\rho _{|\zeta } \tensorchi \sigma _{|\zeta } \ $  \\
\hline 
 Lem. \ref{lem:interchangelaws} &  $ ( A\tensorchi I)\ket{G} =A\ket{G_{\chi }} \tensorchi \ket{G_{\overline{\chi }}}$.\\ & If $ A$, $ A'$, $ B$, $ B'$ are $ \chi $-consistent-preserving,$ ( A'\tensorchi B')( A\tensorchi B) =\ A'A\tensorchi B'B$.  \\
 \hline
\end{tabular}
\end{table}

A good rule of thumb is that usual intuitions about $A\tensorchi B$ will carry through provided that $\chi $-consistency conditions are met. In fact much of the attention in the proofs is spent keeping track of which terms get zeroed by the $\tensorchi$. 

Another good rule of thumb is that our usual intuitions about subsystems $\zeta $ of a wider system $\chi $ will carry through, provided that the comprehension condition given by Eq. \eqref{eq:comprehension} is met, which is true of most natural cases thanks to name-preservation, see Prop. \ref{prop:npcomprehension}. 

Sometimes the two rules of thumb interact, e.g. it is the name-preservation assumption that helps meet $\chi $-consistency, as in Prop. \ref{prop:unitaryextension}.

\subsection*{Properties of traceouts over quantum networks}

An early attempt to define a (non-modular) partial trace for quantum causal graph dynamics actually failed to exhibit positivity-preservation, i.e. there exists $\rho $ non-negative with $\rho _{|\chi }$ not non-negative \cite{ArrighiQCGD}. Here we show that partial traces are actually positive-preserving. In fact we check they are completely-positive-preserving, meaning that they remain positive-preserving when tensored with the identity, as required for general quantum operations. We do the same for trace-preservation and name-preservation. 

We denote by $~_{|\chi}   \tensorzeta   I$ the map that is the partial trace $~_{|\chi}$ tensored with the identity by means of some arbitrary generalised tensor product $\tensorzeta$, i.e. the linear map $\rho \mapsto (~_{|\chi}   \tensorzeta   I)(\rho)$ which is such that whenever $\rho=\sigma \tensorzeta\sigma'$ then $(~_{|\chi}   \tensorzeta   I)(\rho)=\sigma_{|\chi }  \tensorzeta   \sigma'$.

\begin{proposition}[Traceouts positivity-preservation, trace-preservation, name-preservation]\label{prop:traceouts}

The map \ $ \rho \mapsto (~_{|\chi}   \tensorzeta   I)( \rho )$ over ${\cal B}_1({\cal H})$ is completely positive-preserving and name-preservation preserving.

If moreover $ \ket{G_{\zeta \chi }}  \tensorzeta   \ket{G_{\overline{\zeta }}} \neq 0$, then the same map is trace-preserving.
\end{proposition}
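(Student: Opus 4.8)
The plan is to reduce everything to the action of the map on a rank-one basis operator $\ket{G}\bra{H}$ and then read off the three properties from a single closed formula. Using $\ket{G}=\ket{G_{\zeta}}\tensorzeta\ket{G_{\overline{\zeta}}}$ (from the definition of the tensor) together with the bilinearity built into $\tensorzeta$ and into $(.)_{|\chi}$, I would write $\ket{G}\bra{H}=(\ket{G_{\zeta}}\bra{H_{\zeta}})\tensorzeta(\ket{G_{\overline{\zeta}}}\bra{H_{\overline{\zeta}}})$ and apply the partial trace to the first slot. This yields
\[
((.)_{|\chi}\tensorzeta I)(\ket{G}\bra{H})=\braket{H_{\zeta\overline{\chi}}|G_{\zeta\overline{\chi}}}\;\ket{\widetilde{G}}\bra{\widetilde{H}},\qquad \ket{\widetilde{G}}:=\ket{G_{\zeta\chi}}\tensorzeta\ket{G_{\overline{\zeta}}}
\]
and likewise $\ket{\widetilde{H}}:=\ket{H_{\zeta\chi}}\tensorzeta\ket{H_{\overline{\zeta}}}$. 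All three claims are then extracted from this identity by linearity.

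For positivity-preservation I would exhibit an operator-sum (Kraus) form. Indexing by the ``traced-out'' configuration $K\in\mathcal{G}$, define the bounded operators $M_{K}$ on $\mathcal{H}$ by $M_{K}\ket{G}:=[G_{\zeta\overline{\chi}}=K]\,\ket{\widetilde{G}}$; each $M_{K}$ is a partial isometry, since $G=\widetilde{G}\cup K$ is recovered from its image and index. A one-line computation gives $\sum_{K}M_{K}\ket{G}\bra{H}M_{K}^{\dagger}=[G_{\zeta\overline{\chi}}=H_{\zeta\overline{\chi}}]\,\ket{\widetilde{G}}\bra{\widetilde{H}}$, which matches the displayed identity because $[G_{\zeta\overline{\chi}}=H_{\zeta\overline{\chi}}]=\braket{H_{\zeta\overline{\chi}}|G_{\zeta\overline{\chi}}}$. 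Hence $((.)_{|\chi}\tensorzeta I)(\rho)=\sum_{K}M_{K}\rho M_{K}^{\dagger}$ for every trace-class $\rho$ (convergence in trace norm is routine, as the partial sums are increasing and bounded on positive $\rho$). Being of Kraus form, the map is completely positive, and a fortiori positive-preserving.

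For trace-preservation I would compute the trace directly from the displayed identity: $\mathrm{Tr}(((.)_{|\chi}\tensorzeta I)(\rho))=\sum_{G,H}\rho_{GH}\,\braket{H_{\zeta\overline{\chi}}|G_{\zeta\overline{\chi}}}\braket{\widetilde{H}|\widetilde{G}}$. Here the extra hypothesis $\ket{G_{\zeta\chi}}\tensorzeta\ket{G_{\overline{\zeta}}}\neq 0$ is exactly what guarantees $\ket{\widetilde{G}}\neq 0$, so that $\widetilde{G}$ is a genuine graph with $(\widetilde{G})_{\zeta}=G_{\zeta\chi}$ and $(\widetilde{G})_{\overline{\zeta}}=G_{\overline{\zeta}}$; Lem.~\ref{lem:tensorbracket} then factorizes $\braket{\widetilde{H}|\widetilde{G}}=\braket{H_{\zeta\chi}|G_{\zeta\chi}}\braket{H_{\overline{\zeta}}|G_{\overline{\zeta}}}$. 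Applying Lem.~\ref{lem:tensorbracket} twice more — first to combine $\braket{H_{\zeta\chi}|G_{\zeta\chi}}\braket{H_{\zeta\overline{\chi}}|G_{\zeta\overline{\chi}}}=\braket{H_{\zeta}|G_{\zeta}}$, then $\braket{H_{\zeta}|G_{\zeta}}\braket{H_{\overline{\zeta}}|G_{\overline{\zeta}}}=\braket{H|G}$ — collapses the sum to $\sum_{G}\rho_{GG}=\mathrm{Tr}(\rho)$. Without the hypothesis the diagonal terms with $\ket{\widetilde{G}}=0$ would be lost and the trace would only decrease, which is precisely why it is needed.

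Name-preservation preservation is where I expect the real work. From the Kraus form, a surviving contribution to $\bra{G'}((.)_{|\chi}\tensorzeta I)(\rho)\ket{H'}$ forces $G'=\widetilde{G}$, $H'=\widetilde{H}$ with $\rho_{GH}\neq 0$ and $G_{\zeta\overline{\chi}}=H_{\zeta\overline{\chi}}$. Name-preservation of $\rho$ gives $V(G)\corresponds V(H)$, and I must upgrade this to $V(\widetilde{G})\corresponds V(\widetilde{H})$. The supports split as $V(G)=V(G_{\zeta\chi})\cup V(G_{\zeta\overline{\chi}})\cup V(G_{\overline{\zeta}})$ and likewise for $H$, with the middle blocks literally equal; the task is to cancel this shared block inside the generated name algebras $\mathcal{N}[\cdot]$. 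Such cancellation is false for arbitrary sets of names, but holds here precisely because of well-namedness (Eq.~\eqref{eq:wellnamedness}): a vertex and any strict descendant of it cannot coexist in a graph, which pins down how the shared block can complete the remaining names and rules out the pathological mismatches that would otherwise break $\corresponds$. I would isolate this as the key name-algebra lemma and prove it by induction on the merge/descendant structure of the terms. This bookkeeping, rather than any analytic difficulty, is the main obstacle.
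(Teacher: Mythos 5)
Your proposal is correct and follows essentially the same route as the paper: the closed formula for the action on $\ket{G}\bra{H}$, the decomposition indexed by the traced-out configuration $K$ (your Kraus operators $M_{K}$ are exactly the paper's vectors $\ket{\phi^{K}}$ in operator form, and give complete positivity the same way), the trace computation via Lem.~\ref{lem:tensorbracket}, and the reduction of name-preservation to the cancellation of the shared name block, which is precisely the paper's Lem.~\ref{lem:complementnames}. The only substantive part you leave as a sketch is that cancellation lemma, but you correctly identify both its statement and the role of well-namedness (Eq.~\eqref{eq:wellnamedness}) in its proof, matching the paper's appendix argument.
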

\begin{proof} 
[Positivity preservation]

A trace class operator $\rho $ is a compact operator, hence it is non-negative if and only if it has the form $\sum _{i}\ket{\psi ^{i}}\bra{\psi ^{i}}$.
\begin{align*}
\textrm{Consider some }\ket{\psi } & =\sum _{G,\ G'\ \in \ \mathcal{G}} \alpha _{G G'} \ \ket{G} \tensorchi \ket{G'}\\
\bra{\psi } & =\sum _{H,\ H'\ \in \ \mathcal{G}} \alpha _{HH'}^{*} \ \bra{H} \tensorchi \bra{H'}\\
\ket{\psi }\bra{\psi } & =\sum _{G,\ G',\ H,\ H'\ \in \ \mathcal{G}} \alpha _{G G'} \alpha _{HH'}^{*} \ \ket{G}\bra{H} \tensorchi \ket{G'}\bra{H'}\\
\left(\ket{\psi }\bra{\psi }\right)_{|\chi } & =\sum _{ \begin{array}{l}
G,\ G',\ H,\ H'\ \in \ \mathcal{G}\\
\ket{G} \tensorchi \ket{G'} \ \neq \ 0\\
\ket{H} \tensorchi \ket{H'} \ \neq \ 0
\end{array}} \alpha _{G G'} \alpha _{HH'}^{*} \ \ket{G}\bra{H} \ \braket{H'|G'}\\
 & =\sum _{ \begin{array}{l}
G,\ H,\ K\ \in \ \mathcal{G}\\
\ket{G} \tensorchi \ket{K} \ \neq \ 0\\
\ket{H} \tensorchi \ket{K} \ \neq \ 0
\end{array}} \alpha _{GK} \alpha _{HK}^{*} \ \ket{G}\bra{H}\\
 & =\sum _{K\ \in \ \mathcal{G}}\left(\sum _{ \begin{array}{l}
G\ \in \ \mathcal{G}\\
\ket{G} \tensorchi \ket{K} \ \neq \ 0
\end{array}} \alpha _{GK} \ \ket{G}\right)\left(\sum _{ \begin{array}{l}
H\ \in \ \mathcal{G}\\
\ket{H} \tensorchi \ket{K} \ \neq \ 0
\end{array}} \alpha _{HK}^{*} \ \bra{H}\right)\\
 & =\sum _{K\ \in \ \mathcal{G}}\ket{\phi ^{K}}\bra{\phi ^{K}}\\
\rho _{|\chi } & =\left(\sum _{i}\ket{\psi ^{i}}\bra{\psi ^{i}}\right)_{|\chi }\\
 & =\sum _{i,\ K\ \in \ \mathcal{G}}\ket{\phi ^{i,K}}\bra{\phi ^{i,K}}
\end{align*}

[Complete positivity preservation]

\begin{align*}
\ket{G}\bra{H} & =\ket{G_{\zeta }}\bra{H_{\zeta }}  \tensorzeta   \ket{G_{\overline{\zeta }}}\bra{H_{\overline{\zeta }}}\\
(~_{|\chi}   \tensorzeta   I) \ket{G}\bra{H} & =\ket{G_{\zeta \chi }}\bra{H_{\zeta \chi }}\braket{H_{\zeta \overline{\chi }}|G_{\zeta \overline{\chi }}}  \tensorzeta   \ket{G_{\overline{\zeta }}}\bra{H_{\overline{\zeta }}}
\end{align*}

Let $\alpha '_{G_{\chi } KG'} :=\begin{cases}
\alpha _{GG'} & \text{if} \ \ket{G_{\chi }} \tensorchi \ket{K} \ =\ \ket{G}\\
0 & \text{otherwise}
\end{cases}$ .
\begin{align*}
\textrm{Consider some }\ket{\psi } & =\sum _{G,\ G'\ \in \ \mathcal{G}} \alpha _{G G'} \ \ket{G}  \tensorzeta   \ket{G'}\\
\bra{\psi } & =\sum _{H,\ H'\ \in \ \mathcal{G}} \alpha _{HH'}^{*} \ \bra{H}  \tensorzeta   \bra{H'}\\
\ket{\psi }\bra{\psi } & =\sum _{G,\ G',\ H,\ H'\ \in \ \mathcal{G}} \alpha _{G G'} \alpha _{HH'}^{*} \ \ket{G}\bra{H}  \tensorzeta   \ket{G'}\bra{H'}\\
(~_{|\chi}   \tensorzeta   I)\left(\ket{\psi }\bra{\psi }\right) & =\sum _{ \begin{array}{l}
G,\ G',\ H,\ H'\ \in \ \mathcal{G}\\
\ket{G}  \tensorzeta   \ket{G'} \ \neq \ 0\\
\ket{H}  \tensorzeta   \ket{H'} \ \neq \ 0
\end{array}} \alpha _{G G'} \alpha _{HH'}^{*} \ \ket{G_{\chi }}\bra{H_{\chi }}\braket{H_{\overline{\chi }}|G_{\overline{\chi }}}  \tensorzeta   \ket{G'}\bra{H'}\\
 & =\sum _{ \begin{array}{l}
G ,\ H ,\ G',\ H'\ \in \ \mathcal{G}\\
G_{\overline{\chi }} =H_{\overline{\chi }}\\
\ket{G}  \tensorzeta   \ket{G'} \ \neq \ 0\\
\ket{H}  \tensorzeta   \ket{H'} \ \neq \ 0
\end{array}} \alpha _{G G'} \alpha _{HH'}^{*} \ \ket{G_{\chi }}\bra{H_{\chi }}  \tensorzeta   \ket{G'}\bra{H'}\\
 & =\sum _{ \begin{array}{l}
G_{\chi } ,\ H_{\chi } ,\ K,\ G',\ H'\ \in \ \mathcal{G}\\
\left(\ket{G_{\chi }} \tensorchi \ket{K}\right)  \tensorzeta   \ket{G'} \ \neq \ 0\\
\left(\ket{H_{\chi }} \tensorchi \ket{K}\right)  \tensorzeta   \ket{H'} \ \neq \ 0
\end{array}} \alpha '_{G_{\chi } KG'} \alpha ^{\prime *}_{H_{\chi } KH'} \ \ket{G_{\chi }}\bra{H_{\chi }}  \tensorzeta   \ket{G'}\bra{H'}\\
\ket{\phi ^{K}} & :=\sum _{ \begin{array}{l}
G_{\chi } ,\ G'\ \in \ \mathcal{G}\\
\left(\ket{G_{\chi }} \tensorchi \ket{K}\right)  \tensorzeta   \ket{G'} \ \neq \ 0
\end{array}} \alpha '_{G_{\chi } KG'}\ket{G_{\chi }}  \tensorzeta   \ket{G'}\\
(~_{|\chi}   \tensorzeta   I)\left(\ket{\psi }\bra{\psi }\right) & =\sum _{K\ \in \ \mathcal{G}}\ket{\phi ^{K}}\bra{\phi ^{K}}\\
(~_{|\chi}   \tensorzeta   I)\left(\sum _{i}\ket{\psi ^{i}}\bra{\psi ^{i}}\right) & =\sum _{K\ \in \ \mathcal{G} ,\ i}\ket{\phi ^{i,K}}\bra{\phi ^{i,K}}
\end{align*}

[Name-preservation preservation]

An operator $\rho $ is name-preserving if and only if it is a sum of terms of the form $\ket{G}\bra{H}$ with $V( G) \corresponds V( H)$.
\begin{align*}
\ket{G} & =\ \ket{G_{\chi }} \tensorchi \ket{G_{\overline{\chi }}}\\
\bra{H} & =\bra{H_{\chi }} \tensorchi \bra{H_{\overline{\chi }}}\\
\left(\ket{G}\bra{H}\right)_{|\chi } & =\ket{G_{\chi }}\bra{H_{\chi }} \ \braket{H_{\overline{\chi }}|G_{\overline{\chi }}}
\end{align*}
When this is non-zero, $V( G_{\overline{\chi }}) =V( H_{\overline{\chi }})$.\\ 
Then by Lem. \ref{lem:complementnames}, $V( G_{\chi }) =V( G) \setminus V( G_{\overline{\chi }}) \corresponds V( H) \setminus V( H_{\overline{\chi }}) =V( H_{\chi })$.\\
So, $\rho _{|\chi }$ is a sum of terms of the form $\ket{G_{\chi }}\bra{H_{\chi }}$ with $V( G_{\chi }) \corresponds V( H_{\chi })$.

[Complete name-preservation preservation]

An operator $\rho $ is name-preserving if and only if it is a sum of terms of the form $\ket{G}\bra{H}$ with $V( G) \corresponds V( H)$.
\begin{align*}
\ket{G}\bra{H} & =\ket{G_{\zeta }}\bra{H_{\zeta }}  \tensorzeta   \ket{G_{\overline{\zeta }}}\bra{H_{\overline{\zeta }}}\\
(~_{|\chi}   \tensorzeta   I)\ \ket{G}\bra{H} & =\ket{G_{\zeta \chi }}\bra{H_{\zeta \chi }}\braket{H_{\zeta \overline{\chi }}|G_{\zeta \overline{\chi }}}  \tensorzeta   \ket{G_{\overline{\zeta }}}\bra{H_{\overline{\zeta }}}\\
 & =\braket{H_{\zeta \overline{\chi }}|G_{\zeta \overline{\chi }}}\left(\ket{G_{\zeta \chi }}  \tensorzeta   \ket{G_{\overline{\zeta }}}\right)\left(\bra{H_{\zeta \chi }}  \tensorzeta   \bra{H_{\overline{\zeta }}}\right)
\end{align*}
When this is non-zero, $V( G_{\zeta \overline{\chi }}) =V( H_{\zeta \overline{\chi }})$. Let $\begin{aligned}
\ket{G'} & :=\ket{G_{\zeta \chi }}  \tensorzeta   \ket{G_{\overline{\zeta }}}
\end{aligned}$ and $\begin{aligned}
\ket{H'} & :=\ket{H_{\zeta \chi }}  \tensorzeta   \ket{H_{\overline{\zeta }}}
\end{aligned}$.\\
Then by Lem. \ref{lem:complementnames}, $V( G') =V( G) \setminus V( G_{\zeta \overline{\chi }}) \corresponds V( H) \setminus V( H_{\zeta \overline{\chi }}) =V( H')$.\\
So, $(~_{|\chi}   \tensorzeta   I)( \rho )$ is a sum of terms of the form $\ket{G'}\bra{H'}$ with $V( G') \corresponds V( H')$.

[Trace Preservation]

Notice that $\overline{\zeta \chi \cup \overline{\zeta }} =\zeta \overline{\chi }$.
\begin{align*}
(( .)_{\chi }  \tensorzeta   I)\left(\ket{G}\bra{H}\right) & =\ \left(\ket{G_{\zeta }}\bra{H_{\zeta }}\right)_{|\chi }   \tensorzeta         \ \ket{G_{\overline{\zeta }}}\bra{H_{\overline{\zeta }}}\\
 & =\ \left(\ket{G_{\zeta \chi }}\bra{H_{\zeta \chi }}  \tensorzeta   \ket{G_{\overline{\zeta }}}\bra{H_{\overline{\zeta }}}\right)\braket{H_{\zeta \overline{\chi }}|G_{\zeta \overline{\chi }}}\\
 & =\ket{G_{\zeta \chi \cup \overline{\zeta }}}\bra{H_{\zeta \chi \cup \overline{\zeta }}}\braket{H_{\zeta \overline{\chi }}|G_{\zeta \overline{\chi }}}\\
\left((( .)_{\chi }  \tensorzeta   I)\left(\ket{G}\bra{H}\right)\right)_{|\emptyset } & =\braket{H_{\zeta \chi \cup \overline{\zeta }}| G_{\zeta \chi \cup \overline{\zeta }}}\braket{H_{\zeta \overline{\chi }}|G_{\zeta \overline{\chi }}}\\
\text{By Lem. \ref{lem:tensorbracket}} & =\braket{H|G}
\end{align*}
\end{proof}

\begin{figure}\centering
\includegraphics[width=\textwidth]{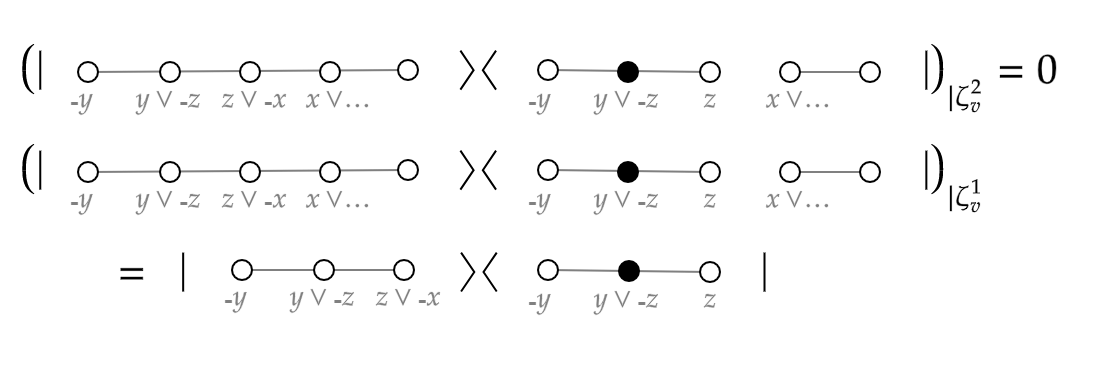}
\caption{\label{fig:incomprehension} When a ``wider'' traceout decoheres an ``inner'' traceout.}
\end{figure}

Even though, $\zeta _{v}^{2} \zeta _{v}^{1} =\zeta _{v}$, i.e. $\zeta _{v}^{1}$ is included in $\zeta _{v}^{2}$ in a natural sense, the comprehension condition Eq. \eqref{eq:comprehension} does not hold in general and thus $\zeta _{v}^{1} \ \not\sqsubseteq \ \zeta _{v}^{2}$. As a consequence, Lem. \ref{lem:tracetrace} does not apply and it is not the case that $\rho _{|\zeta _{v}^{1}} =( \rho _{|\zeta _{v}^{2}})_{|\zeta _{v}^{1}}$, as shown in Fig. \ref{fig:incomprehension}, where we see that $\rho _{|\zeta _{v}^{2}}$ decoheres certain superpositions of names whilst $\rho _{|\zeta _{v}^{1}}$ does not. This counter-intuitive behaviour disappears over name-preserving states.

\begin{proposition}[Name-preservation and comprehension]\label{prop:npcomprehension}

Consider $ \zeta $ an extensible restriction such that $ \chi :=\zeta ^{r}$ verifies $ \chi \zeta =\zeta $.\\
When $ V( G) \corresponds V( H)$, we have $ \braket{H_{\overline{\zeta }}|G_{\overline{\zeta }}} =\braket{H_{\chi \overline{\zeta }}|G_{\chi \overline{\zeta }}}\braket{H_{\overline{\chi }}|G_{\overline{\chi }}}.$\\
Hence, over name-preserving superselected states and operators, $ \zeta \sqsubseteq \chi $.
\end{proposition}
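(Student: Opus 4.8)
The plan is to first observe that, since $G_{\overline{\zeta}}, H_{\overline{\zeta}}, G_{\chi\overline{\zeta}},\dots$ are all basis graphs, each bracket in Eq.~\eqref{eq:comprehension} is a Kronecker delta, so the identity to be proven is equivalent to the biconditional
\[
G_{\overline{\zeta}}=H_{\overline{\zeta}}\quad\Longleftrightarrow\quad G_{\chi\overline{\zeta}}=H_{\chi\overline{\zeta}}\ \text{ and }\ G_{\overline{\chi}}=H_{\overline{\chi}}.
\]
Before attacking it I would record the underlying set-theoretic decomposition. The hypothesis $\chi\zeta=\zeta$ gives $(G_\chi)_\zeta=G_{\chi\zeta}=G_\zeta$, whence $G_\zeta\subseteq G_\chi$ and $G_{\chi\overline{\zeta}}=G_\chi\setminus G_\zeta$. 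Consequently $G_{\overline{\zeta}}=G\setminus G_\zeta=(G_\chi\setminus G_\zeta)\sqcup(G\setminus G_\chi)=G_{\chi\overline{\zeta}}\sqcup G_{\overline{\chi}}$ is a disjoint union, and likewise for $H$.

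With this decomposition in hand, the direction $(\Leftarrow)$ is immediate and uses only the structural hypothesis $\chi\zeta=\zeta$, not name-preservation: equal pieces have equal disjoint unions. The substance is the direction $(\Rightarrow)$. Assuming $G_{\overline{\zeta}}=H_{\overline{\zeta}}$, so that $G$ and $H$ agree literally outside the core $\zeta$, I would first upgrade the hypothesis $V(G)\corresponds V(H)$ to a statement about the core: since $V(G_{\overline{\zeta}})=V(H_{\overline{\zeta}})$, Lem.~\ref{lem:complementnames} yields $V(G_\zeta)=V(G)\setminus V(G_{\overline{\zeta}})\corresponds V(H)\setminus V(H_{\overline{\zeta}})=V(H_\zeta)$. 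The key point is then that the restriction $\chi=\zeta^r$ is built purely from $V(G_\zeta)$ and the induced edges $E(G)$, and that by Def.~\ref{def:inducededges} together with well-namedness Eq.~\eqref{eq:wellnamedness} the edges---and hence the relation ``being an $r$-neighbour of the core''---are determined by names alone and are therefore invariant under $\corresponds$. From $V(G)\corresponds V(H)$ and $V(G_\zeta)\corresponds V(H_\zeta)$ I would thus conclude $V(G_\chi)\corresponds V(H_\chi)$, and for each system $\sigma.v$ of the common outside $G_{\overline{\zeta}}=H_{\overline{\zeta}}$ deduce that it lies in $G_{\chi\overline{\zeta}}$ iff $v\in V(G_\chi)$ iff $v\in V(H_\chi)$ iff it lies in $H_{\chi\overline{\zeta}}$. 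Splitting the common outside accordingly gives $G_{\chi\overline{\zeta}}=H_{\chi\overline{\zeta}}$ and $G_{\overline{\chi}}=H_{\overline{\chi}}$.

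The main obstacle I anticipate is exactly this step $(\Rightarrow)$: making rigorous that a difference confined to the core $\zeta$ cannot alter the way $\chi=\zeta^r$ carves up the outside---the failure mode depicted in Fig.~\ref{fig:incomprehension}. This is where name-preservation does all the work, since it forces $V(G_\zeta)\corresponds V(H_\zeta)$; so although $G$ and $H$ may differ in internal states within the core, their core supports coincide up to $\corresponds$, and because neighbourhoods see only names the partition of the (identical) outside is forced to agree. Finally, for the ``Hence'': the hypothesis $\chi\zeta=\zeta$ already supplies the first comprehension requirement $G_{\chi\zeta}=G_\zeta$, while over the name-preserving superselected sector every nonvanishing matrix element $\ket{G}\bra{H}$ satisfies $V(G)\corresponds V(H)$, so Eq.~\eqref{eq:comprehension} holds wherever it is needed; together these give $\zeta\sqsubseteq\chi$.
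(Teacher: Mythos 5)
Your proof is correct and follows essentially the same route as the paper's: both reduce Eq.~\eqref{eq:comprehension} to a $0/1$ biconditional, handle the easy direction via the decomposition $\overline{\zeta}=\chi\overline{\zeta}\cup\overline{\chi}$ coming from $\chi\zeta=\zeta$, and for the substantive direction invoke Lem.~\ref{lem:complementnames} to get $V(G_\zeta)\corresponds V(H_\zeta)$ and then observe that the $r$-neighbourhood of the core within the common outside $G_{\overline{\zeta}}=H_{\overline{\zeta}}$ is determined by names alone. Your write-up is if anything slightly more explicit than the paper's about the disjointness of the decomposition and about where name-preservation enters.
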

\begin{proof}
The LHS and RHS of Eq. \eqref{eq:comprehension} can either be $0$ or $1$.

[RHS=1 $\Rightarrow $LHS=1]

RHS=1 implies $H_{\chi \overline{\zeta }} =G_{\chi \overline{\zeta }}$ and $H_{\overline{\chi }} =G_{\overline{\chi }}$. Thus, $H_{\chi \overline{\zeta } \cup \overline{\chi }} =H_{\chi \overline{\zeta }} \cup H_{\overline{\chi }} =G_{\chi \overline{\zeta }} \cup G_{\overline{\chi }} =G_{\chi \overline{\zeta } \cup \overline{\chi }}$. 

But $\overline{\zeta } =\overline{\chi \zeta } =\chi \overline{\zeta } \cup \overline{\chi }$. So, $H_{\overline{\zeta }} =G_{\overline{\zeta }}$, i.e. LHS=1.

[LHS=1 $\Rightarrow $RHS=1]

LHS=1 implies $H_{\overline{\zeta }} =G_{\overline{\zeta }} =K$. \\
Combined with name-preservation, $V( H_{\zeta }) =V( H) \setminus V( H_{\overline{\zeta }}) \corresponds V( G) \setminus V( G_{\overline{\zeta }}) =V( G_{\zeta })$. \\
Thus, $H_{\zeta }$ and $G_{\zeta }$ have the same $r$-neighbours in $K$, namely $H_{\chi \overline{\zeta }} =G_{\chi \overline{\zeta }}$, and the same complement to the $r$-neighbours, namely $H_{\overline{\chi }} =G_{\overline{\chi }}$. Hence RHS=1.
\end{proof}

\section{Local operators over quantum networks}\label{sec:locality}

\begin{figure}\centering
\includegraphics[width=\textwidth]{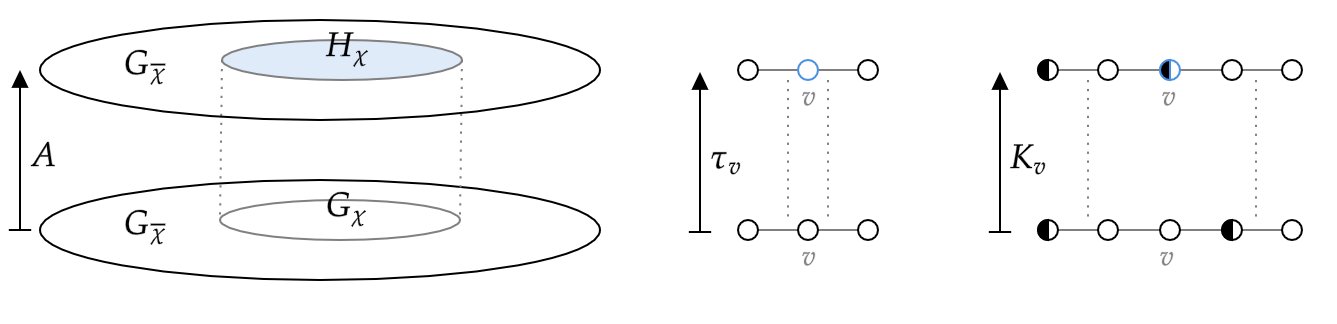}
\caption{\label{fig:locality}{\em Local operators.} Left: $A\ \chi $-local will only modify $G_{\chi }$. Middle: $\zeta _{v}$-local operator $\tau _{v}$ just toggles a 'black/blue' bit inside the system at $v$. Right: $\chi _{v}$-local operator $K_{v}$ is a reversible update rule which computes the future state of the system at $v$ (according to $M$, see Fig. \ref{fig:causality}) and toggles it out, whilst attempting to leave the rest mostly unchanged, cf. Th. \ref{th:blockdecomposition}.}
\end{figure}

Since a restriction $\chi$ isolates a part of each possible graph, one can introduce the notion of a $\chi $-local operator, one that only acts on the restriction $\chi $, leaving its complement $\overline{\chi }$ unchanged. I.e. a $\chi $-local operator acts only on the left of $ \tensorchi $.

\begin{definition}[Locality]\label{def:locality}

$A$ is $\chi $-local if and only if 
\begin{equation}
\bra{H} A\ket{G} =\bra{H_{\chi }} A\ket{G_{\chi }}\braket{H_{\overline{\chi }}|G_{\overline{\chi }}}\label{eq:locality}
\end{equation}
$A$ is strictly $\chi $-local if, moreover, $A^{\dagger } A$ and $AA{^{\dagger }}^{\ }$are $\chi $-local.\\ 
In particular, every unitary $\chi $-local is strictly $\chi $-local.
\end{definition}
{\em Soundness.} [Unitary case] Suppose $U$ is unitary $\chi $-local. Then $U^{\dagger } U=U^{\dagger } U=I$, which is $\chi $-local by Lem. \ref{lem:tensorbracket}.\hfill$\qed$\\

\color{black}

The standard way to state the locality of $A$ is to write it as a facotrisation of the form $A=B\otimes I$. Here follows a generalization of this statement, a key point that allows for this generalisation is that tensoring with the identity zeroes out the non-local terms of $B$.

\begin{proposition}[Operational locality]\label{prop:gatelocality}
$A$ is $\chi $-local if and only if $A=A\tensorchi I$.\\
For all $ B$, even if it is non-$\chi $-local, $B\tensorchi I$ is $\chi $-local.\\
Moreover, if $ B$ is n.-p., so is $B\tensorchi I$. 
\end{proposition}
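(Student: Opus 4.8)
The plan is to derive a single matrix-element formula for $B\tensorchi I$ from the explicit expansion in Lem.~\ref{lem:tensor}, and then read off all three claims from it. First I would compute, for arbitrary graphs $G,H$,
\begin{equation*}
\bra{H}(A\tensorchi I)\ket{G}=\bra{H_\chi}A\ket{G_\chi}\,\braket{H_{\overline\chi}|G_{\overline\chi}},
\end{equation*}
by sandwiching $A\tensorchi I=\sum_{G',H'}A_{G'_\chi H'_\chi}\,I_{G'_{\overline\chi}H'_{\overline\chi}}\ket{G'}\bra{H'}$ between $\bra{H}$ and $\ket{G}$: orthonormality of the graph basis collapses the double sum to the single surviving term $G'=H,\,H'=G$, and $I_{H_{\overline\chi}G_{\overline\chi}}=\braket{H_{\overline\chi}|G_{\overline\chi}}$. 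This one identity is the engine for everything that follows.

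Claim~1 ($A$ is $\chi$-local iff $A=A\tensorchi I$) then falls out immediately. The right-hand side of the displayed identity is, term-by-term in $G,H$, exactly the right-hand side of the locality equation \eqref{eq:locality}. Hence \eqref{eq:locality} holds for all $G,H$ if and only if $\bra{H}A\ket{G}=\bra{H}(A\tensorchi I)\ket{G}$ for all $G,H$, which (operators being equal iff all their matrix elements in the graph basis agree) is exactly $A=A\tensorchi I$.

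For Claim~2 I would verify \eqref{eq:locality} directly for $A:=B\tensorchi I$. Its left-hand side is $\bra{H_\chi}B\ket{G_\chi}\braket{H_{\overline\chi}|G_{\overline\chi}}$ by the master identity. For the right-hand side I must evaluate $\bra{H_\chi}(B\tensorchi I)\ket{G_\chi}$, i.e.\ the same identity applied to the restricted graphs $G_\chi,H_\chi$. The key simplifications are idempotency of the restriction, $\chi\chi=\chi$ from Lem.~\ref{lem:restrictions}, giving $(G_\chi)_\chi=G_\chi$ and $(H_\chi)_\chi=H_\chi$, together with $(G_\chi)_{\overline\chi}=G_\chi\setminus(G_\chi)_\chi=\emptyset$ (and likewise for $H$), so the complementary bracket is $\braket{\emptyset|\emptyset}=1$. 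Thus $\bra{H_\chi}(B\tensorchi I)\ket{G_\chi}=\bra{H_\chi}B\ket{G_\chi}$, and the right side of \eqref{eq:locality} matches the left. Equivalently, one could invoke Claim~1 and just check the idempotency $(B\tensorchi I)\tensorchi I=B\tensorchi I$.

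For Claim~3, suppose $B$ is name-preserving and take $G,H$ with $\bra{H}(B\tensorchi I)\ket{G}\neq 0$. The master identity forces both $\bra{H_\chi}B\ket{G_\chi}\neq 0$ and $\braket{H_{\overline\chi}|G_{\overline\chi}}\neq 0$, i.e.\ $H_{\overline\chi}=G_{\overline\chi}$. Name-preservation of $B$ yields $V(G_\chi)\corresponds V(H_\chi)$, while $H_{\overline\chi}=G_{\overline\chi}$ gives $V(G_{\overline\chi})=V(H_{\overline\chi})$. Since by well-namedness each vertex supports a unique system we have the disjoint decomposition $V(G)=V(G_\chi)\sqcup V(G_{\overline\chi})$ and likewise for $H$, so I would conclude $V(G)\corresponds V(H)$ from compatibility of $\corresponds$ with the $\chi/\overline\chi$ split of the support --- the content of Lem.~\ref{lem:complementnames}, used just as in the name-preservation argument of Prop.~\ref{prop:traceouts}, only in the reverse (union) direction. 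Contrapositively, $V(G)\not\corresponds V(H)$ forces $\bra{H}(B\tensorchi I)\ket{G}=0$, which is name-preservation of $B\tensorchi I$. The main obstacle I anticipate is the bookkeeping in Claim~2 around the empty complement: one must confirm $(G_\chi)_{\overline\chi}=\emptyset$ and not merely that $\chi$ is idempotent, precisely because $\overline\chi$ is itself not a restriction; once the master matrix-element identity is established the remainder is routine, with Claim~3 leaning additionally on the name-algebra congruence of Lem.~\ref{lem:complementnames}.
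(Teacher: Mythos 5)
Your proposal is correct and follows essentially the same route as the paper: it derives the matrix-element identity $\bra{H}(A\tensorchi I)\ket{G}=\bra{H_\chi}A\ket{G_\chi}\braket{H_{\overline\chi}|G_{\overline\chi}}$ from Lem.~\ref{lem:tensor}, reads off the first two claims using idempotency and $\chi\overline\chi=\emptyset$ from Lem.~\ref{lem:restrictions}, and establishes name-preservation by combining $V(G_\chi)\corresponds V(H_\chi)$ with $G_{\overline\chi}=H_{\overline\chi}$ (the paper phrases this as $V(G)\cup V(K)\corresponds V(H)\cup V(K)$ rather than via Lem.~\ref{lem:complementnames}, but the content is the same). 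No gaps.
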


\begin{proof}

[Preliminary]

\begin{align*}
A\tensorchi I & =\left(\sum _{G ,H \in \mathcal{G}}\bra{H} A\ket{G}\ket{H}\bra{G}\right) \tensorchi \left(\sum _{K \in \mathcal{G}}\ket{K}\bra{K}\right)\\
 & =\left(\sum _{G,H \in \mathcal{G}}\bra{H} A\ket{G}\ket{H}\bra{G}\right) \tensorchi \left(\sum _{G' ,H' \in \mathcal{G}}\braket{H'|G'}\ket{H'}\bra{G'}\right)\\
 & =\sum _{ \begin{array}{l}
G,H \in \mathcal{G}\\
G',H' \in \mathcal{G}
\end{array}}\bra{H} A\ket{G}\braket{H'|G'}\left(\ket{H}\bra{G} \tensorchi \ket{H'}\bra{G'}\right)\\
\text{By Lem. \ref{lem:tensor}} & =\sum _{G,H\in \mathcal{G}}\bra{H_{\chi }} A\ket{G_{\chi }}\braket{H_{\overline{\chi }}|G_{\overline{\chi }}}\ket{H}\bra{G}
\end{align*}

[First part]

\begin{align*}
A\ \chi \text{-local} & \Leftrightarrow \bra{H} A\ket{G} =\bra{H_{\chi }} A\ket{G_{\chi }}\braket{H_{\overline{\chi }}|G_{\overline{\chi }}}\\
 & \Leftrightarrow A=\sum _{G,H\in \mathcal{G}}\bra{H_{\chi }} A\ket{G_{\chi }}\braket{H_{\overline{\chi }}|G_{\overline{\chi }}}\ket{H}\bra{G} \ =\ A\tensorchi I
\end{align*}

[Second part]

$\bra{H}( B\tensorchi I)\ket{G} =\bra{H_{\chi }} B\ket{G_{\chi }}\braket{H_{\overline{\chi }}|G_{\overline{\chi }}}$ by the preliminaries. So, $B\tensorchi I$ is $\chi $-local.

We show that $( B\tensorchi I) =(( B\tensorchi I) \tensorchi     \ I)$. By preliminaries:
\begin{align*}
\bra{H}(( B\tensorchi I) \tensorchi     \ I)\ket{G} & =\bra{H_{\chi }}( B\tensorchi I)\ket{G_{\chi }}\braket{H_{\overline{\chi }}|G_{\overline{\chi }}}\\
\text{By prelim. } & =\bra{H_{\chi \chi }} B\ket{G_{\chi \chi }} \braket{ H_{\chi \overline{\chi }} | G_{\chi \overline{\chi }}}\braket{H_{\overline{\chi }}|G_{\overline{\chi }}}\\
\text{By idempotency and Lem \ref{lem:restrictions}} \  & =\bra{H_{\chi }} B\ket{G_{\chi }}\braket{\varnothing |\varnothing }\braket{H_{\overline{\chi }}|G_{\overline{\chi }}}\\
 & =\bra{H_{\chi }} B\ket{G_{\chi }}\braket{H_{\overline{\chi }}|G_{\overline{\chi }}}\\
\text{By prelim. } & =\bra{H}( B\tensorchi I)\ket{G}
\end{align*}
[Name-preserving case]

A matrix $B$ is n.-p. if and only if it is a sum of terms of the form $\ket{G}\bra{H}$ with $V( G) \corresponds V( H)$.\\
Then, $B\tensorchi I$ is a sum of terms of the form $\ket{G'}\bra{H'} =\left(\ket{G} \tensorchi \ket{K}\right)\left(\bra{H} \tensorchi \bra{K}\right)$, with \ $V( G') =V( G) \cup V( K) \corresponds V( H) \cup V( K) =V( H')$.

\end{proof}

Let us now go back to simple concrete examples:
At this stage we have characterised local operators in terms of generalised tensors, and for some example we have expressed generalised tensors in terms of standard tensors and direct sums. For these we can express local operators in terms of standard tensors and direct sums:
\begin{itemize}
\item For Example \ref{ex:1}, i.e. $\zeta_u$, we have that from the above theorem $\zeta_u$-locality of an operator $A$ entails \[  A \cong A_{u + \varnothing} \otimes I    .   \]
\item For Example \ref{ex:2}, i.e. $\chi_{\mathcal{C}}$ we have that from the above theorem $\chi_{\mathcal{C}}$-locality of an operator $A$ entails \[  A \cong A_{\mathcal{C}} \oplus A_{\varnothing} I  .   \]
\item For Example \ref{ex:3}, i.e. $\chi_{u:S}$, we have that from the above theorem $\chi_{u:S}$-locality of an operator $A$ entails \[  A \cong (A_{S + \varnothing} \oplus A_{\varnothing} I ) \otimes I    .   \]
\item For Example \ref{ex:4}, i.e. $\mu_{S}$, we have that from the above theorem $\mu$-locality of an operator $A$ entails \[  A \cong \bigoplus_{\mathcal{Q}} (\pi_{\neg \mathcal{Q}:S} A \pi_{\neg \mathcal{Q}: S} \otimes I)    .   \]
\end{itemize}

\color{black}

\begin{proposition}[Strict locality and consistency]\label{prop:strictlocality}
$A$ is strictly $ \chi $-local if and only if $ A$ is $ \chi $-local and $ \chi $-consistent-preserving.
\end{proposition}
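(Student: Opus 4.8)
The plan is to prove both implications, relying on the fact—immediate from conjugating Eq.~\eqref{eq:locality}—that $A^{\dagger}$ is $\chi$-local whenever $A$ is. I would also record two bookkeeping facts about consistency-preservation: the identity is always $\chi$-consistent-preserving (if $\bra{H}I\ket{G_\chi}\neq 0$ then $H=G_\chi$, and $\ket{G_\chi}\tensorchi\ket{G_{\overline{\chi}}}=\ket{G}\neq 0$), and $A^{\dagger}$ is $\chi$-consistent-preserving exactly when $A$ is, since the two defining implications for $A^{\dagger}$ are those for $A$ with $A$ and $A^{\dagger}$ interchanged.

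For the easy (``if'') direction, assume $A$ is $\chi$-local and $\chi$-consistent-preserving. By operational locality (Prop.~\ref{prop:gatelocality}) one has $A=A\tensorchi I$ and $A^{\dagger}=A^{\dagger}\tensorchi I$. Since $A$, $A^{\dagger}$ and $I$ are all $\chi$-consistent-preserving, the interchange law (Lem.~\ref{lem:interchangelaws}) applies and yields $A^{\dagger}A=(A^{\dagger}\tensorchi I)(A\tensorchi I)=(A^{\dagger}A)\tensorchi I$, so $A^{\dagger}A$ is $\chi$-local by Prop.~\ref{prop:gatelocality}; the same computation with the order of factors reversed gives $AA^{\dagger}=(AA^{\dagger})\tensorchi I$. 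Thus $A$ is strictly $\chi$-local.

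For the hard (``only if'') direction, which I expect to be the main obstacle, assume $A$ is strictly $\chi$-local; $\chi$-locality is immediate, so the task is to extract consistency-preservation from the locality of $A^{\dagger}A$ and $AA^{\dagger}$. First, $\chi$-locality pins the support: evaluating Eq.~\eqref{eq:locality} at $\ket{G_\chi}$ and using $\chi\chi=\chi$, $\chi\overline{\chi}=\emptyset$ (Lem.~\ref{lem:restrictions}) gives $\bra{H}A\ket{G_\chi}=\bra{H_\chi}A\ket{G_\chi}\braket{H_{\overline{\chi}}|\emptyset}$, which vanishes unless $H=H_\chi$. The key idea is then a Parseval count. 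Writing $Q:=G_\chi$, $R:=G_{\overline{\chi}}$, the identity $(A\tensorchi I)\ket{G}=A\ket{G_\chi}\tensorchi\ket{G_{\overline{\chi}}}$ (Lem.~\ref{lem:interchangelaws}) gives
\begin{equation*}
A\ket{G}=\Big(\textstyle\sum_{P=P_\chi}\bra{P}A\ket{Q}\,\ket{P}\Big)\tensorchi\ket{R}=\sum_{P=P_\chi}\bra{P}A\ket{Q}\,\big(\ket{P}\tensorchi\ket{R}\big),
\end{equation*}
whose surviving mutually-orthogonal terms are exactly those with $\ket{P}\tensorchi\ket{R}\neq 0$, whereas $A\ket{Q}$ retains every term. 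Comparing squared norms and invoking $\chi$-locality of $A^{\dagger}A$ at $H=G$ (which gives $\bra{G}A^{\dagger}A\ket{G}=\bra{Q}A^{\dagger}A\ket{Q}$, i.e. $\|A\ket{G}\|^2=\|A\ket{Q}\|^2$) forces
\begin{equation*}
\sum_{P=P_\chi,\ \ket{P}\tensorchi\ket{R}=0}\big|\bra{P}A\ket{Q}\big|^2=0 .
\end{equation*}
Hence $\bra{P}A\ket{Q}=0$ whenever $\ket{P}\tensorchi\ket{R}=0$; recalling $H=H_\chi$, this is precisely the implication $\bra{H}A\ket{G_\chi}\neq 0\Rightarrow\ket{H}\tensorchi\ket{G_{\overline{\chi}}}\neq 0$.

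I would close by symmetry: applying the identical Parseval argument to $A^{\dagger}$ (which is $\chi$-local, with $(A^{\dagger})^{\dagger}A^{\dagger}=AA^{\dagger}$ $\chi$-local) yields the companion implication $\bra{H}A^{\dagger}\ket{G_\chi}\neq 0\Rightarrow\ket{H}\tensorchi\ket{G_{\overline{\chi}}}\neq 0$, so both clauses of $\chi$-consistency-preservation hold. The delicate point throughout is that $\tensorchi$ silently annihilates inconsistent components, so consistency-preservation cannot be read off from $A$ alone; it is only the norm-preserving constraint enforced by the locality of $A^{\dagger}A$ and $AA^{\dagger}$ that forbids nonzero amplitudes on configurations inconsistent with the fixed environment $R$.
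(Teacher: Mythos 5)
Your proof is correct and follows essentially the same route as the paper's: the ``if'' direction via $A=A\tensorchi I$ and the interchange law, and the ``only if'' direction by comparing $\|A\ket{G}\|^2$ (computed from $\chi$-locality of $A$ as the norm of $A\ket{G_\chi}\tensorchi\ket{G_{\overline{\chi}}}$) with $\|A\ket{G_\chi}\|^2$ (computed from $\chi$-locality of $A^\dagger A$), then repeating for $A^\dagger$. Your explicit Parseval count merely spells out what the paper asserts in one sentence, namely that the only way the two norms can differ is through amplitudes on configurations inconsistent with $\ket{G_{\overline{\chi}}}$.
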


\begin{proof}

[Preliminary]

Notice that $A^{\dagger }$ is also $\chi $-consistent-preserving by the symmetry of the definition, and $\chi $-local since
\begin{align*}
\bra{G} A^{\dagger }\ket{H} & =\bra{H} A\ket{G}^{*}\\
 & =\ \bra{H_{\chi }} A\ket{G_{\chi }}{^{*}}^{\ }\braket{H_{\overline{\chi }}|G_{\overline{\chi }}}^{*}\\
 & =\ \bra{G_{\chi }} A^{\ \dagger }\ket{H_{\chi }}^{\ }\braket{G_{\overline{\chi }}|H_{\overline{\chi }}}
\end{align*}
[$\Rightarrow $]

Notice that $A$ $\chi $-consistent-preserving is equivalent to $||\left( A\ket{G_{\chi }}\right) \tensorchi \ket{G_{\overline{\chi }}} ||=||A\ket{G_{\chi }} ||$ and $||\left( A^{\dagger }\ket{G_{\chi }}\right) \tensorchi \ket{G_{\overline{\chi }}} ||=||A^{\dagger }\ket{G_{\chi }} ||$. Indeed, the only reason why the norm conditions would not hold, would be if some $\ket{H}$ was such that $\bra{H} A\ket{G_{\chi }}$ or $\bra{H} A^{\dagger }\ket{G_{\chi }}$, and yet \ $\ket{H} \tensorchi \ket{G_{\overline{\chi }}} =0$, i.e. if $A$ weren't $\chi $-consistent.

Since $A$ is $\chi $-local, $||A\ket{G} ||=||\left( A\ket{G_{\chi }}\right) \tensorchi \ket{G_{\overline{\chi }}} ||$. But since $A^{\dagger } A$ is $\chi $-local we also have that
\begin{align*}
||A\ket{G} || & =\bra{G} A^{\dagger } A\ket{G}\\
 & =\bra{G_{\chi }} A^{\dagger } A\ket{G_{\chi }}\braket{G_{\overline{\chi }}|G_{\overline{\chi }}}\\
 & =||A\ket{G_{\chi }} ||
\end{align*}
So the first norm condition is fullfilled. Similarly with $A^{\dagger }$ for the second norm condition.

[$\Leftarrow $] \ 

By Prop. \ref{prop:gatelocality}, $A=A\tensorchi I$ and $A^{\dagger } =A^{\dagger } \tensorchi I$. \\
By Lem. \ref{lem:interchangelaws}, $A^{\dagger } A=A^{\dagger } A\tensorchi I$ and $AA^{\dagger } =AA^{\dagger } \tensorchi I$.\\
Thus $A$ is strictly $\chi $-local. 
\end{proof}

\begin{figure}\centering
\includegraphics[width=0.4\textwidth]{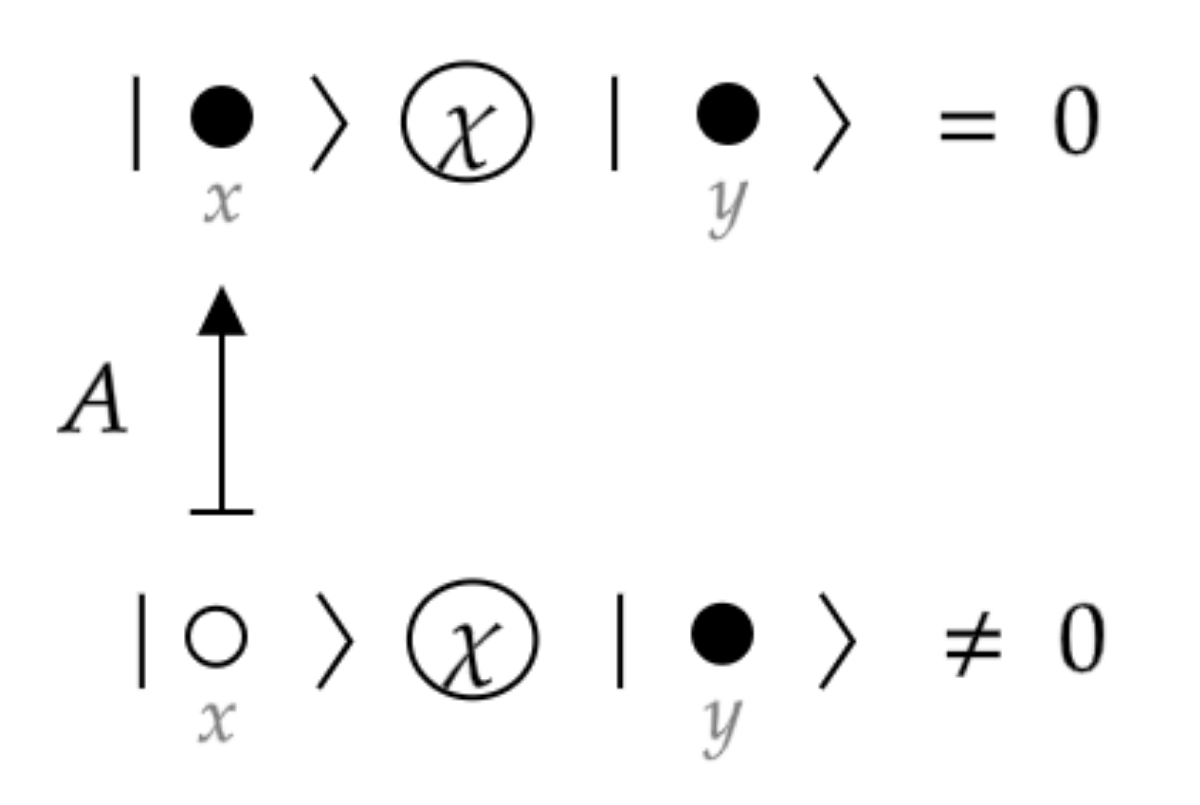}
\caption{\label{fig:strictlocality}{\em Local but not strictly local.} 
Consider the restriction $\chi$ which keeps all the nodes if none are white, and keeps just the white nodes otherwise. 
It can be checked to be extensible. 
Consider $A$ which maps fully white graphs into the corresponding fully black graphs, and sends every other graph to zero. Such an $A$ is $\chi$-local: $\bra{H}A\ket{G}\neq 0$ if and only if $H=H_\chi$ is the black coloured version of the white coloured $G=G_\chi$, which can be seen equivalent to $\bra{H_\chi}A\ket{G_\chi}\braket{H_{\overline{\chi}}|G_{\overline{\chi}}}\neq 0$. 
It is not strictly local because $\bra{11}A^\dagger A\ket{01}=0 \neq 1=\bra{1}A^\dagger A\ket{0}\braket{1|1}$. 
It is not consistent preserving because $\ket{0}\tensorchi \ket{1}\neq 0$ yet $\ket{1}\tensorchi \ket{1}= 0$.} 
\end{figure}

An operator $A$ may sometimes be $\chi $-local but not strictly $\chi $-local, see Fig. \ref{fig:strictlocality}. Since such an $A$ is not consistent-preserving, Lem. \ref{lem:interchangelaws} fails, it follows that the composition of two $\chi $-local operators is not always $\chi $-local.\\
None of these issues arise, however, if $A$ is unitary, or if it is just $\chi $-consistent-preserving. These entail strict $\chi $-locality, which is composable by Prop. \ref{prop:gatelocality} and Lem. \ref{lem:interchangelaws}.
For instance, in the Example \ref{ex:2}, i.e. $\chi_{\mathcal{C}}$, whilst locality entails $A \cong A_{\mathcal{C}} \oplus A_{\varnothing}I_{\mathcal{G} \setminus\mathcal{C}}$ the additional requirement of strict locality entails that instead $A \cong A_{\mathcal{H} \setminus \varnothing} \oplus A_{\varnothing}I_{\mathcal{G} \setminus\mathcal{C} + \varnothing}$. This stronger decomposition in this case ensures that $A$ cannot excite \textit{any} state out of the empty graph.  \color{black}

\subsection{Locality in the Heisenberg picture}

The result of an $\chi $-local observable on $\rho $ solely depends on its partial trace $\rho _{|\chi }$.

\begin{proposition}[Dual locality]\label{prop:duallocality}

$A$ is $\chi $-local if and only if $ ( A \rho )_{|\varnothing } =( A \rho _{|\chi })_{|\varnothing }$.
\end{proposition}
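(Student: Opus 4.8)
The plan is to test the asserted identity on the canonical rank-one operators $\ket{G}\bra{H}$, which by linear extension span the trace-class operators, and to observe that on these the identity collapses to precisely the defining equation \eqref{eq:locality} of $\chi$-locality. The only ingredients required are that $(\cdot)_{|\emptyset}$ is the full trace together with the cyclic trace identity $\text{Tr}(A\ket{G}\bra{H}) = \bra{H}A\ket{G}$ recorded in Table \ref{tab:toolbox}, and the definition $(\ket{G}\bra{H})_{|\chi} = \ket{G_\chi}\bra{H_\chi}\braket{H_{\overline{\chi}}|G_{\overline{\chi}}}$ of the induced partial trace.

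Concretely, for a fixed basis pair $G,H$ one computes the left-hand side as $(A\ket{G}\bra{H})_{|\emptyset} = \text{Tr}(A\ket{G}\bra{H}) = \bra{H}A\ket{G}$. For the right-hand side one first rewrites $(\ket{G}\bra{H})_{|\chi}$ using its definition, pulls the scalar $\braket{H_{\overline{\chi}}|G_{\overline{\chi}}}$ out of the trace, and applies the same trace identity to obtain $(A(\ket{G}\bra{H})_{|\chi})_{|\emptyset} = \bra{H_\chi}A\ket{G_\chi}\braket{H_{\overline{\chi}}|G_{\overline{\chi}}}$. Thus the statement $(A\rho)_{|\emptyset} = (A\rho_{|\chi})_{|\emptyset}$, specialised to $\rho = \ket{G}\bra{H}$, reads $\bra{H}A\ket{G} = \bra{H_\chi}A\ket{G_\chi}\braket{H_{\overline{\chi}}|G_{\overline{\chi}}}$, which is exactly Eq. \eqref{eq:locality}.

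The two directions then follow immediately. For the forward implication, assuming $A$ is $\chi$-local, the per-term equality above holds for every $G,H$; expanding an arbitrary $\rho = \sum_{G,H}\rho_{GH}\ket{G}\bra{H}$ and using linearity of both $(\cdot)_{|\chi}$ and $(\cdot)_{|\emptyset}$ gives $(A\rho)_{|\emptyset} = \sum_{G,H}\rho_{GH}\bra{H}A\ket{G} = \sum_{G,H}\rho_{GH}\bra{H_\chi}A\ket{G_\chi}\braket{H_{\overline{\chi}}|G_{\overline{\chi}}} = (A\rho_{|\chi})_{|\emptyset}$. For the converse, assuming the trace identity holds for all trace-class $\rho$, it holds in particular for each $\rho = \ket{G}\bra{H}$, which by the computation above recovers Eq. \eqref{eq:locality} for arbitrary $G,H$ and hence the $\chi$-locality of $A$. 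The argument is otherwise routine; the only point demanding a little care is that testing on the rank-one operators $\ket{G}\bra{H}$ is legitimate in both directions—spanning for the forward direction, and a valid specialisation for the converse—and that the scalar $\braket{H_{\overline{\chi}}|G_{\overline{\chi}}}$ may be freely commuted through the trace, so there is no genuine obstacle beyond this bookkeeping.
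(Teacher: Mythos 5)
Your proof is correct and follows essentially the same route as the paper's: both sides of the trace identity are evaluated on the rank-one operators $\ket{G}\bra{H}$, where the identity reduces exactly to Eq.~\eqref{eq:locality}, with linearity handling the extension to general $\rho$. Your added remarks on the legitimacy of testing on the spanning set and on commuting the scalar through the trace are just explicit bookkeeping that the paper leaves implicit.
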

\begin{proof}

[$\Rightarrow $]
\begin{align*}
\left( A\ket{G}\bra{H}\right)_{|\emptyset } & =\bra{H} A\ket{G}\\
 & =\bra{H_{\chi }} A\ket{G_{\chi }}\braket{H_{\overline{\chi }}|G_{\overline{\chi }}}\\
 & =\left( A\ket{G_{\chi }}\bra{H_{\chi }}\braket{H_{\overline{\chi }}|G_{\overline{\chi }}}\right)_{|\emptyset }\\
 & =\left( A\left(\ket{G}\bra{H}\right)_{|\chi }\right)_{|\emptyset }
\end{align*}
[$\Leftarrow $] 
\begin{align*}
\bra{H} A\ket{G} & =\left( A\ket{G}\bra{H}\right)_{|\emptyset }\\
 & =\left( A\left(\ket{G}\bra{H}\right)_{|\chi }\right)_{|\emptyset }\\
 & =\left( A\ket{G_{\chi }}\bra{H_{\chi }}\braket{H_{\overline{\chi }}|G_{\overline{\chi }}}\right)_{|\emptyset }\\
 & =\bra{H_{\chi }} A\ket{G_{\chi }}\braket{H_{\overline{\chi }}|G_{\overline{\chi }}}
\end{align*}
\end{proof}

The above proposition states that $\rho _{|\chi }$ contains the part of $\rho $ that is observable by $\chi $-local operators. The next proposition states that $\rho _{|\chi }$ does contain anything more.

\begin{proposition}[Local tomography]\label{prop:tomography}
If for all $A$ $\chi $-local $ ( A\rho )_{|\emptyset } =( A\sigma )_{|\emptyset }$ then $ \rho _{|}{}_{\chi } =\sigma _{|\chi }$. \\
Moreover, if $ \rho $, $ \sigma $ are name-preserving and for all $A$ $\chi $-local and name-preserving $ ( A\rho )_{|\emptyset } =( A\sigma )_{|\emptyset }$, then $ \rho _{|}{}_{\chi } =\sigma _{|\chi }$. 
\end{proposition}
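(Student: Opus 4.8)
The plan is to reduce the statement to a single matrix-element computation on the difference $\tau := \rho_{|\chi} - \sigma_{|\chi}$, and to exhibit, for each relevant pair of basis graphs, one $\chi$-local test operator that reads off exactly the corresponding entry of $\tau$. First I would use Dual locality (Prop.~\ref{prop:duallocality}) to rewrite the hypothesis: for every $\chi$-local $A$ we have $(A\rho)_{|\emptyset} = (A\rho_{|\chi})_{|\emptyset}$ and likewise for $\sigma$, so the assumption $(A\rho)_{|\emptyset} = (A\sigma)_{|\emptyset}$ becomes, by linearity of the full trace, $(A\tau)_{|\emptyset} = 0$ for all $\chi$-local $A$. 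I then record the shape of $\tau$: from the toolbox expression $\rho_{|\chi} = \sum_{G_{\overline{\chi}}=H_{\overline{\chi}}}\rho_{GH}\,\ket{G_{\chi}}\bra{H_{\chi}}$ (and the same for $\sigma$), together with idempotency $\chi\chi=\chi$ (Lem.~\ref{lem:restrictions}), the operator $\tau$ is supported on $\chi$-fixed graphs, i.e. $\tau = \sum_{P=P_{\chi},\,Q=Q_{\chi}}\tau_{PQ}\,\ket{P}\bra{Q}$.

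The key step is the choice of test operator. The naive candidate $\ket{H'}\bra{G'}$ is \emph{not} $\chi$-local (it forces an empty complement), so instead I would take $A := \ket{H'}\bra{G'}\tensorchi I$ for $\chi$-fixed graphs $G',H'$, which is $\chi$-local by Prop.~\ref{prop:gatelocality}. Expanding with Lem.~\ref{lem:tensor} gives $A = \sum_{K}\ket{H'\cup K}\bra{G'\cup K}$, the sum ranging over complements $K$ consistent with both $G'$ and $H'$. Taking the full trace against $\tau$ and using $(\rho A)_{|\emptyset}=(A\rho)_{|\emptyset}$ and $(\rho\ket{G}\bra{H})_{|\emptyset}=\bra{H}\rho\ket{G}$ yields $(A\tau)_{|\emptyset} = \sum_{K}\tau_{(G'\cup K)(H'\cup K)}$. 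Since $\tau$ lives on $\chi$-fixed graphs while $(G'\cup K)_{\chi}=G'$ (toolbox), the index $G'\cup K$ is $\chi$-fixed only when $K=\emptyset$, so all but that term vanish and $(A\tau)_{|\emptyset} = \tau_{G'H'}$. The transformed hypothesis then forces $\tau_{G'H'}=0$ for every pair of $\chi$-fixed graphs, whence $\tau=0$ and $\rho_{|\chi}=\sigma_{|\chi}$.

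For the name-preserving refinement I would observe that $\rho_{|\chi}$ and $\sigma_{|\chi}$ are themselves name-preserving by Prop.~\ref{prop:traceouts}, so $\tau$ can only carry entries $\tau_{PQ}$ with $V(P)\corresponds V(Q)$. Restricting the test operators above to pairs with $V(G')\corresponds V(H')$ makes $\ket{H'}\bra{G'}$ name-preserving, hence $A=\ket{H'}\bra{G'}\tensorchi I$ both name-preserving and $\chi$-local by Prop.~\ref{prop:gatelocality}; these suffice to probe exactly the entries that $\tau$ is allowed to support, and the previous computation gives $\tau_{G'H'}=0$ for all such pairs, so again $\tau=0$.

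The main obstacle I anticipate is the bookkeeping around $\tensorchi$: one must be careful that the spurious terms $\ket{H'\cup K}\bra{G'\cup K}$ with $K\neq\emptyset$ genuinely drop out, which is precisely where the $\chi$-fixed support of $\tau$ (inherited from the explicit form of the partial trace) is indispensable. A secondary point of care is verifying that $K=\emptyset$ is an admissible, consistent complement for the $\chi$-fixed graphs $G',H'$, so that the surviving term is exactly $\tau_{G'H'}$ rather than being zeroed by $\tensorchi$.
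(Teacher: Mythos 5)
Your proof is correct and follows essentially the same route as the paper's: both use the test operators $\ket{H_\chi}\bra{G_\chi}\tensorchi I$ (local by Prop.~\ref{prop:gatelocality}), show via dual locality and the $\chi$-fixed support of the partial traces that their trace against $\rho$ reads off exactly the entry $\alpha_{G_\chi H_\chi}$ of $\rho_{|\chi}$, and handle the name-preserving refinement by invoking Prop.~\ref{prop:traceouts} to restrict to name-preserving test operators. The only cosmetic difference is that you expand $A$ in the full graph basis and trace against $\tau=\rho_{|\chi}-\sigma_{|\chi}$, whereas the paper expands $\rho_{|\chi}$ over $\mathcal{G}_\chi$ and pairs it with the tensor form of $E^{H_\chi G_\chi}$; the cancellations are the same.
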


\begin{proof}

In general, 
$$\rho _{|}{}_{\chi } =\sum _{G_{\chi } ,\ H_{\chi } \ \in \ \mathcal{G}_{\chi }} \alpha _{G_{\chi } H_{\chi }}\ket{G_{\chi }}\bra{H_{\chi }}$$ 
$$\textrm{and }\sigma _{|}{}_{\chi } =\sum _{G_{\chi } ,\ H_{\chi } \ \in \ \mathcal{G}_{\chi }} \beta _{G_{\chi } H_{\chi }}\ket{G_{\chi }}\bra{H_{\chi }} \ $$.

Let $E_{\chi }^{H_{\chi } G_{\chi }} :=\ket{H_{\chi }}\bra{G_{\chi }}$ and $E^{H_{\chi } G_{\chi }} :=E_{\chi }^{H_{\chi } G_{\chi }} \tensorchi     \ I$, which is local by Prop. \ref{prop:gatelocality}.

We have $\left( E^{H_{\chi } G_{\chi }} \rho \right)_{|\emptyset } =\left( E^{H_{\chi } G_{\chi }} \rho _{|}{}_{\chi }\right)_{|\emptyset } =\left( E_{\chi }^{H_{\chi } G_{\chi }} \rho _{|}{}_{\chi }\right)_{|\emptyset } =\alpha _{G_{\chi } H_{\chi }}$, as the following shows:
\begin{align*}
\left( E^{H_{\chi } G_{\chi }} \rho _{|}{}_{\chi }\right)_{|\emptyset } & =\sum _{G'_{\chi } ,\ H'_{\chi } \ \in \ \mathcal{G}_{\chi } \ K\ \in \ \mathcal{G}} \alpha _{G'_{\chi } H'_{\chi }}\left(\left(\ket{H_{\chi }} \tensorchi \ket{K}\right)\left(\bra{G_{\chi }} \tensorchi \bra{K}\right)\ket{G'_{\chi }}\bra{H'_{\chi }}\right)_{|\emptyset }\\
 & =\sum _{G'_{\chi } ,\ H'_{\chi } \ \in \ \mathcal{G}_{\chi } \ K\ \in \ \mathcal{G}} \alpha _{G'_{\chi } H'_{\chi }}\left(\bra{G_{\chi }} \tensorchi \bra{K}\right)\ket{G'_{\chi }}\bra{H'_{\chi }}\left(\ket{H_{\chi }} \tensorchi \ket{K}\right)\\
\left(\bra{G_{\chi }} \tensorchi \bra{G_{\overline{\chi }}}\right)\ket{G'_{\chi }} & =\left(\bra{G_{\chi }} \tensorchi \bra{G_{\overline{\chi }}}\right)\left(\ket{G'_{\chi \chi }} \tensorchi \ket{G'_{\chi \overline{\chi }}}\right)\\
\text{By idempotency.} & =\left(\bra{G_{\chi }} \tensorchi \bra{G_{\overline{\chi }}}\right)\left(\ket{G'_{\chi }} \tensorchi \ket{\varnothing }\right)\\
 & =\braket{G_{\chi }|G'_{\chi }}\braket{G_{\overline{\chi }}|\emptyset }\\
\left( E^{H_{\chi } G_{\chi }} \rho _{|}{}_{\chi }\right)_{|\emptyset } & =\sum _{ \begin{array}{l}
G'_{\chi } ,\ H'_{\chi } \ \in \ \mathcal{G}_{\chi } \ K\ \in \ \mathcal{G}\\
\ket{G_{\chi }} \tensorchi K\ \neq \ 0\\
\ket{H_{\chi }} \tensorchi K\ \neq \ 0
\end{array}} \alpha _{G'_{\chi } H'_{\chi }}\braket{G_{\chi }|G'_{\chi }}\braket{K|\varnothing }\braket{H'_{\chi }|H_{\chi }}\braket{\varnothing |K}\\
 & =\alpha _{G_{\chi } H_{\chi }}
\end{align*}
so (n.-p.) $\chi $-local "measurements" can tell any difference between $\rho _{|}{}_{\chi }$ and $\sigma _{|}{}_{\chi }$.

[Name-preserving case]

In this case, 
$$\rho _{|}{}_{\chi } =\sum _{ \begin{array}{l}
G_{\chi } ,\ H_{\chi } \ \in \ \mathcal{G}_{\chi }\\
V( G_{\chi }) \corresponds V( H_{\chi })
\end{array}} \alpha _{G_{\chi } H_{\chi }}\ket{G_{\chi }}\bra{H_{\chi }} \ $$ 
$$\textrm{and } \sigma _{|}{}_{\chi } =\sum _{ \begin{array}{l}
G_{\chi } ,\ H_{\chi } \ \in \ \mathcal{G}_{\chi }\\
V( G_{\chi }) \corresponds V( H_{\chi })
\end{array}} \beta _{G_{\chi } H_{\chi }}\ket{G_{\chi }}\bra{H_{\chi }} \ $$.

The fact that $V( G_{\chi }) \corresponds V( H_{\chi })$ comes from the assumption that $\rho $, $\sigma $ are n.-p. and the fact that the partial trace preserves that by Prop. \ref{prop:traceouts}. \\
Then $E_{\chi }^{H_{\chi } G_{\chi }} :=\ket{H_{\chi }}\bra{G_{\chi }}$ and $E^{H_{\chi } G_{\chi }} :=E_{\chi }^{H_{\chi } G_{\chi }} \tensorchi     \ I$ \ are n.-p. by Prop. \ref{prop:traceouts}.
\end{proof}

Notice that if we stick to n.-p. observables, we limit our power of observation. For instance whenever $V( G)\not{\corresponds } V( H)$, n.-p. observables cannot tell the difference between:
\begin{equation*}
\rho =\frac{1}{2}\left(\ket{G} +\ket{H}\right)\left(\bra{G} +\bra{H}\right) \quad \text{and} \quad \tilde{\rho } =\frac{1}{2}\ket{G}\bra{G} +\frac{1}{2}\ket{H}\bra{H} \ 
\end{equation*}
i.e. they cannot read-out superpositions of supports coherently. That is unless states are n.-p., too.

\subsection{Extending unitaries acting on a subnetwork}

Often we are given a operator over $\mathcal{H}_{\chi }$, and we want to extend it to $\mathcal{H}$. In standard quantum theory it is easy to show that any such unitary operator can be extended, with the result being unitary. To generalise this to unitaries over quantum networks we will need name-preservation. 

\begin{proposition}[Unitary extension]\label{prop:unitaryextension}
Consider $ \chi $ pointwise.\\
If $ U$ is a n.-p. operator over $ \mathcal{H}_{\chi }$, then $ U$ $ \chi $-consistent-preserving.\\
If $ U$ is a n.-p. unitary over $ \mathcal{H}_{\chi }$, then $ U':=U\tensorchi I$ is a n.-p. unitary with $ U^{\prime \dagger } =U^{\dagger } \tensorchi I$.\\
If moreover $ \chi $ and $ U$ are renaming-invariant, so is $ U'$.
\end{proposition}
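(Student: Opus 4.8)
The plan is to prove the three assertions in sequence, with the first (consistency-preservation) doing all the real work, since both the unitarity and the renaming-invariance claims will be gated on it through the interchange law of Lem.~\ref{lem:interchangelaws}. For the first assertion, suppose $\bra{H} U\ket{G_{\chi }}\neq 0$. Because $U$ acts over $\mathcal{H}_{\chi }$, both $H=H_{\chi }$ and $G_{\chi }$ are already $\chi $-restricted, and because $U$ is n.-p. this forces $V( H)\corresponds V( G_{\chi })$. I would then exhibit the witness $K:=H\cup G_{\overline{\chi }}$ and show $\ket{H} \tensorchi \ket{G_{\overline{\chi }}} =\ket{K} \neq 0$. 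Well-namedness of $K$ holds because $V( H)\corresponds V( G_{\chi })$ while $V( G_{\chi })$ already coexists consistently with $V( G_{\overline{\chi }})$ inside $G$; disjointness of $H$ and $G_{\overline{\chi }}$ follows from $V( G_{\chi }) \cap V( G_{\overline{\chi }}) =\emptyset $ via Lem.~\ref{lem:complementnames}. The crucial equality is $K_{\chi } =H$, and this is exactly where pointwise-ness enters: every system $\sigma .v\in G_{\overline{\chi }}$ satisfies $\{\sigma .v\}_{\chi } =\emptyset $, whereas every $\sigma .v\in H$ satisfies $\{\sigma .v\}_{\chi } =\{\sigma .v\}$ (otherwise $H_{\chi } \subsetneq H$, contradicting $H=H_{\chi }$), so $K_{\chi } =\bigcup _{\sigma .v\in K}\{\sigma .v\}_{\chi } =H$ and hence $K_{\overline{\chi }} =G_{\overline{\chi }}$. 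Applying the same construction to $U^{\dagger }$, which is also n.-p. and acts over $\mathcal{H}_{\chi }$, delivers the second half of the $\chi $-consistent-preservation condition.

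For the second assertion, n.-p. of $U'=U\tensorchi I$ is immediate from Prop.~\ref{prop:gatelocality}. The adjoint formula $U^{\prime \dagger } =U^{\dagger } \tensorchi I$ I would read off the matrix-element expression of Lem.~\ref{lem:tensor}: conjugating coefficients and relabelling $G\leftrightarrow H$ shows $( A\tensorchi B)^{\dagger } =A^{\dagger } \tensorchi B^{\dagger }$, and specialising $B=I$ gives the claim. Unitarity then follows from the interchange law of Lem.~\ref{lem:interchangelaws}: since $U$, $U^{\dagger }$ (by the first assertion) and $I$ are all $\chi $-consistent-preserving, we get $( U^{\dagger } \tensorchi I)( U\tensorchi I) =U^{\dagger } U\tensorchi I$ and $( U\tensorchi I)( U^{\dagger } \tensorchi I) =UU^{\dagger } \tensorchi I$. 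As $U$ is unitary over $\mathcal{H}_{\chi }$ we have $U^{\dagger } U=UU^{\dagger } =I_{\chi }$, and one checks $I_{\chi } \tensorchi I=I$ directly from the preliminary computation in Prop.~\ref{prop:gatelocality} together with Lem.~\ref{lem:tensorbracket}, since $\bra{H}( I_{\chi } \tensorchi I)\ket{G} =\braket{H_{\chi }|G_{\chi }}\braket{H_{\overline{\chi }}|G_{\overline{\chi }}} =\braket{H|G}$. Hence $U'$ is unitary with inverse $U^{\dagger } \tensorchi I=U^{\prime \dagger }$.

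For the third assertion I would first rewrite the action as $U'_{v}\ket{G} =U_{v}\ket{G_{\chi _{v}}} \tensorchi \ket{G_{\overline{\chi _{v}}}}$ using Lem.~\ref{lem:interchangelaws}, and then push a renaming $R$ through. This relies on the fact, verified on basis graphs and extended by linearity, that renaming commutes with the generalised tensor whenever $\chi $ is renaming-invariant, which in turn rests on $R( G_{\chi _{v}}) =R( G)_{\chi _{R( v)}}$ and $R( G_{\overline{\chi _{v}}}) =R( G)_{\overline{\chi _{R( v)}}}$. Combining this with the renaming-invariance $RU_{v} =U_{R( v)} R$ of $U$, one obtains $RU'_{v}\ket{G} =U'_{R( v)} R\ket{G}$ on every basis state, hence $RU'_{v} =U'_{R( v)} R$.

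The main obstacle is the first assertion. Everything downstream depends on the interchange law, whose hypothesis is precisely $\chi $-consistent-preservation, so the entire proposition hinges on certifying that the recombined graph $K=H\cup G_{\overline{\chi }}$ is well-named and restricts back to $H$; this is the single point at which the joint force of pointwise-ness (to guarantee $K_{\chi } =H$) and name-preservation (to guarantee $V( H)\corresponds V( G_{\chi })$, and thereby compatibility with $G_{\overline{\chi }}$) must be deployed together.
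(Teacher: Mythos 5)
Your proposal is correct and follows essentially the same route as the paper's proof: the same witness $K=H\cup G_{\overline{\chi}}$ for consistency-preservation (with name-preservation guaranteeing the union is well-named and pointwise-ness guaranteeing $K_{\chi}=H$), the same appeal to the interchange law of Lem.~\ref{lem:interchangelaws} for unitarity, and the same push-the-renaming-through-the-tensor computation for renaming-invariance. The only cosmetic differences are that you make the identities $(A\tensorchi B)^{\dagger}=A^{\dagger}\tensorchi B^{\dagger}$ and $I_{\chi}\tensorchi I=I$ explicit (the paper gets these from Lem.~\ref{lem:tensor}), and the paper adds a redundant direct isometry computation.
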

\begin{proof}

[Consistency-preservation]

We need to prove first that $\bra{G'_{\chi }} U\ket{G_{\chi }} \neq 0$ entails $\ket{G'_{\chi }} \tensorchi \ket{G_{\overline{\chi }}} \neq 0$ and second that $\bra{G'_{\chi }} U^{\dagger }\ket{G_{\chi }} \neq 0$ entails $\ket{G'_{\chi }} \tensorchi \ket{G_{\overline{\chi }}} \neq 0$.

Let us prove the first.

Since $U$ is over $\mathcal{H}_{\chi }$, 
$U\ket{G_{\chi }}  =\sum _{H_{\chi } \ \in \ \mathcal{G}_{\chi }} U_{H_{\chi } G_{\chi }}\ket{H_{\chi }}$.\\
For any $G$, consider $H_{\chi }$ such that $\bra{H_{\chi }} U\ket{G_{\chi }} \neq 0$ and construct the graph $G'=H_{\chi } \cup G_{\overline{\chi }}$.\\ 
Notice that this union is always defined since $U$ is assumed n.-p., and hence $\mathcal{N}[ V( G_{\chi })] \cap \mathcal{N}[ V( G_{\overline{\chi }})] =\emptyset $ entails $\mathcal{N}[ V( H_{\chi })] \cap \mathcal{N}[ V( G_{\overline{\chi }})] =\emptyset $ --- assuming that we work over the set of all graphs ${\cal G}$. In general this could fail for some arbitrary constrained configurations ${\cal C}$ and a restriction $\chi$ over it.\\
Moreover since $\chi $ is pointwise it verifies that $G'_{\chi } =H_{\chi }$ and $G'_{\overline{\chi }} =G_{\overline{\chi }}$.\\
Thus, $\ket{H_{\chi }} \tensorchi \ket{G_{\overline{\chi }}} =\ket{G'} \neq 0$.\\
It follows that $\left( U\ket{G_{\chi }}\right)$, $\ket{G_{\overline{\chi }}}$ are $\chi $-consistent.

Similarly for $U^{\dagger }$. Hence $U$ is $\chi $-consistent-preserving.

[Unitarity]

By Lem. \ref{lem:interchangelaws},
\begin{align*}
( U\tensorchi I)\left( U^{\dagger } \tensorchi I\right) & =\left( UU^{\dagger } \tensorchi I\right) =( I\tensorchi I) =I\\
\left( U^{\dagger } \tensorchi I\right)( U\tensorchi I) & =\left( U^{\dagger } U\tensorchi I\right) =( I\tensorchi I) =I
\end{align*}
\begin{align*}
\left( U\ket{G_{\chi }}\right) \tensorchi \ket{G_{\overline{\mu }}} & =\sum _{H_{\chi } \ \in \ \mathcal{G}_{\chi }} U_{H_{\chi } G_{\chi }}\left(\ket{H_{\chi }} \tensorchi \ket{G_{\overline{\chi }}}\right)\\
||U'\ket{G} ||^{2} & =\sum _{H_{\chi } ,\ H'_{\chi } \in \ \mathcal{G}_{\chi }} U_{H'_{\chi } G_{\chi }}^{*} U_{H_{\chi } G_{\chi }}\left(\bra{H'_{\chi }} \tensorchi \bra{G_{\overline{\chi }}}\right)\left(\ket{H_{\chi }} \tensorchi \ket{G_{\overline{\chi }}}\right)\\
 & =\sum _{ \begin{array}{l}
H_{\chi } ,\ H'_{\chi } \in \ \mathcal{G}_{\chi }\\
\ket{H_{\chi }} \tensorchi \ket{G_{\overline{\chi }}} \ \neq \ 0\\
\ket{H'_{\chi }} \tensorchi \ket{G_{\overline{\chi }}} \ \neq \ 0
\end{array}} U_{H'_{\chi } G_{\chi }}^{*} U_{H_{\chi } G_{\chi }}\braket{H'_{\chi }|H_{\chi }}\braket{G_{\overline{\chi }}|G_{\overline{\chi }}}\\
\text{By} \ U,\ I\ \chi \text{-consistent-preserving.} & =\sum _{H_{\chi } ,\ H'_{\chi } \in \ \mathcal{G}_{\chi }} U_{H'_{\chi } G_{\chi }}^{*} U_{H_{\chi } G_{\chi }}\braket{H'_{\chi }|H_{\chi }}\braket{G_{\overline{\chi }}|G_{\overline{\chi }}}\\
\text{By unitarity of} \ U. & =\sum _{H_{\chi } \ \in \ \mathcal{G}_{\chi }} U_{H_{\chi } G_{\chi }}^{*} U_{H_{\chi } G_{\chi }} =1
\end{align*}
Thus $U'$ is an isometry, i.e. $U^{\prime \dagger } U'=I$. 

Notice that $\left( U^{\dagger } \tensorchi I\right)$ is its right inverse since by means of Lem. \ref{lem:interchangelaws}. 

Since $U$ is over $\mathcal{H}_{\chi }$, we have that $U^{\ \dagger }$preserves the range of $\chi $, and so $U^{\dagger } ,\ I$ are $\chi $-consistent-preserving. 

Thus, 
\begin{align*}
\left( U^{\dagger } \tensorchi I\right)\ket{G} & =\sum _{H_{\chi } \ \in \ \mathcal{G}_{\chi }} U_{G_{\chi } H_{\chi }}^{*}\left(\ket{H_{\chi }} \tensorchi \ket{G_{\overline{\chi }}}\right)\\
U'\left( U^{\dagger } \tensorchi I\right)\ket{G} & =\sum _{ \begin{array}{l}
H_{\chi } \ \in \ \mathcal{G}_{\chi }\\
\ket{H_{\chi }} \tensorchi \ket{G_{\overline{\chi }}} \ \neq \ 0
\end{array}} U_{H'_{\chi } H_{\chi }} U_{G_{\chi } H_{\chi }}^{*}\left(\ket{H'_{\chi }} \tensorchi \ket{G_{\overline{\chi }}}\right)\\
\text{By} \ U^{\dagger } ,\ I\ \chi \text{-consistent-preserving.} & =\sum _{H_{\chi } \ \in \ \mathcal{G}_{\chi }} U_{H_{\chi } H'_{\chi }} U_{G_{\chi } H_{\chi }}^{*}\left(\ket{H'_{\chi }} \tensorchi \ket{G_{\overline{\chi }}}\right)\\
 & =\sum _{H_{\chi } \ \in \ \mathcal{G}_{\chi }} I_{H'_{\chi } G_{\chi }}\left(\ket{H'_{\chi }} \tensorchi \ket{G_{\overline{\chi }}}\right)\\
 & =\ket{G}
\end{align*}
[Name-preservation]

Follows from Prop. \ref{prop:gatelocality}.

[Renaming-invariance]

Let $G'=RG$ and $H=RH'$.
\begin{align*}
\bra{H}( U \tensorchi I) R\ket{G} & =\bra{H}( U \tensorchi I)\ket{G'}\\
 & =\bra{H}( U \tensorchi I)\left(\ket{G'_{\chi }} \tensorchi \ket{G'_{\overline{\chi }}}\right)\\
 & =\left(\bra{H_{\chi }} \tensorchi \bra{H_{\overline{\chi }}}\right)\left( U\ket{G'_{\chi }} \tensorchi \ket{G'_{\overline{\chi }}}\right)\\
U\\ 
\chi \text{-consistent-preserving.} & =\bra{H_{\chi }} U\ket{G'_{\chi }} \ \braket{H_{\overline{\chi }}|G'_{\overline{\chi }}}\\
\text{Since } \chi \ \text{renaming-invariant.} & =\bra{H_{\chi }} UR\ket{G_{\chi }} \ \bra{H_{\overline{\chi }}} R\ket{G_{\overline{\chi }}}\\
\text{Since } U\ \text{renaming-invariant.} & =\bra{H_{\chi }} RU\ket{G_{\chi }} \ \bra{H_{\overline{\chi }}} R\ket{G_{\overline{\chi }}}\\
\text{Since } \chi \ \text{renaming-invariant.} & =\bra{H'_{\chi }} U\ket{G_{\chi }} \ \braket{H'_{\overline{\chi }}|G_{\overline{\chi }}}\\
U\ \chi \text{-consistent-preserving.} & =\left(\bra{H'_{\chi }} \tensorchi \bra{H'_{\overline{\chi }}}\right)\left( U\ket{G_{\chi }} \tensorchi \ket{G_{\overline{\chi }}}\right)\\
 & =\bra{H'}( U \tensorchi I)\left(\ket{G_{\chi }} \tensorchi \ket{G_{\overline{\chi }}}\right)\\
 & =\bra{H} R( U \tensorchi I)\ket{G}
\end{align*}
\end{proof}

\section{Causal operators over quantum networks}\label{sec:causality}

\begin{figure}\centering
\includegraphics[width=\textwidth]{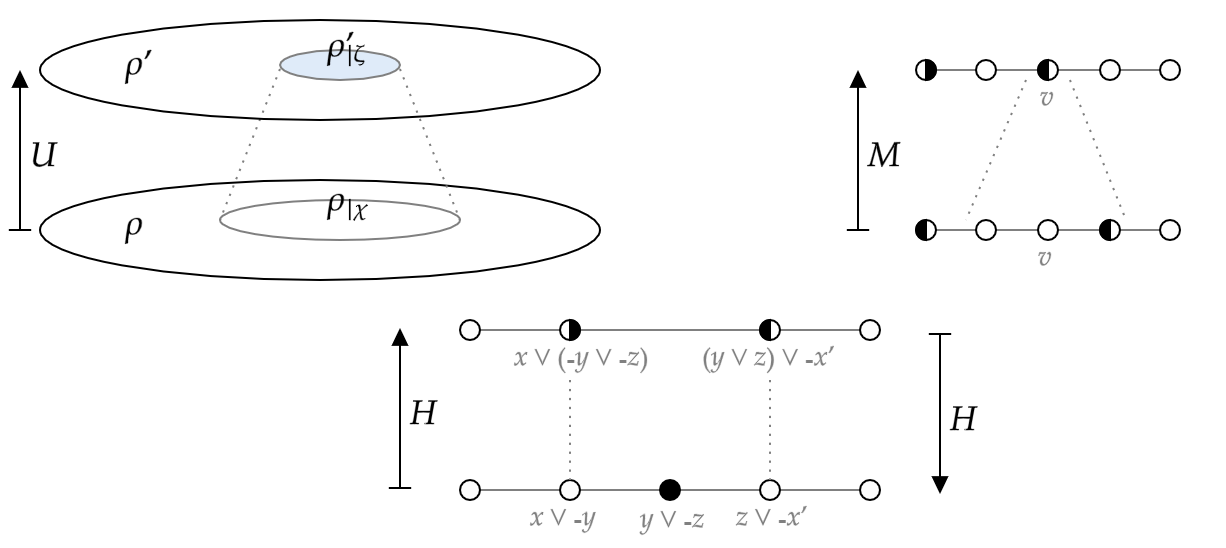}
\caption{\label{fig:causality}{\em Causal operators.} Left: $U\ \chi \zeta $-causal may modify the whole of $\rho $. But it is such that $\rho '_{|\zeta }$ solely depends on $\rho _{|\chi }$. Right: $\chi _{v} \zeta _{v}$-causal operator $M$ propagates particles. They bounce on borders. Middle: Causal $H$ is involutive merges/splits all occurrences of these particular patterns, synchronously.}
\end{figure}

\begin{figure}\centering
\includegraphics[width=\textwidth]{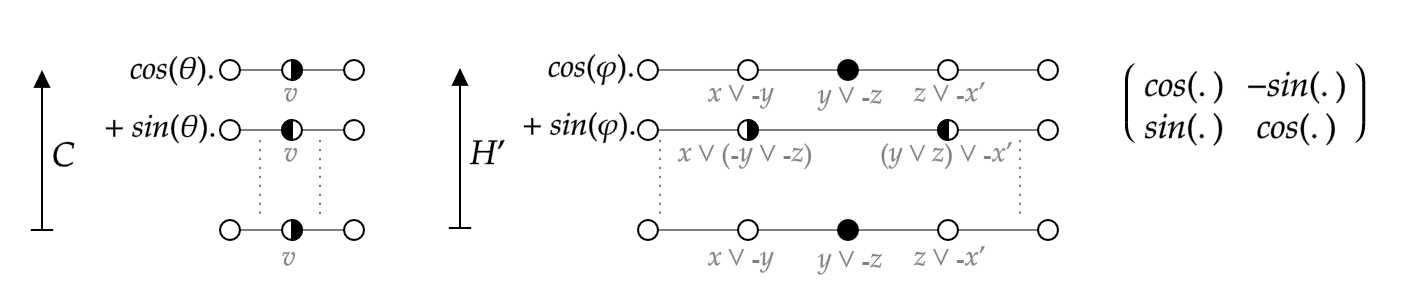}
\caption{\label{fig:quantumcausality} {\em Causal operators yielding superpositions of states.} Left: Instead of simply iterating $M$, we could iterate $MC$, where $C$ acts on every right-moving (resp. left-moving) particle by placing it in a superposition of being right-moving with amplitude $cos( \theta )$ and left-moving with amplitude $sin( \theta )$ (resp. left-moving with amplitude $cos( \theta )$ and right-moving with amplitude $-sin( \theta )$. Right: Instead of simply merging or splitting according to $H$, we could do so in a superposition. Here the splitting applies with amplitude $sin( \varphi )$. Similarly the merge needs be applied with amplitude $-sin( \varphi )$. Lastly, we may compose these, e.g. \ $U=H'MC$ makes for an interesting quantum causal graph dynamics.}
\end{figure}
A key principle in physical frameworks is causality, understood as constraints upon the propagation of information. In the quantum information-theoretic settings, it has become standard to express causality constraints by means of partial traces \cite{Beckman}. Now that we have a generalised trace, we can have generalised notions of causality.  A canonical example of a causal unitary operator is given in Fig. \ref{fig:causality}. A generalised causality example is given in \ref{fig:quantumcausality}.\\
Consider two restrictions $\chi,\zeta$ over networks, a $\chi \zeta $-causal operator is one which restricts information propagation by imposing that region $\zeta $ at the next time step depends only upon region $\chi $ at the previous time step. Subject to this constraint, $\chi \zeta $-causal operator will be permitted to edit the entirety of the graphs they act on.

\begin{definition}[Causality]\label{def:causality}

$U$ is $\chi \zeta $-causal if and only if
\begin{equation}
\left( U\rho U^{\dagger }\right)_{|\zeta } =\left( U\rho _{|\chi } U^{\dagger }\right)_{|\zeta }\label{eq:causalitly}
\end{equation}
$U$ is name-preserving $\chi \zeta $-causal if and only if it is n.-p. and for all $ \rho $ n.-p., Eq. \eqref{eq:causalitly} holds.
\end{definition}

Notice that it is not the case that n.-p. causality implies causality. In particular, the identity is n.-p. $\zeta _{v}^{2} \zeta _{v}^{1}$-causal, but not $\zeta _{v}^{2} \zeta _{v}^{1}$-causal, because $\zeta _{v}^{1} \ \not\sqsubseteq \ \zeta _{v}^{2}$, as discussed in Fig. \ref{fig:incomprehension}. That is unless we restrict ourselves to n.-p. superselected states as in Prop. \ref{prop:npcomprehension}. This suggests that n.-p. causality is potentially more relevant than causality.

A robust notion of causality ought to be composable.

\begin{proposition}[Composability]\label{prop:causalitycomposability}
Say that for all $ n$, there exists $ m$ such that $ U,V$ are (n.-p.) $ \zeta ^{m} \zeta ^{n}$-causal.\\
We have that for all $ n$ there exists $ m$ such that $ UV$ is (n.-p.) $ \zeta ^{m} \zeta ^{n}$-causal.
\end{proposition}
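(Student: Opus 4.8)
The plan is to stack the ``light cones'' of $U$ and $V$: I treat the causality identities for $U$ and for $V$ as black boxes and chain them, so that the cause-region of $UV$ at effect-level $\zeta^n$ is simply the cause-region of $V$ at the level that is itself the cause-region of $U$. Fix $n$. First I would apply the hypothesis at level $n$ to obtain $m_1$ with $U$ being (n.-p.) $\zeta^{m_1}\zeta^n$-causal, and then apply it again at level $m_1$ to obtain $m_2$ with $V$ being (n.-p.) $\zeta^{m_2}\zeta^{m_1}$-causal. The claim is that $m:=m_2$ witnesses $\zeta^{m_2}\zeta^n$-causality of $UV$.

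The heart of the argument is the following chain, valid for every admissible $\rho$:
\begin{align*}
\left(UV\rho V^\dagger U^\dagger\right)_{|\zeta^n}
&= \left(U\,(V\rho V^\dagger)_{|\zeta^{m_1}}\,U^\dagger\right)_{|\zeta^n} \\
&= \left(U\,(V\rho_{|\zeta^{m_2}}V^\dagger)_{|\zeta^{m_1}}\,U^\dagger\right)_{|\zeta^n} \\
&= \left(UV\rho_{|\zeta^{m_2}}V^\dagger U^\dagger\right)_{|\zeta^n}.
\end{align*}
The first equality is the $\zeta^{m_1}\zeta^n$-causality of $U$ applied to $\sigma:=V\rho V^\dagger$. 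The middle equality is the $\zeta^{m_2}\zeta^{m_1}$-causality of $V$, which is an identity between the operators $(V\rho V^\dagger)_{|\zeta^{m_1}}$ and $(V\rho_{|\zeta^{m_2}}V^\dagger)_{|\zeta^{m_1}}$; since both are then subjected to the same linear operation $X\mapsto (UXU^\dagger)_{|\zeta^n}$, equality is preserved. The third equality is again the $\zeta^{m_1}\zeta^n$-causality of $U$, this time read from right to left and applied to $\sigma':=V\rho_{|\zeta^{m_2}}V^\dagger$. Comparing the first and last expressions is exactly the statement that $UV$ is $\zeta^{m_2}\zeta^n$-causal.

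I expect the only real care to lie in the name-preserving bookkeeping, since there the causality identities for $U$ and $V$ may be invoked only on n.-p. operators, so I must check that each operator fed into a causality step is n.-p. Because $\corresponds$ is an equivalence relation (equality of generated name algebras), hence transitive, name-preservation is closed under composition and under the dagger; thus $V\rho V^\dagger$ is n.-p. whenever $\rho$ is. Moreover, by Prop.~\ref{prop:traceouts} the partial trace $(\cdot)_{|\zeta^{m_2}}$ maps n.-p. operators to n.-p. operators, so $V\rho_{|\zeta^{m_2}}V^\dagger$ is n.-p. as well. With these closure facts in place, each of the three equalities is licensed in the n.-p. setting precisely as in the general one, and the proof closes. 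No substantive estimate is needed---the content is purely the formal composition of the two causality relations together with this name-preservation hygiene.
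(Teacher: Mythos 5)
Your proof is correct and follows essentially the same route as the paper's: apply the hypothesis twice to obtain intermediate radii ($m_1$, then $m_2$ from $m_1$), then sandwich the $V$-causality identity between two applications of the $U$-causality identity. Your explicit check that name-preservation is closed under composition, adjoints, and partial traces (via Prop.~\ref{prop:traceouts}) is a welcome addition that the paper leaves implicit.
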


\begin{proof}

[$\Rightarrow $] 

For all $\rho $ (n.-p.),
\begin{align*}
 & \left( U\left( V\rho V^{\dagger }\right) U^{\dagger }\right)_{|\zeta ^{n}}\\
U\ \text{caus. } & =\left( U\left( V\rho V^{\dagger }\right)_{|\zeta ^{k}} U^{\dagger }\right)_{|\zeta ^{n}}\\
V\ \text{caus. } & =\left( U\left( V\rho _{|\zeta ^{m}} V^{\dagger }\right)_{|\zeta ^{k}} U^{\dagger }\right)_{|\zeta ^{n}}\\
U\ \text{caus. } & =\left( U\left( V\rho _{|\zeta ^{m}} V^{\dagger }\right) U^{\dagger }\right)_{|\zeta ^{n}}
\end{align*}

\end{proof}

\subsection{Causality in the Heisenberg picture}
In the Heisenberg picture, it turns out that $\chi \zeta $-causality actually states that whatever can be $\zeta $-locally observed at the next time step, could be $\chi $-locally observed at the previous time step.

\begin{proposition}[Dual causality]\label{prop:dualcausality}
$U$ is (n.-p.) $\chi \zeta $-causal if and only if (it is n.-p. and) for all $ A$ (n.-p.) $ \zeta $-local, $ U^{\dagger } AU$ is (n.-p.) $ \chi $-local.\\
If $U$ is (n.-p.) $\chi \zeta $-causal, then for all $A$ (n.-p.) strictly $\zeta $-local, $U^{\dagger } AU$ is (n.-p.) strictly $\chi $-local.
\end{proposition}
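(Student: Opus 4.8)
The plan is to work entirely in the Heisenberg-picture dictionary already established: dual locality (Prop. \ref{prop:duallocality}), which says an operator $B$ is $\chi$-local exactly when $(B\rho)_{|\emptyset}=(B\rho_{|\chi})_{|\emptyset}$ for all $\rho$, and local tomography (Prop. \ref{prop:tomography}), which recovers $\rho_{|\zeta}$ from the scalars $(A\rho)_{|\emptyset}$ as $A$ ranges over $\zeta$-local operators. The only extra ingredient is the cyclicity of the full trace, $(\rho B)_{|\emptyset}=(B\rho)_{|\emptyset}$, from Table \ref{tab:toolbox}.

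For the main equivalence I would fix a $\zeta$-local $A$ and a state $\rho$ and run a single chain of equalities. Starting from $(U^{\dagger}AU\rho)_{|\emptyset}$, cyclicity moves the outer $U^{\dagger}$ to the right to give $(A\,U\rho U^{\dagger})_{|\emptyset}$, and dual locality of $A$ replaces $U\rho U^{\dagger}$ by its traceout, yielding $(A\,(U\rho U^{\dagger})_{|\zeta})_{|\emptyset}$; the same computation from $(U^{\dagger}AU\rho_{|\chi})_{|\emptyset}$ produces $(A\,(U\rho_{|\chi}U^{\dagger})_{|\zeta})_{|\emptyset}$. Thus the two quantities $(U^{\dagger}AU\rho)_{|\emptyset}$ and $(U^{\dagger}AU\rho_{|\chi})_{|\emptyset}$ differ only through $(U\rho U^{\dagger})_{|\zeta}$ versus $(U\rho_{|\chi}U^{\dagger})_{|\zeta}$. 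The forward direction is then immediate: if $U$ is $\chi\zeta$-causal these traceouts coincide by Eq. \eqref{eq:causalitly}, so $(U^{\dagger}AU\rho)_{|\emptyset}=(U^{\dagger}AU\rho_{|\chi})_{|\emptyset}$ for every $\rho$, i.e. $U^{\dagger}AU$ is $\chi$-local by Prop. \ref{prop:duallocality}. For the converse I read the chain the other way: assuming $U^{\dagger}AU$ is $\chi$-local, dual locality gives $(U^{\dagger}AU\rho)_{|\emptyset}=(U^{\dagger}AU\rho_{|\chi})_{|\emptyset}$, and cyclicity turns this into $(A\,U\rho U^{\dagger})_{|\emptyset}=(A\,U\rho_{|\chi}U^{\dagger})_{|\emptyset}$ for every $\zeta$-local $A$; local tomography (Prop. \ref{prop:tomography}) applied to $U\rho U^{\dagger}$ and $U\rho_{|\chi}U^{\dagger}$ then yields $(U\rho U^{\dagger})_{|\zeta}=(U\rho_{|\chi}U^{\dagger})_{|\zeta}$, which is exactly $\chi\zeta$-causality.

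The name-preserving statement runs along the same lines, restricting $\rho$ and $A$ to be n.-p., invoking the n.-p. half of Prop. \ref{prop:tomography}, and noting that $U^{\dagger}AU$ is n.-p. since name-preservation is closed under products and adjoints. The one genuinely delicate point, and where I expect the real work, is that the forward direction must deduce $\chi$-locality of $U^{\dagger}AU$ from the trace identity tested only against n.-p. states. This requires an n.-p. refinement of Prop. \ref{prop:duallocality}: for an n.-p. operator $B$, the identity $(B\rho)_{|\emptyset}=(B\rho_{|\chi})_{|\emptyset}$ for all n.-p. $\rho$ already forces Eq. \eqref{eq:locality}. On matrix elements $\bra{H}B\ket{G}$ with $V(H)\corresponds V(G)$ this is recovered by testing against $\rho=\ket{G}\bra{H}$ as in Prop. \ref{prop:duallocality}; for $V(H)\not\corresponds V(G)$ the left side of Eq. \eqref{eq:locality} vanishes by name-preservation of $B$, and one must check the right side vanishes too, i.e. that $\bra{H_{\chi}}B\ket{G_{\chi}}\braket{H_{\overline{\chi}}|G_{\overline{\chi}}}\neq 0$ is impossible. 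This is where consistency-checking bites: it follows from Lem. \ref{lem:complementnames}, since $\braket{H_{\overline{\chi}}|G_{\overline{\chi}}}\neq 0$ together with $V(H_{\chi})\corresponds V(G_{\chi})$ would reconstitute $V(H)\corresponds V(G)$, a contradiction.

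Finally, the strict-locality claim I would obtain as a corollary of the equivalence just proved, using unitarity of $U$. If $A$ is strictly $\zeta$-local then $A$, $A^{\dagger}A$ and $AA^{\dagger}$ are all $\zeta$-local (Def. \ref{def:locality}), so $U^{\dagger}AU$, $U^{\dagger}A^{\dagger}A\,U$ and $U^{\dagger}AA^{\dagger}U$ are all $\chi$-local. Since $UU^{\dagger}=U^{\dagger}U=I$, one has $(U^{\dagger}AU)^{\dagger}(U^{\dagger}AU)=U^{\dagger}A^{\dagger}A\,U$ and $(U^{\dagger}AU)(U^{\dagger}AU)^{\dagger}=U^{\dagger}AA^{\dagger}U$, so both are $\chi$-local; hence $U^{\dagger}AU$ is strictly $\chi$-local by Def. \ref{def:locality}. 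The n.-p. version is identical, with the n.-p. equivalence in place of the plain one.
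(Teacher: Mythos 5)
Your proof is correct and follows essentially the same route as the paper's: dual locality (Prop.~\ref{prop:duallocality}) plus cyclicity of the full trace for the forward direction, local tomography (Prop.~\ref{prop:tomography}) for the converse, and the $UU^{\dagger}=I$ insertion for the strict-locality corollary. The only cosmetic difference is in the name-preserving forward direction, where the paper extends the trace identity from n.-p.\ $\rho$ to all $\rho$ by noting both sides vanish on non-n.-p.\ terms and then applies the plain Prop.~\ref{prop:duallocality}, whereas you package the same vanishing argument (via Lem.~\ref{lem:complementnames}) as an n.-p.\ refinement of dual locality --- the two are equivalent.
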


\begin{proof}

[$\Rightarrow $] \ 

For all $\rho $ (n.-p.), $\left( U\rho U^{\dagger }\right)_{|\zeta } =\left( U\rho _{|\chi } U^{\dagger }\right)_{|\zeta }$.\\ 
By Prop. \ref{prop:duallocality}, $A$ $\zeta $-local entail $\left( AU\rho U^{\dagger }\right)_{|\emptyset } =\left( AU\rho _{|\chi } U^{\dagger }\right)_{|\emptyset }$.\\
Thus, for all $\rho $, $\left( U^{\dagger } AU\rho \right)_{|\emptyset } =\left( U^{\dagger } AU\rho _{|\chi }\right)_{|\emptyset }$. 

(For the name-preserving case, then $U,\ A$ are n.-p., so $U^{\dagger } AU$ is n.-p., and so if $\rho $ is not n.p. both sides of the equation are zero, so the above does stand for all $\rho $.) 

So, by Prop. \ref{prop:duallocality}, $B=U^{\dagger } AU$ is (n.-p.) $\chi $-local.

[ Strict$\Rightarrow $]

If $A$ is strictly $\zeta $-local then $A^{\dagger } A$ and $AA^{\dagger }$ are $\zeta $-local. \\
From the above it follows that $U^{\dagger } A^{\dagger } AU$ and $UAA^{\dagger } U$ are $\chi $-local.\\ 
But $U^{\dagger } A^{\dagger } AU=U^{\dagger } A^{\dagger } UU^{\dagger } AU=B^{\dagger } B$ and $UAA^{\dagger } U=UAUU^{\dagger } A^{\dagger } U=BB^{\dagger }$. \\
So, $B=U^{\dagger } AU$ is strictly $\chi $-local.

[$\Leftarrow $] 

For all $A$ (n.-p.) $\zeta $-local, $U^{\dagger } AU$ is (n.-p.) $\chi $-local. By Prop. \ref{prop:duallocality}, for all $\rho $, $\left( U^{\dagger } AU\rho \right)_{|\emptyset } =\left( U^{\dagger } AU\rho _{|\chi }\right)_{|\emptyset }$, from which it follows that for all $A$ (n.-p.), for all $\rho $, $\left( AU\rho U^{\dagger }\right)_{|\emptyset } =\left( AU\rho _{|\chi } U^{\dagger }\right)_{|\emptyset }$.\\
Finally by Prop. \ref{prop:tomography}, for all $\rho $ (n.-p.), we have$\left( U\rho U^{\dagger }\right)_{|\zeta } =\left( U\rho _{|\chi } U^{\dagger }\right)_{|\zeta }$. 

(In the name-preserving case, by taking $\rho $ to be n.-p. we have that \ $\rho _{|\chi } ,\ U\rho U^{\dagger } ,\ U\rho _{|\chi } U^{\dagger }$ are n.-p, so that n.-p. local tomography can still be used to reach the final equality.)

Thus $U$ is (n.-p.) $\chi \zeta $-causal. 
\end{proof}

Often we are given a causal operator over $\mathcal{H}_{\chi }$, and we want to extend it to a causal operator over $\mathcal{H}$, the above theorem on the dual notion of causality in the Heisenberg picture can be used to show that such an extension is always possible. 

\begin{proposition}[Causal extension]\label{prop:causalextension}
First consider $ U$ a $ \chi '\zeta '$-causal operator and $ \chi '\sqsubseteq \chi $, $ \zeta \sqsubseteq \zeta '$. \\
Then $ U$ is an $ \chi \zeta $-causal operator.

Second say $\chi$ is extensible and consider $ \mu ,\zeta $ two restrictions such that $ [ \mu ,\zeta ] =[\overline{\mu } ,\zeta ] =[ \mu ,\overline{\zeta }] =[\overline{\mu } ,\overline{\zeta }] =0$, with $ \mu $ pointwise and $\zeta$ extensible.
Consider $ U$ an n.-p. $ \chi \zeta $-causal unitary operator over $ \mathcal{H}_{\mu }$, and $ U':=U \tensormu  I$ its unitary extension. \\
Then $ U'$ is $ \xi \zeta $-causal operator, w.r.t the extensible restriction $ \xi :=\mu \chi \cup \overline{\mu } \zeta $.
\end{proposition}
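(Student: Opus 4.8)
For the first part, I would argue purely from the definition of causality together with the two comprehension hypotheses via Lem.~\ref{lem:tracetrace}. Since $\zeta\sqsubseteq\zeta'$, Lem.~\ref{lem:tracetrace} gives $(\sigma_{|\zeta'})_{|\zeta}=\sigma_{|\zeta}$ for every $\sigma$, and since $\chi'\sqsubseteq\chi$ it gives $(\rho_{|\chi})_{|\chi'}=\rho_{|\chi'}$. Starting from $(U\rho U^{\dagger})_{|\zeta}=((U\rho U^{\dagger})_{|\zeta'})_{|\zeta}$ and applying $\chi'\zeta'$-causality inside the $\zeta'$-traceout, I obtain $(U\rho U^{\dagger})_{|\zeta}=(U\rho_{|\chi'}U^{\dagger})_{|\zeta}$. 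It then remains to replace $\rho_{|\chi'}$ by $\rho_{|\chi}$: writing $\rho_{|\chi'}=(\rho_{|\chi})_{|\chi'}$ and applying $\chi'\zeta'$-causality to the state $\rho_{|\chi}$ (then tracing down to $\zeta$ once more) yields $(U\rho_{|\chi'}U^{\dagger})_{|\zeta}=(U\rho_{|\chi}U^{\dagger})_{|\zeta}$, which chains to the desired identity. The n.-p.\ variant is identical, as each invoked step preserves name-preservation.

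For the second part the plan is a direct Schrödinger-picture computation, reducing by linearity to a basis dyad $\rho=\ket{G}\bra{H}$ with $V(G)\corresponds V(H)$. Prop.~\ref{prop:unitaryextension} supplies $U'=U\tensormu I$ as a n.-p.\ unitary with $U'^{\dagger}=U^{\dagger}\tensormu I$ and with $U,U^{\dagger}$ both $\mu$-consistent-preserving. Using the first identity of Lem.~\ref{lem:interchangelaws} on the ket and its conjugate on the bra, I factor $U'\ket{G}\bra{H}U'^{\dagger}=(U\ket{G_{\mu}}\bra{H_{\mu}}U^{\dagger})\tensormu(\ket{G_{\overline{\mu}}}\bra{H_{\overline{\mu}}})$. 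The commutation hypotheses $[\mu,\zeta]=[\overline{\mu},\zeta]=[\mu,\overline{\zeta}]=[\overline{\mu},\overline{\zeta}]=0$, together with the $\mu$-consistency of the two factors (guaranteed by consistency-preservation of $U$), let me push the traceout through the tensor with Lem.~\ref{lem:tensortrace2}, giving $(U\ket{G_{\mu}}\bra{H_{\mu}}U^{\dagger})_{|\zeta}\tensormu(\ket{G_{\overline{\mu}}}\bra{H_{\overline{\mu}}})_{|\zeta}$; finally the n.-p.\ $\chi\zeta$-causality of $U$ over $\mathcal{H}_{\mu}$ (applicable since $\ket{G_{\mu}}\bra{H_{\mu}}$ is n.-p.) rewrites the first factor as $(U\ket{G_{\mu\chi}}\bra{H_{\mu\chi}}U^{\dagger})_{|\zeta}\braket{H_{\mu\overline{\chi}}|G_{\mu\overline{\chi}}}$.

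I would then run the identical chain on $\rho_{|\xi}=\ket{G_{\xi}}\bra{H_{\xi}}\braket{H_{\overline{\xi}}|G_{\overline{\xi}}}$. Here the key structural facts are $(G_{\xi})_{\mu}=G_{\mu\chi}$ and $(G_{\xi})_{\overline{\mu}}=G_{\overline{\mu}\zeta}$, so that the same factorization and traceout steps produce $(U\ket{G_{\mu\chi}}\bra{H_{\mu\chi}}U^{\dagger})_{|\zeta}\tensormu\ket{G_{\overline{\mu}\zeta}}\bra{H_{\overline{\mu}\zeta}}$, where idempotency $\chi\chi=\chi$ leaves the first factor unchanged and idempotency $\zeta\zeta=\zeta$ (with $(G_{\overline{\mu}\zeta})_{\overline{\zeta}}=\emptyset$) collapses the traceout of the second. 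Comparing the two computations, everything agrees term-by-term up to scalars, so the whole statement reduces to the single overlap identity
\begin{equation*}
\braket{H_{\overline{\xi}}|G_{\overline{\xi}}}=\braket{H_{\mu\overline{\chi}}|G_{\mu\overline{\chi}}}\braket{H_{\overline{\mu}\,\overline{\zeta}}|G_{\overline{\mu}\,\overline{\zeta}}}.
\end{equation*}
I would prove this from the disjoint decomposition $G=G_{\mu\chi}\sqcup G_{\mu\overline{\chi}}\sqcup G_{\overline{\mu}\zeta}\sqcup G_{\overline{\mu}\,\overline{\zeta}}$, which forces $G_{\overline{\xi}}=G_{\mu\overline{\chi}}\sqcup G_{\overline{\mu}\,\overline{\zeta}}$ (and likewise for $H$), followed by one application of Lem.~\ref{lem:tensorbracket} with restriction $\mu$ after checking $(G_{\overline{\xi}})_{\mu}=G_{\mu\overline{\chi}}$ and $(G_{\overline{\xi}})_{\overline{\mu}}=G_{\overline{\mu}\,\overline{\zeta}}$.

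The main obstacle is exactly this last bookkeeping: verifying the restriction identities $(G_{\xi})_{\mu}=G_{\mu\chi}$, $(G_{\xi})_{\overline{\mu}}=G_{\overline{\mu}\zeta}$ and $(G_{\overline{\xi}})_{\mu}=G_{\mu\overline{\chi}}$, which is where pointwise-ness of $\mu$ and all four commutation hypotheses are genuinely spent, and keeping the $\tensormu$-factors $\mu$-consistent throughout so that Lem.~\ref{lem:interchangelaws} and Lem.~\ref{lem:tensortrace2} actually apply. The n.-p.\ hypothesis enters twice---to make $\ket{G_{\mu}}\bra{H_{\mu}}$ name-preserving so that the causality of $U$ can fire, and (via Prop.~\ref{prop:unitaryextension}) to keep the relevant unions always defined---so the conclusion obtained is n.-p.\ $\xi\zeta$-causality, matching the premise imposed on $U$.
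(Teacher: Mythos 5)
Your proposal is correct in substance but reaches the result by a genuinely different route: you argue entirely in the Schr\"odinger picture, whereas the paper argues both parts in the Heisenberg picture via Prop.~\ref{prop:dualcausality}. For the first part the paper simply chains locality implications on observables --- $A$ $\zeta$-local $\Rightarrow$ $A$ $\zeta'$-local $\Rightarrow$ $U^{\dagger}AU$ $\chi'$-local $\Rightarrow$ $U^{\dagger}AU$ $\chi$-local, each step by Lem.~\ref{lem:tracetrace} --- while you manipulate traceouts of states directly; the two are equivalent and spend the comprehension hypotheses in the same places. For the second part the paper decomposes a generic $\zeta$-local observable $L\tensorzeta I$ across the $\tensormu$-tensor using Lem.~\ref{lem:tensortensor}, applies dual causality of $U$ to the $\mu$-factor, and reassembles an explicitly $\xi$-local operator, thereby avoiding Lem.~\ref{lem:tensortrace2} and the overlap bookkeeping altogether; your dyad computation instead needs Lem.~\ref{lem:tensortrace2} plus the factorisation $\braket{H_{\overline{\xi}}|G_{\overline{\xi}}}=\braket{H_{\mu\overline{\chi}}|G_{\mu\overline{\chi}}}\braket{H_{\overline{\mu}\,\overline{\zeta}}|G_{\overline{\mu}\,\overline{\zeta}}}$, but the restriction identities you single out as the crux are exactly the commutation facts $\xi\mu=\mu\xi=\mu\chi$ and $\xi\overline{\mu}=\overline{\mu}\xi=\overline{\mu}\zeta$ (and their barred versions) supplied by Lem.~\ref{lem:combiningrestrictions}, so the bookkeeping does close. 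Two points to tighten. First, by reducing to name-preserving dyads at the outset you only obtain the causality equation for n.-p.\ $\rho$, i.e.\ n.-p.\ $\xi\zeta$-causality of $U'$, which is weaker than the unqualified conclusion stated; the n.-p.\ hypothesis on $U$ is consumed by Prop.~\ref{prop:unitaryextension} (to get $\mu$-consistency-preservation), not by the dyad itself, so under the reading where the causality premise on $U$ holds for all states your computation runs verbatim on arbitrary dyads and yields the full conclusion. Second, your parenthetical claim that $\ket{G_{\mu}}\bra{H_{\mu}}$ is n.-p.\ does not follow from $V(G)\corresponds V(H)$ for a general pointwise $\mu$, since a pointwise restriction may select on internal states (as the toggle-bit $\mu$ of Th.~\ref{th:blockdecomposition} does); this only bites under the weak reading of the premise, where the paper's own dual-picture proof faces the same issue, but you should not assert it.
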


\begin{proof}

[First part]

Suppose $U$ is $\chi '\zeta '$-causal and $\chi '\sqsubseteq \chi $, $\zeta \sqsubseteq \zeta '$. 

Prop. \ref{prop:dualcausality} and using Lem. \ref{lem:tracetrace}, $A$ $\zeta $-local implies $A$ $\zeta '$-local implies \ $U^{\dagger } AU$ $\chi '$-local implies $U^{\dagger } AU\ \chi $-local. Thus $A$ $\zeta $-local implies $U^{\dagger } AU$ $\chi $-local which by Prop. \ref{prop:dualcausality} is equivalent to $\chi \zeta $-causality.

[Second part]
From Lem. \ref{lem:combiningrestrictions} we have $[ \mu ,\xi ] =[\overline{\mu } ,\xi ] =[ \mu ,\overline{\xi }] =[\overline{\mu } ,\overline{\xi }] =0$, and $\xi $ a restriction.

Any $A$ $\zeta $-local is of the form $A=L \tensorzeta   I$ with $L=\sum \alpha _{G_{\zeta } H_{\zeta }}\ket{G_{\zeta }}\bra{H_{\zeta }}$.
\begin{align*}
\ket{G_{\zeta }}\bra{H_{\zeta }}  \tensorzeta   I & =\ \left(\ket{G_{\zeta \mu }}\bra{H_{\zeta \mu }}  \tensormu  \ket{G_{\zeta \overline{\mu }}}\bra{H_{\zeta \overline{\mu }}}\right)  \tensorzeta   ( I_{\overline{\zeta } \mu }  \tensormu  I_{\overline{\zeta }\overline{\mu }})\\
\text{Commut. \& Lem. \ref{lem:tensortensor}.} & =\ \left(\ket{G_{\mu \zeta }}\bra{H_{\mu \zeta }}  \tensorzeta   I_{\mu \overline{\zeta }}\right) \tensormu      \left(\ket{G_{\overline{\mu } \zeta }}\bra{H_{\overline{\mu } \zeta }}  \tensorzeta   I_{\overline{\mu }\overline{\zeta }}\right)\\
U'\left(\ket{G_{\zeta }}\bra{H_{\zeta }}  \tensorzeta   I\right) U^{\prime \dagger } & =\left( U\left(\ket{G_{\mu \zeta }}\bra{H_{\mu \zeta }}  \tensorzeta   I_{\mu \overline{\zeta }}\right) U^{\dagger }\right)  \tensormu  \left(\ket{G_{\overline{\mu } \zeta }}\bra{H_{\overline{\mu } \zeta }}  \tensorzeta   I_{\overline{\mu }\overline{\zeta }}\right)\\
\text{(By dual caus.) } & =\left( M^{G_{\mu \zeta } H_{\mu \zeta }} \tensorchi I_{\mu \chi }\right) \tensormu      \left(\ket{G_{\overline{\mu } \zeta }}\bra{H_{\overline{\mu } \zeta }}  \tensorzeta   I_{\overline{\mu }\overline{\zeta }}\right)\\
U\ \text{over} \ \mathcal{H}_{\mu } & =\left( M^{G_{\mu \zeta } H_{\mu \zeta }} \ \tensormuchi \ I_{\mu \chi }\right) \tensormu      \left(\ket{G_{\overline{\mu } \zeta }}\bra{H_{\overline{\mu } \zeta }} \ \tensormubarchi \ I_{\overline{\mu }\overline{\zeta }}\right)\\
 & =\left( M^{G_{\mu \zeta } H_{\mu \zeta }}  \tensorxi    I_{\mu \overline{\xi }}\right) \tensormu      \left(\ket{G_{\overline{\mu } \zeta }}\bra{H_{\overline{\mu } \zeta }}  \tensorxi    I_{\overline{\mu }\overline{\xi }}\right)\\
\text{Commut. \& Lem. \ref{lem:tensortensor}.} & =\left( M^{G_{\mu \zeta } H_{\mu \zeta }}  \tensormu  \ket{G_{\overline{\mu } \zeta }}\bra{H_{\overline{\mu } \zeta }}\right)  \tensorxi    ( I_{\mu }  \tensormu  I_{\overline{\mu }})\\
\text{Lem. \ref{lem:tensor}.} & =\left( M^{G_{\mu \zeta } H_{\mu \zeta }}  \tensormu  \ket{G_{\overline{\mu } \zeta }}\bra{H_{\overline{\mu } \zeta }}\right)  \tensorxi    I_{\overline{\xi }}\\
U'A U^{\prime \dagger } & =\sum \alpha _{G_{\zeta } H_{\zeta }}\left(\left( M^{G_{\mu \zeta } H_{\mu \zeta }}  \tensormu  \ket{G_{\overline{\mu } \zeta }}\bra{H_{\overline{\mu } \zeta }}\right)  \tensorxi    I_{\overline{\xi }}\right)\\
\text{By bilinearity } & =\left(\sum \alpha _{G_{\zeta } H_{\zeta }}\left( M^{G_{\mu \zeta } H_{\mu \zeta }}  \tensormu  \ket{G_{\overline{\mu } \zeta }}\bra{H_{\overline{\mu } \zeta }}\right)\right)   \tensorxi          I_{\overline{\xi }}
\end{align*}
So, $U'A U^{\prime \dagger }$ is $\xi $-local by the Prop. \ref{prop:gatelocality}. By Prop. \ref{prop:unitaryextension}, $U'$ is name-preserving. By Prop. \ref{prop:dualcausality}, \ $U'$ is $\xi \zeta $-causal. 
\end{proof}

\subsection{Operational causality}

Causality is a basic Physics principle, anchored on the postulate that information-propagation is bounded by the speed of light. Yet causality is a top-down axiomatic constraint. \\
When modelling an actual Physical phenomenon, we need a bottom-up, constructive way of expressing the dynamics. We usually proceed by describing it in terms of local interactions, happening simultaneously and synchronously. \\

The following shows that causal operators are always of that form. Caution: this relies on Prop. \ref{prop:causalextension}, which does not hold for an arbitrarily constrained configuration space ${\cal H}^{\cal C}$.

\begin{thm}[Renaming-invariant block decomposition]\label{th:blockdecomposition}

Let $ \zeta _{v}$ be the pointwise restriction such that $ \zeta _{v}(\{\sigma '.u\}) :=\begin{cases}
\{\sigma '.u\} & \text{if} \ u=v\\
\emptyset  & \text{otherwise}
\end{cases}$.\\
Consider a family $\chi_v$, of extensible restrictions indexed by the names of $\mathcal{V}$. Consider $ U$ a n.-p. unitary operator over $ \mathcal{H}$, which for all $ v\in \mathcal{V}$ is $ \chi _{v} \zeta '_{v}$-causal, with $ \zeta _{v} \sqsubseteq \zeta '_{v}$.

Let $ \Sigma '=\{0,1\} \times \Sigma $. Let $ \mathcal{G} '$ be the set of finite subsets of $ \mathcal{S} ':=\Sigma '\times \mathcal{V}$ and $ \mathcal{H} '$ the Hilbert space whose canonical o.n.b is $ \mathcal{G} '$---as obtained by considering the free (complex) vector space using the configurations as the generating set; equipping it with the inner product that is such that the generating forms an orthonormal basis; and taking the completion of resulting inner product space to obtain the Hilbert space. \\
Let $ \mu $ be the pointwise restriction such that $ \mu (\{b.\sigma .u\}) :=\begin{cases}
\{0.\sigma .u\} & \text{if} \ b=0\\
\emptyset  & \text{otherwise}
\end{cases} \ $. \\
Over $ \mathcal{H} '$, there exists $ \tau _{v}$ a renaming-invariant strictly $ \zeta _{v}$-local unitary and $ K_{v}$ a strictly $ \xi _{v}$-local unitary such that
\begin{equation*}
\begin{aligned}
\forall \ket{\psi } \in \mathcal{H} '_{\mu } \cong \mathcal{H} ,\ \left(\prod _{v\ \in \ \mathcal{V}} \tau _{v}\right)\left(\prod _{v\ \in \ \mathcal{V}} K_{v}\right)\ket{\psi } & =U\ket{\psi }
\end{aligned}
\end{equation*}
where $ \xi _{v} :=\mu \chi _{v} \cup \overline{\mu } \zeta _{v}$ is an extensible restriction and $\mathcal{H}_{\mu}'$ is the sub-Hilbert space of graphs with all nodes of the form $0.\sigma.u$ for some $\sigma.u$. In addition,$ \ [ K_{x} ,K_{y}] =[ \tau _{x} ,\tau _{y}] =0$. 

If moreover $U$ is renaming-invariant, then so are $ U'$ and $ K_{v}$.
\end{thm}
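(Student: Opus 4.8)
The plan is to construct the two layers explicitly and verify their properties in the Heisenberg picture, leaning on the dual-causality machinery of Prop.~\ref{prop:dualcausality} and the tensor toolbox. First I would dispose of the slack in the hypothesis: since $\zeta_v\sqsubseteq\zeta'_v$, the first part of Prop.~\ref{prop:causalextension} upgrades ``$\chi_v\zeta'_v$-causal'' to ``$\chi_v\zeta_v$-causal'', so without loss of generality $U$ is $\chi_v\zeta_v$-causal for every $v$, i.e. by Prop.~\ref{prop:dualcausality} conjugation $B\mapsto U^{\dagger}BU$ sends every $\zeta_v$-local operator to a $\chi_v$-local one. The role of the black/blue bit is to furnish, for each name $v$, a fresh bit-$1$ slot $\overline{\mu}\zeta_v$ in which to deposit the output of cell $v$ without colliding, by well-namedness Eq.~\eqref{eq:wellnamedness}, with the bit-$0$ input sitting in $\mu\chi_v$; this is exactly the combined region $\xi_v=\mu\chi_v\cup\overline{\mu}\zeta_v$ singled out by the second part of Prop.~\ref{prop:causalextension}, from which I also borrow the commutation relations $[\mu,\xi_v]=[\overline{\mu},\xi_v]=[\mu,\overline{\xi_v}]=[\overline{\mu},\overline{\xi_v}]=0$ and the fact that $\xi_v$ is a restriction.

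Next I would define the layers. Let $\tau_v$ toggle the bit of the system sitting at name $v$ and act as identity elsewhere. It is a name-preserving involution supported at $v$, hence a unitary that is $\zeta_v$-local, therefore strictly $\zeta_v$-local by the soundness remark following Def.~\ref{def:locality}; it is manifestly renaming-invariant because it never touches names, and $[\tau_x,\tau_y]=0$ since distinct names give disjoint supports, so $\prod_v\tau_v$ is the global bit-flip exchanging the all-$0$ sector $\mathcal{H}'_\mu\cong\mathcal{H}$ with the all-$1$ sector. For $K_v$ I would use dual causality to pin down its Heisenberg action: declare $K_v$ to be the strictly $\xi_v$-local unitary whose conjugation carries each bit-$1$-at-$v$ observable $B$ to the $U$-conjugate $U^{\dagger}B_0U$ of the corresponding bit-$0$-at-$v$ observable $B_0$, the latter being $\chi_v$-local, hence $\mu\chi_v$-supported, by the reduction above. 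Existence and strict locality of $K_v$ on $\xi_v$ then follow from Props.~\ref{prop:gatelocality} and~\ref{prop:strictlocality} once consistency-preservation is checked, exactly as in the unitary-extension argument of Prop.~\ref{prop:unitaryextension}.

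Then come the two key identities. For $[K_x,K_y]=0$ with $x\neq y$, the building blocks of $K_x$ and $K_y$ are $U$-conjugates of single-cell observables at distinct names; as these observables commute and $B\mapsto U^{\dagger}BU$ is an automorphism their images commute, while the bit-$1$-at-$x$ and bit-$1$-at-$y$ factors commute trivially, so $K_x$ and $K_y$ act on $\mu$-compatible regions and commute, which I would make rigorous through Lem.~\ref{lem:interchangelaws} and Lem.~\ref{lem:tensortensor}. For the reconstruction I would verify the identity in the Heisenberg picture rather than by direct amplitude pushing: for every bit-$0$-at-$v$ observable $A$, the composite $(\prod_w\tau_w)(\prod_w K_w)$ first flips $A$ by $\tau_v$ into the bit-$1$-at-$v$ sector and then, by the defining property of $K_v$, into $U^{\dagger}A_0U$, so that $(\prod_w\tau_w)(\prod_w K_w)$ and $U$ induce the same conjugation on every $\zeta_v$-local observable; restricting to $\mathcal{H}'_\mu\cong\mathcal{H}$ and invoking local tomography (Prop.~\ref{prop:tomography}) together with unitarity then forces $(\prod_v\tau_v)(\prod_v K_v)\ket{\psi}=U\ket{\psi}$, with the amplitude bookkeeping handled by Lem.~\ref{lem:tensor} and Lems.~\ref{lem:tensortrace1}--\ref{lem:tensortrace2}.

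The main obstacle I anticipate is making $K_v$ a bona fide unitary on the variable-graph doubled space and proving that the product literally reconstructs $U$ rather than merely agreeing with it on single-cell observables: because $\tensorxi$ zeroes every inconsistent term, and because node merges and splits can scatter ``cell $v$'s output'' across names such as $v.l$, $v.r$ or $u\lor v$, one must track consistency and $\corresponds$-equivalence of supports with great care. This is precisely where name-preservation and the $\zeta_v\sqsubseteq\zeta'_v$ slack are spent, via the comprehension guaranteed by Prop.~\ref{prop:npcomprehension} and the name-preservation clause of Prop.~\ref{prop:traceouts}, which together ensure the intermediate partial traces compose as expected. Finally, renaming-invariance of $U'$ and of $K_v$ when $U$ is renaming-invariant follows because $\mu,\chi_v,\zeta_v,\xi_v$ are defined uniformly in $v$ and hence renaming-covariant, so the defining relations of $K_v$ intertwine with any renaming $R$; I would close the argument with the renaming-invariance clauses already established in Prop.~\ref{prop:unitaryextension} and Prop.~\ref{prop:causalextension}.
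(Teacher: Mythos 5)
Your setup (the doubled alphabet, the bit-toggle $\tau_v$, the reduction from $\zeta'_v$ to $\zeta_v$ via Prop.~\ref{prop:causalextension}, and the appeal to dual causality) matches the paper's, but the construction of $K_v$ and the final reconstruction step contain a genuine gap. You \emph{declare} $K_v$ to be ``the strictly $\xi_v$-local unitary whose conjugation carries each bit-$1$-at-$v$ observable to $U^{\dagger}B_0U$'': specifying a unitary by its conjugation action on one cell's observables neither establishes its existence nor determines it, and Props.~\ref{prop:gatelocality} and~\ref{prop:strictlocality} cannot supply existence because there is no candidate operator to which to apply them. The paper avoids this entirely by writing the operator down: $U':=U\tensormu I$ (Prop.~\ref{prop:unitaryextension}) and $K_v:=U^{\prime\dagger}\tau_v U'$, whence unitarity, $[K_x,K_y]=0$ (adjunction by a unitary is a morphism and $[\tau_x,\tau_y]=0$) and strict $\xi_v$-locality (Prop.~\ref{prop:dualcausality} applied to the strictly $\zeta_v$-local $\tau_v$) are all immediate.

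The more serious problem is the reconstruction. You argue that $\left(\prod_w\tau_w\right)\left(\prod_w K_w\right)$ and $U$ induce the same conjugation on every $\zeta_v$-local observable and then invoke Prop.~\ref{prop:tomography} to conclude the operators agree on $\mathcal{H}'_\mu$. Local tomography only yields equality of partial traces $\rho_{|\zeta_v}=\sigma_{|\zeta_v}$, not equality of states; to upgrade agreement of conjugation on single-cell observables to equality of unitaries you would need the $\zeta_v$-local observables to generate a tomographically complete algebra, which is precisely the question the paper's conclusion flags as open (``Are splits and merges the only new generators of the quasi-local algebras $\mathcal{A}_v$?''). The paper instead telescopes the product directly: $\prod_v K_v=U^{\prime\dagger}\left(\prod_v\tau_v\right)U'=U^{\prime\dagger}\tau U'$, so the composite is $\tau U^{\prime\dagger}\tau U'$; since $\tau$ swaps the two $\mu$-sectors, i.e. $\tau\left(\ket{G_{\mu}}\tensormu\ket{G_{\overline{\mu}}}\right)=\tau\ket{G_{\overline{\mu}}}\tensormu\tau\ket{G_{\mu}}$, a short computation gives $\tau U^{\prime\dagger}\tau U'\ket{G}=U\ket{G_{\mu}}\tensormu\tau U^{\dagger}\tau\ket{G_{\overline{\mu}}}$, which reduces to $U\ket{G_{\mu}}$ when $G_{\overline{\mu}}=\emptyset$. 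No tomography or algebra-generation claim is needed; you would have to replace your Heisenberg-picture identification with this explicit operator identity for the proof to close.
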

\begin{proof}
Notice that $\mu $ is renaming-invariant. \\
Notice that $\zeta _{v}$ is renaming-invariant. \\
Clearly $[ \mu ,\zeta ] =[\overline{\mu } ,\zeta ] =[ \mu ,\overline{\zeta }] =[\overline{\mu } ,\overline{\zeta }] =0$ as both are pointwise.

By Prop. \ref{prop:unitaryextension}, $U':=U \tensormu  I$ is unitary over $\mathcal{H} '$.\\
Since $\mu $ is renaming-invariant, if $U$ is renaming-invariant, so is $U'$.\\ 
By Prop. \ref{prop:causalextension} and since $U$ is $\chi _{v} \zeta '_{v}$-causal, it is $\chi _{v} \zeta _{v}$-causal, and $U'$ is $\xi _{v} \zeta _{v}$-causal, with $\xi _{v} :=\mu \chi _{v} \cup \overline{\mu } \zeta _{v}$ an extensible restriction.

Let the toggle $\tau $ be the bijection over systems such that $\tau ( b.\sigma .u) =\neg b.\sigma .u$.\\
Extend $\tau $ to $\mathcal{G} '$ by acting pointwise upon each system, and to $\mathcal{H} '$ by linearity.\\
Notice that it is unitary, name-preserving and renaming-invariant.\\
Notice that $\tau \left(\ket{G_{\mu }}  \tensormu  \ket{G_{\overline{\mu }}}\right) =\left( \tau \ket{G_{\overline{\mu }}}  \tensormu  \tau \ket{G_{\mu }}\right)$.\\
It is also unitary over $\mathcal{H}_{\zeta _{v}}$, thus $\tau _{v} :=\tau \   \tensorzeta         _{v} \ I$ is unitary over $\mathcal{H} '$ by Prop. \ref{prop:unitaryextension}.\\
By Prop. \ref{prop:gatelocality}, $\tau _{v}$ is $\zeta _{v}$-local. By unitarity, it is strictly $\zeta _{v}$-local.\\
Since $\tau $ and $\zeta _{v}$ are renaming-invariant, so is $\tau _{v}$.\\
Moreover, 
\begin{equation*}
\left(\prod _{v\ \in \ \mathcal{V}} \tau_{v}\right)=\tau
\end{equation*}
Notice also that $[ \tau _{u} ,\tau _{v}] =0$.

Let $K_{v} :=U^{\prime \dagger } \tau _{v} U'$. \\
It is name-preserving as a composition of name-preserving operators.\\
If $U'$ is renaming-invariant, since $\tau _{v}$ is renaming-invariant, so is $K_{v}$.\\
Since adjunction by a unitary is a morphism, $[ K_{u} ,K_{v}] =0$.\\
By Prop. \ref{prop:dualcausality}, it is $\xi _{v}$-local. By unitarity, it is strictly $\xi _{v}$-local.

Finally,
\begin{align*}
\left(\prod _{v\ \in \ \mathcal{V}}\tau _{v}\right)\left(\prod _{v\ \in \ \mathcal{V}}K_{v}\right)\ket{G} & =\tau \dotsc \left(U'^{\dagger }\tau _{v_{2}}U'\right)\left(U'^{\dagger }\tau _{v_{1}}U'\right)\ket{G}\\
\text{By unitarity of }U'. & =\tau \ U'^{\dagger }\ \left(\prod _{v\ \in \ \mathcal{V}}\tau _{v}\right)\ U'\ \ket{G}\\
 & =\tau \ U^{\prime \dagger } \ \tau \ U'\ \ket{G}\\
\text{By Prop. \ref{prop:unitaryextension}} & =\tau \ \left( U^{\dagger }  \tensormu  I\right) \ \tau \ ( U \tensormu  I) \ \left(\ket{G_{\mu }}  \tensormu  \ket{G_{\overline{\mu }}}\right) \ \\
 & =\tau \ \left( U^{\dagger }  \tensormu  I\right) \ \tau \ \left( U\ket{G_{\mu }}  \tensormu  \ket{G_{\overline{\mu }}}\right)\\
\text{Since} \ U\ \text{preserves the range of } \mu . & =\tau \ \left( U^{\dagger }  \tensormu  I\right)\left( \tau \ket{G_{\overline{\mu }}}  \tensormu  \tau U\ket{G_{\mu }}\right)\\
 & =\tau \ \left( U^{\dagger } \tau \ket{G_{\overline{\mu }}}  \tensormu  \tau U\ket{G_{\mu }}\right)\\
\text{Since} \ U\ \text{preserves the range of } \mu . & =\tau ^{2} U\ket{G_{\mu }}  \tensormu  \tau \ U^{\dagger } \tau \ket{G_{\overline{\mu }}}\\
\text{Since} \ \tau \ \text{involutive.} & =U\ket{G_{\mu }}  \tensormu  \tau \ U^{\dagger } \tau \ket{G_{\overline{\mu }}}\\
\left(\prod _{v\ \in \ \mathcal{V}}\tau_{v}\right)\left(\prod _{v\ \in \ \mathcal{V}}K_{v}\right)\ket{G_{\mu }} & =\left(\prod _{v\ \in \ \mathcal{V}}\tau _{v}\right)\left(\prod _{v\ \in \ \mathcal{V}}K_{v}\right)\left(\ket{G_{\mu }} \tensormu  \ket{\varnothing }\right)\\
 & =U\ket{G_{\mu }}  \tensormu  \tau \ U^{\dagger } \tau \ket{\varnothing }\\
\text{By n.-p.}, & =U\ket{G_{\mu }}  \tensormu  \ket{\varnothing }\\
 & =U\ket{G_{\mu }}
\end{align*}
\end{proof}

Notice that a similar theorem was proven in the particular case of $\zeta^r_v\zeta_v$-causal operators over static networks first \cite{ArrighiUCAUSAL}, and then for node-preserving but connectivity-varying networks \cite{ArrighiQCGD}, a.k.a `quantum causal graph dynamics'. The point here is that the result carries through to arbitrary restrictions $\chi_v$ and over dynamical networks, both of which were non-trivial extensions. Moreover, from a methodological point of view, we used this theorem a test bench, to make sure that we had put together a set of mathematical tools that would be sufficient to combine and establish non-trivial results in this kernel of a quantum networks theory.

\section{Conclusion}\label{sec:conclusion}

{\em Summary of contributions.}

In this paper each node has an internal state and is identified by a unique name. The names are constructed by means of operators used for linking (e.g. node $\hyphenbullet y$ is understood as connected to node $y$), merging (e.g. nodes $u$ and $v$ may merge into node $u \vee v$), splitting (e.g. node $w$ may split into $w.l$ and $w.r$). The fact that the inverse of a merger operation is required to split $w=u \vee v$ back into $w.l=u$ and $w.r=v$ imposes equalities such as $(u \vee v).l=u$, leading to a simple name algebra. Notice that splits and merges are name-preserving (up to algebraic closure) and that the names of nodes are used to carry edge information.  

We place ourselves in the Hilbert space whose canonical basis are network configurations. We study operators over that space, including those leading to quantum superpositions of network configurations. 

We then introduce the notion of restriction, a function $\chi$ mapping $G$ a network into $G_\chi \subseteq G$ a subnetwork, fulfilling the idempotency condition $G_{\chi\chi}=G_{\chi}$. We also introduce extensible restrictions, fulfilling the the stronger condition that $G_\chi \subseteq H \subseteq G \Rightarrow H_\chi=G_\chi$ to ensure stability under taking unions $\chi \cup \zeta$ and neighbourhoods $\chi^r$. 

Each restriction leads to a partial trace $(\ket{G}\bra{H})_{|\chi}=\ket{G_\chi}\bra{H_\chi}\braket{H_{\overline{\chi}}|G_{\overline{\chi}}}$ which is completely positive trace-preserving, as well as name-preservation preserving. This generalized partial trace is robust, e.g. comprehension $\zeta \sqsubseteq \chi$ implies $(\rho_{|\chi})_{|\zeta}=\rho_\zeta$. The notion of comprehension is well-behaved over name-preserving states, e.g. for every extensible restriction $\zeta$, we have that $\zeta \sqsubseteq \zeta^r$.

Each restriction also defines a parallel composition a.k.a tensor product 
$$\ket{L}\tensorchi \ket{R}=\begin{cases}\ket{G} &\textrm{if}\; L=G_\chi\;\text{and}\; R=G_{\overline{\chi}}\\0 &\textrm{otherwise}\end{cases}$$
which is unambiguous, at the cost of zeroing out inconsistent terms. The notion of $\chi$-consistency becomes central: for instance the $\chi$-consistency-preserving operators defined by $ \bra{H} A\ket{G_{\chi }} \neq 0 \Rightarrow \ket{H} \tensorchi \ket{G_{\overline{\chi }}} \neq 0$ (plus the same for $A^\dagger$) are intuitively those which ``do not break the $\chi$-wall'' and hence gently slide along the tensor: $(A'\tensorchi B')(A\tensorchi B) = A'A\tensorchi B'B$.

Intuitively, local operators alter only a part $\chi$ of the network, and ignore the rest. In this paper we say that an operator is $\chi$-local whenever $\bra{H}A\ket{G}=\bra{H_\chi}A\ket{G_\chi}\braket{H_{\overline{\chi}}|G_{\overline{\chi}}}$ and prove the equivalence with the requirement that $A=A\tensorchi I$ and $\textrm{Tr}(A\rho)=\textrm{Tr}(A\rho_{|\chi})$. We say that an operator $A$ is strictly $\chi$-local whenever $A^\dagger A$ and $AA^\dagger$ are also $\chi$-local. Interestingly this corresponds to $A$ being both $\chi$-local and $\chi$-consistency-preserving, from which it follows that every $\chi$-local unitary is automatically $\chi$-consistency-preserving.

Intuitively, causal operators act over the entire network, yet respecting that effects on region $\zeta$ be fully determined by causes in region $\chi$. In this paper we say that operator $U$ is causal when $(U\rho U^\dagger)_{|\zeta}=(U\rho_{|\chi} U^\dagger)_{|\zeta}$ and prove equivalence with asking for $A$ $\zeta$-local to imply $U^\dagger A U$ $\chi$-local.

Causality refers to the physical principle according to which information propagates at a bounded speed, localizability refers to the principle that all must emerge constructively from underlying local mechanisms, that govern the interactions of closeby systems. The two notions of causality and localizability are related by our final theorem which shows that for fully quantum networks, causality implies localizability. 

{\em Further work.}

A number of mathematical results seem within reach and many more questions have yet to be considered, as we had to end somewhere.
\begin{itemize}
\item The examples provided in Figs \ref{fig:locality}, \ref{fig:causality} and \ref{fig:quantumcausality} are intuitive enough, but in all rigour they should be formalised and their corresponding properties proven. To this end we may need to show that causality is preserved under simple encodings/decodings such as splitting/merging all nodes.
\item Schmidt decomposition, purification, Stinespring dilation, are all fundamental results of quantum theory that crucially rely on properties of the tensor product. Important next steps should include phrasing and assessing the validity of these result in terms of generalized tensor products of quantum networks.
\item The results of this paper carry through to arbitrarily constrained configuration spaces ${\cal H}^{\cal C}$, except for Lemma \ref{lem:combiningrestrictions}, Props. \ref{prop:unitaryextension}, \ref{prop:causalextension}, \ref{prop:namewiseunitaryextension}, Th. \ref{th:blockdecomposition} and \ref{def:blockcecompositionnoancilla}. Even these ought to hold in many relevant cases of constrained configuration spaces, which ought to be investigated.
\item In our formalism edges are not given explicitly, rather they are induced from the information contained in the names of nodes. However, Appendix \ref{sec:renaminginvariance} suggests a precise alternative formalism where edges are given explicitly. Although likely heavier, the formalism ought to be evaluated: if successful, renaming-invariance would then imply full name-preservation, as used throughout the paper. 
\end{itemize}

Other mathematical challenges where left aside simply because they seemed difficult. For instance we have shown that if $U$ is $\forall
m \exists n\,\zeta^m \zeta^n$-causal, so is $U^2$, see Prop. \ref{prop:causalitycomposability}. But what if $U$ is just $\forall m\, \zeta^m \zeta$-causal? Does the final theorem presented help to provide an answer to this question? In the realm of static networks, so is $U^2$. This works because knowing $UA_vU^\dagger$ with $A_v$ is local upon some node $v$ in some region $R$, induces knowing $UA_R U^\dagger=\sum_k \prod_{v\in R} UA^{(k)}_vU^\dagger$ since $A_R$ has to be of the form $\sum_k \bigotimes_{v\in R} A^{(k)}_v$. But how does this generalise? Are splits and merges the only new generators of the quasi-local algebras ${\cal A}_v$? We leave such mathematical challenges as open problems.

{\em Perspectives.}

In the introduction we mentioned our original motivations for providing a theory of quantum networks:
\begin{itemize}
\item to provide rigorous kinematics and fully quantum dynamics for networks, equipped with rigorous notions of locality and causality;
\item for the sake of taking networks models of complex systems, into the quantum realm.  
\end{itemize}
For instance, in the field of Quantum Gravity we are now in a position to provide discrete-time versions of quantum graphity \cite{QuantumGraphity1,QuantumGraphity2}, thereby placing space and time on a equal footing, and demanding strict causality, instead of approximations à la Lieb-Robinson bound \cite{EisertSupersonic}. We are also in a better position to study the statuses of causality and unitarity in LQG \cite{RovelliLQG} and CDT \cite{LollCDT}: are these jeopardized by the Feynman path-based dynamics used in these theories?\\
Similarly, in the field of Quantum Computing, we now have a framework in which to model fully-quantum distributed computing devices, including dynamics over indefinite causal orders \cite{BruknerDynamics}, e.g. by means of causal unitary operators over superpositions of directed networks.

In order to reach this theory we have had to generalize the tensor product and partial trace in a rather modular and robust way, and this per se suggests a whole range of unexpected perspective applications:
\begin{description}
\item {\em Towards base-independence}. To make the notion of subsystem base-independent, algebraic quantum field theory tends to think of them as Von Neumann algebras instead \cite{BratteliRobinson}\cite{gogosiofunc}. This approach has been formalised in \cite{GogiosoChurch} in the the setting of categorical quantum mechanics, where the tensor product is a seen as a binary operator between states associated to commuting algebras. Such tensor products are therefore partially defined and remain abstract mathematical objects. The generalised tensor products of the present paper are everywhere defined and constructive, but they are base-dependent. It would be interesting to bridge the gap between these two notions to get the best of both worlds. One way to go about this is to use the Wedderburn-Artin theorem, which states that up to a unitary, Von Neumann algebras are direct sums of full matrix algebras tensored with the identity, i.e. of the form $A\tensorchi I$ for a well-chosen $\chi$. Another route to follow would be to take, as base-independent restrictions, any oblique projector, regardless of the graph structure.
\item {\em Decomposition techniques, causal-to-local}. These generalized operators were essential to the theorem representing causal unitary operators by means of local unitary gates. Many variants of this question are still open however, even for static networks over a handful of systems \cite{Beckman,SchumacherWestmoreland, Lorenz_2021}, as soon as we demand that the representation be exact. Recent approaches \cite{VanrietveldeRouted} to phrasing the answers to these questions make the case for annotating wires with a type system specifying which subspace will flow into them; this in turn has the flavour of a generalized tensor product. This suggests that the generalized operators, by means of their increased expressiveness, may be key to reexpress and prove a number of standing conjectures.
\item {\em Construction techniques, local-to-causal.} In \cite{ArrighiAQG} the authors provide a hands-on, concrete way of expressing a family of unitary evolutions over network configurations allowing for quantum superpositions of connectivities. One may wonder whether these addressable quantum gates, if extended to become able to split and merge, could be proven universal in the class of causal operators over quantum networks. 
\item {\em Fusion products between parts of constrained configuration spaces}. In most physical theories, the set of allowed configurations is constrained. For instance, charges and fields are constrained by the Gauss law. It follows that the tensor product between two regions of space may be ill-defined, for instance because both of them follow the Gauss law at the individual level, but not when they are placed next to one another. The hereby devised generalised tensor product is robust enough to handle these situations whilst preserving most desired algebraic properties, simply by sending them to the null vector. The gauge-invariant `fusion product' of \cite{FreidelFusion} focusses upon this issue in the continuum by following a different, gauge-invariant and partially-defined approach.
\item {\em Flexible notions of entanglement, between logical spaces}. A natural generalisation of product states emerges from this paper. Namely, $\ket{\phi}$ is a $\chi$-product state iff
$$\ket{\phi}=\ket{\psi}\tensorchi\ket{\psi'}\textrm{ and }||\ket{\phi}||=||\ket{\psi}||.||\ket{\psi'}||.$$
A contrario, non-$\chi$-product states are $\chi$-entangled states, thereby allowing us to define entanglement between left and right factors according to as specified by an almost arbitrary logical criterion $\chi$. 
For a bipartite pure state we may quantify this entanglement as the Von Neumann entropy of $\rho_{|\chi}$. It is our feeling that such generalised notions of entanglement will allow us to tackle scenarios where the standard notion fails to be defined.
\item {\em Modelling delocalized observers, quantum reference frames}. Decoherence theory \cite{PazZurek} models the observer as a quantum system interacting with others; and the post-measurement state as that obtained by tracing out the observer. But since the observer is quantum, it could be delocalized, raising the question of what it means to take the trace out then. Here we can model a delocalized observer by delocalized black particles, and trace them out. This ability to ``take the vantage point of delocalized quantum system'' is in fact a feature in common with quantum reference frames. 
\item {\em Ad hoc notions of causality, emergence of space}. The notion of $\chi\zeta$-causality allows us to define causality constraints according to almost arbitrary families of logical criteria $(\chi,\zeta)$. This includes scenarios where all black particles communicate whatever their network distance, say. In fact the very notion of network connectivity is arbitrary in the theory, i.e. $\zeta^r$ can in principle be redefined in order to better fit ad hoc causality constraints, possibly emerging in a similar way to pointer states in decoherence theory \cite{PazZurek}. 
\end{description}

\section*{Acknowledgements}

This publication was made possible through the support of the ID\# 61466 grant from the John Templeton Foundation, as part of the ``The Quantum Information Structure of Spacetime (QISS)'' Project (\href{qiss.fr}{qiss.fr}). The opinions expressed in this publication are those of the author(s) and do not necessarily reflect the views of the John Templeton Foundation. MW was supported by the EPSRC [grant number EP/L015242/1] and [grant number EP/W524335/1]. We wish to thank Kathleen Barsse and Augustin Vanrietvelde for helpful discussions helping us realise the importance of the restrictions that are not extensible.

\bibliographystyle{quantum} 
\bibliography{biblio}

\appendix

\section{Lemmas}\label{sec:lemmas}

\begin{lema}[Complement names]\label{lem:complementnames}
Let $G,H,\chi$ be such that $V(G) \hat{=} V(H)$ and $V(G_{\chi}) \hat{=} V(H_{\chi})$, then it follows that $V(G)\setminus V(G_{\chi}) \hat{=} V(H)\setminus V(H_{\chi})$.
\end{lema}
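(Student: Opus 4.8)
The plan is to first reduce the statement to a \emph{cancellation} property for $\hateq$, and then to prove that property by representing names as regions of a Cantor-like space, on which $\hateq$ becomes plain equality of sets so that ordinary set-complement finishes the job.

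First I would dispose of the purely set-theoretic part. Writing $G_{\overline{\chi}}=G\setminus G_{\chi}$, the well-namedness condition \eqref{eq:wellnamedness} (instantiated with $t=t'=\varepsilon$) forces each name in $V(G)$ to carry a unique internal state, so that $G_{\chi}$ and $G_{\overline{\chi}}$ partition $G$ \emph{and} $V(G_{\chi})$, $V(G_{\overline{\chi}})$ partition $V(G)$; hence $V(G)\setminus V(G_{\chi})=V(G_{\overline{\chi}})$, and likewise $V(H)\setminus V(H_{\chi})=V(H_{\overline{\chi}})$. The goal thus becomes $V(G_{\overline{\chi}})\hateq V(H_{\overline{\chi}})$, given $V(G)\hateq V(H)$ and $V(G_{\chi})\hateq V(H_{\chi})$. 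The difficulty is that $\hateq$ is \emph{not} set equality: because of the rule $u.l\lor u.r=u$, genuinely different generating sets such as $\{c.l,c.r\}$ and $\{c\}$ span the same subalgebra, so one cannot simply subtract supports.

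To tame this I would attach to each name a region in the space $X:=\bigsqcup_{c\in\mathbb{K}\cup\hyphenbullet\mathbb{K}}\{l,r\}^{\omega}$. Concretely, orient the four equalities of Def. \ref{def:namealgebra} as a size-decreasing rewriting system, show it is terminating and confluent, so that every name has a unique normal form: a $\lor$-combination of \emph{atoms} $c.t$ (with $c$ a generator and $t\in\{l,r\}^{*}$) containing no redex $u.l\lor u.r$. Letting $R(c.t)$ be the cylinder of strings extending $t$ in the $c$-th copy of $\{l,r\}^{\omega}$, I set $R(w):=\bigcup R(\text{atoms of } w)$ and $R(V):=\bigcup_{v\in V}R(v)$; each $R(v)$ is clopen, and by uniqueness of normal forms $R$ is well defined on $\mathcal{V}$. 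The key characterisation to establish is
\begin{equation*}
\mathcal{N}[V]=\{\,w\ |\ R(w)\subseteq R(V)\,\},\qquad\text{whence}\qquad V\hateq V'\iff R(V)=R(V').
\end{equation*}
The inclusion $\subseteq$ is a straightforward induction on the generation of $\mathcal{N}[V]$; the inclusion $\supseteq$ is the substantive one and uses that a clopen $R(w)\subseteq R(V)$ is a finite union of cylinders each lying inside a single atom of $V$, which can be produced from $V$ by $.t$ and then re-merged up to $w$ via $u.l\lor u.r=u$. I expect this characterisation to be the \emph{main obstacle}, since it is precisely what converts the equational theory into honest set algebra; note in particular that one must work with the \emph{points} of $X$ rather than with the finite atom data, as the latter fails to be additive across a merge (e.g. $\{c.l\}\cup\{c.r\}$ acquires the new atom $c$, whereas no new point is created).

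With the characterisation in hand the conclusion is immediate. Condition \eqref{eq:wellnamedness} says that no two distinct names of $V(G)$ share a descendant, which translates exactly into $R(v)\cap R(v')=\emptyset$ for distinct $v,v'\in V(G)$; since $R$ is additive and $V(G_{\chi}),V(G_{\overline{\chi}})$ partition $V(G)$, we get $R(V(G))=R(V(G_{\chi}))\sqcup R(V(G_{\overline{\chi}}))$, and similarly for $H$. Using the hypotheses $R(V(G))=R(V(H))$ and $R(V(G_{\chi}))=R(V(H_{\chi}))$ and taking complements inside the total region yields $R(V(G_{\overline{\chi}}))=R(V(G))\setminus R(V(G_{\chi}))=R(V(H))\setminus R(V(H_{\chi}))=R(V(H_{\overline{\chi}}))$, i.e. $V(G_{\overline{\chi}})\hateq V(H_{\overline{\chi}})$, which is the claim.
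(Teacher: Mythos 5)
Your proposal is correct, but it takes a genuinely different route from the paper's proof. The paper argues syntactically: it introduces the overlap relation $V\wedge W\ :\Leftrightarrow\ \mathcal{N}[V]\cap\mathcal{N}[W]\neq\varnothing$, notes that overlap is stable under replacing a set by a $\corresponds$-equivalent one, and then shows that any $u\in\mathcal{N}[V(G)\setminus V(G_{\chi})]$ which escaped $\mathcal{N}[V(H)\setminus V(H_{\chi})]$ would have to be expressed using some $v\in\mathcal{N}[V(H_{\chi})]$ with $v=u.t$, forcing an overlap between $V(G)\setminus V(G_{\chi})$ and $V(G_{\chi})$ that contradicts well-namedness Eq.~\eqref{eq:wellnamedness}. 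You instead build a faithful set-theoretic model: normal forms, the region map $R$ into $\bigsqcup_{c}\{l,r\}^{\omega}$, the characterisation $\mathcal{N}[V]=\{w\ |\ R(w)\subseteq R(V)\}$, after which $\corresponds$ becomes honest set equality, well-namedness becomes disjointness of regions, and the lemma reduces to subtracting one part of a partition from the whole. Your sketch of the characterisation is sound (the $\supseteq$ direction via compactness/nesting of cylinders and re-merging with $u.l\lor u.r=u$ does work, because $\mathcal{N}[V]$ contains all $\lor$-combinations of its elements, including the correctly ordered ones), and your observation that one must track points rather than atom multisets is exactly the right warning. The trade-off: the paper's argument is shorter but its key step (``the expression for $u$ must contain an element $v$ with $v=u.t$'') implicitly relies on the same analysis of terms that you make explicit; your version front-loads the infrastructure (termination and confluence of the oriented rewrite system still need to be checked via critical pairs) but then yields this lemma --- and any similar cancellation or disjointness statement about $\corresponds$ --- as elementary Boolean algebra, which would be reusable elsewhere in the paper.
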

\begin{proof}
In this proof we write $V\wedge W$ if and only if ${\cal N}[V]\cap {\cal N}[W]\neq \varnothing$.

First note the following law \[(V \wedge W \textrm{ and } W \hat{=} Z ) \Rightarrow V \wedge Z\] It follows that if $(V(G)\setminus V(G_{\chi})) \wedge V(H_{\chi})$ then $(V(G)\setminus V(G_{\chi})) \wedge V(G_{\chi})$ which is in turn a contradiction with Eq. \eqref{eq:wellnamedness}.

Consider $u \in \mathcal{N}[V(G)\setminus V(G_{\chi})]$. Because $\mathcal{N}[V(G)] = \mathcal{N}[V(H)]$, this $u$ can be expressed by means of the operator $\vee$ applied on elements of $\mathcal{N}[V(H)]$. Suppose that $u$ lies beyond $\mathcal{N}[V(H)\setminus V(H_{\chi})]$, in $\mathcal{N}[V(H)] \setminus  \mathcal{N}[V(H)\setminus V(H_{\chi})]$. Then the expression for $u$ must contain at least one element of $v$ in $\mathcal{N}[V(H_{\chi})]$. As a consequence there exists $t \in \{l,r\}^{*}$ such that $v=u.t$. But because the operator $.t \in \{l,r\}^{*}$ preserves inclusion within $\mathcal{N}[V(G)\setminus V(G_{\chi})]$,  $v$ also lies in $\mathcal{N}[V(G)\setminus V(G_{\chi})]$. It follows that $(V(G)\setminus V(G_{\chi})) \wedge V(H_{\chi})$---leading to the contradiction of the previous paragraph. 

We conclude that the expression for $u$ in terms of elements of $\mathcal{N}[V(H)]$ must only include elements of $\mathcal{N}[V(H)\setminus V(H_{\chi})]$ and so $u$ itself lies inside $\mathcal{N}[V(H)\setminus V(H_{\chi})]$. By reversing the above it must be that every $u \in \mathcal{N}[V(H)\setminus V(H_{\chi})]$ also exists in $\mathcal{N}[V(G)\setminus V(G_{\chi})]$ and so $\mathcal{N}[V(G)\setminus V(G_{\chi})] = \mathcal{N}[V(H)\setminus V(H_{\chi})]$ in other words it must be the case that $V(G)\setminus V(G_{\chi}) \hat{=} V(H)\setminus V(H_{\chi})$.
\end{proof}

\begin{lema}[Tensor-bracket]\label{lem:tensorbracket}

For every restriction 
$ \chi : G\mapsto G_{\chi } \subseteq G $, 
inner products factorise with respect to $\chi$, i.e. 
$ \braket{H|G} =\braket{H_{\chi }|G_{\chi }}\braket{H_{\overline{\chi }}|G_{\overline{\chi }}}$. 
\end{lema}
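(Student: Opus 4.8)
The plan is to exploit that all the vectors appearing in the claim are canonical basis states of $\mathcal{H}$, so that every inner product is simply a Kronecker delta. Concretely, since the basis $\{\ket{G}\}_{G\in\mathcal{G}}$ is orthonormal, I would first record that $\braket{H|G}=\delta_{H,G}$, and likewise $\braket{H_{\chi}|G_{\chi}}=\delta_{H_{\chi},G_{\chi}}$ and $\braket{H_{\overline{\chi}}|G_{\overline{\chi}}}=\delta_{H_{\overline{\chi}},G_{\overline{\chi}}}$. Consequently all three quantities lie in $\{0,1\}$, and the asserted factorisation reduces to the purely set-theoretic equivalence
\[
H=G \quad\Longleftrightarrow\quad \big(H_{\chi}=G_{\chi}\ \text{and}\ H_{\overline{\chi}}=G_{\overline{\chi}}\big).
\]

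The forward implication is immediate: if $H=G$ then applying $\chi$ and then taking complements yields $H_{\chi}=G_{\chi}$ and $H_{\overline{\chi}}=G_{\overline{\chi}}$ at once. For the converse, the key observation I would use is that the pair $(G_{\chi},G_{\overline{\chi}})$ is not merely a pair of subgraphs but a genuine partition of $G$: from the very definition $G_{\overline{\chi}}:=G\setminus G_{\chi}$ together with $G_{\chi}\subseteq G$, one has the disjoint decomposition $G=G_{\chi}\cup G_{\overline{\chi}}$, and identically $H=H_{\chi}\cup H_{\overline{\chi}}$. Hence from $H_{\chi}=G_{\chi}$ and $H_{\overline{\chi}}=G_{\overline{\chi}}$ it follows that $H=H_{\chi}\cup H_{\overline{\chi}}=G_{\chi}\cup G_{\overline{\chi}}=G$, closing the equivalence and establishing the identity.

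I do not expect a genuine obstacle here; the content is really just that a graph is reconstructed from its $\chi$-part and the complementary part. It is worth noting that the argument uses only $G_{\chi}\subseteq G$ and the definition of $G_{\overline{\chi}}$, so the full restriction axiom $G_{\chi}\subseteq H\subseteq G\Rightarrow H_{\chi}=G_{\chi}$ plays no role in this particular lemma; the factorisation holds for any $\chi$ assigning to each $G$ a subgraph $G_{\chi}\subseteq G$. The one point requiring a little care is simply to phrase the reduction cleanly as a two-valued logical equivalence, so that multiplying the two deltas on the right genuinely recovers the single delta on the left.
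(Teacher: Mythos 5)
Your proof is correct and follows essentially the same route as the paper's: both reduce the identity to the observation that each inner product is $0$ or $1$, and that $H_{\chi}=G_{\chi}$ together with $H_{\overline{\chi}}=G_{\overline{\chi}}$ is equivalent to $H=G$ because $G=G_{\chi}\cup G_{\overline{\chi}}$ and $H=H_{\chi}\cup H_{\overline{\chi}}$. Your added remark that only $G_{\chi}\subseteq G$ (and not the full restriction axiom) is used is accurate and consistent with the paper's argument.
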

\begin{proof}
Either of the RHS and LHS are either zero or one. The RHS is $1$ if and only if both $G_{\chi } =H_{\chi }$ and $G_{\overline{\chi }} =H_{\overline{\chi }}$. Since $G:=G_{\chi } \cup G_{\overline{\chi }}$ and $H:=H_{\chi } \cup H_{\overline{\chi }}$ this is equivalent to simply requiring that $G = H$. The LHS being $\braket{G|H}$ is also $1$ if and only if $G = H$, so the RHS is always equal to the LHS.
\end{proof}

\begin{lema}[Properties of restrictions]\label{lem:restrictions}
If $ \chi $ be a restriction, then $ \chi \overline{\chi } =\emptyset $.
If $\chi$ is an extensible restriction, then it is a restriction, i.e. $ \chi \chi =\chi $.
\end{lema}
\begin{proof}
The first point is because $G_{\chi \overline{\chi }} = G_{\chi } \setminus G_{\chi \chi } =G_{\chi } \setminus G_{\chi } =\emptyset $. \\
Next, for any extensible restriction $\chi $ then by definition $G_{\chi } \subseteq \ H\ \subseteq \ G\ \Rightarrow \ H_{\chi } =G_{\chi }$, noting that the assignment $H=G_{\chi }$ always satisfies the LHS of this statement the right hand side must also hold for $H = G_{\chi}$, meaning that $G_{\chi } =G_{\chi \chi }$.
\end{proof}

\begin{lema}[Special restrictions]\label{lem:specialrestrictions}
Every order-preserving idempotent function $ \nu: G\mapsto G_{\nu } \subseteq G$ is an extensible restriction and as a corollary every pointwise function $ \mu $ (and its complement function $ \overline{\mu }$) is an extensible restriction.
\end{lema}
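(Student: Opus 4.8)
The plan is to prove the principal claim---that order-preservation together with idempotency implies the restriction property---directly from the two hypotheses, and then to obtain the corollary by checking that every pointwise function enjoys both of these properties. Throughout I read ``order-preserving'' as monotonicity for inclusion, i.e. $G\subseteq G' \Rightarrow G_{\nu }\subseteq G'_{\nu }$, which is what licenses the two inclusion steps below.

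For the main claim I would start from the defining implication of a restriction. Assume $G_{\nu }\subseteq H\subseteq G$; the goal is $H_{\nu }=G_{\nu }$, which I establish by two inclusions. First, from $H\subseteq G$ and order-preservation I get $H_{\nu }\subseteq G_{\nu }$. Second, from $G_{\nu }\subseteq H$ and order-preservation I get $(G_{\nu })_{\nu }\subseteq H_{\nu }$, and idempotency rewrites the left-hand side as $G_{\nu }$, yielding $G_{\nu }\subseteq H_{\nu }$. Antisymmetry of set inclusion then closes the argument. This is the whole of the first part, and it is short; the only care needed is to spell out the monotonicity convention so that both steps are justified.

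For the corollary I would show that a pointwise function $\mu $, characterised by $G_{\mu }=\bigcup _{\sigma .v\in G}\{\sigma .v\}_{\mu }$, is order-preserving and idempotent, and then invoke the first part. Order-preservation is immediate: enlarging $G$ to $G'$ only enlarges the index set of the union, so $G_{\mu }\subseteq G'_{\mu }$. Idempotency is the one step deserving genuine care: since $\{\sigma .v\}_{\mu }\subseteq \{\sigma .v\}$ is a subset of a singleton, it is either $\emptyset $ or $\{\sigma .v\}$; hence every system surviving into $G_{\mu }$ is one for which $\{\sigma .v\}_{\mu }=\{\sigma .v\}$, and re-applying $\mu $ to $G_{\mu }$ leaves each such system in place, giving $(G_{\mu })_{\mu }=G_{\mu }$. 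With both properties in hand the first part applies and $\mu $ is a restriction.

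Finally, for the complement function $\overline{\mu }$ I would observe that it too is pointwise, so the same corollary applies. Concretely, setting $\{\sigma .v\}_{\overline{\mu }}:=\{\sigma .v\}\setminus \{\sigma .v\}_{\mu }$ (again either $\emptyset $ or $\{\sigma .v\}$ by the binary alternative above), one checks that $G_{\overline{\mu }}=G\setminus G_{\mu }=\bigcup _{\sigma .v\in G}\{\sigma .v\}_{\overline{\mu }}$, exhibiting $\overline{\mu }$ as pointwise, whence it is a restriction by the preceding argument. I expect the main---indeed essentially the only---obstacle to be the idempotency of pointwise functions: everything hinges on the observation that restricting a singleton can only return the singleton or the empty graph, which is exactly what forces the two applications of $\mu $ (and of $\overline{\mu }$) to agree.
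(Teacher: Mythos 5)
Your proof is correct and follows essentially the same route as the paper: the two-inclusion argument from order-preservation plus idempotency for the main claim, then verifying that pointwise functions (and their complements, shown to be pointwise) satisfy both hypotheses. You even spell out the idempotency of pointwise functions via the singleton dichotomy, a step the paper merely asserts.
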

\begin{proof}
Let $\nu $ be an order-preserving idempotent function. It follows that for any $G_{\nu } \subseteq H\subseteq G$ then since $\nu $ is order-preserving $G_{\nu \nu } \subseteq H_{\nu } \subseteq G_{\nu }$ and so since $\nu $ is idempotent $G_{\nu } \subseteq H_{\nu } \subseteq G_{\nu }$, in other words $G_{\nu } =H_{\nu }$.

Every pointwise function preserves the order: $G\subseteq H\ \Rightarrow \bigcup _{\sigma .v\ \in \ G}\{\sigma .v\}_{\mu } \ \subseteq \bigcup _{\sigma .v\ \in \ H}\{\sigma .v\}_{\mu } \ \Rightarrow G_{\mu } \subseteq H_{\mu }$. Furthermore every pointwise function is idempotent. Since every order-preserving idempotent is an extensible restriction it then follows that every pointwise function is an extensible restriction. 

Finally for every point-wise restriction $\mu $ its complement function $\overline{\mu }$ for single-node graphs satisfies \ $\{\sigma .v\}_{\overline{\mu }} \ :=\{\sigma .v\} \ \setminus \ \{\sigma .v\}_{\mu } \ $ and for generic graphs satisfies \[G_{\overline{\mu }} :=G\ \setminus \bigcup _{\sigma .v\ \in \ G}\{\sigma .v\}_{\mu } \ =\bigcup _{\sigma .v\ \in \ G}\{\sigma .v\} \ \setminus \ \{\sigma .v\}_{\mu } \ =\bigcup _{\sigma .v\ \in \ G}\{\sigma .v\}_{\overline{\mu }}\]. It follows that $\overline{\mu }$ is also pointwise function. By the previous section of the lemma $\overline{\mu }$ must be a (pointwise) restriction.
\end{proof}

Caution: the following works over ${\cal H}$, but it may not hold for an arbitrarily constrained configuration space ${\cal H}^{\cal C}$ and extensible restrictions over it.
\begin{lema}[Properties of extensible restrictions]\label{lem:combiningrestrictions}
Let $ \chi , \zeta $ be extensible restrictions, then
\begin{itemize}
\item $\zeta^{r}$ is an extensible restriction.
\item $\chi \cup \zeta $ is an extensible restriction.
\end{itemize}
Furthermore let $ \mu $ be a pointwise function, then $ \mu \chi $ and $\xi :=\mu \chi \cup \overline{\mu }\zeta $ are extensible restrictions. The extensible restrictions $\mu,\xi$ and their complements furthermore commute, satisfying $ [ \mu ,\xi ] =[\overline{\mu } ,\xi ] =[ \mu ,\overline{\xi }] =[\overline{\mu } ,\overline{\xi }] =0$.
\end{lema}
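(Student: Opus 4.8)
The plan is to prove the four assertions in sequence, relying on Lemma~\ref{lem:specialrestrictions} (every pointwise function, and its complement, is a restriction) and on repeated use of the defining property of a restriction. The one structural fact about a pointwise $\mu$ that I will exploit everywhere is that $s\in G_\mu$ is equivalent to ``$s\in G$ and $\{s\}_\mu=\{s\}$'', i.e. membership in the $\mu$-part is a \emph{local} predicate on systems; in particular $G=G_\mu\sqcup G_{\overline\mu}$ disjointly, and for any $A\subseteq G_\mu$ one has $A_\mu=A$ and $A_{\overline\mu}=\emptyset$ (with the symmetric statement for $A\subseteq G_{\overline\mu}$).

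\emph{Union.} For $\chi\cup\zeta$ I argue straight from the definition. If $G_{\chi\cup\zeta}=G_\chi\cup G_\zeta\subseteq H\subseteq G$, then in particular $G_\chi\subseteq H\subseteq G$ and $G_\zeta\subseteq H\subseteq G$, so the restriction property of $\chi$ and of $\zeta$ give $H_\chi=G_\chi$ and $H_\zeta=G_\zeta$, whence $H_{\chi\cup\zeta}=H_\chi\cup H_\zeta=G_{\chi\cup\zeta}$.

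\emph{Composition $\mu\chi$ and the restriction $\xi$.} This is the delicate step: $\chi$ is merely a restriction and need not be order-preserving, so $\mu\chi$ cannot simply be handed to Lemma~\ref{lem:specialrestrictions}, and I must instead route the argument through the restriction property of $\chi$ applied at the level of $G_\mu$. Here $G_{\mu\chi}=(G_\mu)_\chi$. Assume $(G_\mu)_\chi\subseteq H\subseteq G$. The locality of the $\mu$-predicate yields the chain $(G_\mu)_\chi\subseteq H_\mu\subseteq G_\mu$: the right inclusion is monotonicity of the pointwise $\mu$, and the left holds because every $s\in(G_\mu)_\chi$ lies both in $H$ (by hypothesis) and in $G_\mu$, hence in $H_\mu$. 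Applying the restriction property of $\chi$ with $G_\mu$ as the ambient graph gives $(H_\mu)_\chi=(G_\mu)_\chi$, i.e. $H_{\mu\chi}=G_{\mu\chi}$. Since $\overline\mu$ is again pointwise, the same argument shows $\overline\mu\zeta$ is a restriction, and then the union step gives that $\xi=\mu\chi\cup\overline\mu\zeta$ is a restriction.

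\emph{Commutations.} Each bracket I evaluate on an arbitrary $G$ using $G=G_\mu\sqcup G_{\overline\mu}$ and the absorption rules above. Writing $G_\xi=(G_\mu)_\chi\cup(G_{\overline\mu})_\zeta$ with the first summand inside $G_\mu$ and the second inside $G_{\overline\mu}$, applying $\mu$ keeps the first and kills the second, so $(G_\xi)_\mu=(G_\mu)_\chi$; and $(G_\mu)_\xi=((G_\mu)_\mu)_\chi\cup((G_\mu)_{\overline\mu})_\zeta=(G_\mu)_\chi\cup\emptyset$ gives the same value, proving $[\mu,\xi]=0$. The identity $[\overline\mu,\xi]=0$ is the mirror image, with the $(G_{\overline\mu})_\zeta$ summand now surviving. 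For the complement brackets I use $G_{\overline\xi}=G\setminus G_\xi$ together with the observation that a $\mu$-selected system of $G$ lies in $G_\xi$ iff it lies in $(G_\mu)_\chi$ (the summand $(G_{\overline\mu})_\zeta$ carries no $\mu$-selected systems), so $(G_{\overline\xi})_\mu=G_\mu\setminus(G_\mu)_\chi=G_{\mu\overline\chi}$; matching this against $(G_\mu)_{\overline\xi}=G_\mu\setminus(G_\mu)_\xi=G_\mu\setminus(G_\mu)_\chi$ gives $[\mu,\overline\xi]=0$, and $[\overline\mu,\overline\xi]=0$ follows symmetrically, both sides reducing to $G_{\overline\mu\overline\chi}$.

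I expect the composition claim to be the main obstacle: the temptation is to treat $\mu\chi$ as an order-preserving idempotent and invoke Lemma~\ref{lem:specialrestrictions}, but $\mu\chi$ need not be order-preserving. The correct route is the inclusion $(G_\mu)_\chi\subseteq H_\mu$, which is precisely where the locality of the $\mu$-predicate is indispensable. Once this is secured, everything else — the union claim, the assembly of $\xi$, and all four commutators — is routine bookkeeping driven by the disjoint decomposition and the absorption rules.
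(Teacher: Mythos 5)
Your proof is correct and follows essentially the same route as the paper's: the union claim directly from the defining property of a restriction, the composition $\mu\chi$ via the chain $(G_\mu)_\chi\subseteq H_\mu\subseteq G_\mu$ (exploiting that $\mu$ is pointwise, hence order-preserving and absorbing on its own range) and then applying the restriction property of $\chi$ inside the ambient graph $G_\mu$, and the four commutators via the absorption identities $\mu\nu\mu=\mu\nu$, $\mu\nu\overline\mu=\emptyset$, etc. One small slip: in the last commutator both sides reduce to $G_{\overline\mu\overline\zeta}$, not $G_{\overline\mu\overline\chi}$, since the $\overline\mu$-summand of $\xi$ involves $\zeta$.
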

\begin{proof}
Let $G_{\zeta^{r}} \subseteq H\subseteq G$, then \ $G_{\zeta } \subseteq G_{\zeta^{r}} \subseteq H\subseteq G$. Because $\zeta $ is an extensible restriction, $H_{\zeta } =G_{\zeta } =K$. Since $H\subseteq G$, we have that $H_{\zeta^{r}} \subseteq G_{\zeta^{r}}$ because the neighbours of $K$ in $H$ are also in $G$. Since $G_{\zeta ^{r}} \subseteq H$, we have that $G_{\zeta^{r}} \subseteq H_{\zeta^{r}}$ because the neighbours of $K$ in $G$ are also in $H$. Since $H_{\zeta^{r}} \subseteq G_{\zeta^{r}}$ and $G_{\zeta^{r}} \subseteq H_{\zeta^{r}}$ then $H_{\zeta^{r}} =G_{\zeta^{r}}$.

Let $G_{\chi }{}_{\cup }{}_{\zeta } \subseteq H\subseteq G$, then since $G_{\chi } \ \subseteq G_{\chi }{}_{\cup }{}_{\zeta }$ we have $G_{\chi } \ \subseteq G_{\chi }{}_{\cup }{}_{\zeta } \subseteq H\subseteq G\ $ which since $\chi$ is an extensible restriction implies that $ H_{\chi } \ =\ G_{\chi }$. Similarly $G_{\zeta } \ \subseteq G_{\chi }{}_{\cup }{}_{\zeta } \subseteq H\subseteq G\ $ implies $ H_{\zeta } \ =\ G_{\zeta }$. The above equalities imply equalities for the union, $H_{\chi }{}_{\cup }{}_{\zeta } \ =\ H_{\chi } \ \cup \ H_{\zeta } \ =G_{\chi } \ \cup \ G_{\zeta } \ =G_{\chi }{}_{\cup }{}_{\zeta }$.

For any pointwise restriction $\mu $ and extensible restriction $\chi $ we have $\mu \chi \mu \ =\ \mu \chi $. As a result:
\begin{align*}
    & G_{\mu \chi } \ \subseteq   H\ \subseteq G \\
    & \Rightarrow \  G_{\mu \chi \mu } \ \subseteq H_{\mu } \ \subseteq G_{\mu } \\
    & \Rightarrow \  G_{\mu \chi } \ \subseteq H_{\mu } \ \subseteq G_{\mu } \\
    & \Rightarrow \  H_{\mu \chi } \ =\ G_{\mu \chi }
\end{align*}
 
Let $\xi :=\mu \chi \cup \overline{\mu }\zeta $ and notice that $\overline{\xi } =\ \mu \overline{\chi} \cup \overline{\mu }\overline{\zeta }$. The function $\xi $ is an extensible restriction since $\mu \chi $ and $\overline{\mu } \zeta $ are extensible restrictions by the previous part, and their union is an extensible restriction by the second part.

We now show that $[ \mu ,\xi ] =[\overline{\mu } ,\xi ] =[ \mu ,\overline{\xi }] =[\overline{\mu } ,\overline{\xi }] =0$. First since $\mu $ is pointwise, then for any $\nu$ we have $\mu \nu \mu =\mu \nu $. Similarly, $\overline{\mu } \nu \overline{\mu } =\overline{\mu } \nu $ and $\mu \nu \overline{\mu } =\overline{\mu } \nu \mu =\emptyset $. We also have $( \nu \cup \nu ') \mu =\nu \mu \cup \nu '\mu $ and as always $\mu ( \nu \cup \nu ') =\mu \nu \cup \mu \nu '$.  This is enough to derive the commutation rules by the following steps:

$\xi \mu =\mu \chi \mu \cup \overline{\mu } \zeta \mu =\mu \chi $ \quad and \quad $\mu \xi =\mu \mu \chi \cup \mu \overline{\mu } \zeta =\mu \chi $ 

$\xi \overline{\mu } =\mu \chi \overline{\mu } \cup \overline{\mu } \zeta \overline{\mu } =\overline{\mu } \zeta $ \quad and \quad $\overline{\mu } \xi =\overline{\mu } \mu \chi \cup \overline{\mu }\overline{\mu } \zeta =\overline{\mu } \zeta $

$\overline{\xi} \mu =\mu \overline{\chi} \mu \cup \overline{\mu }\overline{\zeta } \mu =\mu \overline{\chi} $ \ and \ $\mu \overline{\xi } =\mu \mu \overline{\chi} \cup \mu \overline{\mu }\overline{\zeta } =\mu \overline{\chi} $

$\overline{\xi }\overline{\mu } =\mu \overline{\chi} \overline{\mu } \cup \overline{\mu }\overline{\zeta }\overline{\mu } =\overline{\mu }\overline{\zeta }$ \quad and \quad $\overline{\mu }\overline{\xi } =\overline{\mu } \mu \overline{\chi} \cup \overline{\mu }\overline{\mu }\overline{\zeta } =\overline{\mu }\overline{\zeta }$
\end{proof}

\begin{lema}[Tensor]\label{lem:tensor}

For all $ A:\mathcal{G}^{4}\rightarrow \mathbb{C}$,
\begin{align*}
\sum _{G,H\in \mathcal{G}} A_{G_{\chi } H_{\chi }}{}_{G_{\overline{\chi }} H_{\overline{\chi }}} \ \ket{G}\bra{H} & =\sum _{ \begin{array}{l}
G ,H \in \mathcal{G}_{\chi } \ \\
G' ,H' \in \mathcal{G}_{\overline{\chi }}
\end{array}} A_{GHG'H'} \ \ket{G}\bra{H} \tensorchi \ket{G'}\bra{H'}\\
 & =\sum _{ \begin{array}{l}
G,H \in \mathcal{G}_{\chi } \ \\
G' ,H' \in \mathcal{G}
\end{array}} A_{GHG'H'} \ \ket{G}\bra{H} \tensorchi \ket{G'}\bra{H'} \ \\
 & =\sum _{ \begin{array}{l}
G,H \in \mathcal{G} \ \\
G' ,H' \in \mathcal{G}
\end{array}} A_{GHG'H'} \ \ket{G}\bra{H} \tensorchi \ket{G'}\bra{H'}
\end{align*}
In particular,

$ \sum _{G,H\in \mathcal{G}} A_{G_{\chi } H_{\chi }} B_{G_{\overline{\chi }} H_{\overline{\chi }}} \ \ket{G}\bra{H} \ =A\tensorchi B\ $\qquad $ I=I_{\chi } \tensorchi I_{\overline{\chi }}$ \qquad $ A\tensorchi I_{\overline{\chi }} =A\tensorchi I$ 
\end{lema}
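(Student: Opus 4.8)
The plan is to reduce all three displayed equalities, and the three ``in particular'' consequences, to a single change-of-variables identity. The crux is the following bijection. By the definition of $\tensorchi$, the ket $\ket{G}\tensorchi\ket{G'}$ is non-zero exactly when there is a (necessarily unique) graph $K$ with $K_{\chi}=G$ and $K_{\overline{\chi}}=G'$, in which case $K=K_{\chi}\cup K_{\overline{\chi}}=G\cup G'$ and $\ket{G}\tensorchi\ket{G'}=\ket{G\cup G'}$. First I would record that the assignment $(G,G')\mapsto G\cup G'$ is a bijection from $\{(G,G')\in\mathcal{G}_{\chi}\times\mathcal{G}_{\overline{\chi}}\ :\ \ket{G}\tensorchi\ket{G'}\neq 0\}$ onto $\mathcal{G}$, with inverse $K\mapsto(K_{\chi},K_{\overline{\chi}})$; injectivity and surjectivity both follow from $K=K_{\chi}\cup K_{\overline{\chi}}$ together with idempotency of $\chi$ (Lem.~\ref{lem:restrictions}), which guarantees $K_{\chi}\in\mathcal{G}_{\chi}$ and $K_{\overline{\chi}}\in\mathcal{G}_{\overline{\chi}}$.

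For the first equality I would expand the right-hand side using $\ket{G}\bra{H}\tensorchi\ket{G'}\bra{H'}=(\ket{G}\tensorchi\ket{G'})(\bra{H}\tensorchi\bra{H'})$. Each summand is either zero or, by the observation above, equal to $A_{GHG'H'}\ket{G\cup G'}\bra{H\cup H'}$, the latter occurring precisely when both $\ket{G}\tensorchi\ket{G'}\neq 0$ and $\ket{H}\tensorchi\ket{H'}\neq 0$. Applying the bijection to the pair $(G,G')\leftrightarrow K$ and independently to $(H,H')\leftrightarrow L$ reindexes the surviving terms as $\sum_{K,L\in\mathcal{G}}A_{K_{\chi}L_{\chi}K_{\overline{\chi}}L_{\overline{\chi}}}\ket{K}\bra{L}$, which is the left-hand side after renaming $K,L$ to $G,H$. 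The second and third equalities then cost nothing: since $\ket{G}\tensorchi\ket{G'}\neq 0$ forces $G=K_{\chi}\in\mathcal{G}_{\chi}$ and $G'=K_{\overline{\chi}}\in\mathcal{G}_{\overline{\chi}}$ (and likewise for $H,H'$), every summand whose indices leave $\mathcal{G}_{\chi}$ or $\mathcal{G}_{\overline{\chi}}$ is identically zero. Hence enlarging the summation ranges from $\mathcal{G}_{\chi}\times\mathcal{G}_{\overline{\chi}}$ to $\mathcal{G}\times\mathcal{G}$ (first for $G',H'$, then for $G,H$) merely adjoins vanishing terms, leaving the value unchanged.

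Finally I would harvest the ``in particular'' statements. Choosing the four-index function $C_{GHG'H'}:=A_{GH}B_{G'H'}$ turns the third form of the sum (applied to $C$) into the bilinear expansion of $A\tensorchi B$, yielding $A\tensorchi B=\sum_{G,H}A_{G_{\chi}H_{\chi}}B_{G_{\overline{\chi}}H_{\overline{\chi}}}\ket{G}\bra{H}$. Specialising this to $A=I_{\chi}$, $B=I_{\overline{\chi}}$ gives $\sum_{K,L}\delta_{K_{\chi}L_{\chi}}\delta_{K_{\overline{\chi}}L_{\overline{\chi}}}\ket{K}\bra{L}=\sum_{K}\ket{K}\bra{K}=I$, using $K=L\Leftrightarrow(K_{\chi}=L_{\chi}\text{ and }K_{\overline{\chi}}=L_{\overline{\chi}})$. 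And since in the formula the second factor is only ever evaluated at arguments $G_{\overline{\chi}},H_{\overline{\chi}}\in\mathcal{G}_{\overline{\chi}}$, where the matrix elements of $I$ and of $I_{\overline{\chi}}$ coincide, we obtain $A\tensorchi I=A\tensorchi I_{\overline{\chi}}$. The only real obstacle is the bookkeeping in the bijection step: one must be careful that $\tensorchi$ silently annihilates every ``inconsistent'' pair, so that the reindexing is genuinely a bijection with no term double-counted or lost.
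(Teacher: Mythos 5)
Your proposal is correct and follows essentially the same route as the paper: both rest on the decomposition $\ket{G}\bra{H}=\ket{G_{\chi}}\bra{H_{\chi}}\tensorchi\ket{G_{\overline{\chi}}}\bra{H_{\overline{\chi}}}$ (you phrase it as an explicit bijection between consistent pairs and graphs, the paper as a reindexing of the sum) and on the observation that inconsistent terms vanish, so the summation ranges may be freely enlarged. The treatment of the ``in particular'' consequences, including the argument that $I$ and $I_{\overline{\chi}}$ agree on the only arguments that survive the tensor, likewise matches the paper's.
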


\begin{proof}

[First part]
\begin{align*}
\sum _{G,H\in \mathcal{G}} A_{G_{\chi } H_{\chi }}{}_{G_{\overline{\chi }} H_{\overline{\chi }}} \ \ket{G}\bra{H} & =\sum _{G ,H \in \mathcal{G}} A_{G_{\chi } H_{\chi }}{}_{G_{\overline{\chi }} H_{\overline{\chi }}} \ \ket{G_{\chi }}\bra{H_{\chi }} \tensorchi \ket{G_{\overline{\chi }}}\bra{H_{\overline{\chi }}}\\
 & =\sum _{ \begin{array}{l}
G ,H \in \mathcal{G}_{\chi } \ G' ,H' \in \mathcal{G}_{\overline{\chi }}\\
\ket{G} \tensorchi \ket{G'} \neq 0\\
\ket{H} \tensorchi \ket{H'} \neq 0
\end{array}} A_{GHG'H'} \ \ket{G}\bra{H} \tensorchi \ket{G'}\bra{H'}\\
 & =\sum _{ \begin{array}{l}
G ,H \in \mathcal{G}_{\chi } \ \\
G' ,H' \in \mathcal{G}_{\overline{\chi }}
\end{array}} A_{GHG'H'} \ \ket{G}\bra{H} \tensorchi \ket{G'}\bra{H'}
\end{align*}
where the last line is obtained by summing also over those $G,G',H,H'$ that are such that $\ket{G} \tensorchi \ket{G'} =0$ or $\ket{H} \tensorchi \ket{H'} =0$, since for them $\ket{G}\bra{H} \tensorchi \ket{G'}\bra{H'} =0$. \ For the same reason we can sum also over those $G',H'\in \mathcal{G} \setminus \mathcal{G}_{\chi }$ and obtain the second stated equality, and over those $G,H\in \mathcal{G} \setminus \mathcal{G}_{\chi }$ to obtain the third.

[Second part]
\begin{align*}
 & \sum _{G,H\in \mathcal{G}} A_{G_{\chi } H_{\chi }} B_{G_{\overline{\chi }} H_{\overline{\chi }}} \ \ket{G}\bra{H} \ =\ \sum _{ \begin{array}{l}
G ,H \in \mathcal{G} \ \\
G' ,H' \in \mathcal{G}
\end{array}} A_{GH} B_{G'H'} \ \ket{G}\bra{H} \tensorchi \ket{G'}\bra{H'} =A\tensorchi B\\
I & =\ \sum _{G,H\in \mathcal{G}} \delta _{G_{\chi } H_{\chi }} \delta _{G_{\overline{\chi }} H_{\overline{\chi }}} \ \ket{G}\bra{H} =I_{\chi } \tensorchi I_{\overline{\chi }}\\
A\tensorchi I_{\overline{\chi }} & =\sum _{ \begin{array}{l}
G ,H \in \mathcal{G}_{\chi } \ \\
G' ,H' \in \mathcal{G}_{\overline{\chi }}
\end{array}} A_{GH} \delta _{G'H'} \ \ket{G}\bra{H} \tensorchi \ket{G'}\bra{H'} =\sum _{ \begin{array}{l}
G ,H \in \mathcal{G}_{\chi } \ \\
G' ,H' \in \mathcal{G}
\end{array}} A_{GH} \delta _{G'H'} \ \ket{G}\bra{H} \tensorchi \ket{G'}\bra{H'}\\
 & =A\tensorchi I_{\overline{\chi }} =A\tensorchi I
\end{align*}
\end{proof}

\begin{lema}[Tensor-tensor]\label{lem:tensortensor}

Write $ E_{( k)}^{GH} :=\ket{G^{( k)}}\bra{H^{( k)}}$. For all $ A:\mathcal{G}^{8}\rightarrow \mathbb{C}$, if $ [ \chi ,\zeta ] =[\overline{\chi } ,\zeta ] =[ \chi ,\overline{\zeta }] =[\overline{\chi } ,\overline{\zeta }] =0$,
\begin{gather*}
\sum _{ \begin{array}{l}
G^{( 0)} ,G^{( 1)} \dotsc \ \in \mathcal{G}\\
H^{( 0)} ,H^{( 1)} \dotsc \ \in \mathcal{G}
\end{array}} A_{G^{( 0)} H^{( 0)} G^{( 1)} H^{( 1)} \dotsc } \ \left( E_{( 0)}^{GH}  \tensorzeta   E_{( 1)}^{GH}\right) \tensorchi \left( E_{( 2)}^{GH}  \tensorzeta   E_{( 3)}^{GH}\right)\\
=\ \sum _{ \begin{array}{l}
G^{( 0)} ,G^{( 1)} \dotsc \ \in \mathcal{G}\\
H^{( 0)} ,H^{( 1)} \dotsc \ \in \mathcal{G}
\end{array}} A_{G^{( 0)} H^{( 0)} G^{( 1)} H^{( 1)} \dotsc } \ \left( E_{( 0)}^{GH} \tensorchi E_{( 2)}^{GH}\right)  \tensorzeta   \left( E_{( 1)}^{GH} \tensorchi E_{( 3)}^{GH}\right)
\end{gather*}
\end{lema}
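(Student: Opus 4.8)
The plan is to prove the identity \emph{term by term}. Both sides are sums over the \emph{same} index set of tuples $(G^{(0)},\dots,G^{(3)},H^{(0)},\dots,H^{(3)})$, weighted by the \emph{same} coefficient $A_{G^{(0)}H^{(0)}G^{(1)}H^{(1)}\dots}$, so by bilinearity it suffices to show that for each fixed tuple the rank-one operator on the left equals the one on the right. Each such term moreover factorises into a ``ket part'' and a ``bra part'', since by definition $\ket{G}\bra{H}\tensorchi\ket{G'}\bra{H'}=(\ket{G}\tensorchi\ket{G'})(\bra{H}\tensorchi\bra{H'})$ and likewise for $\tensorzeta$. Hence I only need to compare the ket vectors $X:=(\ket{G^{(0)}}\tensorzeta\ket{G^{(1)}})\tensorchi(\ket{G^{(2)}}\tensorzeta\ket{G^{(3)}})$ and $Y:=(\ket{G^{(0)}}\tensorchi\ket{G^{(2)}})\tensorzeta(\ket{G^{(1)}}\tensorchi\ket{G^{(3)}})$; the bra parts are handled identically by daggering.

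Next I would unfold $X$ using the definition of the tensors together with the Toolbox fact (Table \ref{tab:toolbox}) that a \emph{nonzero} $\ket{A}\tensorchi\ket{B}$ equals $\ket{A\cup B}$ and satisfies $(\ket{A\cup B})_{\chi}=\ket{A}$, $(\ket{A\cup B})_{\overline{\chi}}=\ket{B}$. If $X\neq 0$ then both inner $\tensorzeta$'s are nonzero, giving $\ket{P}$, $\ket{Q}$ with $P=G^{(0)}\cup G^{(1)}$, $Q=G^{(2)}\cup G^{(3)}$, $P_{\zeta}=G^{(0)}$, $P_{\overline{\zeta}}=G^{(1)}$, $Q_{\zeta}=G^{(2)}$, $Q_{\overline{\zeta}}=G^{(3)}$, and the outer $\tensorchi$ is nonzero, giving $\ket{R}$ with $R=P\cup Q=\bigcup_k G^{(k)}$, $R_{\chi}=P$, $R_{\overline{\chi}}=Q$. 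Composing the restrictions then pins down $G^{(0)}=R_{\chi\zeta}$, $G^{(1)}=R_{\chi\overline{\zeta}}$, $G^{(2)}=R_{\overline{\chi}\zeta}$, $G^{(3)}=R_{\overline{\chi}\overline{\zeta}}$. Conversely, if these four equalities hold with $R=\bigcup_k G^{(k)}$, then setting $P:=R_{\chi}$, $Q:=R_{\overline{\chi}}$ and using the $\cup$-decomposition $R_{\chi}=R_{\chi\zeta}\cup R_{\chi\overline{\zeta}}$ makes every consistency condition hold, so $X=\ket{R}$. Thus $X=\ket{R}$ exactly when this ``simultaneous'' characterisation holds, and $X=0$ otherwise.

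Running the identical computation on $Y$ with the roles of $\chi$ and $\zeta$ swapped yields $Y=\ket{R}$ exactly when $G^{(0)}=R_{\zeta\chi}$, $G^{(2)}=R_{\zeta\overline{\chi}}$, $G^{(1)}=R_{\overline{\zeta}\chi}$, $G^{(3)}=R_{\overline{\zeta}\overline{\chi}}$, and $Y=0$ otherwise. At this point the four commutation hypotheses enter: $[\chi,\zeta]=0$ gives $R_{\chi\zeta}=R_{\zeta\chi}$, $[\chi,\overline{\zeta}]=0$ gives $R_{\chi\overline{\zeta}}=R_{\overline{\zeta}\chi}$, $[\overline{\chi},\zeta]=0$ gives $R_{\overline{\chi}\zeta}=R_{\zeta\overline{\chi}}$, and $[\overline{\chi},\overline{\zeta}]=0$ gives $R_{\overline{\chi}\overline{\zeta}}=R_{\overline{\zeta}\overline{\chi}}$. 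Hence the characterisation for $X$ and the characterisation for $Y$ are literally the same set of conditions, and when they hold both vectors equal $\ket{R}$ with the same $R$. Therefore $X=Y$; applying the same reasoning to the bra parts and re-summing against $A$ finishes the proof.

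I expect the main obstacle to be the careful unfolding in the middle paragraph: establishing that the \emph{chained} tensor-consistency conditions (inner $\zeta$-consistency, then outer $\chi$-consistency) are equivalent to the single simultaneous characterisation in terms of the doubly-restricted graph $R$, in both directions. This is pure bookkeeping of which terms get zeroed by the tensors---exactly the kind of step the paper repeatedly flags as delicate---and it leans on idempotency of restrictions (Lem. \ref{lem:restrictions}) and the decomposition of $R_{\chi}$ into its $\zeta$- and $\overline{\zeta}$-parts. Once this equivalence is in hand, the commutation hypotheses perform the actual interchange essentially for free.
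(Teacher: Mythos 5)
Your proposal is correct and follows essentially the same route as the paper's own proof: characterise the non-vanishing of each rank-one term by the existence of a single graph whose iterated restrictions $G_{\chi\zeta}, G_{\chi\overline{\zeta}}, G_{\overline{\chi}\zeta}, G_{\overline{\chi}\overline{\zeta}}$ reproduce the four factors, use the four commutation hypotheses to identify this with the swapped characterisation, and observe that both sides then equal the same $\ket{G}\bra{H}$. The only cosmetic difference is that you split into ket and bra parts while the paper manipulates the rank-one operators directly, and you pin the ambient graph down as $\bigcup_k G^{(k)}$ rather than quantifying existentially; neither changes the substance.
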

\begin{proof}

\begin{align*}
 & \left( E_{( 0)}^{GH}  \tensorzeta   E_{( 1)}^{GH}\right) \tensorchi \left( E_{( 2)}^{GH}  \tensorzeta   E_{( 3)}^{GH}\right) \ \neq 0\\
\Leftrightarrow  & \exists G^{( 01)} ,H^{( 01)} /\ G^{( 0)} =G_{\zeta }^{( 01)} ,G^{( 1)} =G_{\overline{\zeta }}^{( 01)} ,H^{( 0)} =\dotsc ,\ E_{( 01)}^{GH} \tensorchi \left( E_{( 2)}^{GH}  \tensorzeta   E_{( 3)}^{GH}\right) \ \neq 0\\
\Leftrightarrow  & \exists G^{( 01)} ,H^{( 01)} ,G^{( 23)} ,H^{( 23)} /\ G^{( 0)} =G_{\zeta }^{( 01)} ,\dotsc ,G^{( 2)} =G_{\zeta }^{( 23)} ,\dotsc \ E_{( 01)}^{GH} \tensorchi E_{( 23)}^{GH} \neq 0\\
\Leftrightarrow  & \exists G ,H/\ G^{( 0)} =G_{\zeta }^{( 01)} ,G^{( 01)} =G_{\chi } ,\dotsc ,G^{( 2)} =G_{\zeta }^{( 23)} ,G^{( 23)} =G_{\overline{\chi }} ,\dotsc \\
\Leftrightarrow  & \exists G ,H/\ G^{( 0)} =G_{\chi \zeta } ,G^{( 1)} =G_{\chi \overline{\zeta }} ,G^{( 2)} =G_{\overline{\chi } \zeta } ,G^{( 3)} =G_{\overline{\chi }\overline{\zeta }} ,H^{( 0)} =\dotsc \ \\
\Leftrightarrow  & \exists G ,H/\ G^{( 0)} =G_{\zeta \chi } ,G^{( 1)} =G_{\overline{\zeta } \chi } ,G^{( 2)} =G_{\zeta \overline{\chi }} ,G^{( 3)} =G_{\overline{\zeta }\overline{\chi }} ,H^{( 0)} =\dotsc \\
\Leftrightarrow  & \left( E_{( 0)}^{GH} \tensorchi E_{( 2)}^{GH}\right)  \tensorzeta   \left( E_{( 1)}^{GH} \tensorchi E_{( 3)}^{GH}\right) \ \neq 0
\end{align*}
Moreover, when they are non-zero, 
\begin{align*}
 & \left( E_{( 0)}^{GH}  \tensorzeta   E_{( 1)}^{GH}\right) \tensorchi \left( E_{( 2)}^{GH}  \tensorzeta   E_{( 3)}^{GH}\right) \ \\
= & \left(\ket{G_{\chi \zeta }}\bra{H_{\chi \zeta }}  \tensorzeta   \ket{G_{\chi \overline{\zeta }}}\bra{H_{\chi \overline{\zeta }}}\right) \tensorchi \left(\ket{G_{\overline{\chi } \zeta }}\bra{H_{\overline{\chi } \zeta }}  \tensorzeta   \ket{G_{\overline{\chi }\overline{\zeta }}}\bra{H_{\overline{\chi }\overline{\zeta }}}\right) \ \\
= & \ket{G}\bra{H}\\
= & \left(\ket{G_{\zeta \chi }}\bra{H_{\zeta \chi }} \tensorchi \ket{G_{\zeta \overline{\chi }}}\bra{H_{\zeta \overline{\chi }}}\right)  \tensorzeta   \left(\ket{G_{\overline{\zeta } \chi }}\bra{H_{\overline{\zeta } \chi }} \tensorchi \ket{G_{\overline{\zeta }\overline{\chi }}}\bra{H_{\overline{\zeta }\overline{\chi }}}\right)\\
= & \left(\ket{G_{\chi \zeta }}\bra{H_{\chi \zeta }} \tensorchi \ket{G_{\overline{\chi } \zeta }}\bra{H_{\overline{\chi } \zeta }}\right)  \tensorzeta   \left(\ket{G_{\chi \overline{\zeta }}}\bra{H_{\chi \overline{\zeta }}} \tensorchi \ket{G_{\overline{\chi }\overline{\zeta }}}\bra{H_{\overline{\chi }\overline{\zeta }}}\right)\\
= & \left( E_{( 0)}^{GH} \tensorchi E_{( 2)}^{GH}\right)  \tensorzeta   \left( E_{( 1)}^{GH} \tensorchi E_{( 3)}^{GH}\right)
\end{align*}
So,
\begin{align*}
 & A_{G^{( 0)} H^{( 0)} \ G^{( 1)} H^{( 1)} \dotsc } \ \left( E_{( 0)}^{GH}  \tensorzeta   E_{( 1)}^{GH}\right) \tensorchi \left( E_{( 2)}^{GH}  \tensorzeta   E_{( 3)}^{GH}\right)\\
= & A_{G^{( 0)} H^{( 0)} \ G^{( 1)} H^{( 1)} \dotsc } \ \left( E_{( 0)}^{GH} \tensorchi E_{( 2)}^{GH}\right)  \tensorzeta   \left( E_{( 1)}^{GH} \tensorchi E_{( 3)}^{GH}\right)\\
= & A_{G^{( 0)} H^{( 0)} \ G^{( 1)} H^{( 1)} \dotsc }\ket{G}\bra{H} \quad \text{or}\quad 0.
\end{align*}
\end{proof}

In particular,
\begin{equation*}
( A \tensorzeta   B) \tensorchi ( C \tensorzeta   D) =\ ( A\tensorchi C)  \tensorzeta   ( B\tensorchi D) \ 
\end{equation*}
Indeed write $A=\sum _{G^{( 0)} \in \mathcal{G}} A_{G^{( 0)} H^{( 0)}}\ket{G^{( 0)}}\bra{G^{( 0)}}$, $B=\dotsc $ 
and let $A'_{G^{( 0)} H^{( 0)} G^{( 1)} H^{( 1)} \dotsc } :=A_{G^{( 0)} H^{( 0)}} B_{G^{( 1)} H^{( 1)}} \dotsc $ in the Lemma.

\begin{lema}[Trace-trace]\label{lem:tracetrace}
If $\zeta \sqsubseteq \chi $, then $(\rho _{|\chi })_{|\zeta } =\rho _{|\zeta }$ and any $\zeta $-local $A$ is also $\chi$-local.
\end{lema}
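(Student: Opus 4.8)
The plan is to prove the two claims in order, deriving the locality statement from the trace identity. Both reduce to the two defining properties of comprehension $\zeta \sqsubseteq \chi$: the support identity $G_{\chi\zeta}=G_\zeta$ (valid for every graph, hence applicable to both $G$ and $H$) and the bracket factorization $\braket{H_{\overline{\zeta}}|G_{\overline{\zeta}}}=\braket{H_{\chi\overline{\zeta}}|G_{\chi\overline{\zeta}}}\braket{H_{\overline{\chi}}|G_{\overline{\chi}}}$ of Eq.~\eqref{eq:comprehension}.

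For the identity $(\rho_{|\chi})_{|\zeta}=\rho_{|\zeta}$, I would first invoke linearity of both partial traces to reduce to the rank-one case $\rho=\ket{G}\bra{H}$. Then I would compute directly: applying $(\cdot)_{|\chi}$ gives $\ket{G_\chi}\bra{H_\chi}\braket{H_{\overline{\chi}}|G_{\overline{\chi}}}$, and applying $(\cdot)_{|\zeta}$ to this (the scalar $\braket{H_{\overline{\chi}}|G_{\overline{\chi}}}$ passing through by linearity) yields
\begin{equation*}
\ket{G_{\chi\zeta}}\bra{H_{\chi\zeta}}\,\braket{H_{\chi\overline{\zeta}}|G_{\chi\overline{\zeta}}}\,\braket{H_{\overline{\chi}}|G_{\overline{\chi}}}.
\end{equation*}
The support identity collapses $G_{\chi\zeta}=G_\zeta$ and $H_{\chi\zeta}=H_\zeta$, while the bracket factorization merges the two remaining scalars into $\braket{H_{\overline{\zeta}}|G_{\overline{\zeta}}}$. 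What remains is exactly $\ket{G_\zeta}\bra{H_\zeta}\braket{H_{\overline{\zeta}}|G_{\overline{\zeta}}}=(\ket{G}\bra{H})_{|\zeta}$, as desired.

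For the locality claim, I would work with matrix elements directly (an alternative, slicker route combines Prop.~\ref{prop:duallocality} with the identity just proved, writing $(A\rho)_{|\emptyset}=(A\rho_{|\zeta})_{|\emptyset}=(A(\rho_{|\chi})_{|\zeta})_{|\emptyset}=(A\rho_{|\chi})_{|\emptyset}$). Starting from the $\chi$-local test quantity $\bra{H_\chi}A\ket{G_\chi}\braket{H_{\overline{\chi}}|G_{\overline{\chi}}}$, I would apply $\zeta$-locality of $A$ to the inner factor, $\bra{H_\chi}A\ket{G_\chi}=\bra{H_{\chi\zeta}}A\ket{G_{\chi\zeta}}\braket{H_{\chi\overline{\zeta}}|G_{\chi\overline{\zeta}}}$, use the support identity to rewrite $G_{\chi\zeta}=G_\zeta$ and $H_{\chi\zeta}=H_\zeta$, and then use the bracket factorization to merge $\braket{H_{\chi\overline{\zeta}}|G_{\chi\overline{\zeta}}}\braket{H_{\overline{\chi}}|G_{\overline{\chi}}}=\braket{H_{\overline{\zeta}}|G_{\overline{\zeta}}}$. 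This leaves $\bra{H_\zeta}A\ket{G_\zeta}\braket{H_{\overline{\zeta}}|G_{\overline{\zeta}}}$, which equals $\bra{H}A\ket{G}$ by $\zeta$-locality, establishing the $\chi$-locality test.

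The calculations themselves are routine; the real care lies in bookkeeping the composed-restriction notation (reading $G_{\chi\zeta}$ as $(G_\chi)_\zeta$ and remembering that $\overline{\zeta}$ and $\overline{\chi}$ are not themselves restrictions), and in recognizing that the scalar products appearing after the nested operations are precisely the two sides of the comprehension equation. The main subtlety, and the reason comprehension is needed at all, is that without the bracket factorization the inner bracket $\braket{H_{\chi\overline{\zeta}}|G_{\chi\overline{\zeta}}}$ and the outer bracket $\braket{H_{\overline{\chi}}|G_{\overline{\chi}}}$ cannot in general be recombined into $\braket{H_{\overline{\zeta}}|G_{\overline{\zeta}}}$---this is exactly the failure illustrated in Fig.~\ref{fig:incomprehension}, where $\zeta_v^1\nsqsubseteq\zeta_v^2$.
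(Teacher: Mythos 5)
Your proposal is correct and follows essentially the same route as the paper's proof: reduce to rank-one operators, expand the nested partial trace to $\ket{G_{\chi\zeta}}\bra{H_{\chi\zeta}}\braket{H_{\chi\overline{\zeta}}|G_{\chi\overline{\zeta}}}\braket{H_{\overline{\chi}}|G_{\overline{\chi}}}$, and collapse it using the two defining clauses of comprehension; the locality claim is the same chain of matrix-element equalities the paper writes, merely read in the reverse direction. The alternative route you sketch via Prop.~\ref{prop:duallocality} is a valid shortcut but is not the one the paper takes.
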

\begin{proof}
[Partial trace]
\begin{align*}
\left(\left(\ket{G}\bra{H}\right)_{|\chi }\right)_{|\zeta } & =\left(\ket{G_{\chi }}\bra{H_{\chi }}\right)_{|\zeta }\braket{H_{\overline{\chi }}|G_{\overline{\chi }}}\\
 & =\ket{G_{\chi \zeta }}\bra{H_{\chi \zeta }}\braket{H_{\chi \overline{\zeta }}|G_{\chi \overline{\zeta }}}\braket{H_{\overline{\chi }}|G_{\overline{\chi }}}\\
\text{When } \zeta \sqsubseteq \chi . & =\ket{G_{\zeta }}\bra{H_{\zeta }}\braket{H_{\overline{\zeta }}|G_{\overline{\zeta }}} =\left(\ket{G}\bra{H}\right)_{|\zeta }
\end{align*}

[Locality] 
\begin{align*}
\bra{H} A\ket{G} & =\bra{H_{\zeta }} A\ket{G_{\zeta }}\braket{H_{\overline{\zeta }}|G_{\overline{\zeta }}}\\
\text{When } \zeta \sqsubseteq \chi . & =\bra{H_{\chi \zeta }} A\ket{G_{\chi \zeta }}\braket{H_{\chi \overline{\zeta }}|G_{\chi \overline{\zeta }}}\braket{H_{\overline{\chi }}|G_{\overline{\chi }}}\\
\text{By} \ \zeta \text{-loc.} & =\bra{H_{\chi }} A\ket{G_{\chi }}\braket{H_{\overline{\chi }}|G_{\overline{\chi }}}
\end{align*}
\end{proof}

\begin{lema}[Tensor-trace 1]\label{lem:tensortrace1}
If $ \rho ,\sigma $ $ \chi $-consistent, $ ( \rho \tensorchi \sigma )_{|\chi } =\ \rho \ \sigma _{|\emptyset }$.\\
If $ \zeta \sqsubseteq \chi $, $ \rho ,\sigma $ $ \chi $-consistent, $ ( \rho \tensorchi \sigma )_{|\zeta } =\rho _{|\zeta } \ \sigma _{|\emptyset }$.
\end{lema}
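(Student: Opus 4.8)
The plan is to establish the first identity by a direct expansion of the tensor followed by the partial trace, and then obtain the second identity essentially for free from the first together with Lem.~\ref{lem:tracetrace}. For the first part I would write $\rho=\sum_{G,H\in\mathcal{G}}\rho_{GH}\ket{G}\bra{H}$ and $\sigma=\sum_{G',H'\in\mathcal{G}}\sigma_{G'H'}\ket{G'}\bra{H'}$, and invoke Lem.~\ref{lem:tensor} to get $\rho\tensorchi\sigma=\sum_{A,B\in\mathcal{G}}\rho_{A_{\chi}B_{\chi}}\sigma_{A_{\overline{\chi}}B_{\overline{\chi}}}\ket{A}\bra{B}$. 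Applying $(\cdot)_{|\chi}$ and using $(\ket{A}\bra{B})_{|\chi}=\ket{A_{\chi}}\bra{B_{\chi}}\braket{B_{\overline{\chi}}|A_{\overline{\chi}}}$, the inner product forces $A_{\overline{\chi}}=B_{\overline{\chi}}=:K$, so the $\sigma$-factor collapses to the diagonal element $\sigma_{KK}$. Re-indexing by $G:=A_{\chi}$, $H:=B_{\chi}$, $K$ (ranging over $\mathcal{G}_{\chi},\mathcal{G}_{\chi},\mathcal{G}_{\overline{\chi}}$ respectively) leaves a sum carrying the residual constraints $\ket{G}\tensorchi\ket{K}\neq 0$ and $\ket{H}\tensorchi\ket{K}\neq 0$, which are inherited from requiring $A=G\cup K$ and $B=H\cup K$ to be genuine graphs.

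The crux is that $\chi$-consistency is needed \emph{twice}, for two distinct purposes. First, its contrapositive lets me discard the two tensor-nonvanishing constraints: any term violating them has $\rho_{GH}\sigma_{KK}=0$ and contributes nothing, so the sum may be taken freely over all $G,H\in\mathcal{G}_{\chi}$ and $K\in\mathcal{G}_{\overline{\chi}}$, giving the factorised form $\bigl(\sum_{G,H\in\mathcal{G}_{\chi}}\rho_{GH}\ket{G}\bra{H}\bigr)\bigl(\sum_{K\in\mathcal{G}_{\overline{\chi}}}\sigma_{KK}\bigr)$. Second, and more easily overlooked, $\chi$-consistency pins down the supports: assuming $\sigma\neq 0$, any $\rho_{GH}\neq 0$ forces $\ket{G}\tensorchi\ket{G'}\neq 0$ for some $G'$ in the support of $\sigma$, hence $G\in\mathcal{G}_{\chi}$ (the image of $\chi$), and symmetrically $H\in\mathcal{G}_{\chi}$; dually $\sigma$ is supported on $\mathcal{G}_{\overline{\chi}}$. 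These support facts are exactly what upgrade the factorised expression to $\rho\cdot\mathrm{Tr}(\sigma)=\rho\,\sigma_{|\emptyset}$: the first factor equals all of $\rho$ since $\rho$ has no weight outside $\mathcal{G}_{\chi}$, while the partial diagonal sum $\sum_{K\in\mathcal{G}_{\overline{\chi}}}\sigma_{KK}$ becomes the genuine full trace $\sigma_{|\emptyset}$ since $\sigma$ carries no diagonal weight outside $\mathcal{G}_{\overline{\chi}}$. The degenerate cases $\rho=0$ or $\sigma=0$ make both sides vanish and are handled separately.

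For the second part I would re-expand nothing and instead compose traces. Since $\zeta\sqsubseteq\chi$, Lem.~\ref{lem:tracetrace} applied to the operator $\rho\tensorchi\sigma$ gives $\bigl((\rho\tensorchi\sigma)_{|\chi}\bigr)_{|\zeta}=(\rho\tensorchi\sigma)_{|\zeta}$. Substituting the first part and using that $\sigma_{|\emptyset}$ is a scalar together with the toolbox linearity $(\alpha\rho)_{|\zeta}=\alpha\,\rho_{|\zeta}$ yields $(\rho\tensorchi\sigma)_{|\zeta}=(\rho\,\sigma_{|\emptyset})_{|\zeta}=\rho_{|\zeta}\,\sigma_{|\emptyset}$, as claimed. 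I expect the main obstacle to be the bookkeeping in the change of variables combined with the dual use of $\chi$-consistency; in particular one must resist silently assuming that $\rho$ lives on $\mathcal{H}_{\chi}$ and $\sigma$ on $\mathcal{H}_{\overline{\chi}}$ and instead derive these support statements from consistency, since it is precisely the $\sigma$-support statement that turns the partial diagonal sum into the full trace $\sigma_{|\emptyset}$ appearing in the claim.
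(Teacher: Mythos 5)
Your proof is correct and follows essentially the same route as the paper's: expand $\rho\tensorchi\sigma$ into matrix elements, apply the partial trace termwise, use the contrapositive of $\chi$-consistency to drop the tensor-nonvanishing constraints, factorise, and deduce the second identity from the first via Lem.~\ref{lem:tracetrace}. The only difference is cosmetic bookkeeping: the paper keeps all four indices ranging over $\mathcal{G}$ throughout, so the factorised sums are literally $\rho$ and $\sigma_{|\emptyset}$ and your extra support-pinning step (itself valid) is only needed because you restricted the index sets to $\mathcal{G}_{\chi}$ and $\mathcal{G}_{\overline{\chi}}$.
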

\begin{proof}
[First part]

Notice that $\ket{G} \tensorchi \ket{G'} \neq 0$ implies $\left(\ket{G} \tensorchi \ket{G'}\right)_{\chi } =\ket{G}$ and $\left(\ket{G} \tensorchi \ket{G'}\right)_{\overline{\chi }} =\ket{G'}$.

Similarly if $\ket{G} \tensorchi \ket{G'} \neq 0\neq \ket{H} \tensorchi \ket{H'}$,
\begin{align*}
\left(\ket{G}\bra{H} \tensorchi \ket{G'}\bra{H'}\right)_{\chi } & =\left(\ket{G} \tensorchi \ket{G'}\right)_{\chi }\left(\bra{H} \tensorchi \bra{H'}\right)_{\chi }\\
 & =\ket{G}\bra{H}\\
\left(\ket{G}\bra{H} \tensorchi \ket{G'}\bra{H'}\right)_{\overline{\chi }} & =\left(\ket{G} \tensorchi \ket{G'}\right)_{\overline{\chi }}\left(\bra{H} \tensorchi \bra{H'}\right)_{\overline{\chi }}\\
 & =\ket{G'}\bra{H'}\\
\left(\ket{G}\bra{H} \tensorchi \ket{G'}\bra{H'}\right)_{|\chi } & =\left(\ket{G} \tensorchi \ket{G'}\right)_{\chi }\left(\bra{H} \tensorchi \bra{H'}\right)_{\chi }\left(\bra{H} \tensorchi \bra{H'}\right)_{\overline{\chi }}\left(\ket{G} \tensorchi \ket{G'}\right)_{\overline{\chi }}\\
 & =\ket{G}\bra{H}\braket{H'|G'}
\end{align*}
Next assume $\rho ,\ \sigma $ $\chi $-consistent, i.e. $\rho _{GH} \sigma _{G'H'} \neq 0$ implies $\ket{G} \tensorchi \ket{G'} \neq 0\neq \ket{H} \tensorchi \ket{H'}$.
\begin{align*}
\rho \tensorchi \sigma  & \ =\ \sum _{G,H,G',H'\ \in \ \mathcal{G}} \rho _{GH} \sigma _{G'H'} \ \ket{G}\bra{H} \tensorchi \ket{G'}\bra{H'} \ \\
( \rho \tensorchi \sigma )_{|\chi } & =\ \sum _{ \begin{array}{l}
G,H,G',H'\ \in \ \mathcal{G}\\
\ket{G} \tensorchi \ket{G'} \ \neq \ 0\\
\ket{H} \tensorchi \ket{H'} \ \neq \ 0
\end{array}} \rho _{GH} \sigma _{G'H'} \ \left(\ket{G}\bra{H} \tensorchi \ket{G'}\bra{H'}\right)_{|\chi }\\
 & =\sum _{ \begin{array}{l}
G,H,G',H'\ \in \ \mathcal{G}\\
\ket{G} \tensorchi \ket{G'} \ \neq \ 0\\
\ket{H} \tensorchi \ket{H'} \ \neq \ 0
\end{array}} \rho _{GH} \sigma _{G'H'} \ \ket{G}\bra{H} \ \braket{H'|G'}\\
\text{By consistency } & =\sum _{G,H,G',H'\ \in \ \mathcal{G}} \rho _{GH} \sigma _{G'H'} \ \ket{G}\bra{H} \ \braket{H'|G'}\\
 & =\sum _{G,H\ \in \ \mathcal{G}} \rho _{GH}\ket{G}\bra{H} \ \sum _{G',H'\ \in \ \mathcal{G}} \sigma _{G'H'} \ \braket{H'|G'}\\
 & =\ \rho \ \sigma _{|\emptyset }
\end{align*}
[Second part]

Next assume $\zeta \sqsubseteq \chi $ \ and $\rho ,\sigma $ \ $\chi $-consistent. By Lem. \ref{lem:tracetrace},
\begin{align*}
( \rho \tensorchi \sigma )_{|\zeta } & =(( \rho \tensorchi \sigma )_{|\chi })_{|\zeta }\\
\text{By first part.} & =( \rho \ \sigma _{|\emptyset })_{|\zeta }\\
 & =\rho _{|\zeta } \ \sigma _{|\emptyset }
\end{align*}
\end{proof}

\begin{lema}[Tensor-trace 2]\label{lem:tensortrace2}
If $ [ \chi ,\zeta ] =[\overline{\chi } ,\zeta ] =[ \chi ,\overline{\zeta }] =[\overline{\chi } ,\overline{\zeta }] =0$, and $ \rho ,\sigma $ $ \chi $-consistent,
\begin{equation*}
( \rho \tensorchi \sigma )_{|\zeta } =\ \rho _{|\zeta } \tensorchi \sigma _{|\zeta } \ 
\end{equation*}
\end{lema}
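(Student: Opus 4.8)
The plan is to decompose both operators along $\zeta$, regroup the tensor with Lemma \ref{lem:tensortensor}, and then peel off the $\zeta$-trace with the first part of Lemma \ref{lem:tensortrace1}. First I would use the elementary identity $\ket{G}\bra{H}=\ket{G_{\zeta}}\bra{H_{\zeta}}\tensorzeta\ket{G_{\overline{\zeta}}}\bra{H_{\overline{\zeta}}}$ to write $\rho=\sum_{G,H}\rho_{GH}\,\ket{G_{\zeta}}\bra{H_{\zeta}}\tensorzeta\ket{G_{\overline{\zeta}}}\bra{H_{\overline{\zeta}}}$ and likewise $\sigma=\sum_{G',H'}\sigma_{G'H'}\,\ket{G'_{\zeta}}\bra{H'_{\zeta}}\tensorzeta\ket{G'_{\overline{\zeta}}}\bra{H'_{\overline{\zeta}}}$. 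Forming $\rho\tensorchi\sigma$ and applying, term by term, the commuting-interchange identity $(A\tensorzeta B)\tensorchi(C\tensorzeta D)=(A\tensorchi C)\tensorzeta(B\tensorchi D)$ from Lemma \ref{lem:tensortensor} (whose four commutation hypotheses are exactly the ones assumed here), each summand becomes $\bigl(\ket{G_{\zeta}}\bra{H_{\zeta}}\tensorchi\ket{G'_{\zeta}}\bra{H'_{\zeta}}\bigr)\tensorzeta\bigl(\ket{G_{\overline{\zeta}}}\bra{H_{\overline{\zeta}}}\tensorchi\ket{G'_{\overline{\zeta}}}\bra{H'_{\overline{\zeta}}}\bigr)$.

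Next I would apply $(\cdot)_{|\zeta}$ by linearity to this sum and evaluate each summand with the first part of Lemma \ref{lem:tensortrace1}, which gives $(A\tensorzeta B)_{|\zeta}=A\,B_{|\emptyset}$ for $\zeta$-consistent $A,B$. Writing $\ket{K}:=\ket{G}\tensorchi\ket{G'}$ and $\ket{L}:=\ket{H}\tensorchi\ket{H'}$, the surviving terms collapse to $\bigl(\ket{G_{\zeta}}\bra{H_{\zeta}}\tensorchi\ket{G'_{\zeta}}\bra{H'_{\zeta}}\bigr)\,\braket{L_{\overline{\zeta}}|K_{\overline{\zeta}}}$, and splitting the inner product through Lemma \ref{lem:tensorbracket} gives $\braket{L_{\overline{\zeta}}|K_{\overline{\zeta}}}=\braket{H_{\overline{\zeta}}|G_{\overline{\zeta}}}\braket{H'_{\overline{\zeta}}|G'_{\overline{\zeta}}}$. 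Finally, since $\rho_{|\zeta}=\sum_{G,H}\rho_{GH}\ket{G_{\zeta}}\bra{H_{\zeta}}\braket{H_{\overline{\zeta}}|G_{\overline{\zeta}}}$ and similarly for $\sigma_{|\zeta}$, I would recognize the resulting double sum, via Lemma \ref{lem:tensor}, as precisely $\rho_{|\zeta}\tensorchi\sigma_{|\zeta}$ (the same $\chi$-consistency that controls the left-hand side zeroes out exactly the same terms on the right).

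The crux, and the only place genuine work is needed, is the consistency bookkeeping required to legitimately invoke the two Lemmas. I am handed $\chi$-consistency of $\rho,\sigma$, i.e. $\rho_{GH}\sigma_{G'H'}\neq0$ forces $\ket{G}\tensorchi\ket{G'}=\ket{K}\neq0$ and $\ket{H}\tensorchi\ket{H'}=\ket{L}\neq0$; but to proceed I need (i) the $\zeta$- and $\overline{\zeta}$-restricted pieces to still tensor consistently, and (ii) the pair $A=\ket{K_{\zeta}}\bra{L_{\zeta}}$, $B=\ket{K_{\overline{\zeta}}}\bra{L_{\overline{\zeta}}}$ to be $\zeta$-consistent. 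Both are where the commutators are consumed: from $K_{\chi}=G$ the identity $(K_{\zeta})_{\chi}=K_{\zeta\chi}=K_{\chi\zeta}=(K_{\chi})_{\zeta}=G_{\zeta}$, and symmetrically $(K_{\zeta})_{\overline{\chi}}=G'_{\zeta}$, $(K_{\overline{\zeta}})_{\chi}=G_{\overline{\zeta}}$, $(K_{\overline{\zeta}})_{\overline{\chi}}=G'_{\overline{\zeta}}$, yield $\ket{G_{\zeta}}\tensorchi\ket{G'_{\zeta}}=\ket{K_{\zeta}}\neq0$ and $\ket{G_{\overline{\zeta}}}\tensorchi\ket{G'_{\overline{\zeta}}}=\ket{K_{\overline{\zeta}}}\neq0$, and the analogous statements for $L$; the $\zeta$-consistency of $A,B$ is then immediate from $\ket{K_{\zeta}}\tensorzeta\ket{K_{\overline{\zeta}}}=\ket{K}\neq0$.

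I expect this propagation of consistency through the restriction to be the main obstacle, since it is precisely the step that uses all four hypotheses $[\chi,\zeta]=[\overline{\chi},\zeta]=[\chi,\overline{\zeta}]=[\overline{\chi},\overline{\zeta}]=0$ (each commutator servicing one of the four regions $\chi\zeta,\chi\overline{\zeta},\overline{\chi}\zeta,\overline{\chi}\overline{\zeta}$), together with idempotency from Lemma \ref{lem:restrictions} to keep the restricted graphs in range; once it is in place, everything else is the same bilinear bookkeeping already carried out in Lemma \ref{lem:tensortrace1}.
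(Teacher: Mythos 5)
Your proof is correct, but it follows a more modular route than the paper's. The paper proves this lemma by a direct index computation: it first collapses $\rho\tensorchi\sigma$ to a single sum $\sum_{G,H}\rho_{G_{\chi}H_{\chi}}\sigma_{G_{\overline{\chi}}H_{\overline{\chi}}}\ket{G}\bra{H}$ via Lemma \ref{lem:tensor}, applies the defining formula of the partial trace, and only then splits $\ket{G_{\zeta}}\bra{H_{\zeta}}$ and $\braket{H_{\overline{\zeta}}|G_{\overline{\zeta}}}$ along $\chi$ using Lemma \ref{lem:tensorbracket} together with the commutation hypotheses, before regrouping. You instead decompose $\rho$ and $\sigma$ along $\zeta$ first, invoke the interchange corollary of Lemma \ref{lem:tensortensor} to turn $(\cdot\tensorzeta\cdot)\tensorchi(\cdot\tensorzeta\cdot)$ into $(\cdot\tensorchi\cdot)\tensorzeta(\cdot\tensorchi\cdot)$, and then discharge the trace with the first part of Lemma \ref{lem:tensortrace1} --- neither of which is used in the paper's own proof of this statement. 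At bottom both arguments perform the same four-quadrant decomposition into $\chi\zeta,\chi\overline{\zeta},\overline{\chi}\zeta,\overline{\chi}\overline{\zeta}$, and your final re-indexing (dropping the nonzero-tensor constraints by $\chi$-consistency of $\rho,\sigma$) coincides with the paper's last lines. What your route buys is reusability: the lemma becomes a formal consequence of two already-established compositionality laws, at the price of having to verify the $\zeta$-consistency side condition of Lemma \ref{lem:tensortrace1} for each rank-one pair $\ket{K_{\zeta}}\bra{L_{\zeta}}$, $\ket{K_{\overline{\zeta}}}\bra{L_{\overline{\zeta}}}$ --- which you do correctly via $\ket{K_{\zeta}}\tensorzeta\ket{K_{\overline{\zeta}}}=\ket{K}\neq0$ and the commutator bookkeeping $K_{\zeta\chi}=K_{\chi\zeta}=G_{\zeta}$, exactly where the paper also consumes its four hypotheses. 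The paper's version avoids that side condition by working directly with the trace formula, but in exchange it effectively re-derives the interchange law by hand.
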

\begin{proof}
Assume $\rho ,\ \sigma $ $\chi $-consistent, i.e. $\rho _{GH} \sigma _{G'H'} \neq 0$ implies $\ket{G} \tensorchi \ket{G'} \neq 0\neq \ket{H} \tensorchi \ket{H'}$.
\begin{align*}
\rho \tensorchi \sigma  & =\ \sum _{G ,H, G' ,H' \in \mathcal{G}} \rho _{GH} \sigma _{G'H'} \ \ket{G}\bra{H} \tensorchi \ket{G'}\bra{H'}\\
\text{By Lem. \ref{lem:tensor}} & =\sum _{G ,H \in \mathcal{G}} \rho _{G_{\chi } H_{\chi }} \sigma _{G_{\overline{\chi }} H_{\overline{\chi }}} \ \ket{G}\bra{H}\\
( \rho \tensorchi \sigma )_{|\zeta } & =\sum _{G ,H \in \mathcal{G}} \rho _{G_{\chi } H_{\chi }} \sigma _{G_{\overline{\chi }} H_{\overline{\chi }}} \ \ket{G_{\zeta }}\bra{H_{\zeta }} \ \braket{H_{\overline{\zeta }}|G_{\overline{\zeta }}}\\
\text{By Lem. \ref{lem:tensorbracket}} & =\sum _{G ,H \in \mathcal{G}} \rho _{G_{\chi } H_{\chi }} \sigma _{G_{\overline{\chi }} H_{\overline{\chi }}} \ \ket{G_{\zeta \chi }}\bra{H_{\zeta \chi }} \tensorchi \ket{G_{\zeta \overline{\chi }}}\bra{H_{\zeta \overline{\chi }}} \ \braket{H_{\overline{\zeta } \chi }|G_{\overline{\zeta } \chi }}\braket{H_{\overline{\zeta }\overline{\chi }}|G_{\overline{\zeta }\overline{\chi }}}\\
     & =\sum _{G ,H \in \mathcal{G}} \rho _{G_{\chi } H_{\chi }}\ket{G_{\zeta \chi }}\bra{H_{\zeta \chi }} \ \braket{H_{\overline{\zeta } \chi }|G_{\overline{\zeta } \chi }} \tensorchi \sigma _{G_{\overline{\chi }} H_{\overline{\chi }}}\ket{G_{\zeta \overline{\chi }}}\bra{H_{\zeta \overline{\chi }}} \ \braket{H_{\overline{\zeta }\overline{\chi }}|G_{\overline{\zeta }\overline{\chi }}}\\
\text{By commut. } & =\sum _{G ,H \in \mathcal{G}} \rho _{G_{\chi } H_{\chi }}\ket{G_{\chi \zeta }}\bra{H_{\chi \zeta }} \ \braket{H_{\chi \overline{\zeta }}|G_{\chi \overline{\zeta }}} \tensorchi \sigma _{G_{\overline{\chi }} H_{\overline{\chi }}}\ket{G_{\overline{\chi } \zeta }}\bra{H_{\overline{\chi } \zeta }} \ \braket{H_{\overline{\chi }\overline{\zeta }}|G_{\overline{\chi }\overline{\zeta }}}\\
 & =\sum _{G ,H \in \mathcal{G}} \ \left( \rho _{G_{\chi } H_{\chi }}\ket{G_{\chi }}\bra{H_{\chi }}\right)_{|\zeta } \tensorchi \left( \sigma _{G_{\overline{\chi }} H_{\overline{\chi }}}\ket{G_{\overline{\chi }}}\bra{H_{\overline{\chi }}}\right)_{|\zeta } \ \\
 & =\sum _{ \begin{array}{l}
G ,H,G',H' \in \mathcal{G}\\
\ket{G} \tensorchi \ket{G'} \ \neq \ 0\\
\ket{H} \tensorchi \ket{H'} \ \neq \ 0
\end{array}} \ \left( \rho _{G H}\ket{G}\bra{H}\right)_{|\zeta } \tensorchi \left( \sigma _{G' H'}\ket{G'}\bra{H'}\right)_{|\zeta } \ \\
\text{By consistency} & =\sum _{G ,H,G',H' \in \mathcal{G}} \ \left( \rho _{G H}\ket{G}\bra{H}\right)_{|\zeta } \tensorchi \left( \sigma _{G' H'}\ket{G'}\bra{H'}\right)_{|\zeta } \ \\
 & =\ \rho _{|\zeta } \tensorchi \sigma _{|\zeta } \ 
\end{align*}
\end{proof}

\begin{lema}[Interchange laws]\label{lem:interchangelaws}
$( A\tensorchi I)\ket{G} =A\ket{G_{\chi }} \tensorchi \ket{G_{\overline{\chi }}}$.\\
If $ A$ $ \chi $-consistent-preserving or $ B$ $ \overline{\chi }$-consistent-preserving, $ ( A\tensorchi I)( I\tensorchi B) =( A\tensorchi B)$.\\
If $ A$, $ A'$ $ \chi $-consistent-preserving, $ ( A'\tensorchi I)( A\tensorchi I) =( A'A\tensorchi I)$.\\
If $ A$, $ A'$ $ \chi $-consistent-preserving, $ A'A$ is $ \chi $-consistent-preserving.\\
If moreover $ B$, $ B'$ are $ \chi $-consistent-preserving, $( A'\tensorchi B')( A\tensorchi B) =\ A'A\tensorchi B'B$.
\end{lema}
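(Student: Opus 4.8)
The plan is to prove all five identities by comparing matrix elements $\bra{H}(\cdot)\ket{G}$, expanding each $\tensorchi$ through Lem.~\ref{lem:tensor} and the defining case split of the tensor. In every case the two sides coincide except for terms that one side forces to zero because some $\ket{\cdot}\tensorchi\ket{\cdot}$ vanishes, and the sole job of the consistency-preservation hypotheses is to certify that such terms already carried a zero coefficient. Two bookkeeping facts recur: (i) if $A$ is $\chi$-consistent-preserving then $\bra{M}A\ket{G_{\chi}}\neq 0$ forces $M\in\mathcal{G}_{\chi}$ --- apply the forward clause of Def.~\ref{def:tensors} noting $(G_{\chi})_{\overline{\chi}}=\emptyset$ by idempotency (Lem.~\ref{lem:restrictions}), so $\ket{M}\tensorchi\ket{\emptyset}\neq 0$; and (ii) the adjoint clause turns $\bra{H_{\chi}}A\ket{G_{\chi}}\neq0$, rewritten as $\bra{G_{\chi}}A^{\dagger}\ket{H_{\chi}}\neq0$, into $\ket{G_{\chi}}\tensorchi\ket{H_{\overline{\chi}}}\neq0$. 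Part~1 needs no hypothesis: expanding $A\tensorchi I$ and applying it to $\ket{G}$ yields $\sum_{K:\,K_{\overline{\chi}}=G_{\overline{\chi}}}\bra{K_{\chi}}A\ket{G_{\chi}}\ket{K}$, which is precisely the expansion of $A\ket{G_{\chi}}\tensorchi\ket{G_{\overline{\chi}}}$ once the tensor discards the terms with $\ket{K_{\chi}}\tensorchi\ket{G_{\overline{\chi}}}=0$.

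For Part~2 I would insert $\sum_{K}\ket{K}\bra{K}$ into $(A\tensorchi I)(I\tensorchi B)$; the deltas from Lem.~\ref{lem:tensor} pin $K_{\chi}=G_{\chi}$ and $K_{\overline{\chi}}=H_{\overline{\chi}}$, so the matrix element equals $A_{H_{\chi}G_{\chi}}B_{H_{\overline{\chi}}G_{\overline{\chi}}}$ when $\ket{G_{\chi}}\tensorchi\ket{H_{\overline{\chi}}}\neq0$ and $0$ otherwise, whereas $A\tensorchi B$ has the same coefficient with no constraint. Equality follows once a nonzero coefficient is shown to force $\ket{G_{\chi}}\tensorchi\ket{H_{\overline{\chi}}}\neq0$: fact~(ii) does this from $A_{H_{\chi}G_{\chi}}\neq0$ when $A$ is $\chi$-consistent-preserving, and the forward clause does it from $B_{H_{\overline{\chi}}G_{\overline{\chi}}}\neq0$ when $B$ is $\overline{\chi}$-consistent-preserving --- covering both stated alternatives.

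Parts~3 and~4 chain the same mechanism. Inserting the identity into $(A'\tensorchi I)(A\tensorchi I)$ forces $H_{\overline{\chi}}=G_{\overline{\chi}}$ and leaves an inner sum over intermediate $\chi$-parts; fact~(i) (for $A$) collapses that sum onto $\mathcal{G}_{\chi}$, matching the resolution of identity inside $\bra{H_{\chi}}A'A\ket{G_{\chi}}$, while fact~(ii) (for $A'$) shows the residual constraint $\ket{M}\tensorchi\ket{H_{\overline{\chi}}}\neq0$ deletes only zero terms, giving $A'A\tensorchi I$. Part~4 stays at the level of Def.~\ref{def:tensors}: if $\bra{H}A'A\ket{G_{\chi}}\neq0$, pick an intermediate $M$ with both $\bra{M}A\ket{G_{\chi}}\neq0$ and $\bra{H}A'\ket{M}\neq0$; the forward clause for $A$ gives $\ket{M}\tensorchi\ket{G_{\overline{\chi}}}\neq0$, so $K:=M\cup G_{\overline{\chi}}$ is a graph with $K_{\chi}=M$, and the forward clause for $A'$ applied to $K$ then yields $\ket{H}\tensorchi\ket{G_{\overline{\chi}}}\neq0$; the adjoint requirement comes from the identical argument run on $A^{\dagger},A'^{\dagger}$.

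For the full interchange I would deliberately avoid the direct four-index computation, whose sticking point is that one would need consistency of $(M,G_{\overline{\chi}})$ and of $(G_{\chi},N)$ to imply consistency of $(M,N)$, which is not forced by the axioms. Instead I reduce to the one-sided results: by Part~2 write $A\tensorchi B=(A\tensorchi I)(I\tensorchi B)$ and likewise for $A'\tensorchi B'$, regroup, and commute $(I\tensorchi B')$ past $(A\tensorchi I)$ --- both orders equal $A\tensorchi B'$ by Part~2 since $A$ is $\chi$-consistent-preserving. The product becomes $(A'\tensorchi I)(A\tensorchi I)(I\tensorchi B')(I\tensorchi B)$, where Part~3 gives $(A'\tensorchi I)(A\tensorchi I)=A'A\tensorchi I$ and its $\overline{\chi}$-mirror gives $(I\tensorchi B')(I\tensorchi B)=I\tensorchi B'B$, and a final Part~2, licensed by Part~4 (so that $A'A$ is $\chi$-consistent-preserving), recombines them into $A'A\tensorchi B'B$. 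I expect this step to be the main obstacle, since it is exactly where the two sides' zeroings might disagree; the factorisation is the device that confines all such reasoning to the established one-sided lemmas, and it also makes plain that the operative hypotheses are $A,A'$ $\chi$-consistent-preserving and $B,B'$ $\overline{\chi}$-consistent-preserving, the latter entering solely through the $I\tensorchi B$ factors.
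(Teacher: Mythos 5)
Your proposal is correct and follows essentially the same route as the paper: parts 1--4 by direct matrix-element expansion with the consistency-preservation clauses certifying that the terms zeroed by the intermediate tensor already carry zero coefficients, and part 5 by exactly the factorisation $(A'\tensorchi B')(A\tensorchi B)=(A'\tensorchi I)(A\tensorchi I)(I\tensorchi B')(I\tensorchi B)$ that the paper uses to reduce the full interchange to the one-sided cases. Your observation that the $B,B'$ hypotheses are most naturally read as $\overline{\chi}$-consistency-preservation, entering only through the $I\tensorchi B$ factors, is a fair and slightly sharper reading of the (typo-ridden) fifth clause.
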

\begin{proof}

[First part]
\begin{align*}
A\tensorchi I & =\sum _{G' ,H ,K\ \in \ \mathcal{G}} A_{G' H} \ \ket{G'}\bra{H} \tensorchi \ket{K}\bra{K}\\
 & =\sum _{G' ,H ,K\ \in \ \mathcal{G}} A_{G' H} \ \left(\ket{G'} \tensorchi \ket{K}\right)\left(\bra{H} \tensorchi \bra{K}\right)\\
( A\tensorchi I)\ket{G} & =\sum _{G' ,H ,K\ \in \ \mathcal{G}} A_{G' H} \ \left(\ket{G'} \tensorchi \ket{K}\right)\left(\bra{H} \tensorchi \bra{K}\right)\left(\ket{G_{\chi }} \tensorchi \ket{G_{\overline{\chi }}}\right)\\
 & =\sum _{G'\ \in \ \mathcal{G}} A_{G'G_{\chi }} \ \left(\ket{G'} \tensorchi \ket{G_{\overline{\chi }}}\right)\\
 & =\left( A\ket{G_{\chi }}\right) \tensorchi \ket{G_{\overline{\chi }}}
\end{align*}

[Second part]
\begin{align*}
I\tensorchi B & =\sum _{ \begin{array}{l}
G^{( 1)} ,H^{( 1)} ,L\ \in \ \mathcal{G}\\
\ket{L} \tensorchi \ket{G^{( 1)}} \ \neq \ 0\\
\ket{L} \tensorchi \ket{H^{( 1)}} \ \neq \ 0
\end{array}} B_{G^{( 1)} H^{( 1)}} \ \left(\ket{L} \tensorchi \ket{G^{( 1)}}\right)\left(\bra{L} \tensorchi \bra{H^{( 1)}}\right)
\end{align*}
\begin{align*}
( A\tensorchi I)( I\tensorchi B) & =\sum _{ \begin{array}{l}
G^{( 0)},\ L=H^{( 0)},\ K=G^{( 1)},\ H^{( 1)}\ \in\mathcal{G}\\
\ket{G^{( 0)}} \tensorchi \ket{K} \ \neq \ 0\\
\ket{H^{( 0)}} \tensorchi \ket{K} \ \neq \ 0\\
\ket{L} \tensorchi \ket{G^{( 1)}} \ \neq \ 0\\
\ket{L} \tensorchi \ket{H^{( 1)}} \ \neq \ 0
\end{array}} \!\!\!\!\!\!\!\!\!\!\!\!\!\!\!\!\!\!\!\!\! A_{G^{( 0)} H^{( 0)}} B_{G^{( 1)} H^{( 1)}}\ket{G^{( 0)}}\bra{H^{( 0)}} \tensorchi \ket{G^{( 1)}}\bra{H^{( 1)}}\\
 & =\sum _{ \begin{array}{l}
G^{( 0)},\ L=H^{( 0)},\ K=G^{( 1)},\ H^{( 1)} \ \in \mathcal{G}\\
\ket{G^{( 0)}} \tensorchi \ket{G^{( 1)}} \ \neq \ 0\\
\ket{H^{( 0)}} \tensorchi \ket{G^{( 1)}} \ \neq \ 0\\
\ket{H^{( 0)}} \tensorchi \ket{G^{( 1)}} \ \neq \ 0\\
\ket{H^{( 0)}} \tensorchi \ket{H^{( 1)}} \ \neq \ 0
\end{array}} \!\!\!\!\!\!\!\!\!\!\!\!\!\!\!\!\!\!\!\!\! A_{G^{( 0)} H^{( 0)}} B_{G^{( 1)} H^{( 1)}}\ket{G^{( 0)}}\bra{H^{( 0)}} \tensorchi \ket{G^{( 1)}}\bra{H^{( 1)}}\\
 & =\sum _{ \begin{array}{l}
G^{( 0)},\ H^{( 0)},\ G^{( 1)},\ H^{( 1)} \ \in\mathcal{G}\\
\ket{H^{( 0)}} \tensorchi \ket{G^{( 1)}} \ \neq \ 0
\end{array}} A_{G^{( 0)} H^{( 0)}} B_{G^{( 1)} H^{( 1)}}\ket{G^{( 0)}}\bra{H^{( 0)}} \tensorchi \ket{G^{( 1)}}\bra{H^{( 1)}}\\
\text{By consist. preserv.} \  & =\sum _{G^{( 0)} ,H^{( 0)} ,G^{( 1)} ,H^{( 1)} \ \in \ \mathcal{G}} A_{G^{( 0)} H^{( 0)}} B_{G^{( 1)} H^{( 1)}} \ \ket{G^{( 0)}}\bra{H^{( 0)}} \tensorchi \ket{G^{( 1)}}\bra{H^{( 1)}}\\
 & =A\tensorchi B
\end{align*}

[Third part]
\begin{align*}
A'\tensorchi I & =\sum _{ \begin{array}{l}
G^{\prime ( 0)} ,H^{\prime ( 0)} ,K\ \in \mathcal{G}\\
\ket{G^{\prime ( 0)}} \tensorchi \ket{K} \ \neq \ 0\\
\ket{H^{\prime ( 0)}} \tensorchi \ket{K} \ \neq \ 0
\end{array}} A'_{G^{\prime ( 0)} H^{\prime ( 0)}} \ \left(\ket{G^{\prime ( 0)}} \tensorchi \ket{K}\right)\left(\bra{H^{\prime ( 0)}} \tensorchi \bra{K}\right)\\
A\tensorchi I & =\sum _{ \begin{array}{l}
G^{( 0)} ,H^{( 0)} ,L\ \in \mathcal{G}\\
\ket{G^{( 0)}} \tensorchi \ket{L} \ \neq \ 0\\
\ket{H^{( 0)}} \tensorchi \ket{L} \ \neq \ 0
\end{array}} A_{G^{( 0)} H^{( 0)}} \ \left(\ket{G^{( 0)}} \tensorchi \ket{L}\right)\left(\bra{H^{( 0)}} \tensorchi \bra{L}\right)
\end{align*}
\begin{align*}
( A'\tensorchi I)( A\tensorchi I) & =\sum _{ \begin{array}{l}
G^{\prime ( 0)} ,H^{\prime ( 0)} =G^{( 0)} ,H^{( 0)} ,K=L\ \in \ \mathcal{G}\\
\ket{G^{'( 0)}} \tensorchi \ket{K} \ \neq \ 0\\
\ket{H^{'( 0)}} \tensorchi \ket{K} \ \neq \ 0\\
\ket{G^{( 0)}} \tensorchi \ket{L} \ \neq \ 0\\
\ket{H^{( 0)}} \tensorchi \ket{L} \ \neq \ 0
\end{array}} \!\!\!\!\!\!\!\!\!\!\!\!\!\!\!\!\!\!\!\!\!\!\!\!\!\!\!\!\!\!\!\!\! A'_{G^{\prime ( 0)} H^{\prime ( 0)}} A_{H^{( 0)} H^{'( 0)}}\left(\ket{G^{\prime ( 0)}} \tensorchi \ket{K}\right)\left(\bra{H^{( 0)}} \tensorchi \bra{K}\right)\\
 & =\sum _{ \begin{array}{l}
G^{( 0)} ,H^{( 0)} ,H^{\prime ( 0)} ,\ K\ \in \mathcal{G}\\
\ket{G^{\prime ( 0)}} \tensorchi \ket{K} \ \neq \ 0\\
\ket{H^{\prime ( 0)}}\tensorchi \ket{K}\ \neq \ 0\\
\ket{H^{( 0)}} \tensorchi \ket{K} \ \neq \ 0
\end{array}} A'_{G^{\prime ( 0)} H^{\prime ( 0)}} A_{H^{\prime ( 0)} H^{( 0)}}\left(\ket{G^{\prime ( 0)}}\bra{H^{( 0)}} \tensorchi \ket{K}\bra{K}\right)\\
 & =\sum _{ \begin{array}{l}
G^{\prime ( 0)} ,H^{\prime ( 0)} ,H^{( 0)} ,\ K\ \in \ \mathcal{G}\\
\ket{H^{\prime ( 0)}}\tensorchi \ket{K}\ \neq \ 0
\end{array}} A'_{G^{\prime ( 0)} H^{\prime ( 0)}} A_{H^{\prime ( 0)} H^{( 0)}}\left(\ket{G^{\prime ( 0)}}\bra{H^{( 0)}} \tensorchi \ket{K}\bra{K}\right)\\
\text{By consist. preserv.} \  & =\sum _{G^{\prime ( 0)} ,H^{\prime ( 0)} ,H^{( 0)} ,\ K\ \in \ \mathcal{G}} A'_{G^{\prime ( 0)} H^{\prime ( 0)}} A_{H^{\prime ( 0)} H^{( 0)}}\left(\ket{G^{\prime ( 0)}}\bra{H^{( 0)}} \tensorchi \ket{K}\bra{K}\right)\\
 & =\ ( A'A\tensorchi I)
\end{align*}
When $A'_{G^{\prime ( 0)} H^{\prime ( 0)}} \neq 0$ \ and $A_{H^{\prime ( 0)} H^{( 0)}} \neq 0$, $\ket{H^{\prime ( 0)}}\tensorchi \ket{K}\ \neq \ 0$ is entailed by $A'$ $\chi $-consistent-preserving and $\ket{G^{\prime ( 0)}} \tensorchi \ket{K} \ \neq \ 0$, or by $A$ $\chi $-consistent-preserving and $\ket{H^{( 0)}} \tensorchi \ket{K} \ \neq \ 0$.

[Fourth part]
\begin{align*}
\text{Say \ } & \bra{H} A'A\ket{G_{\chi }} \neq 0\\
\Leftrightarrow  & \sum _{K}\bra{H} A'\ket{K}\bra{K} A\ket{G_{\chi }} \neq 0\\
\Rightarrow  & \exists K,\bra{H} A'\ket{K}\bra{K} A\ket{G_{\chi }} \neq 0\\
\Rightarrow  & \exists K,\ \bra{H} A'\ket{K} \neq 0\ \text{and} \ \bra{K} A\ket{G_{\chi }} \neq 0\\
\text{By consist. pres.} \quad \Rightarrow  & \exists K,\ \bra{H} A'\ket{K} \neq 0\ \text{and} \ \ket{K} \tensorchi \ket{G_{\overline{\chi }}} \neq 0\\
\text{By consist. pres.} \ quad \Rightarrow  & \ket{H} \tensorchi \ket{G_{\overline{\chi }}} \neq 0
\end{align*}
Similarly for $\bra{H}( A'A)^{\dagger }\ket{G_{\chi }} \neq 0$.

[Fifth part] \ 
\begin{align*}
( A'\tensorchi B')( A\tensorchi B) & =( I\tensorchi B')( A'\tensorchi I)( A\tensorchi I)( I\tensorchi B)\\
 & =( I\tensorchi B')( A'A\tensorchi I)( I\tensorchi B)\\
 & =( A'A\tensorchi B')( I\tensorchi B)\\
 & =( A'A\tensorchi I)( I\tensorchi B')( I\tensorchi B)\\
 & =( A'A\tensorchi I)( I\tensorchi B'B)\\
 & =( A'A\tensorchi B'B)
\end{align*}

\end{proof}

\section{Renaming-invariance}\label{sec:renaminginvariance}

As always, in order to implement a symmetry $R:\rho \mapsto R\rho R^{\dagger }$ in a quantum theory, one can symmetrize states, i.e. demanding $[ R,\rho ] =0$, so that 
\begin{equation*}
\left( A R\rho R^{\dagger }\right)_{|\emptyset } =\ \left( A \rho RR^{\dagger }\right)_{|\emptyset } =( A \rho )_{|\emptyset }.
\end{equation*}
But one can also symmetrize observables, i.e. demanding $RA =A R$, so that 
\begin{equation*}
\left( A R\rho R^{\dagger }\right)_{|\emptyset } =\ \left( RA \rho R^{\dagger }\right)_{|\emptyset } =\ \left( R^{\dagger } RA \rho \right)_{|\emptyset } =( A_{v} \rho )_{|\emptyset }
\end{equation*}
The first option was that taken for name-preservation, in order to obtain Prop. \ref{prop:npcomprehension}.

The second option is that taken for renaming-invariance in Def. \ref{def:renaminginvariance}, because it is more expressive. For instance, think of $A$ as an observable asking the question whether vertex $v$ is connected or isolated. The question would make no sense on a rename-invariant state $\rho $, because the question itself is not rename-invariant. Still we can make the question rename-invariant by parametrizing it by $v$ in a way that $RA_{v} =A_{R( v)} R$, and letting it transform according to $R:A_{v} \mapsto A_{R( v)}$. Then the question make sense on a generic $\rho $, whilst maintaining name-invariance:
\begin{equation*}
\left( A_{Rv} R\rho R^{\dagger }\right)_{|\emptyset } =\ \left( RA_{v} \rho R^{\dagger }\right)_{|\emptyset } =\ \left( R^{\dagger } RA_{v} \rho \right)_{|\emptyset } =( A_{v} \rho )_{|\emptyset }
\end{equation*}
Here are a few helpful facts to help us tame the notion of renaming.

\begin{lema}[Inverse renamings]\label{lem:inverserenamings}
Let $ R:\mathcal{V}\rightarrow \mathcal{V}$ be a homomorphism of the name algebra. \\
The test condition that $ R( x) .t=R( y) .t'$ implies $ x=y$ and $ t=t'$, is equivalent to injectivity.\\
If $ R$ is a renaming, so is $ R^{-1}$.\\
If $ A_{v}$ is renaming-invariant, so is $ A_{v}^{\dagger }$.
\end{lema}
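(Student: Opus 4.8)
The plan is to treat the three assertions in turn, since they build on one another: the characterisation of injectivity is a statement about the name algebra alone; the invertibility of renamings then uses it to upgrade a set-bijection into an algebra-isomorphism; and the adjoint statement finally uses that invertibility to deduce unitarity on $\mathcal{H}$. Throughout I will lean on the grammar and equality theory of Def. \ref{def:namealgebra} and on the homomorphism laws $R(u.t)=R(u).t$, $R(u\lor v)=R(u)\lor R(v)$.

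For the first assertion I would prove the two implications separately. The easy direction is injectivity $\Rightarrow$ test condition: since $R$ is a homomorphism, $R(x).t=R(x.t)$ and $R(y).t'=R(y.t')$, so the hypothesis $R(x).t=R(y).t'$ reads $R(x.t)=R(y.t')$, and injectivity gives $x.t=y.t'$; because here $x,y$ are generators (elements of $\mathbb{Z}^{*}$), the rigidity of the algebra on atoms—the same phenomenon recorded in Eq. \eqref{eq:wellnamedness}—forces $x=y$ and $t=t'$. For the converse, test condition $\Rightarrow$ injectivity, I would argue on normal forms: every term of $\mathcal{V}$ reduces, using the four relations of Def. \ref{def:namealgebra}, to a join $\bigvee_i g_i.t_i$ of generator-rooted leaves, and a homomorphism sends it to $\bigvee_i R(g_i).t_i$; the test condition is precisely what guarantees that the individual images $R(g_i).t_i$ stay pairwise distinguishable, so that the join decomposition—and hence the original term—can be recovered from its image. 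This normal-form bookkeeping is the \emph{main obstacle}, since one must verify that distinct leaves never collapse and that the relation $u.l\lor u.r=u$ introduces no spurious coincidences.

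The second assertion is routine once the homomorphism laws are available. Given $u\in\mathcal{V}$ write $a:=R^{-1}(u)$, so that $u=R(a)$; then $u.t=R(a).t=R(a.t)$, and applying $R^{-1}$ yields $R^{-1}(u.t)=a.t=R^{-1}(u).t$. The same substitution with $a:=R^{-1}(u)$, $b:=R^{-1}(v)$ gives $u\lor v=R(a)\lor R(b)=R(a\lor b)$, whence $R^{-1}(u\lor v)=R^{-1}(u)\lor R^{-1}(v)$, and the sign law $R^{-1}(\hyphenbullet x)=\hyphenbullet R^{-1}(x)$ follows identically. As $R^{-1}$ is already a bijection, these identities make it a renaming.

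For the third assertion I would first observe that a renaming permutes the canonical basis $\mathcal{G}$ of $\mathcal{H}$—it is exactly here that the second assertion is needed, to guarantee the induced map on graphs is invertible—so $R$ acts unitarily and $R^{\dagger}=R^{-1}$. Fixing a renaming $R$ and a parameter $v$, I take the adjoint of the invariance identity $RA_{v}=A_{R(v)}R$ to obtain $A_{v}^{\dagger}R^{\dagger}=R^{\dagger}A_{R(v)}^{\dagger}$, substitute $R^{\dagger}=R^{-1}$, and then left- and right-multiply by $R$ to reach $RA_{v}^{\dagger}=A_{R(v)}^{\dagger}R$, which is precisely renaming-invariance of $A_{v}^{\dagger}$. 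No further obstacle arises here beyond recording the unitarity of renamings.
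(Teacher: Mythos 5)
Your proof is correct, and for the second and third assertions it coincides with the paper's argument step for step: the inverse-homomorphism identities are obtained by the same substitution $u=R(a)$, and the adjoint statement by the same conjugation of $RA_{v}=A_{R(v)}R$ by the unitary $R$ --- you are merely more explicit than the paper in recording that a renaming permutes the basis $\mathcal{G}$ and hence satisfies $R^{\dagger}=R^{-1}$. The one place where you genuinely diverge is the converse half of the injectivity characterisation. You propose reconstructing an arbitrary term from the image of its normal form $\bigvee_i g_i.t_i$, which requires existence and uniqueness of normal forms for the whole equational theory (in particular that $u.l\lor u.r=u$ creates no spurious collapses) --- the ``main obstacle'' you flag but do not discharge. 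The paper takes a lighter route: given $u\neq v$ it produces a single word $s\in\{l,r\}^{*}$ such that $u.s=x.t$ and $v.s=y.t'$ are distinct generator-rooted leaves, applies the test condition once to get $R(u).s\neq R(v).s$, and concludes $R(u)\neq R(v)$. This separation-by-a-common-suffix argument needs only the ability to drill down to one distinguishing leaf, not full reconstructibility of the term, so it sidesteps most of the bookkeeping you anticipate. Both arguments ultimately rest on the same unproved rigidity fact that $x.t=y.t'$ forces $x=y$ and $t=t'$ for generators; note, though, that Eq.~\eqref{eq:wellnamedness}, which you cite for this, is a well-formedness condition imposed on graphs rather than a property of the name algebra itself, so it cannot be invoked as the justification.
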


\begin{proof}
~[Injectivity condition]

Notice that $R( x) .t=R( y) .t'$ \ is equivalent to $R( x.t) =R( y.t')$ and that $x.t=y.t'$ is equivalent to $x=y$ and $t=t'$.\\ 
Thus the injectivity of $R$ implies the test condition.

Conversely assume the test condition is satisfied. \\
$u\neq v$ implies there exists $s$ such that $u.s=x.t\neq y.t'=v.s$. \\
Then, $R( u) .s=R( u.s) =R( x.t) =R( x) .t\neq R( y) .t'=R( y.t') =R( v.s) =R( v) .s$.\\
Thus $R( u) \neq R( v)$.\\
Thus $R$ is injective.

[Inverse renaming]

Let $R$ be a renaming. We need to check that $R^{-1}$ is a homomorphism of the name algebra.\\
For any $u',v'$ take $u,v$ such that $u'=R( u)$ and $v'=R( v)$.\\
$R^{-1}( u'.t) =R{^{\ }}^{-1}( R( u) .t) =R{^{\ }}^{-1}( R( u.t)) =u.t=R^{-1}( u') .t$\\
$R^{-1}( u'\lor v') =R{^{\ }}^{-1}( R( u) \lor R( v)) =R{^{\ }}^{-1}( R( u\lor v)) =u\lor v=R^{-1}( u') \lor R^{-1}( v')$.

[Adjoint renaming-invariance]

$RA_{v}^{\dagger } =\left( A_{v} R^{\dagger }\right)^{\dagger } =\left( R^{\ \dagger } A_{R( v)}\right)^{\dagger } =A_{R( v)}^{\dagger } R$.
\end{proof}

The renaming-invariant operator $A_{v} :=\ket{\varnothing }\bra{\{0.v\}}$ may destroy name $v$. The renaming-invariant operator $A_{v}^{\dagger }$ may create it. But this is only because they are parameterized by $v$. Other than that, renaming-invariant operators preserve support up to $\pm $.

\begin{proposition}[Renaming-invariance implies $\pm$-name-preservation]\label{prop:renaminginvarianceimpliesnp}
We define the $ \pm $-vertices of $ G$ to be $ V^{\pm }( G) :=V( G) \cup \{\hyphenbullet v\ |\ \sigma .v\in G\}$.

Let $ A_{v}$ be a renaming-invariant operator over $ \mathcal{H}$, parameterized by $ v\subseteq \mathcal{V}$. Then,
\begin{equation*}
\bra{H} A_{v}\ket{G} \neq 0\ \Rightarrow \ V^{\pm }( G) \cup \{v,\hyphenbullet v\} \corresponds V^{\pm }( H) \cup \{v,\hyphenbullet v\}
\end{equation*}
\end{proposition}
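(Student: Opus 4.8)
The plan is to turn the hypothesis $\bra{H}A_v\ket{G}\neq 0$ into a statement about the orbit of $H$ under a suitable group of renamings, and then to use a finite-norm (pigeonhole) argument together with the countability of $\mathbb{K}$.

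First I would exploit renaming-invariance at the level of a single matrix element. A renaming $R$ acts on $\mathcal{H}$ as a unitary permutation of the canonical basis, so $R\ket{G}=\ket{R(G)}$ and $\bra{H}R^{-1}=\bra{R(H)}$. Combining this with $RA_v=A_{R(v)}R$ gives, for every renaming $R$,
\begin{equation*}
\bra{H}A_v\ket{G}=\bra{R(H)}A_{R(v)}\ket{R(G)}.
\end{equation*}
I would then restrict attention to the subgroup $\mathcal{R}_{G,v}$ of renamings fixing both $G$ and $v$ (equivalently, fixing pointwise every base name occurring in $V^{\pm}(G)\cup\{v,\hyphenbullet v\}$). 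For such $R$ the identity collapses to $\bra{H}A_v\ket{G}=\bra{R(H)}A_v\ket{G}$, so every graph in the orbit $\mathcal{R}_{G,v}\cdot H$ contributes the same value $\kappa:=\bra{H}A_v\ket{G}$ to the vector $A_v\ket{G}$.

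The next step is the finite-norm argument. Since $A_v$ is bounded and $\ket{G}$ is a unit vector, $\sum_{K}|\bra{K}A_v\ket{G}|^{2}=\lVert A_v\ket{G}\rVert^{2}<\infty$; hence only finitely many basis components of $A_v\ket{G}$ can equal the nonzero number $\kappa$. Therefore the orbit $\mathcal{R}_{G,v}\cdot H$ must be finite. I would convert this finiteness into a support constraint: if some vertex of $H$ depended on a base name lying outside $\mathcal{N}[V^{\pm}(G)\cup\{v,\hyphenbullet v\}]$, then—because $G$ and $v$ involve only finitely many base names while $\mathbb{K}$ is countably infinite—one could permute that free base name among infinitely many fresh elements of $\mathbb{K}$ while still fixing $G$ and $v$, producing infinitely many distinct images $R(H)$ and contradicting finiteness of the orbit. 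This yields $V^{\pm}(H)\subseteq\mathcal{N}[V^{\pm}(G)\cup\{v,\hyphenbullet v\}]$, i.e. the inclusion $\mathcal{N}[V^{\pm}(H)\cup\{v,\hyphenbullet v\}]\subseteq\mathcal{N}[V^{\pm}(G)\cup\{v,\hyphenbullet v\}]$. The reverse inclusion follows by symmetry: by Lemma \ref{lem:inverserenamings} the adjoint $A_v^{\dagger}$ is again renaming-invariant, and $\bra{G}A_v^{\dagger}\ket{H}=\overline{\bra{H}A_v\ket{G}}\neq 0$, so repeating the argument with $G$ and $H$ interchanged and $A_v$ replaced by $A_v^{\dagger}$ gives the opposite containment. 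Together these give $V^{\pm}(G)\cup\{v,\hyphenbullet v\}\corresponds V^{\pm}(H)\cup\{v,\hyphenbullet v\}$.

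The hard part will be the name-algebra bookkeeping in the conversion step. Renamings only permute the base names in $\mathbb{K}$, whereas vertex names are built with $.t$ and $\vee$ and identified through the equational theory of Definition \ref{def:namealgebra}; since $.l$ and $.r$ are not injective on $\mathcal{V}$ and $\vee$ can recombine previously split names, I must argue carefully that a vertex genuinely \emph{moves} (producing a distinct graph, using well-namedness Eq. \eqref{eq:wellnamedness}) precisely when it depends on a base name that $G$ and $v$ do not pin down. Lemma \ref{lem:inverserenamings}, and in particular its injectivity criterion, is the tool that guarantees renamings fixing the base names of $G$ and $v$ still act freely on the remaining, infinitely many base names. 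Finally, I expect some care to be needed to see that the two unavoidable corrections—passing to the $\pm$-closure $V^{\pm}$ (because renamings commute with $\hyphenbullet$) and adjoining $\{v,\hyphenbullet v\}$ (because renamings fixing $v$ cannot detect a discrepancy located exactly at $v$, which is what permits creation or destruction of the name $v$)—are exactly what is required to make the two inclusions match.
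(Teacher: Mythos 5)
Your proposal is correct and follows essentially the same route as the paper's own proof: fix $G$ and $v$, use renaming-invariance to show that any name of $H$ not generated by $V^{\pm}(G)\cup\{v,\hyphenbullet v\}$ could be sent to infinitely many fresh names, contradicting the boundedness of $A_v\ket{G}$, and then obtain the reverse inclusion by applying the same argument to $A_v^{\dagger}$ via Lemma~\ref{lem:inverserenamings}. Your pigeonhole phrasing of the finite-norm step and your explicit flagging of the name-algebra bookkeeping are just more careful renderings of the paper's "infinitely many distinct $RH$ makes $A_v\ket{G}$ unbounded" argument.
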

\begin{proof}
$[\mathcal{N}\left[V^{\pm}(H)\right] \subseteq \mathcal{N}\left[V^{\pm}(G)\cup\{v,\hyphenbullet v\}\right]]$ 

By contradiction. Say there exists $\bra{H} A_{v}\ket{G} =\alpha \neq 0$ such that $u\in \mathcal{N}\left[ V^{\pm }( H)\right]\text{ \ and \ } u\notin \mathcal{N}\left[ V^{\pm }( G) \cup \{v,\hyphenbullet v\}\right]$.

Pick $R$ such that $RG=G$, $R( v) =v$, $R( u) \notin \mathcal{N}\left[ V^{\pm }( H)\right]$, i.e. map $u$ into a fresh name $u'$ whilst preserving $v$ and $G$. We have:
\begin{align*}
\alpha = & \bra{H} A_{v}\ket{G}\\
= & \bra{RH} RA_{v}\ket{G}\\
\text{By renaming-inv.} \ = & \bra{RH} A_{R( v)} R\ket{G}\\
\text{By choice of } R\ = & \bra{RH} A_{v}\ket{G}
\end{align*}
There are infinitely many such $R$, and since $u\in \mathcal{N}\left[ V^{\pm }( H)\right]$, there are infinitely many such $RH$. It follows that $A_{v}\ket{G}$ is unbounded, hence the contradiction. The result follows, from which we also have that $\mathcal{N}\left[ V^{\pm }( H) \cup \{v,\hyphenbullet v\}\right] \subseteq \mathcal{N}\left[ V^{\pm }( G) \cup \{v,\hyphenbullet v\}\right]$.

[$\mathcal{N}\left[ V^\pm (G)\right] \subseteq \mathcal{N}\left[ V^\pm (H) \cup \{v,\hyphenbullet v\}\right]$]

$\bra{H} A_{v}\ket{G} \neq 0\ \Leftrightarrow \ \bra{G} A_{v}^{\dagger }\ket{H}^{*} \neq 0\ \Leftrightarrow \ \bra{G} A_{v}^{\dagger }\ket{H} \neq 0$.\\
Moreover, by Lem. \ref{lem:inverserenamings}, $A_{v}^{\dagger }$ is also renaming-invariant.\\
So, the same reasoning applies. 

We therefore have that $\mathcal{N}\left[ V^{\pm }( G) \cup \{v,\hyphenbullet v\}\right] \subseteq \mathcal{N}\left[ V^{\pm }( H) \cup \{v,\hyphenbullet v\}\right]$

[$V^{\pm }( G) \cup \{v,\hyphenbullet v\} \corresponds V^{\pm }( H) \cup \{v,\hyphenbullet v\}$] is by definition $\mathcal{N}\left[ V^{\pm }( G) \cup \{v,\hyphenbullet v\}\right] =\mathcal{N}\left[ V^{\pm }( H) \cup \{v,\hyphenbullet v\}\right]$.
\end{proof}

In order to obtain full name-preservation as used the core of the paper, as a consequence of renaming-invariance, we could have restricted our attention to graphs that have no half-edges, i.e. such that if $c.t\in V( G)$, then $\hyphenbullet c.t\in V( G)$. Indeed for such closed graphs, full name-preservation amounts to $\pm -$name-preservation, and is therefore entailed by renaming-invariance. However the operations that we study in the paper do not preserve closed graphs. For instance, $G$ may be closed, but not $G_{\chi }$. This is why we have treated name-preservation as a independent assumption.

In the pursuit of obtaining full name-preservation, as a consequence of renaming-invariance, we could demand full renaming-invariance, i.e. letting renamings $R$ be arbitrary isomorphisms on $\mathcal{N}[\mathbb{K} \cup \hyphenbullet \mathbb{K}]$ rather than extensions of isomorphisms on $\mathcal{N}[\mathbb{K}]$. However, such an $R$ may map $x$ to $y$ and $\hyphenbullet x$ to $z$, thereby destroying the geometrical information held by names that $x$ and $\hyphenbullet x$ are connected by an edge. We could then compensate for that loss by providing each graph $G$ with a adjacency function $\alpha _{G}$, transforming according to $\alpha _{RG} :=R\circ \alpha _{G} \circ R^{-1}$. Which $\alpha _{G}$ should be allowed? We posit the following conditions:
\begin{itemize}
\item $\alpha _{G}$ is a partial renaming over $\mathcal{V}$.
\item $\forall u,\ \alpha _{G}( u) =v\ \Rightarrow \ \alpha _{G}( v) \ \text{undefined}$.
\item $\forall u\in \text{dom}( \alpha _{G}), \exists t, \ u.t\in \mathcal{N}[ V( G)] \ \text{or} \ \alpha _{G}( u.t) \in \mathcal{N}[ V( G)]$.
\end{itemize}
The first two conditions imply $\forall u,\ \neg \left(\mathcal{N}[ \alpha ( u)] \subset \mathcal{N}[ u] \ \text{or} \ \mathcal{N}[ \alpha ( u)] \supset \mathcal{N}[ u]\right)$, which is reinsuring as "edges from a part to its subpart" seem undesirable. Going that this route, care must be taken when defining $\alpha _{G_{\chi }}$ to also encompass edges that are incoming from $G_{\overline{\chi }}$, or else Prop. \ref{prop:unitaryextension} will fail. Indeed, say that $\alpha _{G}( u) =v$ with $u\in V( G_{\overline{\chi }})$ and $v\in V( G_{\chi })$. An operator $U$ acting over $G_{\chi }$ may otherwise fail to see that $v$ is occupied, thereby producing $G'_{\chi }$ such that $\ket{G'_{\chi }} \tensorchi \ket{G_{\overline{\chi }}}$. Overall, this is a legitimate route to take, but we chose not to clutter this paper.

Finally, notice that a number of results in the core of the paper held without name-preservation. At the cost of name-preservation, we can even reach an interesting version of Th. \ref{th:blockdecomposition}, which does not require an extra bit of information per system. 

\begin{proposition}[Unitary restriction]\label{prop:namewiseunitaryextension}
A restriction is namewise if and only if there exists $ S$ such that $ G_{\chi } \ =\left\{\sigma .v\ \in G\ |\ v\notin \mathcal{N}\left[V^{\pm }(S)\right]\right\}$.

If $ \chi $ is namewise, and $ U$ is renaming-invariant, then $ U$ preserves $ \mathcal{H}_{\chi }$.\\
If moreover $ U$ is unitary, then it is unitary over $ \mathcal{H}_{\chi }$.
\end{proposition}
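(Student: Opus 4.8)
The plan is to handle the three assertions in turn: the first is a structural characterization, and the last two are consequences of $\pm$-name-preservation (Prop.~\ref{prop:renaminginvarianceimpliesnp}) together with a name-algebra lemma. \textbf{For Part 1 (characterization),} in the forward direction I would read off, from the defining data of a namewise restriction, exactly which systems get discarded; I would collect the offending vertices into a graph $S$ and verify that membership in $G_\chi$ is governed precisely by the predicate $v\notin \mathcal{N}[V^\pm(S)]$. For the converse, given $S$ I would observe that $G\mapsto \{\sigma.v\in G \mid v\notin \mathcal{N}[V^\pm(S)]\}$ is pointwise—it decides each system independently and only through its name—so that by Lem.~\ref{lem:specialrestrictions} it is automatically a restriction, and by construction it is namewise.

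\textbf{For Part 2 (preservation of $\mathcal{H}_\chi$),} recall that $\mathcal{H}_\chi = \mathrm{span}\{\ket{K}\mid K_\chi = K\}$, the span of graphs all of whose names avoid $\mathcal{N}[V^\pm(S)]$. I would fix $G$ with $G_\chi = G$ and show that every $H$ with $\bra{H}U\ket{G}\neq 0$ again satisfies $H_\chi = H$. Since $U$ is renaming-invariant, the unparameterized specialization of Prop.~\ref{prop:renaminginvarianceimpliesnp} yields $V^\pm(G)\corresponds V^\pm(H)$, i.e. $\mathcal{N}[V^\pm(G)] = \mathcal{N}[V^\pm(H)]$. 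Every name $w\in V^\pm(H)$ then lies in $\mathcal{N}[V^\pm(H)] = \mathcal{N}[V^\pm(G)]$, so the claim reduces to the name-algebra statement that, under $\mathcal{N}[V^\pm(G)] = \mathcal{N}[V^\pm(H)]$, no generator of $V^\pm(H)$ can fall inside $\mathcal{N}[V^\pm(S)]$ once no generator of $V^\pm(G)$ does. I would prove this by the contrapositive, showing that a generator of one graph landing in $\mathcal{N}[V^\pm(S)]$ forces—through the merge/split equalities of Def.~\ref{def:namealgebra} and the well-namedness constraint Eq.~\eqref{eq:wellnamedness}—a generator of the other to do the same, exactly in the style of Lem.~\ref{lem:complementnames}.

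\textbf{For Part 3 (unitarity over $\mathcal{H}_\chi$),} I would invoke that $U^\dagger$ is also renaming-invariant (Lem.~\ref{lem:inverserenamings}), so that Part 2 applies to both $U$ and $U^\dagger$: each preserves $\mathcal{H}_\chi$. A global unitary whose inverse also stabilizes a closed subspace restricts to a unitary on that subspace, since $U^\dagger U = I$ combined with $U\mathcal{H}_\chi \subseteq \mathcal{H}_\chi$ and $U^\dagger\mathcal{H}_\chi \subseteq \mathcal{H}_\chi$ gives $U\mathcal{H}_\chi = \mathcal{H}_\chi$ while $U$ preserves inner products; hence $U|_{\mathcal{H}_\chi}$ is unitary.

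\textbf{The main obstacle} is the name-algebra lemma underlying Part 2: transporting the predicate $v\notin\mathcal{N}[V^\pm(S)]$ across the equivalence $V^\pm(G)\corresponds V^\pm(H)$. The difficulty is that $\mathcal{N}[\cdot]$ is closed under the descent operations $.t$ and under $\vee$, but \emph{not} under taking parents, so re-expressing $H$'s names in terms of $G$'s (and conversely) could in principle push a kept name across the boundary carved out by $S$. I expect the real content to be showing that this cannot happen, which is precisely where the exact form of a namewise restriction and the well-namedness of graphs earn their keep, mirroring the combinatorial argument of Lem.~\ref{lem:complementnames}.
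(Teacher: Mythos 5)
Your decomposition into three parts is sensible, and Parts 1 and 3 are unproblematic (the paper in fact offers no argument for Part 1, treating the displayed formula as the definition of namewise, so your appeal to Lem.~\ref{lem:specialrestrictions} is a harmless addition; your Part 3 is exactly the paper's). The gap is in Part 2: the ``name-algebra lemma'' you reduce everything to is false. Take $V^{\pm}(G)=\{1,\hyphenbullet 1\}$, $V^{\pm}(H)=\{1.l,\,1.r,\,\hyphenbullet 1.l,\,\hyphenbullet 1.r\}$ and $V^{\pm}(S)=\{1.l,\,\hyphenbullet 1.l\}$. Then $\mathcal{N}[V^{\pm}(G)]=\mathcal{N}[V^{\pm}(H)]$, and no generator of $V^{\pm}(G)$ lies in $\mathcal{N}[V^{\pm}(S)]$ (one cannot rebuild $1$ from $1.l$ without $1.r$), yet the generator $1.l$ of $V^{\pm}(H)$ obviously does. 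This is precisely the ``pushed across the boundary carved out by $S$'' phenomenon that you flag as the main obstacle and hope to rule out --- but it genuinely happens, e.g.\ for a renaming-invariant $U$ that splits every vertex $u$ into $u.l$ and $u.r$. No combinatorial argument in the style of Lem.~\ref{lem:complementnames} can close this step, because the per-generator predicate $v\notin\mathcal{N}[V^{\pm}(S)]$ is simply not invariant under replacing a generating set by an $\corresponds$-equivalent one.

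The paper avoids the problem by working at the level of generated subalgebras: its proof opens by recasting the namewise condition as ``$G=G_{\chi}$ iff $\mathcal{N}[V(G)]\cap\mathcal{N}[V^{\pm}(S)]=\emptyset$''. That predicate depends only on the subalgebra $\mathcal{N}[V(G)]$, so it transports trivially along the inclusion $\mathcal{N}\left[V^{\pm}\left(U\ket{G}\right)\right]\subseteq\mathcal{N}\left[V^{\pm}(G)\right]$ supplied by Prop.~\ref{prop:renaminginvarianceimpliesnp} (only one inclusion is needed, not the equality you invoke): if $A\subseteq B$ and $B\cap C=\emptyset$ then $A\cap C=\emptyset$. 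Be aware that this recast condition is strictly stronger than the displayed pointwise definition of $G_{\chi}$ --- in the example above, $G=G_{\chi}$ under the displayed definition while $\mathcal{N}[V(G)]\cap\mathcal{N}[V^{\pm}(S)]\neq\emptyset$ --- and the same example shows the proposition only holds under the stronger, subalgebra-level reading. So the correct repair of your Part 2 is not a cleverer name-algebra argument but adopting that reading of ``namewise'' from the outset, after which the step becomes a two-line set inclusion.
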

\begin{proof}
Since $\chi $ is namewise, there exists $S$ such that $G=G_{\chi }$ if and only if $\neg \left(\mathcal{N}[ V( G)] \cap \mathcal{N}\left[ V^{\pm }( S)\right]\right)$.

Since $U$ is renaming-invariant, then by Prop. \ref{prop:renaminginvarianceimpliesnp}, we have $\mathcal{N}\left[ V^{\pm }\left( U\ket{G}\right)\right] \subseteq \mathcal{N}\left[ V^{\pm }\left(\ket{G}\right)\right]$, i.e. for all $u,$ $u\in \mathcal{N}\left[ V^{\pm }\left( U\ket{G}\right)\right]$ implies $u\in \mathcal{N}\left[ V^{\pm }\left(\ket{G}\right)\right]$, and so $\mathcal{N}\left[ V^{\pm }\left( U\ket{G}\right)\right] \cap \mathcal{N}\left[ V^{\pm }( S)\right]$ implies $\mathcal{N}\left[ V^{\pm }\left(\ket{G}\right)\right] \cap \mathcal{N}\left[ V^{\pm }( S)\right]$.\\ As a consequence $\neg (\mathcal{N}\left[ V^{\pm }\left(\ket{G}\right) \cap \mathcal{N}\left[ V^{\pm }( S)\right]\right)$ implies $\neg \left(\mathcal{N}\left[ V^{\pm }\left( U\ket{G}\right)\right] \cap \mathcal{N}\left[ V^{\pm }( S)\right]\right)$.

Say $G=G_{\chi }$. We therefore have $\neg (\mathcal{N}\left[ V^{\pm }\left( U\ket{G}\right] \cap \mathcal{N}\left[ V^{\pm }( S)\right]\right)$. As a consequence for any \ $H$ such that $\bra{H} U\ket{G} \neq 0$, we have $\neg \left(\mathcal{N}\left[ V^{\pm }( H)\right] \ \cap \mathcal{N}\left[ V^{\pm }( S)\right]\right)$. Thus $H=H_{\chi }$.

[Unitary case]

If $U$ is renaming-invariant and unitary, then by Lemma Inverse renaming and renaming-invariance, so is $U^{\dagger }$. It follows that $U^{\dagger }$ preserves $\mathcal{H}_{\chi }$. Therefore, $U$ is unitary when restricted to $\mathcal{H}_{\chi }$.
\end{proof}

\begin{thm}[Block decomposition without ancilla]\label{def:blockcecompositionnoancilla}

Let $ \zeta _{v}$ be the pointwise restriction such that $ \zeta _{v}(\{\sigma .u\}) :=\begin{cases}
\{\sigma .u\} & \text{if} \ \mathcal{N}[ u] \cap \mathcal{N}[ v]\\
\varnothing  & \text{otherwise}
\end{cases}$.\\
Consider $ U$ a renaming-invariant unitary operator over $ \mathcal{H}$, which for all $ v\in \mathcal{V}$ is $ \chi _{v} \zeta '_{v}$-causal with $ \zeta _{v} \sqsubseteq \zeta '_{v}$ and $\chi_v$ an extensible restriction.

Let $ \mu $ be the namewise restriction such that $ \mu (\{\sigma .u\}) :=\begin{cases}
\{\sigma .u\} & \text{if} \ u\notin \mathcal{N}\left[\mathbb{Z}^{*} .1\right]\\
\varnothing  & \text{otherwise}
\end{cases}$
where $\mathbb{Z}.1$ denotes odd numbers in their binary notation.\\
Similarly let $ \mathcal{V}.0$ denote those names built out of even numbers.

There exists $ \tau _{x}$ a non-name-preserving $ \zeta _{x}$-local unitary and $ K_{x}$ a non-name-preserving $ \xi _{x}$-local unitary such that
\begin{equation*}
\begin{aligned}
\forall \ket{\psi } \in \mathcal{H}_{\mu } \cong \mathcal{H} ,\ \left(\prod _{x\ \in \ \mathbb{N}} \tau _{x}\right)\left(\prod _{x\ \in \ \mathbb{N}} K_{x}\right)\ket{\psi } & =U\ket{\psi }
\end{aligned}
\end{equation*}
where $ \xi _{x} :=\mu \chi _{x} \cup \overline{\mu } \zeta _{x}$ is an extensible restriction. In addition, $ \ [ K_{x} ,K_{y}] =[ \tau _{x} ,\tau _{y}] =0$. 
\end{thm}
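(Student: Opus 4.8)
The plan is to follow the architecture of the proof of Theorem~\ref{th:blockdecomposition} essentially line by line, but to replace the single ancillary bit carried by every system with a \emph{parity of names}: under the identification $\mathcal{H}_\mu\cong\mathcal{H}$ the even-named systems (those in $\mathcal{V}.0$, selected by $\mu$) host the genuine data, while the odd-named copy $\overline{\mu}$ plays the role that the scratch value $b=1$ did before. First I would record the bookkeeping that does not depend on the encoding: $\mu$ is namewise and $\zeta_x$ is pointwise, so by Lemma~\ref{lem:combiningrestrictions} one has $[\mu,\zeta_x]=[\overline{\mu},\zeta_x]=[\mu,\overline{\zeta_x}]=[\overline{\mu},\overline{\zeta_x}]=0$, that $\xi_x:=\mu\chi_x\cup\overline{\mu}\zeta_x$ is a restriction, and that $\mu,\xi_x$ commute in the same fourfold sense. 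As in Theorem~\ref{th:blockdecomposition}, $\zeta_x\sqsubseteq\zeta'_x$ lets Proposition~\ref{prop:causalextension} upgrade $\chi_x\zeta'_x$-causality of $U$ to $\chi_x\zeta_x$-causality.

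The extension step is where renaming-invariance must do the work that name-preservation did in the ancilla version. Since $\mu$ is namewise and $U$ is a renaming-invariant unitary, Proposition~\ref{prop:namewiseunitaryextension} guarantees that $U$ preserves $\mathcal{H}_\mu$ and is unitary there; in particular $U$ and $U^\dagger$ map even-named graphs to even-named graphs, which is exactly $\mu$-consistency-preservation. I would then set $U':=U\tensormu I$ and reprove unitarity and $U'^\dagger=U^\dagger\tensormu I$ from the consistency computation of Proposition~\ref{prop:unitaryextension}, whose interchange identities (Lemma~\ref{lem:interchangelaws}) only need $\mu$-consistency-preservation, not name-preservation. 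Feeding $U'$ to Proposition~\ref{prop:causalextension} then yields that $U'$ is $\xi_x\zeta_x$-causal, and renaming-invariance of $\mu$ propagates renaming-invariance from $U$ to $U'$.

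Next I would build the toggle. Instead of a bit flip, let $\tau$ be the global renaming that swaps the parities of base names (on generators, $2c\leftrightarrow 2c{-}1$), extended to $\mathcal{V}$, $\mathcal{S}$, $\mathcal{G}$ and $\mathcal{H}$ as in Definition~\ref{def:renaminginvariance}. It is unitary and involutive, it is \emph{not} name-preserving, but it does satisfy the crucial swap identity $\tau(\ket{G_\mu}\tensormu\ket{G_{\overline{\mu}}})=\tau\ket{G_{\overline{\mu}}}\tensormu\tau\ket{G_\mu}$ because flipping parity interchanges the $\mu$ and $\overline{\mu}$ ranges. Since $\tau$ is unitary over each $\mathcal{H}_{\zeta_x}$, I set $\tau_x:=\tau\,\tensorzeta_{x}\,I$, which is strictly $\zeta_x$-local by Proposition~\ref{prop:gatelocality} and unitarity, with $\prod_{x}\tau_x=\tau$ and $[\tau_x,\tau_y]=0$ exactly as before. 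Then $K_x:=U'^\dagger\tau_x U'$ is $\xi_x$-local by dual causality (Proposition~\ref{prop:dualcausality}), strictly so by unitarity, and $[K_x,K_y]=0$ follows because conjugation by the unitary $U'$ is a morphism.

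Finally I would run the telescoping computation of Theorem~\ref{th:blockdecomposition}: for $\ket{G_\mu}\in\mathcal{H}_\mu$ (so $G_{\overline{\mu}}=\emptyset$), $(\prod_x\tau_x)(\prod_x K_x)\ket{G_\mu}=\tau\,U'^\dagger\tau\,U'\ket{G_\mu}$, and unfolding $U'=U\tensormu I$, the swap property of $\tau$, and $U$ preserving the $\mu$-range gives $U\ket{G_\mu}\tensormu\tau U^\dagger\tau\ket{\emptyset}$. The residual factor collapses: $\tau\ket{\emptyset}=\ket{\emptyset}$, and $\pm$-name-preservation forced by renaming-invariance (Proposition~\ref{prop:renaminginvarianceimpliesnp}) together with unitarity gives $U^\dagger\ket{\emptyset}=\ket{\emptyset}$, so the whole expression equals $U\ket{G_\mu}$, as required. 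I expect the \textbf{main obstacle} to be this extension step: showing that $U'=U\tensormu I$ is a unitary, $\xi_x\zeta_x$-causal operator \emph{without} name-preservation, where one must verify that the $\mu$-consistency-preservation extracted from Proposition~\ref{prop:namewiseunitaryextension} really suffices to rerun the interchange-law arguments, and that the non-name-preserving toggle $\tau$ still localizes cleanly along $\tensorzeta_{x}$ and respects the empty-graph boundary condition.
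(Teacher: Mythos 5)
Your proposal matches the paper's proof essentially step for step: the same extension $U':=U\tensormu I$ justified via the namewise restriction and renaming-invariance (Props.~\ref{prop:namewiseunitaryextension}, \ref{prop:unitaryextension}), the same causal extension to $\xi_x\zeta_x$-causality, the same parity-swapping renaming in place of the ancilla bit-toggle, the same $K_x:=U'^{\dagger}\tau_x U'$, and the same telescoping computation ending with $\tau\ket{\emptyset}=\ket{\emptyset}$ and $U^{\dagger}\ket{\emptyset}=\ket{\emptyset}$ via Prop.~\ref{prop:renaminginvarianceimpliesnp}. The only (cosmetic) differences are that you define the global toggle $\tau$ first and localize it as $\tau\tensorzeta_x I$ where the paper defines the localized renamings $\tau_x$ directly, and that you explicitly flag re-deriving the unitary-extension interchange laws from $\mu$-consistency-preservation rather than name-preservation, which is indeed what the paper's argument implicitly relies on.
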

\begin{proof}
Clearly $[ \mu ,\zeta ] =[\overline{\mu } ,\zeta ] =[ \mu ,\overline{\zeta }] =[\overline{\mu } ,\overline{\zeta }] =0$ as both are pointwise.

By Prop. \ref{prop:unitaryextension}, since $\mu $ is namewise, and $U$ is a renaming-invariant unitary, it is unitary over $\mathcal{H}_{\mu }$, and $U':=U \tensormu  I$ is unitary over $\mathcal{H}$ with $U^{\prime \dagger } =U^{\dagger }  \tensormu  I$.\\
By Prop. \ref{prop:causalextension} and since $U$ is $\chi _{v} \zeta '_{v}$-causal, it is $\chi _{v} \zeta _{v}$-causal, and $U'$ is $\xi _{v} \zeta _{v}$-causal, with $\xi _{v} :=\mu \chi _{v} \cup \overline{\mu } \zeta _{v}$ and extensible restriction.

Let the toggle $\tau _{x}$ be the renaming such that $\tau _{x}( y.b) =\begin{cases}
\tau _{x}( x.\neg b) & \text{if} \ x=y\\
y & \text{otherwise}
\end{cases}$.\\
I.e. $\tau _{x}$ toggles the last by of $x.0$ and $x.1$.\\
Notice that it is unitary and $\zeta _{x}$-local, and that $[ \tau _{x} ,\tau _{y}] =0$.\\
Moreover, 
\begin{equation*}
\left(\prod _{x\ \in \ \mathbb{N}} \tau _{x}\right) =\tau 
\end{equation*}
where $\tau $ is the renaming such that $\tau ( y.b) =\tau ( y.\neg b)$.

Let $K_{x} :=U^{\prime \dagger } \tau _{x} U'$. \\
Since adjunction by a unitary is a morphism, $[ K_{x} ,K_{y}] =0$.\\
By Prop. \ref{prop:dualcausality}, is $\xi _{x}$-local.

Finally,
\begin{align*}
\left(\prod _{x\ \in \ \mathbb{N}} \tau _{x}\right)\left(\prod _{x\ \in \ \mathbb{N}} K_{x}\right)\ket{G_{\mu }} & =\tau \dotsc \left( U^{\prime \dagger } \tau _{1} U'\right)\left( U^{\prime \dagger } \tau _{0} U'\right)\ket{G_{\mu }}\\
\text{By unitarity of } U'. & =\tau \ U^{\prime \dagger } \ \left(\prod _{x\ \in \ \mathbb{N}} \tau _{x}\right) \ U'\ \ket{G_{\mu }}\\
 & =\tau \ U^{\prime \dagger } \ \tau \ U'\ \ket{G}\\
\text{By Prop. \ref{prop:unitaryextension}} & =\tau \ \left( U^{\dagger }  \tensormu  I\right) \ \tau \ ( U \tensormu  I) \ \left(\ket{G_{\mu }}  \tensormu  \ket{\varnothing }\right) \ \\
 & =\tau \ \left( U^{\dagger }  \tensormu  I\right) \ \tau \ \left( U\ket{G_{\mu }}  \tensormu  \ket{\varnothing }\right)\\
\text{Since} \ U\ \text{preserves the range of } \mu . & =\tau \ \left( U^{\dagger }  \tensormu  I\right)\left(\ket{\varnothing }  \tensormu  \tau U\ket{G_{\mu }}\right)\\
\text{By Prop. \ref{prop:renaminginvarianceimpliesnp}} & =\tau \ \left(\ket{\varnothing } \tensormu  \tau U\ket{G_{\mu }}\right)\\
 & =\tau ^{2} \ U\ket{G_{\mu }}\\
\text{Since} \ \tau \ \text{involutive.} & =U\ket{G_{\mu }}
\end{align*}
\end{proof}

\section{Standard tensor product and direct sum as special cases}\label{sec:reconstruction}
In this section we are going to make precise the connection with more standard notions of decompositions of Hilbert spaces. Before we begin let us outline a general way to construct sub-spaces on which our tensor products act. 

Given any restriction $\chi$ we can define the Hilbert space $\mathcal{H}_{\chi}$ as the Hilbert space whose orthonormal basis is $\mathcal{G}_{\chi}=\{\ket{G_\chi}\,|\,G\in \mathcal{G}\}$. Note that living inside this space is the empty graph since $ \varnothing_\chi=\varnothing$;  this is a remnant of the `fock-space of qubits' point of view taken in this paper. The quantum circuit formalism, on the other hand, works with a fixed number of qubit and hence has no vacuum state. Thus, to make contact with the standard formalism, we will need to remove de vacuum state and consider
$ \pi_{\overline{\varnothing}}{\mathcal{H}_{\chi}} $, i.e. the Hilbert space given by the range of $\pi_{\overline{\varnothing}}$, the projector over the orthogonal subspace of $\ket{\varnothing}$. 

Consider $\chi_A$ such that $\chi\chi_A=\chi_A$. Just like a linear map $A$ over $\mathcal{H}_A$ can be extended as $A \otimes I$ to act over $\mathcal{H}_A \otimes \mathcal{H}_B$, we need to be able to take an operator $A$ over $\pi_{\overline{\varnothing}}{\mathcal{H}_{\chi_A}}$ and extend it as an operator $\underline{A \tensorchi_{\!\!A} I}$ over $ \pi_{\overline{\varnothing}}{\mathcal{H}_\chi}{}$. In this appendix this will be done by taking
\[  \underline{A \tensorchi_{\!\!A} I} :=  \pi_{\overline{\varnothing}}{((A \oplus I) \tensorchi_{\!\!A} I)}{}.\]

Similarly, the generalised trace $(\bullet)_{|\chi_A}$ over $\mathcal{B}_1(\mathcal{H}_\chi)$ may yield vacuum states, which we do not want in the standard formalism. For this reason will need a `$\pi_{\overline{\varnothing}}$-projected generalised trace' $\operatorname{Tr}_{\overline{\chi}_A}(\bullet)$ 
as defined in the following way:
\[   \operatorname{Tr}_{\overline{\chi}_A}(\rho) := \pi_{\overline{\varnothing}}{\rho_{|\chi_A}} \pi_{\overline{\varnothing}}.   \]

\subsection{Recovering the standard tensor product}

The general results below will be formulated in terms of a three restrictions $\chi, \chi_A, \chi_B$. In order to gain concrete intuitions about them, it will be helpful to keep in mind the example where
\begin{description}
\item[$\mathcal{H}_\chi$] is the Hilbert space spanned by the empty graph and the graphs consisting exactly two nodes with names $u$ and $v$. In other words, every graph in $\mathcal{H}$ is either empty or has the form $\{a.u\} \cup \{b.v\}$ for $a,b \in \Sigma$ and $u,v \in V$. 
\item[$\mathcal{H}_{\chi_A}$] is the Hilbert space spanned by the empty graph and the graphs consisting of one node named $u$.
\item[$\mathcal{H}_{\chi_B}$] is the Hilbert space spanned by the empty graph and the graphs consisting of one node named $v$.
\end{description}
So, consider three restrictions $\chi, \chi_A, \chi_B$ such that
\begin{itemize}
\item $\chi\chi_A=\chi_A$, $\chi\chi_B=\chi_B$.
\item $\chi_A=\chi\overline{\chi}_B$ i.e. $\chi_A$ and $\chi_B$ are the complement of the other in $\chi$.
\item for all $\ket{\psi}\in{\cal H}_{\chi_A}$, $\ket{\psi'}\in{\cal H}_{\chi_B}$, $||\ket{\psi}\tensorchi_{\!\!A}\ket{\psi'}||^2=||\ket{\psi}||^2.||\ket{\psi'}||^2$ 
\item for all $G,G'  \in \cal{G}_{\chi_A} \times \cal{G}_{\chi_B}$ with $G,G' \neq \varnothing$ then $G \cup G' \in \mathcal{G}_{\chi}$.
\end{itemize}
and consider ${\cal H}_A:= \pi_{\overline{\varnothing}}{\mathcal{H}_{\chi_A}}$, ${\cal H}_B:= \pi_{\overline{\varnothing}}{\mathcal{H}_{\chi_B}}$.

We will now formalise the sense in which the standard tensor product 
$$\otimes: {\cal H}_A\times{\cal H}_B\to{\cal H}_{A}\otimes{\cal H}_{B}$$
coincides with $\tensorchi_{A}$, by means of the unitary isomorphism
\begin{align*}
E: \mathcal{H}_A \otimes \mathcal{H}_B  &\stackrel{\cong}{\rightarrow} \pi_{\overline{\varnothing}}{\mathcal{H}_\chi}.\\
\ket{G}\otimes\ket{G'}&\mapsto \ket{G}\tensorchi_{\!\!A}\ket{G'}=\ket{G\cup G'}
\end{align*}
The existence of this isomorphism holds thanks to the norm and union conditions, and furthermore witnesses an equivalence $ \underline{A \tensorchi_{\!\!A} I} \cong A \otimes I$:
\begin{align*}
\bra{G} \bra{G'} E^{\dagger} (\underline{A \tensorchi_{\!\!A} I}) E \ket{H} \ket{H'} & =   \bra{G \cup G'} \underline{A \tensorchi_{\!\!A} I} \ket{H \cup H'}    \\
 & =  \bra{G \cup G'} (A \oplus I) \tensorchi_{\!\!A} I \ket{H \cup H'}   \\
 & =  \bra{G} A \oplus I \ket{H }  \braket{G'| H'} \\
 & = \bra{G} A \ket{H }  \braket{G' | H'}  \\
 & = \bra{G} A \ket{H }  \bra{G'} I \ket{H'} \\
 & = \bra{G} \bra{G'} A \otimes I \ket{H} \ket{H'} \\
\textrm{so that }  (A \otimes I) &= E^{\dagger} (\underline{A \tensorchi_{\!\!A} I})  E.
\end{align*}
In this precise sense then, the generalised tensors recover the standard tensors. 
Similarly, the $\pi_{\overline{\varnothing}}$-projected generalised trace recovers the standard trace out by means of
 \[  \operatorname{Tr}_{B}(\bullet) = \operatorname{Tr}_{\chi_B}(E^{\dagger} (\bullet)E),  \] where $\operatorname{Tr}_B(\rho) = \sum_{R} (I \otimes \bra{R}) \rho (I \otimes \ket{R})$ is the standard partial trace. Indeed,
\begin{align*}
    \operatorname{Tr}_{\chi_B}(E (\ket{G} \otimes \ket{G'}) (\bra{H} \otimes \bra{H'}) E^{\dagger}) & = \operatorname{Tr}_{\chi_B}( (\ket{G \cup G'}) (\bra{H \cup H'}) )  \\
    & = \ket{G} \bra{H} \braket{G'|H'} \\
    & = \sum_{R\in\mathcal{G}_{\chi_B}} (I \otimes \bra{R}) (\ket{G} \otimes \ket{G'}) (\bra{H} \otimes \bra{H'}) (I \otimes \ket{R}) \\
    & = \operatorname{Tr}_{B}((\ket{G} \otimes \ket{G'}) (\bra{H} \otimes \bra{H'})).
\end{align*}
By linearity the result holds for any trace class operator.

\subsection{Recovering the direct sum}

The general results below will be formulated in terms of a three restrictions $\chi, \chi_A, \chi_B$. In order to gain concrete intuitions about them, it will be helpful to keep in mind the example where
\begin{description}
\item[$\mathcal{H}_\chi$] is the Hilbert space spanned by the empty graph and the graphs consisting of one node with name $u$. In other words, every graph in $\mathcal{H}$ is either empty or has the form $\{a.u\}$ for $a \in \Sigma$ and $u \in V$. 
\item[$\mathcal{H}_{\chi_A}$] is the Hilbert space spanned by the empty graph and the graphs consisting of one node with name $u$ with internal state in $S$. In other words, every graph in $\mathcal{H}_{\chi_A}$ is either empty or has the form $\{a.u\}$ for $a \in S\subset\Sigma$ and $u \in V$.
\item[$\mathcal{H}_{\chi_B}$] is the Hilbert space spanned by the empty graph and the graphs consisting of one node with name $u$ with internal state not in $S$. In other words, every graph in $\mathcal{H}_{\chi_B}$ is either empty or has the form $\{a.u\}$ for $a \in \Sigma\setminus S$ and $u \in V$.
\end{description}
So, consider three restrictions $\chi, \chi_A, \chi_B$ such that
\begin{itemize}
\item $\chi\chi_A=\chi_A$, $\chi\chi_B=\chi_B$. 
\item $\chi_A=\chi\overline{\chi}_B$ i.e. $\chi_A$ and $\chi_B$ are the complement of the other in $\chi$.
\item for all $\ket{\psi}\in{\cal H}_{\chi_A}$, $\ket{\psi'}\in{\cal H}_{\chi_B}$, 
$$||\ket{\psi}\tensorchi_{\!\!A}\ket{\psi'}||^2=
|\braket{\varnothing|\psi'}|^2.||\pi_{\overline{\varnothing}}\ket{\psi}||^2
+|\braket{\varnothing|\psi}|^2.||\pi_{\overline{\varnothing}}\ket{\psi'}||^2
+|\braket{\varnothing|\psi}|^2|\braket{\varnothing|\psi'}|^2$$
\item for all $G \in \cal{G}_{\chi_A} \cup \cal{G}_{\chi_B}$ then $G \in \cal{G}_{\chi}$
\end{itemize}
and consider ${\cal H}_A:= \pi_{\overline{\varnothing}}{\mathcal{H}_{\chi_A}}$, ${\cal H}_B:= \pi_{\overline{\varnothing}}{\mathcal{H}_{\chi_B}}$.

We will now formalise the sense in which the standard direct sum 
$$\otimes: {\cal H}_A\times{\cal H}_B\to{\cal H}_{A}\oplus{\cal H}_{B}$$
coincides with $\tensorchi_{A}$, by means of the unitary isomorphism
\begin{align*}
E:  \mathcal{H}_A \oplus \mathcal{H}_B   &\stackrel{\cong}{\rightarrow}  \pi_{\overline{\varnothing}}{\mathcal{H}_\chi}\\
\ket{G}\oplus\ket{G'}&\mapsto \ket{G}+\ket{G'}
\end{align*}  
Indeed it witnesses the equivalence $ \underline{A \tensorchi_{\!\!A} I} \cong A \oplus I$:

\begin{align*}
(\bra{G} \oplus  \bra{G'} )E^{\dagger} (\underline{A \tensorchi_{\!\!A} I}) E (\ket{H}  \oplus \ket{H'}) & =  ( \bra{G} + \bra{G'} ) \underline{A \tensorchi_{\!\!A} I}( \ket{H} + \ket{H'} )   \\
 & =  (\bra{G} + \bra{ G'} )((A \oplus I) \tensorchi_{\!\!A} I) (\ket{H} + \ket{H'} )  \\
  & =  \bra{G} ((A \oplus I) \tensorchi_{\!\!A} I )\ket{H}  +  \bra{G} ((A \oplus I) \tensorchi_{\!\!A} I )\ket{H'} \\
    & +  \bra{G'} ((A \oplus I) \tensorchi_{\!\!A} I) \ket{H}  +  \bra{G'} ((A \oplus I) \tensorchi_{\!\!A} I) \ket{H'} \\
\textrm{Using the norm requirement}\quad      & =  \bra{G} (A \oplus I) \ket{H}  \braket{\varnothing | \varnothing} +  \bra{G} (A \oplus I)\ket{\varnothing} \braket{\varnothing | H'} \\
    & +  \bra{\varnothing} (A \oplus I)  \ket{H} \braket{G'| \varnothing}  +  \bra{\varnothing} (A \oplus I) \ket{\varnothing} \braket{G' | H'} \\
 & =  \bra{G} A \oplus I \ket{H } +  \braket{G' | H'} \\
 & = \bra{G} A \ket{H } +   \braket{G' | H'} \\
 & = \bra{G} A \ket{H }  +  \bra{G'} I \ket{H'} \\
 & = (\bra{G}  \oplus\bra{G'} ) A \oplus I ( \ket{H}  \oplus \ket{H'} )\\
\textrm{and so this time}\quad
(A \oplus I) &= E^{\dagger}(\underline{A \tensorchi_{\!\!A} I})E .
\end{align*}
In this precise sense then, the generalised tensors recovers the standard direct sum. 

Interestingly, the $\pi_{\overline{\varnothing}}$-projected generalised trace turns out to be just a projection:  \[  \operatorname{Tr}_{\chi_B}(E^{\dagger} (\bullet)E)=\pi_A(\bullet) \pi_A,  \] where $\pi_A$ is the projector upon  $\mathcal{H}_A$. Indeed, 
\begin{align*}
    \operatorname{Tr}_{\chi_B}(E^{\dagger} \rho E) & = \sum_{G,H} \rho_{GH} \operatorname{Tr}_{\chi_B}(E^{\dagger} \ket{G}\bra{H} E)   \\
    & =  \sum_{G,H} \rho_{GH} \operatorname{Tr}_{\chi_B}( \ket{G}\bra{H}) \\
    & = \sum_{G,H} \rho_{GH} \pi_{\overline{\varnothing}}(\ket{G}\bra{H})_{|\chi_A}\pi_{\overline{\varnothing}}\\  
    & = \sum_{G,H \in {\cal G}^2_{\chi_A}} \rho_{GH} \ket{G} \bra{H} \\
    & = \pi_{A} \rho \pi_{A}
\end{align*}
Note that whilst $(\bullet )|_{\chi_A}$ is always trace preserving, $\operatorname{Tr}_{\chi_B}(\bullet)$ is not. This is because $(\bullet )|_{\chi_A}$ preserves trace by those sending states that would have been projected out by $\pi_A$, to the empty graph. 

In the standard, non-Fock-space setting then, we therefore are faced with a choice between charybdis and scylla. 1/ We can either leave things as such, and sacrifice trace-preservation; or 2/ we can renormalise $\operatorname{Tr}_{\chi_B}$ to get the post-measurement state-collapse rule $\frac{\pi_A \rho \pi_A}{Tr(\pi_A \rho \pi_A)}$ at the cost of sacrificing linearity and being only partially defined (since no amount of renormalisation will help if the result of $\operatorname{Tr}_{\chi_B}(\rho)$ is $0$).

The Fock-space of qubit setting is both mathematically more convenient, and physically well-motivated, we argue. Mathematically, it allows use to have our cake and eat it, having both linearity and trace-preservation for a trace based on a direct-sum structure. Physically, it has the natural interpretation of yielding representation of the viewpoint of a limited observer, who sees nothing beyond the state set ${\cal H}_A$. Consider for instance a situation in which a particle may occupy one of many energy levels distributed across space (say for instance the energy levels of the hydrogen atom), and imagine that an observer, Alice, only has access to a limited range of spatial positions and thus a limited range of all possible eigenstates. According to her viewpoint, she sees $\pi_A \rho \pi_A$ with some probability, and nothing $\ket{\varnothing} \bra{\varnothing}$ with some other, i.e. precisely 
$$\rho_{|\chi_A}=\pi_A \rho \pi_A + \operatorname{Tr}(\pi_{\overline{A}} \rho \pi_{\overline{A}}) \ket{\varnothing} \bra{\varnothing}.$$
Thus, the generalised trace is also a linear, trace-preserving way to take the viewpoint of a limited observer.

\subsection{Between the direct sum and the tensor product}

Let us find an example, simpler in spirit than for instance disks of fixed radius around nodes, that goes beyond both the standard tensor product and direct sum. Specifically, the example mixes them up.\\ 
Consider $\mathcal{H}_\chi$ the Hilbert space spanned by the empty graph and the graphs consisting of one or two nodes with names $u$ and $v$.
Consider the restriction $\zeta_{u:S}$ which picks out the node $u$ but only when $u$ has a state in $S \subset \Sigma$. This time we can define $\mathcal{H}_{u:S} := \pi_{\overline{\varnothing}}{\mathcal{H}_{\zeta_{u:S}}}$ and $\mathcal{H}_{u:\Sigma - S} := \pi_{\overline{\varnothing}}{\mathcal{H}_{\zeta_{u:\Sigma - S}}}$ along with $\mathcal{H}_{v} = \pi_{\overline{\varnothing}}{\mathcal{H}_{\zeta_v}}$. We can now see that there is a unitary isomorphism:
\begin{align*}
E : {\rightarrow} (\mathcal{H}_{u:S} \oplus  \mathcal{H}_{u:\Sigma -S}  ) \otimes \mathcal{H}_v \stackrel{\cong} \pi_{\overline{\varnothing}}{\mathcal{H}} \\
\alpha \bra{a.u} \oplus  \beta \bra{b.u} ) \bra{c.v} &\mapsto \alpha \bra{a.u \cup c.v} +  \beta \bra{b.u \cup c.v}
\end{align*}
which witnesses the equivalence $ \underline{A \tensorzeta_{\!u:S} I} \cong (A \oplus I_{\mathcal{H}_{u:\Sigma-S}}) \otimes I_{\mathcal{H}_v}$ in the usual sense, indeed (where we include scalars this time for clarity):
\begin{align*}
& (\alpha \bra{a.u} \oplus  \beta \bra{b.u} ) \bra{c.v}(E^{\dagger} \underline{A \tensorzeta_{\!u:S} I} E) (\alpha' \ket{a^{'}.u}  \oplus \beta' \ket{b^{'}.u}) \ket{c^{'}.v} \\
& =   (\alpha \bra{a.u \cup c.v} +  \beta \bra{b.u \cup c.v} )\underline{A \tensorzeta_{\!u:S} I} (\alpha' \ket{a^{'}.u \cup c^{'}.v}  + \beta' \ket{b^{'}.u \cup c^{'}.v})  \\
 & = (\alpha \bra{a.u \cup c.v} +  \beta \bra{b.u \cup c.v} ) (A \oplus I) \tensorzeta_{\!u:S} I (\alpha' \ket{a^{'}.u \cup c^{'}.v}  + \beta' \ket{b^{'}.u \cup c^{'}.v}) \\
  & =  \alpha \alpha' \bra{a.u \cup c.v} (A \oplus I) \tensorzeta_{\!u:S} I \ket{a^{'}.u \cup c^{'}.v}  +  \alpha \beta' \bra{a.u \cup c.v} (A \oplus I) \tensorzeta_{\!u:S} I \ket{b^{'}.u \cup c^{'}.v} \\
    & +  \beta \alpha'  \bra{b.u \cup c.v} (A \oplus I) \tensorzeta_{\!u:S} I \ket{a^{'}.u \cup c^{'}.v}  + \beta \beta'   \bra{b.u \cup c.v} (A \oplus I) \tensorzeta_{\!u:S} I \ket{b^{'}.u \cup c^{'}.v} \\
      & = \alpha \alpha'  \bra{a.u} (A \oplus I) \ket{a^{'}.u}  \braket{c.v | c^{'}.v} + \alpha \beta'  \bra{a.u } (A \oplus I)\ket{\varnothing} \braket{c.v | b^{'}.u \cup c^{'}.v}  \\
    & +  \beta \alpha'  \bra{\varnothing} (A \oplus I)  \ket{a^{'}.u} \braket{b.u \cup c.v | c^{'}.v}  + \beta \beta'   \bra{\varnothing} (A \oplus I) \ket{\varnothing} \braket{b.u \cup c.v | b^{'}.u \cup c^{'}.v} \\
 & =  \alpha \alpha'  \bra{a.u} A \oplus I \ket{a^{'}.u } \delta_{cc'} + \beta \beta' \delta_{bb'} \delta_{cc'} \\
 & = (\alpha \alpha'  \bra{a.u} A \ket{a^{'}.u } +  \beta \beta' \delta_{bb'}) \delta_{cc'} \\
 & =  (\alpha \bra{a.u}  \oplus \beta \bra{b.u} ) \bra{c.v} (A \oplus I) \otimes I(\alpha' \ket{a^{'}.u}  \oplus \beta' \ket{b^{'}.v} ) \ket{c^{'}.v} 
 \end{align*}
and so in this case we can say that:
\begin{align*}
 ( (A \oplus I_{\mathcal{H}_{u:\Sigma-S}}) \otimes I_{\mathcal{H}_v} ) =  E^{\dagger}(\underline{A \tensorzeta_{\!u:S} I})E   .
\end{align*}
Thus, generalised tensors recover mixtures of tensors and direct sums.
Similarly $\operatorname{Tr}_{\overline{\zeta_u}}$ is now a combination of a standard partial trace and a projection:  \[  \operatorname{Tr}_{\overline{\zeta_u}}(E^{\dagger} (\bullet)E)=\pi_S \operatorname{Tr}_{\overline{u}}(\bullet) \pi_S  \] where $\operatorname{Tr}_{\overline{u}}(\rho) = \sum_{b} (I_u \otimes \bra{b.v}) \rho (I_u \otimes \ket{b.v})$ is the standard partial trace. Indeed, 
\begin{align*}
\operatorname{Tr}_{\overline{\zeta_{u:S}}} (E^{\dagger} (\ket{a.u} \otimes \ket{b.v})(\bra{a^{'}.u} \otimes \bra{b^{'}.v}) E) 
&= \operatorname{Tr}_{\zeta_{u:S}} (\ket{a.u \cup b.v})(\bra{a^{'}.u \cup b^{'}.v})) \\
&= 
\begin{cases} 
\ket{a_{u}}\bra{a^{'}.v}, & \text{if } a,a' \in S \text{ and } b = b' \\
0, & \text{otherwise}
\end{cases} \\
&= \pi_S \operatorname{Tr}_{\overline{u}}(\ket{a.u} \otimes \ket{b.v})(\bra{a^{'}.u} \otimes \bra{b^{'}.v})) \pi_S
\end{align*}
Note that $\operatorname{Tr}_{\overline{\zeta_{u:S}}} (\ket{a.u \cup b.v})(\bra{a^{'}.u \cup b^{'}.v})) = 0$ when $a,a' \notin S$ since then \[(\ket{a.u \cup b.v})(\bra{a^{'}.u \cup b^{'}.v})_{| \zeta_{u:S}} = \ket{\varnothing} \bra{\varnothing} \delta_{aa'} \delta_{bb'}\] so that $\operatorname{Tr}_{\overline{\zeta_{u:S}}}$ which is $(\bullet)_{| \zeta_{u:S}}$ up to quotient by $\ket{\varnothing}$ returns $0$. 

\color{black}

\end{document}